\PassOptionsToPackage{dvipsnames}{xcolour}
\documentclass[acmsmall,nonacm]{acmart} %
\setcopyright{cc}
\setcctype{by}
\makeatletter
\let\@ACM@checkaffil\relax
\makeatother
\usepackage{amsthm} %
\usepackage{proof}
\usepackage{thm-restate}
\usepackage{mathpartir}
\usepackage{booktabs}
\usepackage{amsmath}
\usepackage{balance}
\usepackage{nicefrac}
\usepackage{url}
\usepackage{xspace}
\usepackage{hyperref}
\usepackage[capitalise]{cleveref}
\usepackage{todonotes}
\usepackage[normalem]{ulem}  %
\usepackage{suffix}
\usepackage[inline]{enumitem}  %
\usepackage{afterpage}
\usepackage{stmaryrd} %
\usepackage{ifthen}
\usepackage{mathtools}
\usepackage{tcolorbox}
\mathtoolsset{showonlyrefs=true}

\usepackage[final]{fixme}
\fxsetup{theme=color,mode=multiuser}
\FXRegisterAuthor{JL}{aJL}{JL}
\usepackage{afterpage}

\usepackage[utf8]{inputenc}
\DeclareUnicodeCharacter{1438}{\pa}
\DeclareUnicodeCharacter{1433}{\po}
\DeclareUnicodeCharacter{1428}{\stroke}

\usepackage[final]{listings}
\usepackage{multicol}
\usepackage{golang}     %

\newcommand{\lstCodeSize}{\footnotesize\ttfamily} %

\lstdefinestyle{mygo}%
{%
language=Go,
morekeywords={any,change},
moredelim=[is][\color{gray}]{``}{``},
}

\usepackage{tikz}  %
\usetikzlibrary{arrows,shapes,calc,fit,patterns,backgrounds,trees,arrows,positioning,shapes.multipart}

\theoremstyle{acmplain}

\newtheorem{theorem}{Theorem}[section]
\newtheorem{lemma}[theorem]{Lemma}
\newtheorem{definition}[theorem]{Definition}

\newtheorem{example}[theorem]{Example}

\newenvironment{Figure}{
  \begin{figure}
    \small
}{
  \end{figure}
}
\newenvironment{ruled}{
  \hrule
  \vspace{1ex}
}{
  \vspace{1ex}
  \hrule
  \vspace{-2ex}
}

\newcommand{\ROUNDTWO}[1]{#1}

\newcommand{\ttilde}{{\thicksim}}

\newcommand{\Universe}{\mathbb{U}}

\newcommand{\types}{\meta{types}}

\DeclareMathOperator{\typeor}{\!\mid\!}

\newcommand{\st}{\, \mid \,}
\newcommand{\squeeze}{\hspace{-1.5em}}

\newcommand{\ov}{\overline}

\newcommand{\kw}[1]{\texttt{\bf #1}}
\newcommand{\id}[1]{\texttt{#1}}
\newcommand{\meta}{\mathit}
\newcommand{\ok}{~\meta{\checkmark}}
\newcommand{\nice}{\ok_{\!\text{b}}} %
\newcommand{\supernice}{\ok_{\!\text{m}}} %
\newcommand{\okelmt}{\ok_{\!\text{i}}}

\newcommand{\fields}{\meta{fields}}

\newcommand{\methods}{\meta{methods}}

\newcommand{\ubase}{\meta{base}}
\newcommand{\uinterface}{\meta{iface}}
\newcommand{\ustruct}{\meta{struct}}

\newcommand{\opdom}{\meta{dom}}

\newcommand{\typair}[2]{{#1} @ { #2}}
\newcommand{\ghostytypair}[2]{\ghost{#1} \ghost{@} { #2}}

\newcommand{\tybox}[2]{\langle #1 \rangle} %
\newcommand{\tycomp}[2]{\lfloor #1 \rfloor}  %
\newcommand{\revtycomp}[1]{\tycomp{#1}{}^{-1}}  %

\newcommand{\dom}{\mathit{dom}}

\newcommand{\unonconstraint}{\meta{noUnion}}
\newcommand{\utyvar}{\meta{tyvar}}
\newcommand{\underlying}{\meta{under}}

\newcommand{\vtype}{\meta{type}}
\newcommand{\ctype}{\meta{ctype}}

\newcommand{\mbody}{\meta{body}}
\newcommand{\mjump}{\meta{body_L}}
\newcommand{\tdecls}{\meta{tdecls}}
\newcommand{\mdecls}{\meta{mdecls}}
\newcommand{\sigdecls}{\meta{sigdecls}}
\newcommand{\bounds}{\meta{bounds}}

\newcommand{\unique}{\meta{unique}}
\newcommand{\distinct}{\meta{distinct}}
\newcommand{\type}{\kw{type}}
\newcommand{\struct}{\kw{struct}}
\newcommand{\interface}{\kw{interface}}
\newcommand{\func}{\kw{func}}
\newcommand{\return}{\kw{return}}
\newcommand{\package}{\kw{package}}
\newcommand{\main}{\kw{main}}

\newcommand{\maintype}[1]{#1{\color{black} \downharpoonright_1}}
\newcommand{\uppertype}[1]{#1{\color{black} \downharpoonright_2}}

\newcommand{\un}{\id{\textunderscore}}
\newcommand{\prog}{\rhd}
\newcommand{\br}[1]{\id{\{}#1\id{\}}}
\newcommand{\lst}[1]{[#1]}
\newcommand{\set}[1]{\{#1\}}

\newcommand{\assg}{\prec:}

\newcommand{\becomes}{\longrightarrow}
\newcommand{\becomesast}{\becomes^{\ast}}
\newcommand{\by}{\mathbin{:=}}

\newcommand{\black}[1]{{\color{black}#1}}
\newcommand{\comma}{,\,}
\newcommand{\stoup}{;\,}
\newcommand{\Hole}{\Box}

\newcommand{\goinl}[1]{\lstinline[style=mygo,basicstyle=\small\ttfamily]+#1+}
\newcommand{\goinlf}[1]{\lstinline[style=mygo]+#1+}
\lstnewenvironment{golst}{\lstset{style=mygo}}{}
\lstnewenvironment{golsttwo}{\lstset{style=mygo,multicols=2}}{}

\newcommand{\notmonomorphisable}{~{\notmonomorphisable*}}
\WithSuffix\newcommand\notmonomorphisable*{\meta{nomono}}  %

\newcommand{\imp}{\mathbin{\texttt{<} \mkern-1mu \raisebox{0.2ex}{\texttt{:}}}}
\newcommand{\notimp}{\mathbin{\ooalign{$\imp$\cr\hidewidth$/$\hidewidth}}}

\DeclareMathOperator{\IMPLOP}{\imp}

\newcommand{\MKTABLE}[3]{\mathsf{mkTable}_{#3}(#1,#2)}

\newcommand{\CHANGETYPE}[3]{#1.(\ghost{#3} \rightarrow #2)}
\newcommand{\STATICCHANGE}[3]{#1.(\ghost{#3} \twoheadrightarrow \ghost{#2})}
\newcommand{\MAKETYPE}[4]{#1.({#3} \rightarrow_{#4} \ghost{#2})}

\newcommand{\smsep}{{\#}}

\newcommand{\ghost}[1]{{\color{red!50}#1}}

\newcommand\eqdef{\stackrel{\text{def}}{=}}
\newcommand{\eqval}[2]{#1 \multimap #2}
\newcommand{\eqcast}[3][\Delta \stoup \Gamma]{#1 \vdash #2 \bumpeq #3}
\newcommand{\ceqcast}[2]{#1 \bumpeq #2}

\newif\iflong %
\longtrue %
\newcommand{\appendixref}[2]{\iflong {#2}~\ref{#1}\else the online appendix\fi}

\begin{document}

\pdfstringdefDisableCommands{\def\titlenote#1{}\let\footnotemark\relax}
\title{Welterweight Go: Boxing, Structural Subtyping, and Generics \\ (Extended Version)\texorpdfstring{\footnotemark[1]}{}\titlenote{This is an extended version of the paper published in the Proceedings of the ACM on Programming Languages, Volume~10, Issue~POPL (January 2026), Article~79. \url{https://doi.org/10.1145/3776721}.}}

\author{Raymond Hu}
\orcid{0000-0003-4361-6772}
\affiliation{%
  \institution{Queen Mary University of London}
  \city{London}
  \country{United Kingdom}
}
\email{r.hu@qmul.ac.uk}

\author{Julien Lange}
\orcid{0000-0001-9697-1378}
\affiliation{%
  \institution{Royal Holloway, University of London}
  \city{Egham}
  \country{United Kingdom}
}
\email{Julien.Lange@rhul.ac.uk}

\author{Bernardo Toninho}
\orcid{0000-0002-0746-7514}
\affiliation{%
  \institution{Instituto Superior Técnico, University of Lisbon}
  \city{Lisbon}
  \country{Portugal}
}
\affiliation{%
  \institution{INESC-ID}
  \city{Lisbon}
}
\email{bernardo.toninho@tecnico.ulisboa.pt}

\author{Philip Wadler}
\orcid{0000-0001-7619-6378}
\affiliation{%
  \institution{University of Edinburgh}
  \city{Edinburgh}
  \country{United Kingdom}
}
\email{wadler@inf.ed.ac.uk}

\author{Robert Griesemer}
\orcid{0009-0001-1887-043X}
\affiliation{%
  \institution{Google}
  \city{Mountain View}
  \country{USA}
}
\email{gri@google.com}

\author{Keith Randall}
\orcid{0009-0000-9075-1737}
\affiliation{%
  \institution{Google}
  \city{Mountain View}
  \country{USA}
}
\email{keithr@alum.mit.edu}

\keywords{Go, Generics, Boxing}

\begin{abstract}
  Go's unique combination of structural subtyping between generics and types
  with non-uniform runtime representations presents significant challenges for
  formalising the language.

  We introduce WG (Welterweight Go), a core model of Go that
  captures key features excluded by prior work, including underlying types,
  type unions and type sets, and proposed new features, such as generic methods.
  We also develop LWG, a lower-level language that models Go's runtime
  mechanisms, notably the distinction between raw struct values and
  interface values that carry runtime type information (RTTI).

  We give a type-directed compilation from WG to LWG that demonstrates
  how the proposed features can be implemented while observing important
  design and implementation goals for Go: compatibility with separate compilation, and no runtime code generation.
  Unlike existing approaches based on static monomorphisation, our compilation
  strategy uses runtime type conversions and adaptor methods to handle the
  complex interactions between structural subtyping, generics, and Go's
  runtime infrastructure.
\end{abstract}

\begin{CCSXML}
<ccs2012>
   <concept>
       <concept_id>10003752.10010124.10010131</concept_id>
       <concept_desc>Theory of computation~Program semantics</concept_desc>
       <concept_significance>500</concept_significance>
       </concept>
   <concept>
       <concept_id>10003752.10010124.10010125.10010130</concept_id>
       <concept_desc>Theory of computation~Type structures</concept_desc>
       <concept_significance>500</concept_significance>
       </concept>
   <concept>
       <concept_id>10011007.10011006.10011008.10011024.10011025</concept_id>
       <concept_desc>Software and its engineering~Polymorphism</concept_desc>
       <concept_significance>500</concept_significance>
       </concept>
 </ccs2012>
\end{CCSXML}

\ccsdesc[500]{Theory of computation~Program semantics}
\ccsdesc[500]{Theory of computation~Type structures}
\ccsdesc[500]{Software and its engineering~Polymorphism}

\maketitle

\renewcommand{\MathparLineskip}{\lineskiplimit=0.8em\lineskip=0.75em plus 0.1em}

\section{Introduction}
\label{sec:intro}

Go is a popular programming language that is widely used in industry.
A key characteristic of Go is \emph{structural} subtyping of interfaces, unlike
other memory managed OO languages such as Java and C\# that are based on
nominal typing.

The first work on formalising Go
by~\citet{DBLP:journals/pacmpl/GriesemerHKLTTW20} tackled the extension of Go
with \emph{generics} (bounded parametric polymorphism).
It presented a system that integrates Go's structural subtyping and generics
in a manner that is compatible with static monomorphisation (as opposed to,
e.g., erasure in Java, and dynamic monomorphisation in C\#).
This shaped the release of generics in Go~1.18 in 2022, described as
the biggest change in its history~\cite{griesemer2022introduction}.

This paper tackles several problems in the ongoing design and formal investigation
of Go.
\begin{itemize}[leftmargin=*, itemindent=1em]
\item
  \ROUNDTWO{
    We introduce WG (Welterweight Go). Its purpose is to provide a
    minimal formal model that captures essential type system features
    of Go needed to study sound compilation strategies for generics,
    and to investigate new features proposed for future versions of
    Go, notably generic methods and method set intersections for
    generic interfaces.
    Structural typing in Go implies, for example, that methods can
    accept differently-named structs with the same underlying
    structure: several subtleties related to various kinds of type coercion are needed,
    which are absent from previous work.  We focus on generic type unions
    and type sets, and model the key areas that interact with Go's use of
    structural typing: anonymous types, underlying types, and
    primitive data and operators.
    The combination of these features is absent from models found in
    the literature.  Our design of WG %
    distils the least set of features that captures all the relevant
    interactions.
    We prove that WG satisfies preservation and progress properties
    (Theorems~\ref{thm:wg-preservation} and~\ref{thm:wg-progress}).
}
\item
We formalise a lower level language, LWG,
to model key mechanisms of the Go \emph{runtime}.
LWG captures essential elements of the design and implementation of Go---notably
the runtime distinction between struct types and interface types---that WG
(or any prior
formalism~\cite{DBLP:journals/pacmpl/GriesemerHKLTTW20,DBLP:journals/pacmpl/EllisZYS22,DBLP:journals/jfp/SulzmannW23})
on its own does not.
We develop LWG as a typed language and establish its own preservation and
progress properties (Theorems~\ref{thm:lwg-preservation} and~\ref{thm:lwg-progress}).
\item
We formalise compilation as a type-directed translation from WG to LWG and
prove its correctness by a behavioural equivalence (Theorem~\ref{thm:compbisim}).
We have designed LWG and compilation to support the proposed new
features while respecting the key criteria set by the Go Team:
compatibility with separate compilation and no code generation during
runtime.
Our compilation strategy is based on boxing and runtime type
conversions.
\end{itemize}

\noindent
\ROUNDTWO{
  Overall, this paper presents the first full account of structural
  typing for a language with generic structs, methods and
  interfaces. We establish type safety results and the design of a
  type and behaviour preserving compilation to a low-level model
  that is feasible in practice. It provides a framework for
  investigating the design, implementation and correctness of future
  extensions to the Go language and runtime.
}
Method set intersections have been proposed but not yet
implemented~\cite{go2024method,dominikh2022method}; these would allow
interfaces to contain methods common to all listed types.
Generic methods have been under consideration in a long-standing
proposal~\cite{mariomac2021parametersmethod}, but have yet to be
included in Go due to uncertainty about the best
implementation approach~\cite{taylor2021no}.
\ROUNDTWO{
  It is important for language designers to investigate if there are
  bad interactions between features. For instance, Amin and
  Tate~\cite{Amin-and-Tate-2016} have shown that Java generics are
  unsound. The issue they found could not be expressed in
  Featherweight Generic Java~\cite{Igarashi-et-al-2001}, due to its limited feature set.
}

Formalising LWG and compilation is important for ensuring that the proposed
features are indeed compatible with the requirements of Go and
demonstrating how they can be implemented in practice.
A key distinction between base or struct types and interface types in the Go
runtime is that interface types carry runtime type information (RTTI) for
dynamic operations (e.g., type assertions) whereas constants and structs do
not.
The design of Go crucially depends on the compiler being able to
\emph{statically} insert runtime operations for safely converting between the
distinct cases for constants or structs and interfaces as related by structural
subtyping.
The prior high level formalisms lack this perspective and fail to faithfully
model the Go runtime.
For example, they use terms \goinl{$S$\{...\}} to denote \emph{runtime} struct
values, and \goinl{$S$\{...\}.($T$)} for type assertions on structs; whereas
actual Go structs, as mentioned, do not carry any RTTI corresponding to the
\goinl{$S$}, and expressions that would dynamically depend on such RTTI are
thus not valid in practice.

Unlike the original core formalism
of~\citet{DBLP:journals/pacmpl/GriesemerHKLTTW20} that is based \emph{solely}
on static monomorphisation, the actual Go runtime records RTTI for generic
type arguments dynamically using dictionaries that are generated by the
compiler and passed around at
runtime~\cite{godictionaries}.
LWG models Go interfaces by pairing RTTI with method tables, capturing
a key aspect of the actual Go implementation.
We show that our approach in fact allows WG, including the proposed new
features, to be directly compiled to LWG without relying on monomorphisation
to eliminate genericity or incurring its associated restrictions (e.g., on
polymorphic recursion).

A key challenge for the compilation is to deal with structural
subtyping and generic types. The interplay between these means that
generic code must, in general, be statically compiled to handle two
distinct kinds of arguments that can be passed at runtime: base or
struct types on the one hand and interface types on the other.
Our solution builds on Go's framework for implicit runtime type conversions.
It involves generating adaptor code that we can quantify as one extra wrapper
method per method defined in the user program.
The benefits are that it supports generic methods with separate compilation,
as opposed to monomorphisation, which requires a whole program analysis, while as mentioned lifting other
present limitations of Go such as the restriction on polymorphic recursion.
In general, our compilation is more efficient and results in less code bloat
than monomorphisation, but involves more type conversions at runtime.

This work was motivated by discussions with the Go Team who
requested feedback on the design of generic type unions and type
sets. Two members of the Go Team are coauthors of this paper.

\subsubsection*{Roadmap}
Sec.~\ref{sec:overview} gives an overview of key concepts in WG, LWG and
compilation.
Sec.~\ref{sec:wg} defines WG and establishes preservation and
progress properties.
Sec.~\ref{sec:lwg} defines LWG and establishes preservation and
progress properties (independently of WG).
Sec.~\ref{sec:compilation} defines the compilation of WG into LWG and shows
that a well-typed WG program and its LWG compilation are behaviourally
equivalent.
Sec.~\ref{sec:related} discusses related work.
We have implemented a minimal prototype of WG and our compilation approach in
an accompanying artifact~\cite{wg-artifact}.
\section{Overview}
\label{sec:overview}

\subsection{WG by Example}
\label{sub:WGexample}

Figure~\ref{fig:WGexample} gives a WG example.
For readability, we use functions (as opposed to methods),
anonymous functions and function types; in our core formalism these can be
represented using interfaces, structs and methods.
\ROUNDTWO{
Type \goinl{MyInt} is defined with \emph{underlying} basic type
\goinl{int}.
\goinl{MyInt} is nominally distinct from \goinl{int}; it can be thought of as
a wrapper with no runtime overhead (akin to \goinl{newtype} in Haskell).
Type \goinl{MyFloat} is similar.
}
\goinl{MyInt} and \goinl{MyFloat} have a method \goinl{String} with
implementations that involve \emph{static type conversions} based on the
underlying types.
Interface \goinl{MyNum} contains a \emph{type union} specifying that the
only members of this type are \goinl{MyInt} and \goinl{MyFloat}.

Generic type \goinl{List} has a type parameter \goinl{a} with upper bound \goinl{any}, an
alias for the empty interface (i.e., \goinl{interface\{\}}), meaning that \goinl{a} can
be instantiated by any type. \goinl{List} is defined as an interface
supporting a single method \goinl{FoldL[b]}, a fold operation that takes a
second type parameter \goinl{b}. By structural subtyping, any type \goinl{T} that
implements \goinl{FoldL[b]} is considered a subtype of \goinl{List[T]}. Structs \goinl{Nil[a]}
and \goinl{Cons[a]} implement \goinl{List[a]}. The two fields of \goinl{Cons[a]} depend on its
type parameter.
Generic function \goinl{join} takes a \goinl{List} of any type bounded by
\goinl{MyNum}, and folds the list by calling the \goinl{String} method on each
item and concatenating them.
The idea is that the formals of the (anonymous) combining function safely
support \goinl{String} because they are bounded by \goinl{MyNum}, all members
of which implement \goinl{String}.
The \goinl{main} method calls \goinl{join} to fold a list of \goinl{MyFloat}.

The code above the line in Figure~\ref{fig:WGexample} is supported by Go, but
not by prior formalisms due to lacking underlying types, type unions and
static conversions.
The code below is WG only because Go does not yet support:
(1) generic methods (see
lines~\ref{lst:gen-meth-a},\ref{lst:gen-meth-b}-\ref{lst:gen-meth-c});
nor (2) \emph{method set intersections} for type unions which is
required to support the \goinl{String} call at line~\ref{lst:string-call}.

\begin{figure}[t]
{\lstset{numbers=left, xleftmargin=2em}
\begin{golst}
package main
type MyInt int
func (x MyInt) String() string { return strconv.Itoa(int(x)) }
type MyFloat float64
func (x MyFloat) String() string { return strconv.FormatFloat(float64(x), 'E', -1, 64) }
type MyNum interface { MyInt | MyFloat }
\end{golst}
}

\hrule

{\lstset{numbers=left, xleftmargin=2em,firstnumber=7}
\begin{golst}
type List[a any] interface { FoldL[b any](f func(v b, w a) b, z b) b } (*\label{lst:gen-meth-a}*)
type Nil[a any] struct {}
type Cons[a any] struct { head a; tail List[a] }
func (x Nil[a]) FoldL[b any](f func(v b, w a) b, z b) b {return z} (*\label{lst:gen-meth-b}*)
func (x Cons[a]) FoldL[b any](f func(v b, w a) b, z b) b {
  return x.tail.FoldL[b](f, f(z, x.head)) (*\label{lst:gen-meth-c}*)
} 
func join[a MyNum](x List[a]) String {
  return x.FoldL[string](func(v string, w a) string { return v + ", " + w.String() }, "") (*\label{lst:string-call}*)
}
func main() {
  var xs List[MyFloat] = Cons[MyFloat]{MyFloat(1), Cons[MyFloat]{MyFloat(2), Nil[MyFloat]{}}}
  fmt.Println(join[MyFloat](xs))  // Prints: ", 1E+00, 2E+00"
}
\end{golst}}
  \vspace{-0.5cm}
\caption{(top) Code supported by Go and WG; (bottom) code supported by WG only.}
\label{fig:WGexample}
\end{figure}

\subsection{LWG: A First Mini Example}

This mini example demonstrates how structural subtyping between
base or struct types and interface types in WG is handled by the Go compiler
and runtime.

\begin{golst}
type C struct { f any } $\quad$ type ${\color{blue}\mathtt{INum}}$ interface { String() string } // MyInt (etc) implements INum
func bar(x ${\color{blue}\mathtt{INum}}$) MyInt { return C{x}.f.(MyInt) } $\quad$ func main() { return bar(MyInt(42)).String() }
\end{golst}

\noindent
\ROUNDTWO{The function \goinl{bar} takes an \goinl{INum}} and wraps it in a
\goinl{C} struct with field type \goinl{any}.
It then accesses the field, and performs a type assertion to \goinl{MyInt}.
It returns the \goinl{MyInt}.
The \goinl{main} body expression first calls \goinl{bar} and
passes a \goinl{MyInt}, which is well typed because the \goinl{MyInt} base
type structurally implements all the methods of the \goinl{INum} interface.
It then calls \goinl{String} on the result.

The LWG output of compiling the above code is as follows.
For readability, we simplify slightly and use an abridged notation compared to
the full formal definitions later.
\begin{golst}
func bar(x INum) MyInt { return {x.(change any)}.0.(MyInt) }
func main() { return bar(42.(make MyInt $\rho$))#MyInt.String() }
\end{golst}
LWG models that in the Go runtime, interfaces are implemented as fat
pointers that box a value and carry its RTTI and a method table,
whereas base types and structs are unboxed raw values with no RTTI or
method table.
The compiler generates operations for dynamically
converting between the two kinds, via the \goinl{make} and \goinl{change}
operations.

\begin{itemize}[leftmargin=*, itemindent=1em]
\item
The expression \goinl{42.(make MyInt $\phantom{a}\rho$)} will convert (i.e.,
box) the raw value \goinl{42} %
to an \emph{interface value} of interface type \goinl{INum}.
While the raw value itself incurs no runtime overhead, when boxed as
an interface value it will explicitly carry the RTTI denoted by
\goinl{MyInt} and the method table denoted by $\rho$. Here $\rho$ is
the \emph{method table} of \goinl{MyInt} projected to \goinl{INum},
i.e., the \goinl{String} method.
Both elements are statically computed by the compiler and embedded into the
output code.

\item
The expression \goinl{x.(change any)} will convert the interface value
held by \goinl{x} at runtime to an interface value for the \goinl{any}
interface.
In general such conversions must be done dynamically as the exact
base or struct type (i.e., the RTTI) boxed by the interface value is
statically unknown.

\item
The expression \goinl{$e$\#MyInt.String()} is a method call that has been
statically resolved by the compiler to be a \goinl{String} call on the
type \goinl{MyInt}.
Since the receiver sub-expression $e$, i.e.,
\goinl{bar(42.(make MyInt $\phantom{a}\rho$))} is of type \goinl{MyInt}, it
will evaluate to a raw value at runtime; all such calls on a receiver of
base or struct type must be statically resolved because the raw values at
runtime do not carry any RTTI or method tables for dynamic dispatch.
\end{itemize}

The LWG reduction proceeds as follows.

\noindent
\smallskip
\begin{tabular}{lll}
& \goinlf{bar(42.(make MyInt $\phantom{a}\rho$))\#MyInt.String()} \quad &
Make the interface value \goinlf{(MyInt,$\rho$,42)}
\\
$\rightarrow$ &
\goinlf{bar((MyInt,$\rho$,42))\#MyInt.String()} &
Call function \goinlf{bar} (substitute the method body)
\\
$\rightarrow$ &
\multicolumn{2}{l}{\goinlf{\{(MyInt,$\rho$,42).(change any)\}.0.(MyInt)\#MyInt.String()}
\quad Change the interface value to \goinlf{any}}
\\
$\rightarrow$ &
\goinlf{\{(MyInt,$\epsilon$,42)\}.0.(MyInt)\#MyInt.String()} &
Select the first field of the struct (i.e., \goinl{f} in \goinl{C})
\\
$\rightarrow$ &
\goinlf{(MyInt,$\epsilon$,42).(MyInt)\#MyInt.String()} &
Assert the boxed type is \goinlf{MyInt} (and unbox)
\\
$\rightarrow$ &
\goinlf{42\#MyInt.String()} &
Call the statically resolved method \goinlf{MyInt.String}
\\
$\rightarrow^*$ &
\goinlf{"42"} &
Final result value
\\
\end{tabular}
\smallskip

\noindent
Note the \goinl{change} to \goinl{any} results in an empty method table $\epsilon$ since
\goinl{any} has no methods.
Overall, using runtime conversions to keep method tables aligned with the
expected interface type is key to Go's implementation of structural subtyping.
Since an interface can be openly implemented by any struct with an arbitrary
superset of the required methods, aligning the method table with the interface
allows calls to be dynamically dispatched based on the method offset for that
interface.

The type assertion to \goinl{MyInt} dynamically checks the RTTI and, in this
case, \emph{unboxes} the value.

\subsection{LWG and Generics}

The current implementation of Go uses a combination of static monomorphisation
and runtime management of RTTI for generic type
arguments~\cite{godictionaries}.
Prior work~\cite{DBLP:journals/pacmpl/GriesemerHKLTTW20} has formalised
monomorphisation as a translation from a core subset of generic Go
to non-generic Go.
By contrast, this paper focuses on formalising the lower level mechanisms of
the Go runtime in LWG, and the compilation from WG to LWG.
We model Go's RTTI for generic types by reusing the existing
infrastructure for interface values: in an interface value
\goinl{($R$,$\rho$,$v$)}, we allow the RTTI element $R$ to be a ground
generic type, e.g., $S[\ov{T}]$ -- $\ov{T}$ is a sequence of ground
types that corresponds to the dictionaries used by the Go runtime.
In our system, we could consider monomorphisation as an optimisation
to eliminate some of the runtime boxing and conversion operations: with
appropriate restrictions, we could monomorphise generic WG to
non-generic WG and then compile to LWG.
However, this paper shows that WG can be compiled directly to LWG and,
unlike monomorphisation, requires no restrictions.

We briefly illustrate some of the challenges and subtleties of LWG and our
compilation arising from the interplay between structural subtyping, generics
and Go's runtime infrastructure.

One key aspect is the type system for LWG.
Consider again a runtime interface value \goinlf{(MyInt,$\epsilon$,42)} that boxes
a \goinl{MyInt} under the \goinl{any} interface.
How should we type this interface value?
Should it be typed as an \goinl{any} according to its interface?
Unfortunately, this would break type safety of the compiled code.
Below on the left is a well-typed WG expression (recall the generic bound of
\goinl{Cons} is \goinl{any}).
On the right is the LWG compilation and its reduction step that converts the
constant into the interface value mentioned above: neither expression before
nor after the step would be well-typed because the struct requires its first
field expression to be of type \goinl{MyInt}, not \goinl{any}.

\smallskip
\centerline{%
\goinlf{Cons[MyInt]\{MyInt(42), ...\}}
\quad
\goinlf{Cons[MyInt]\{42.(make MyInt $\phantom{a}\epsilon$), ...\}}
$\rightarrow$
\goinlf{Cons[MyInt]\{(MyInt,$\epsilon$,42), ...\}}
}
\smallskip

\noindent
Should the interface value be typed as a \goinl{MyInt} according to its RTTI?
Unfortunately, this would break progress for LWG: bogus LWG expressions such
as \goinlf{(MyInt,$\epsilon$,42).String()} would be well-typed but stuck, as
the \goinl{String} call cannot be dispatched by an empty method table
$\epsilon$.
Neither option is adequate on its own.
Section~\ref{sec:lwg} presents the LWG type system which
tracks \emph{both}.

As a final example for this overview, consider the WG code in
Figure~\ref{fig:WGvariant} where we add a second method to %
\goinl{List} and define a version of \goinl{Cons} specifically for
\goinl{MyInt}.
\begin{figure}[t]
\begin{golst}
type List[a any] interface { ...; Fold(f func(v a, w a) a, z a) a }
func (x Cons[a]) Fold(f func(v a, w a) a, z a) a {return x.FoldL[a](f, z)}
type MyIntCons struct { head MyInt; tail List[MyInt] }
func (x MyIntCons) FoldL[b any](f func(v b, w MyInt) b, z b) b {return x.tail.FoldL[b](...)}
func (x MyIntCons) Fold(f func(v MyInt, w MyInt) MyInt, z MyInt) MyInt {return x.FoldL[MyInt](...)}
\end{golst}
  \vspace{-0.2cm}
\caption{Extending the example from Fig.~\ref{fig:WGexample}.}
\label{fig:WGvariant}
\end{figure}
Note that both \goinl{Cons[MyInt]} and \goinl{MyIntCons} are structural
subtypes of \goinl{List[MyInt]}.
Now consider this WG function: %

\smallskip
\centerline{\goinlf{func foo(x List[MyInt]) MyInt \{ return x.Fold(..., 0) \} }}
\smallskip

\noindent
How should we compile this function?
How should the argument \goinl{0} be passed in the call to
\goinl{Fold}?
Statically, we do not  know if the runtime value of \goinl{x} will
be (i) a \goinl{Cons[MyInt]}, noting that the compilation of
\goinl{Fold} for \goinl{Cons[a]} expects an interface value for its generic
formal bounded by \goinl{any}, or (ii) a \goinl{MyIntCons}, where \goinl{Fold}
expects a raw \goinl{MyInt}.
Sec.~\ref{sec:compilation} presents our compilation that addresses this
problem by building on Go's approach to statically generating %
runtime type conversions. %
\section{Welterweight Go}
\label{sec:fgg}
\label{sec:wg}

In this section we introduce Welterweight Go (WG), an extension of
\ROUNDTWO{Featherweight Generic Go
(FGG)~\cite{DBLP:journals/pacmpl/GriesemerHKLTTW20}, a core
calculus of Go with generics.
WG extends FGG} with a selection of core and recent type-level
additions to Go, notably anonymous (interface and struct) types,
type unions (an idiomatic form of untagged sum types) and
static type conversions.

Type unions bridge Go's interface bounds with basic types
and operators. Go overloads operators like $<$ on basic types
(integers, floats, strings) but requires arguments of the same
type.
Since operators are not methods, interface bounds alone cannot permit
their use on generic types, creating a gap that type unions
fill. For example, it is not possible to define a generic \goinl{Min}
function by appealing to the overloaded $<$ operator via interface
bounds alone.
To account for this lack of
expressiveness, the Go team introduced type
unions, which enable the kind of genericity mentioned
above. A generic \goinl{Min} function can be written as:

\begin{golst}
type  Ordered interface { int | float64 | ~string }
func Min[T Ordered](x T, y T) T { if x < y { return x } else { return y } }
\end{golst}

The \goinl{Ordered} interface denotes a \emph{type union},
satisfied by any \goinl{int}, \goinl{float64} or any type whose
\emph{underlying type} is \goinl{string} ($\ttilde U$ refers to the
set of all types whose underlying type is $U$).
The use of \goinl{Ordered} as a bound for type parameter \goinl{T}
warrants the use of any built-in operator that is common to all types
in the union, making the definition correct.
\ROUNDTWO{
The mix of $\ttilde$ and non-$\ttilde$ elements offers fine-grained control
over the constraint: \goinl{T} may be instantiated by \goinl{int} specifically
but not other named types with \goinl{int} as the underlying representation,
whereas \goinl{T} may be instantiated by any type with \goinl{string} as its
underlying representation.}

In their full generality, type unions can include lists of both
types and methods, and can be used as named or anonymous type parameter bounds in
the language (although they cannot be used directly as types).
Type unions in the language generalises
interfaces to be understood not only as sets of methods but \emph{also} as sets of
\emph{types}. %
\paragraph{From FGG to WG}
Whereas FGG provides a faithful model of Go's subtyping
relation for \emph{named} interfaces, WG fully models both
implementability of anonymous and named interface types, relying on
method and type sets to faithfully account for type unions, as
well as Go's concept of \emph{assignability}, which generalises
implementability to account for anonymous types.
Moreover, to accurately model the full range of features provided by
type unions, we consider base types and operators in WG.
Finally, in WG we can also naturally account for Go's type
\emph{conversion} operator. This is necessarily absent from FGG, which
includes only type \emph{assertion}.

Assignability in Go relies on the notion of \emph{underlying type}
which in turn is a cornerstone of the language formalism.
In the type union \goinl{Ordered}, the clause that includes string types
is written $\sim$\goinl{string}.  Go allows \ROUNDTWO{a new type with the same
underlying memory representation as another to be defined as a nominally
distinct type and to which, for instance, additional methods can be attached}.
By using the $\sim$ type operator we can include not just the type (e.g.)
\goinl{string} itself, but also all types whose \emph{underlying type} is
\goinl{string} (i.e.,~all
\ROUNDTWO{types whose underlying representation is a \goinl{string}}).

\subsection{WG Syntax}
\label{sub:wg-syntax}

\begin{Figure}
  \begin{ruled}
    \begin{minipage}[t]{\textwidth}
      \begin{tabular}[t]{ll}
        Field name                      & $f$ \\
        Method name                     & $m$ \\
        Variable name                   & $x$ \\
        \black{Type parameter}          & \black{$\alpha$} \\
        Type name                       & $t, u$  \qquad ($t,u \neq B$) \\
        Base types                 & $B$ ::= $\mathtt{int} \mid \mathtt{bool} \mid \ldots$ \\
        Type                            & $R, S, T,U$ ::= $B \mid L \mid t \black{[\ov{S}]} \mid \black{\alpha}$\\
        Type/bound pair                 & \black{$\Phi, \Psi$ ::= $\typair{T}{V} $} \\
        Type literal                    & $L$ ::= \\
        \quad Structure                 & \quad $\struct~\br{\ov{f~{\Phi}}}$ \\
        \quad Interface                 & \quad $\interface~\br{\ov{F}}$ \\
        Method signature                & $M$ ::= $\black{[\ov{\alpha~S}]}(\ov{x~{\Phi}})~{\Psi}$ \\
        Declaration                     & $D$ ::= \\
        \quad Type declaration          & \quad $\type~t\black{[\ov{\alpha~S}]}~T$ \\
        \quad Method declaration        & \quad $\func~(x~t\black{[\ov{\alpha~S}]})~mM~\br{\return~e}$ \\
        Program                         & $P$ ::= $\package~\main;~\ov{D}~\func~\main()~\br{\un=e}$\\
      \end{tabular}
    \end{minipage}
    \hspace{-0.5\textwidth}
    \qquad
    \begin{minipage}[t]{0.4\textwidth}
      \begin{tabular}[t]{ll}
        Type term                    & \black{$C$ ::= $T \mid \ttilde  T $} \\
        Type union                   & \black{$E$ ::= $C \mid C \typeor E $} \\
        Interface element            & \black{$F$ ::= $ mM \mid E $} \\
        Constants                       & $ c$ ::= $0 \mid 1 \mid  \ldots$ \\
        Operators                       & $ \circ $ ::= $ + \mid * \mid \land \mid  \ldots$ \\
        Expression                      & $e$ ::= \\
        \quad Variable                  & \quad $x$ \\
        \quad Method call               & \quad $e.m\black{[\ov{S}]}(\ov{e})$ \\
        \quad Structure literal         & \quad ${T}\br{\ov{e}}$ \\
        \quad Select                    & \quad $e.f$ \\
        \quad Type assertion            & \quad $e.({T})$ \\
        \quad Type conversion           & \quad $T(e)$ \\
        \quad Constant                  & \quad $T(c)$ \\
        \quad Operation                 & \quad $\circ(\ov{e})$ \\
      \end{tabular}
    \end{minipage}
  \end{ruled}
  \caption{WG syntax}
  \label{fig:wg-syntax}
  \vspace{3ex}
\end{Figure}

The syntax of WG is given in Figure~\ref{fig:wg-syntax}.
Field names $f$, method names $m$, variable names $x$, type parameters
$\alpha$, and type names $t$ form the basic identifiers.
Types $T$ include base types $B$ (such as \texttt{int} and
\texttt{bool}), type literals $L$, (parameterised) named types
$t[\ov{S}]$, type parameters $\alpha$.
We write $\ov{S}$ for a (possibly empty) list of $S_i$.
We use the notation $\typair{T}{V}$ to annotate types with their
bounds, where $T$ is the actual type and $V$ is its bound. In WG, when
$T$ is a type parameter $\alpha$, $V$ represents its declared bound;
when $T$ is a concrete instantiation of a type parameter, $V$
represents the bound of that parameter; otherwise $V = T$.
For instance, in a struct type literal $\struct~\br{\ov{f~{\Phi}}}$,
we assume that every field type is annotated with its bound. Such
annotations can be straightforwardly inferred during type checking.
Our formalisation of WG would be identical if we replaced annotations
$\typair{T}{V}$ with simple types $T$. However, the type-directed
compilation to LWG described in Section~\ref{sec:compilation} would
become significantly more complex to present without these
annotations.

An interface type literal $\interface~\br{\ov{F}}$ contains a sequence
of elements, i.e., method signatures ($M$) or \emph{unions} of
type terms ($E$).
Each union specifies that the interface is only satisfied by
its members.
Type terms $C$ can be either types $T$, or their approximation $\ttilde T$
which specifies all types whose underlying type is $T$.
\ROUNDTWO{
Note, a type declaration $\type~U~T$ introduces a new named type $U$ with
underlying type given by $T$:
if $T$ is also named, Go's notion of assignability (Sec.~\ref{sub:wg-subtyping})
distinguishes $T$ and $U$ nominally, but $\ttilde T$ allows to express
compatibility of (nominally-distinct) types with the same underlying type.
}
Method signatures $M$ take the form
$[\ov{\alpha~S}](\ov{x~\Phi})\ \Psi$, where type parameters
$\ov{\alpha}$ are bounded by types $\ov{S}$, parameters $\ov{x}$ have
types $\ov{\Phi}$, and the return type is given by $\Psi$.
Type literals can be used in place of named types, in which case we
call them \emph{anonymous types}.

Expressions include variables, method calls, structure literals, field
selection, type assertions, type conversions, typed constants, and
operations.
We model only explicitly typed constants $T(c)$ such as \goinl{int(42)} or
\goinl{float64(42.0)}, abstracting away Go's untyped constants as
their type inference is orthogonal to our work.
A program $P$ consists of a sequence of declarations $\ov{D}$ and a
top-level expression $e$, written in the stylised form shown in the
figure to make it legal Go.  We often abbreviate it as
$\ov{D} \prog e$.

\subsection{Typing and Subtyping in WG}
\label{sub:wg-subtyping}

\begin{Figure}
  \begin{ruled}
    \begin{mathpar}
      \inferrule
      {}
      {\underlying_\Delta(B) = B}

      \inferrule
      {}
      {\underlying_\Delta(L) = L}

      \inferrule
      {
        (\type~t[\ov{\alpha~S}]~T) \in \ov{D}
        \and
        \eta = (\ov{\alpha \by R})
      }
      {\underlying_\Delta(t[\ov{R}]) = \underlying_\Delta(T[\eta])}

      \inferrule
      {(\alpha : T) \in \Delta}
      {\underlying_\Delta(\alpha) = \underlying_\Delta(T)}

      \ROUNDTWO{
        \inferrule
        {}
        {\underlying_\Delta(\ttilde V) = V}
      }

      \inferrule
      {(\alpha: T) \in \Delta}
      {\utyvar_\Delta(\alpha)}

      \inferrule
      {(\alpha: T) \in \Delta}
      {\unonconstraint_\Delta(\alpha)}

      \inferrule
      {\ustruct(T)}
      {\unonconstraint_\Delta(T)}

      \inferrule
      {\ubase(T)}
      {\unonconstraint_\Delta(T)}

      \inferrule
      {\underlying_\Delta(T) =\interface~\br{\ov{mM}}}
      {\unonconstraint_\Delta(T)}

      \inferrule
      {\,}
      {\maintype{\typair{T}{V}}  =  \maintype{T}}

      \inferrule
      {\,}
      {\uppertype{\typair{T}{V}}  =  V}

      \inferrule
      {(\alpha : T) \in \Delta}
      {\bounds_{\Delta}(\alpha) = T}

      \inferrule
      {\neg \utyvar_\Delta(T)}
      {\bounds_{\Delta}(T) = T}

      \inferrule
      {
        \underlying_\Delta(T) = \struct~\br{\ov{f~\Phi}}
      }
      {\fields_\Delta(T) = \ov{f~\Phi}}

      \inferrule
      {
      }
      {\maintype{(m[\ov{\alpha~S}](\ov{x~\Phi})~\Psi)}  =
        m[\ov{\alpha~S}](\ov{x~\maintype{\Phi}})~\maintype{\Psi}
      }

      \inferrule
      {
        (\func~(x~t[\ov{\alpha~S}])~m[\ov{\beta~T}](\ov{y~\Phi})~\Psi~\br{\return~e}) \in \ov{D} \\
        \theta = (\ov{\alpha \by S'} \cup \ov{ \beta \by T'})
      }
      {\mbody(t[\ov{S'}].m[\ov{T'}]) = (x ,\ov{y:\maintype{\Phi}}[\theta]).e : \maintype{\Psi}[\theta]}
    \end{mathpar}
  \end{ruled}
  \caption{WG: auxiliary definitions}\label{fig:wg-aux-defs}
\end{Figure}

Go's type system is built around interface satisfaction through
structural subtyping (a.k.a.\ ``duck typing'').
A key ingredient of subtyping in Go is the notion of \emph{underlying
  type} (\ROUNDTWO{written $\underlying_\Delta(T)$, where $\Delta$ is
a type environment}) as formalised in
Figure~\ref{fig:wg-aux-defs} along with other auxiliary
definitions.
The underlying type of a base type (resp.\ type literal) is itself.
The underlying type of a type variable is its bound, recorded
in $\Delta$, and the underlying type of a named type
is the underlying type of the RHS of its declaration.
We use the following notation: $\uinterface(T)$ holds when
$T$ is an interface or a type variable, $\ustruct(T)$ holds when $T$
is a struct, $\ubase(T)$ when $T$'s underlying type is a base type
(see Figure~\ref{fig:wg-aux-defs-app}).
We define a few additional predicates:
$\utyvar$ is used to distinguish type variables from other types,
$\unonconstraint$ is used to discriminate interfaces that specify type
unions.
Figure~\ref{fig:wg-aux-defs} also introduces functions to retrieve the
bound of a type, the fields of a struct, and the body of a method.
Function $\maintype{\typair{T}{V}}$ returns the first element of a
pair type, recursively applying the transformation;
$\uppertype{\typair{T}{V}}$ returns the second such element.

The Go specification defines interface satisfaction with two
flavours: implementability and assignability. We formalise both
relations in Figure~\ref{fig:wg-impl-assig}.
\begin{Figure}
  \begin{ruled}
    Types of interfaces \hfill  \fbox{$\types_\Delta(T)$} \fbox{$\types_\Delta(F)$} \fbox{$\types_\Delta(C)$}  \fbox{$\types_\Delta(E)$}
    \begin{mathpar}
      \inferrule
      {\neg \uinterface(T)}
      {\types_\Delta(T) = \{ \maintype{T} \} }

      \inferrule
      {\underlying_\Delta(T) = \interface~\br{\ov{F}}
      }
      {\types_\Delta(T) = \bigcap \{ \types_\Delta(F_i) \mid F_i \in  \ov{F} \} }

      \inferrule
      {\, }
      {\types_\Delta(mM) = { \Universe}}

      \ROUNDTWO{
      \inferrule
      {
        \,
      }
      {\types_\Delta(\ttilde T) = \{ \ttilde \maintype{T} \}
      }
      }

      \inferrule
      {
      }
      {\types_\Delta(E) = \bigcup \{ \types_\Delta(C) \mid C \in  E \}}
    \end{mathpar}

    Methods of interface \hfill \fbox{$\methods_\Delta(T)$} \fbox{$\methods_\Delta(F)$}  \fbox{$\methods_\Delta(C)$}
    \begin{mathpar}
      \inferrule[]
      {
      }
      {\methods_\Delta(B) = \emptyset}

      \inferrule[]
      { \ustruct(L)}
      {\methods_\Delta(L) = \emptyset}

      \inferrule
      {      
        \underlying_\Delta(T) = \interface~\br{\ov{F}}
      }
      {
        \methods_\Delta(T) = \bigcup \{ \methods_\Delta(F_i) \mid \, F_i \in \ov{F}  \}
      }

      \inferrule
      { \neg \uinterface(t[\ov{S'}]) }
      {\methods_\Delta(t[\ov{S'}]) =
        \set{mM[\eta]
          \mid (\func~(x~t[\ov{\alpha~S}])~mM~\br{\return~e}) \ { \in \ov{D}} \comma
          \eta = (\ov{\alpha \by S'})
        }
      }

      \inferrule
      {
      }
      {\methods_\Delta(mM) = \{ mM \} }
      \quad
      \inferrule[]
      {        
      }
      {\methods_\Delta(\ttilde T) = \methods_\Delta( T)  }
      \quad
      \inferrule
      { |E| > 1 }
      {\methods_\Delta(E) =  \bigcap \{ \methods_\Delta(C) \mid C \in  E \} }   
    \end{mathpar}
    
    Implements
    \hfill \fbox{$T \imp_\Delta U$}
    \begin{mathpar}
      \inferrule[<:$_{\alpha}$]
      {\,
      }
      {\alpha \imp_\Delta \alpha}

      \inferrule[<:$_{T}$]
      { \neg\uinterface(T)}
      {T \imp_\Delta T}

      \ROUNDTWO{
      \inferrule[<:$_I$]
      {
        \uinterface(U)
        \and
        \neg\utyvar_\Delta(U)
        \and
        \maintype{\methods_\Delta(T)} \supseteq \maintype{\methods_\Delta(U)}
        \\\\
        \forall C \in \types_\Delta(T) : \, C \in \types_\Delta(U) \lor (\ttilde V \in \types_\Delta(U) \land \underlying_\Delta(C) = V)
      }
      { T \imp_\Delta U }
      }
    \end{mathpar}

    Assignable
    \hfill \fbox{$T \prec:_\Delta U$}
    \begin{mathpar}
      \inferrule[$\assg_{\imp}$]
      {T \imp_\Delta U}
      {T \assg_\Delta U }

      \inferrule[$\assg_{tL}$]
      {\underlying_\Delta(t[\ov{S}]) = L}
      {t[\ov{S}] \assg_\Delta L }

      \inferrule[$\assg_{Lt}$]
      { L = \underlying_\Delta(t[\ov{S}])}
      {L \assg_\Delta t[\ov{S}] }
    \end{mathpar}
  \end{ruled}
  \caption{Implements and Assignability relations.
    $\Universe$
    is defined s.t.\ $T \in \Universe$ for all $T$.
  }\label{fig:wg-impl-assig}
\end{Figure}

Implementability determines whether a type satisfies an interface.
Assignability, on the other hand, governs when a value of one type can
be assigned to a variable of another type, extending implementability.
Before Go 1.18, an interface was solely defined by its methods and so
any type implementing those methods would satisfy the interface,
regardless of its declared type.
To handle operators on generic types, interfaces were extended in Go
1.18, to also include explicit type unions, allowing developers to
restrict which types can satisfy an interface beyond just method
requirements.
Figure~\ref{fig:wg-impl-assig} formalises this through two auxiliary
functions that return type sets ($\types_\Delta(T)$) and method sets
(${\methods_\Delta(T)}$).

The $\types_\Delta(T)$ function returns the type set of a type, and
particularly the set of types explicitly listed in an interface,
possibly flagged with $\ttilde T$ notation.
The type set of a non-interface type, is the singleton set of this
type.
The type set of elements of an interface correspond to the
intersection of their respective type sets.
The type set of a method signature is the universe of all types
($\Universe$), i.e., methods alone do not constrain types.
Let $\Universe$ be the universe of all types, axiomatically
defined such that for any type $T$ we have that $T \in
\Universe$. %
The type set of an approximation is itself ($\ttilde T$).
The type set of a union of types is the union of their type sets.

The ${\methods_\Delta(T)}$ function extracts method signatures
attached to a type $T$.
The method set of base types and anonymous structs is empty.
The method set of an interface is the union of the method set of its
elements.
The method set of a named, non-interface type is the (instantiated)
signatures of its declared methods.
The method set of a method signature is the singleton of itself.
The method set of $\ttilde T$ is the method set of $T$.
The method set of a union of types is the intersection of the method
sets of its elements (note that when $\lvert E \rvert = 1$, case $T$
or case $\ttilde T$ applies).

\begin{Figure}
  \begin{ruled}
    Well-formed types
    \hfill \fbox{$\Delta \vdash T \nice\vphantom{f}$}  \fbox{$\Delta \vdash \Phi \supernice$}  \fbox{$\Delta \vdash T \ok\vphantom{f}$}
    \begin{mathpar}
      \inferrule[t-nc]
      {
        \Delta \vdash T \ok
        \\
        \unonconstraint_\Delta(T)
      }
      { \Delta \vdash T \nice }

      \inferrule[t@v-bc]
      {
        \Delta \vdash T \nice
        \\
        V = {\bounds_{\Delta}(T)}
      }
      { \Delta \vdash \typair{T}{V} \supernice }

      \inferrule[t-named]
      {
        \Delta \vdash \ov{R'} \ok
        \\
        (\type~t[\ov{\alpha~R}]~T) \in \ov{D}
        \and
        \eta = (\ov{\alpha \by R'})
        \\
        \ov{\alpha[\eta] \imp_\Delta R[\eta]}
      }      
      { \Delta \vdash t[\ov{R'}] \ok }  
    \end{mathpar}
    Well-formed declarations \hfill \fbox{$D \ok$}
    \begin{mathpar}
      \inferrule[t-type]
      {
        \Delta = \ov{\alpha: R}
        \\
        \neg\utyvar_\Delta(T)
        \\
        \emptyset \vdash    \ov{\alpha~R} \ok \\
        \Delta \vdash T \ok
      }
      { \type~t[\ov{\alpha~R}]~T \ok }

      \inferrule[t-func]
      {
        \neg\uinterface(t[\ov{\alpha}]) \\
        \distinct(x, \ov{y}) \\
        (\type~t[\ov{\alpha~R}]~\_) \in \ov{D} \\
        \emptyset \vdash  \ov{\alpha~R}  \ok \\
        \ov{\alpha: R} \vdash \ov{\beta~S} \ok\\
        \Delta =  \ov{\alpha: R} , \ov{\beta: S}
        \\
        \Delta \vdash \ov{\Phi} \supernice \\
        \\
        \Delta \vdash \Psi \supernice \\
        \\
        \Delta \stoup
        x : t(\ov{\alpha}) \comma \ov{y : \maintype{\Phi}} \vdash e : U \\
        \\
        U \assg_\Delta \maintype{\Psi}
      }
      { \func~(x~t[\ov{\alpha~R}])~m[\ov{\beta~S}](\ov{y~\Phi})~\Psi~\br{\return~e} \ok }
    \end{mathpar}
  \end{ruled}
  \caption{Well-formed types and declarations}\label{fig:wg-wf-types-decl}
\end{Figure}

We can now define subtyping precisely.
Figure~\ref{fig:wg-impl-assig} (bottom) defines interface
implementation ($\imp_\Delta$) and assignability ($\assg_\Delta$)
relations.
A type $T$ implements an interface $U$ if
(1) $T$'s method set is a superset of $U$'s method set and
(2) $T$'s type set is a subset of $U$'s type set.
\ROUNDTWO{
The function $\maintype{\_}$ essentially erases (recursively) our type
annotations in method signatures and within types, e.g., $\maintype{\typair{T}{V}}=
\maintype{T}$.
Aspect (2) involves checking that each element in the type set of $T$ is
directly in the type set of $U$ or is otherwise covered by an appropriate
$\ttilde{}V$.
}

\ROUNDTWO{We note that the official Go Language
Specification~\cite{The-Go-Team-2025} defines interface satisfaction purely in
terms of type sets.
We give an equivalent definition that explicitly tracks both type sets and method sets as
  separate components of our subtyping judgements. This design choice
  faithfully models the actual implementation
  of the Go compiler, which maintains analogous internal representations
  tracking both aspects independently.
}

Assignability extends implementation with rules for
dealing with named types and their underlying types, capturing Go's
implicit type conversion rules.
Note that $\assg$ is reflexive via $\imp$, but it is not transitive, as
Rules $\assg_{tL}$ and $\assg_{Lt}$ require one of the types to be
anonymous.

The WG type system relies on type and method declarations to be well-formed.
We show key rules in Figure~\ref{fig:wg-wf-types-decl}.
The judgement $\mathcal{J} \ok$ ensures types are well-formed,
possibly under variable and type parameter contexts.
We specify two variants of this judgement:
$\Delta \vdash T \nice$ holds when $T$ is a basic interface, i.e., not
a union as defined in Figure~\ref{fig:wg-aux-defs}.
In WG and in Go, variables and parameters must be assigned a type $T$
such that $T \nice$ holds.
Judgement $\Delta \vdash \Phi \supernice$ simply enforces that our
meta-theoretic annotations are well-formed.

Type declarations in WG require their bound types and body to be
well-formed.
Method declarations enforce additional constraints: the receiver must
be a non-interface type, parameters must be distinct, and the method
body must be well-typed under the appropriate context, see
Figure~\ref{fig:wg-wf-types-decl}.
We relegate some wellformedness rules for interface and struct
definition to \appendixref{fig:wg-iface-types-wf}{Figure} as they are similar
to those of FGG.
A wellformedness restriction is that interfaces are restricted so that
unions cannot include type variables nor interfaces with non-empty
method sets.

We show key typing rules in
Figure~\ref{fig:fgg-typing}. For simplicity, we enforce that operators
take as argument and return type values of the same
type. Generalising the typing of operators is straightforward but
orthogonal to our design.
The expression typing judgement $\Delta; \Gamma \vdash e : T$ (where
$\Delta$ maps type parameters to their bounds and $\Gamma$ maps
variables to their types) follows typical patterns. For instance, Rule
\textsc{t-call} handles generic type parameter instantiation through
substitution $\eta$, it also ensures that the instantiated type
parameters satisfy their bounds.
Observe how implementability ($\imp_\Delta$) is used to check that type
instances implements their bounds, while assignability
($\assg_\Delta$) is used to check that struct fields and method
arguments are compatible with their declared types.
Rule \textsc{t-op} deals with operators when the underlying type of the
operands is a base type. We write $\opdom(\circ)$ for the set of
types on which $\circ$ is defined.
Rule \textsc{t-op-$\alpha$} allows the operands to be typed with
a type variable $\alpha$, in which case all types in the bound of
$\alpha$ must support $\circ$.

For the sake of space, typing rules for type assertions (as in FGG),
constants (standard), and conversions are in
\appendixref{fig:wg-typing-expr-app}{Figure}.
Essentially a conversion $T(e)$ is well-typed if $e$ has type $U$ and
either $U \assg_\Delta T$ or
$\underlying_\Delta(U) = \underlying_\Delta(T)$, for a suitable
$\Delta$.

\begin{Figure}
  \begin{ruled}
    Expressions \hfill \fbox{$\Delta \stoup \Gamma \vdash e : T$}
    \begin{mathpar}   
      \inferrule[t-call]
      {
        \Delta \vdash \ov{S'} \nice
        \and
        (m[\ov{\alpha~S}](\ov{x~\Phi})~\Psi) \in \methods_\Delta(R)  \\
        \Delta \stoup \Gamma \vdash e : R \\
        \Delta \stoup \Gamma \vdash \ov{e : T} \\
        \eta = (\ov{\alpha \by S'})
        \\
        \ov{\alpha[\eta] \imp_\Delta S[\eta]}
        \\
        \ov{T[\eta] \assg_\Delta \maintype{\Phi}[\eta]}
      }
      { \Delta \stoup \Gamma \vdash e.m[\ov{S'}](\ov{e}) : \maintype{\Psi}[\eta] }

      \inferrule[t-literal]
      {
        \ustruct(T) \\
        \Delta \vdash T \nice
        \and
        \Delta \stoup \Gamma \vdash \ov{e : S}
        \and
        (\ov{f~\Phi}) = \fields_\Delta(T)
        \and
        \ov{S \assg_\Delta \maintype{\Phi}}
      }
      { \Delta \stoup \Gamma \vdash T\br{\ov{e}} : T }

      \inferrule[t-var]
      {
        (x : T) \in \Gamma
      }
      { \Delta \stoup \Gamma \vdash x : T }

      \inferrule[t-field]
      {
        \Delta \stoup \Gamma \vdash e : T
        \\
        \ustruct(T)
        \\
        (\ov{f~\Phi}) = \fields_\Delta(T)
      }
      { \Delta \stoup \Gamma \vdash e.f_i : \maintype{\Phi_i} }
      
      \inferrule[t-op]
      {
        \Delta  \stoup \Gamma\vdash    \ov{ e} : T
        \and
        \underlying_\Delta(T) = B
        \and
        B \in \opdom(\circ)
      }
      {\Delta \stoup  \Gamma \vdash \circ(\ov{e}) : T}

      \inferrule[t-op-$\alpha$]
      {
        \Delta  \stoup \Gamma \vdash \ov{e} : \alpha
        \and
        \{ \underlying_\Delta(V) \mid V \in \types(\Delta(\alpha)) \}
        \subseteq
        \opdom(\circ)
      }
      {\Delta \stoup  \Gamma \vdash \circ(\ov{e}) : \alpha}
    \end{mathpar}
    Programs  \hfill \fbox{$P \ok$}
    \begin{mathpar}
      \inferrule[t-prog]
      {
        \distinct(\tdecls(\ov{D})) \\
        \distinct(\mdecls(\ov{D})) \\
        \ov{D} \ok \\
        \emptyset \stoup \emptyset \vdash e : T
      }
      { \package~\main;~\ov{D}~\func~\main()~\br{\un=e} \ok }
    \end{mathpar}
  \end{ruled}
  \caption{WG typing}
  \label{fig:fgg-typing}
\end{Figure}

\begin{example}[Assignability]\label{ex:assign}
  Consider the following
  well-typed WG (and Go) code:
\begin{golst}
type Point struct { x int; y int }              func (p Point) move() {}
type Coord struct { x int; y int }              type Converter struct { }
func (c Converter) toPoint(coord struct{ x int, y int}) Point { return coord }
func (c Converter) toAnonymous(p Point) struct{ x int; y int} { return p }
\end{golst}
\end{example}
The code above defines two identical named struct types \goinl{Point} and
\goinl{Coord}, with \goinl{Point} having a method \goinl{move}, and a struct \goinl{Converter}
with methods \goinl{toPoint} and \goinl{toAnonymous}, both with the same body.
Method \goinl{toPoint} essentially converts its (anonymously typed)
argument into a \goinl{Point} struct, while \goinl{toAnonymous} does
the reverse.
Intuitively,
the body of \goinl{toPoint} is well-typed because an anonymous struct of
the specified shape matches exactly with the shape of type \goinl{Point}
and so the \goinl{return} is justified (\goinl{struct\{x
  int; y int\}} $\assg$ \goinl{Point}).
Dually, we can always treat a
named type as its (anonymous) underlying type and so the body of
\goinl{toAnonymous} is also well-typed (\goinl{Point} $\assg$ \goinl{struct\{x
  int; y int\}}).
Notably, the calls
\goinl{Converter\{\}.toPoint(struct\{x int; y int\}\{1,1\})},
\goinl{Converter\{\}.toPoint(Point\{1,1\})} and
\goinl{Converter\{\}.toPoint(Coord\{1,1\})}
are all
well-typed in WG (and Go).
Similarly,
\goinl{Converter\{\}.toAnonymous(Point\{1,1\})} is
well-typed.
However, the invocation
\goinl{Converter\{\}.toAnonymous(Coord\{1,1\})} is ill-typed since
\goinl{Coord} $\assg$ \goinl{Point} does not hold.

A program is well-typed if its declarations are well-formed and
pairwise
distinct. %

\begin{example}[Implementability and Unions]
  Consider the following well-typed WG code:

\begin{golst}
type Addable interface { int | float64 | ~string }
type Printable interface { ~int | ~string ; customPrint() string }
type MyString string
func (x MyString) customPrint() string { ... }
type MyInt int
func (x MyInt) customPrint() string { ... }
type C struct { }
func (t C) Combine[T Addable](x T, y T) T { return x+y }
func (t C) Print[T Printable](x T) string { return x.customPrint() }
\end{golst}
\end{example}
The \goinl{Addable} type union includes types \goinl{int}, \goinl{float} and all types whose
underlying type is \goinl{string}.  \goinl{MyString} and
\goinl{MyInt} have both a custom printing method.
It is \emph{not} the
case that both types satisfy the constraint specified by
\goinl{Addable}: \goinl{C\{\}.Combine[MyString](MyString("a"),MyString("b")))} is well-typed, but
\goinl{C\{\}.Combine[MyInt](MyInt(2),MyInt(3))} is not. Since
\goinl{Addable} lists \goinl{int} explicitly but refers to $\ttilde$\goinl{string}, we have
$\types($\goinl{Addable}$) = \{$\goinl{int}, \goinl{float},
\goinl{string}, \goinl{MyString}$\}$ and so \goinl{MyString} $\imp$
\goinl{Addable} but \goinl{MyInt} $\not\imp$ \goinl{Addable}.

The example also defines interface \goinl{Printable},
consisting of all types whose underlying types are either \goinl{int}
or \goinl{string} that also have a \goinl{customPrint} method
defined on them.
Thus, method \goinl{Print} and
 the calls \goinl{C\{\}.Print[MyInt](MyInt(2))} and
\goinl{C\{\}.Print[MyString](MyString("a"))} are all well-typed, since
$\types($\goinl{Printable}$) = \{$\goinl{int}, \goinl{string},
\goinl{MyInt}, \goinl{MyString}$\} $ and $\methods($\goinl{Printable}$)
= \{$~\goinl{customPrint() string} $\}$ and thus \goinl{MyString} $\imp$
\goinl{Printable} and \goinl{MyInt} $\imp$
\goinl{Printable}.

\subsection{Operational Semantics of WG}
\label{sub:wg-reduction}

\begin{Figure}
  \begin{ruled}
    \begin{center}
      \begin{minipage}{0.5\textwidth}
        \begin{tabular}{ll}
          Values & $v$ ::= $T\br{\ov{v}} \; \mid \; T(  c)$
          \\
          Evaluation context	        & $E$ ::= \\
          \quad Hole                    & \quad $\Hole$ \\
          \quad Operator                & \quad $\circ(\ov{v}, E, \ov{e})$ \\
          \quad Method call receiver    & \quad $E.m\black{[\ov{T}]}(\ov{e})$ \\
          \quad Method call arguments	& \quad $v'.m\black{[\ov{T}]}(\ov{v},E,\ov{e})$
        \end{tabular}
      \end{minipage}
      \begin{minipage}{0.4\textwidth}
        \begin{tabular}{ll}
          ~\\
          \quad Structure	        & \quad $T\br{\ov{v},E,\ov{e}}$ \\
          \quad Select		        & \quad $E.f$ \\
          \quad Type assertion	        & \quad $E.(T)$ \\
          \quad Type conversion	        & \quad $T(E)$
        \end{tabular}
      \end{minipage}
    \end{center}

    Reduction \hfill \fbox{$d \becomes e$}
    \begin{mathpar}
      \inferrule[r-field]
      { (\ov{f~\Phi}) = \fields_\black{\emptyset}(T) }
      { T\br{\ov{v}}.f_i \becomes \maintype{\Phi_i}(v_i) }

      \inferrule[r-call]
      { (x  ,\ov{y: T}).e : U = \mbody(\vtype(v').m\black{[\ov{S}]}) }
      { v'.m\black{[\ov{S}]}(\ov{v}) \becomes U( e[x \by v', \ov{y \by T(v)}] ) }

      \inferrule[r-assert]
      { \vtype(v) \imp_{\black{\emptyset}} T }
      { v.(T) \becomes v }

      \inferrule[r-op]
      { \delta(\circ, \ov{v}) = v' }
      { \circ(\ov{v}) \becomes v'}
      \quad
      \inferrule[r-convert-b]
      {\ubase(T)}
      { T( U ( c)) \becomes  T(  c ) }
      \quad
      \inferrule[r-convert-s]
      {  \ustruct(T) }
      { T(U\br{\ov{v}}) \becomes T\br{ \ov v }}
      \quad
      \inferrule[r-convert-i]
      {  \uinterface(T) }
      { T(v) \becomes v}
      \quad
      \inferrule[r-context]
      { d \becomes e }
      { E[d] \becomes E[e] }
    \end{mathpar}
  \end{ruled}
  \caption{WG reduction}
  \label{fig:wg-reduction}
\end{Figure}

The operational semantics of WG are presented in
Figure~\ref{fig:wg-reduction}, defining a reduction
relation $d \longrightarrow e$ defined over a
standard evaluation context.
Values in WG consist of structure literals $T\{v\}$ and typed
constants $T(c)$. Given such a value we define
$\vtype(T{\br{\ov{v}}}) = \vtype(T(c)) = T$.

Most rules are standard: Rule {\sc r-op} evaluates
primitive operations via a semantic function $\delta$;
the type conversion rules {\sc r-convert-b}, {\sc -s} and {\sc -i}
specify the behaviour of type conversions. A conversion to a
base type or a struct type collapses all other (nested) type
conversions. A  conversion to an interface type is silently
erased, due to the nature of values in WG.
The type environment is empty in $\fields_\emptyset$ in
\textsc{r-field} (resp.\ $\imp_\emptyset$ in \textsc{r-assert}) since
reductions consider closed, ground terms.

While the essence of the operational semantics is standard,
we note that field selection and method calls are augmented  with type
conversions. The field selection rule {\sc r-field} introduces a type
conversion to the type of the projected field. Similarly, the method invocation
rule {\sc r-call} adds type conversions of the method arguments and
its eventual return value.

These type conversions are key in bridging the gap between the type
system's (structural) subtyping relations and the runtime representation of
values.
Recall Example~\ref{ex:assign}.  If we consider the dynamics of calls to
methods \goinl{toPoint} and \goinl{toAnonymous} from the example, the
need for type conversions becomes clear: the call
\goinl{Converter\{\}.toPoint(struct\{x int; y int\}\{1,1\})} requires
a conversion to type \goinl{Point} since we must be able to treat the
return value of \goinl{toPoint} as if it were a \goinl{Point}
struct. For instance, we may use it as a receiver of a call to
\goinl{move}, which is only defined on \goinl{Point}.
Similarly, the call \goinl{Converter\{\}.toAnonymous(struct\{x int; y
  int\}\{1,1\})} also requires a conversion of its argument to
\goinl{Point}---the body of the method can in principle call upon any
method of \goinl{Point}.
Since method calls look up the appropriate method body by inspecting
the runtime type of the receiver, these type conversions ensure that
the runtime objects line up accordingly.
The call \goinl{Converter\{\}.toPoint(struct\{x int; y int\}\{1,1\})}
thus evaluates to \goinl{Point\{1,1\}} as needed.

Note that although it is not the case that types \goinl{Coord} and
\goinl{Point} are assignable, we may use type conversions between the two
types. In a call such as
\goinl{Converter\{\}.toPoint(Coord\{1,1\})}, our semantics will first type convert \goinl{Coord} to
its anonymous underlying struct type (the type conversion of the
method argument) and from that type to \goinl{Point} (the conversion of
the return value). The method call results in the expression
\goinl{Point(struct\{x int; y int\}(Coord\{1,1\}))}, which reduces in one step
to \goinl{Point(struct\{x int; y int\}\{1,1\})} and then to the value \goinl{Point\{1,1\}}.

While the description above focuses on method calls, struct field
accesses must behave similarly, and so those too require explicit type
conversions in our semantics.

\subsection{Metatheory of WG}

In this section we report the type safety of WG, following mostly
standard arguments of type preservation and progress. As usual, type
preservation relies on type and expression-level substitution lemmas
(and several standard related lemmas that relate substitutions,
subtyping and type well-formedness which we omit from the main text
for the sake of conciseness).
\newcommand{\case}[1]{\textbf{#1}}
\newcommand{\subcase}[1]{\textbf{#1}}

Notably, due to the type conversions that are generated by our
operational semantics, the substitution property need not appeal to subtyping.

\begin{lemma}[Term Substitution]
  If $\Delta ; \Gamma , \ov{x:T} \vdash e : U$ and $\Delta ; \Gamma \vdash
  \ov{e' : T}$ then $\Delta ; \Gamma \vdash e[\ov{x := e'}] : U$.
\end{lemma}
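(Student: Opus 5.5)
We prove the statement by induction on the derivation of $\Delta ; \Gamma , \ov{x:T} \vdash e : U$, with the substitution $\ov{x := e'}$ fixed and the type environment $\Delta$ unchanged throughout. The key point is that the substituted terms $e'$ have exactly the types $\ov{T}$ recorded in the context, not merely assignable ones. So at each use of \textsc{t-var} we recover precisely the type the original derivation assigned. No appeal to assignability or implementability is needed, which is what the remark preceding the statement anticipates.

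\textbf{Base case.} This is \textsc{t-var}, where $e = y$. If $y = x_i$ for some $i$, then $U = T_i$ and $e[\ov{x := e'}] = e'_i$, which has type $T_i$ by hypothesis. Otherwise $(y : U) \in \Gamma$, the substitution leaves $y$ unchanged, and \textsc{t-var} applies directly under $\Gamma$. Here we rely on the convention that the $\ov{x}$ are distinct and not rebound in $\Gamma$. WG expressions contain no binders, since method bodies live only in declarations, so no capture-avoidance argument is required.

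\textbf{Inductive cases.} For every other rule (\textsc{t-call}, \textsc{t-literal}, \textsc{t-field}, \textsc{t-op}, \textsc{t-op-$\alpha$}, and the assertion, constant and conversion rules from the appendix), substitution distributes homomorphically over the immediate subexpressions. We apply the induction hypothesis to each typing premise, obtaining the same types for the substituted subterms. We then reapply the same rule. Every non-typing side condition mentions only $\Delta$ and types, never $\Gamma$, so it is preserved verbatim. This covers:
\begin{itemize}
\item the well-formedness $\Delta \vdash \ov{S'} \nice$, the bound checks $\ov{\alpha[\eta] \imp_\Delta S[\eta]}$, and the argument checks $\ov{T[\eta] \assg_\Delta \maintype{\Phi}[\eta]}$ in \textsc{t-call};
\item the $\methods_\Delta$ and $\fields_\Delta$ lookups;
\item the $\underlying_\Delta$ and $\opdom$ conditions;
\item the conversion condition ($U \assg_\Delta T$ or equal underlying types).
\end{itemize}
The constant rule $T(c)$ has no subexpressions and is immediate.

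\textbf{Main obstacle.} The only point needing care is that the typing judgement is syntax-directed and has no subsumption rule. Our reading of the figures is that there is no separate subsumption rule; we would confirm this against the appendix rules for assertions and conversions. Because the induction hypothesis preserves types exactly, no subsumption step is ever needed. It also means we need no auxiliary lemma about monotonicity of $\imp_\Delta$ or $\assg_\Delta$ under substitution. The only remaining bookkeeping is weakening-free reuse of $\Delta$, which is trivial here because $\Delta$ is never changed.
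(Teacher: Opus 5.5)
Your proof is correct and takes the same route as the paper, which proves this lemma by induction on the typing derivation of $\Delta ; \Gamma , \ov{x : T} \vdash e : S$ and simply notes that all cases are standard. Your point that the substituted terms carry exactly the types $\ov{T}$ is what makes those cases standard: every side condition, including those that mention the type of a subexpression as in \textsc{t-stupid}, \textsc{t-assert}$_S$ and \textsc{t-conv-u}, carries over unchanged with no subsumption step.
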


As usual in languages with subtyping, type preservation is defined by
appealing to the implements relation, essentially due to the dynamics
of rules {\sc r-convert-i} and {\sc r-assert}. However, there is a
subtlety due to the fact that most typing rules rely on \emph{assignability}.
As this relation is not transitive, the contextual cases of the type
preservation proof are less standard. However, we can derive a
quasi-transitivity property that relates implementable and assignable
types accordingly, which allows us to conclude type preservation for
the contextual cases as needed.

\begin{lemma}[Quasi-transitivity]
  Let $\Delta \vdash T, T',S\ok$. If $T \imp_\Delta T'$ and $T' \assg_\Delta S$ then $T
  \assg_\Delta S$.
\end{lemma}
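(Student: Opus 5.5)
The plan is a case analysis on the derivation of $T' \assg_\Delta S$, which ends in one of the rules $\assg_{\imp}$, $\assg_{tL}$ or $\assg_{Lt}$. Within each case I would then inspect the derivation of $T \imp_\Delta T'$, which ends in $\imp_\alpha$, $\imp_T$ or $\imp_I$. Before starting I would prove two auxiliary facts.

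The first is transitivity of $\imp_\Delta$: if $T \imp_\Delta T'$ and $T' \imp_\Delta S$ then $T \imp_\Delta S$.
\begin{itemize}
\item If either derivation ends in $\imp_\alpha$ or $\imp_T$, that side is an identity and the claim is immediate.
\item Otherwise both end in $\imp_I$. Then $\uinterface(S)$ and $\neg\utyvar_\Delta(S)$ hold, and the method-set condition follows from transitivity of $\supseteq$ on $\maintype{\methods_\Delta(\cdot)}$.
\item For the type-set condition, take $C \in \types_\Delta(T)$. Either $C \in \types_\Delta(T')$, and then the second derivation gives what we need directly. Or $\ttilde V \in \types_\Delta(T')$ with $\underlying_\Delta(C) = V$. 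In that case apply the second derivation to the element $\ttilde V$: either $\ttilde V \in \types_\Delta(S)$ directly, or $\ttilde W \in \types_\Delta(S)$ with $\underlying_\Delta(\ttilde V) = W$. By the rule $\underlying_\Delta(\ttilde V) = V$, we get $W = V$.
\item Either way $\ttilde V \in \types_\Delta(S)$ and $\underlying_\Delta(C) = V$, which is what $\imp_I$ requires. The argument is pointwise, so it also works when a type set is $\Universe$.
\end{itemize}

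The second fact: if $t[\ov{S}]$ is an interface type with $\underlying_\Delta(t[\ov{S}]) = L$, then $\methods_\Delta(t[\ov{S}]) = \methods_\Delta(L)$ and $\types_\Delta(t[\ov{S}]) = \types_\Delta(L)$. This holds because both functions, on interface types, are defined purely through $\underlying_\Delta$, and $\underlying_\Delta(L) = L$.

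With these, the three main cases go as follows.
\begin{itemize}
\item \textbf{Case $\assg_{\imp}$.} Here $T' \imp_\Delta S$, so transitivity gives $T \imp_\Delta S$, and hence $T \assg_\Delta S$.
\item \textbf{Case $\assg_{tL}$.} Here $T' = t[\ov{S}]$ and $S = L = \underlying_\Delta(t[\ov{S}])$. Rule $\imp_\alpha$ cannot end the first derivation, since $T'$ is not a type parameter. If it ends in $\imp_T$, then $T = t[\ov{S}]$ and $\assg_{tL}$ applies directly. If it ends in $\imp_I$, then $t[\ov{S}]$ is an interface, so $L$ is an interface literal, hence $\uinterface(L)$ and $\neg\utyvar_\Delta(L)$. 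By the second auxiliary fact the method and type sets of $t[\ov{S}]$ and $L$ coincide, so the premises of $\imp_I$ transfer verbatim. This gives $T \imp_\Delta L$ and hence $T \assg_\Delta L$.
\item \textbf{Case $\assg_{Lt}$.} Here $T' = L = \underlying_\Delta(t[\ov{S}])$ and $S = t[\ov{S}]$. If $T = L$, then $\assg_{Lt}$ gives the result. If the first derivation ends in $\imp_I$, then $L$ is an interface, so $t[\ov{S}]$ is a named interface that is not a type variable. The same transfer of method and type sets yields $T \imp_\Delta t[\ov{S}]$.
\end{itemize}
The well-formedness hypotheses $\Delta \vdash T, T', S \ok$ are used only to ensure that $\underlying_\Delta$ and the set functions are defined on these types.

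The step I expect to be the main obstacle is the $\ttilde$ bookkeeping in the transitivity lemma. It requires fixing precisely how $\underlying_\Delta$ acts on approximation terms, as the rule $\underlying_\Delta(\ttilde V) = V$ does. A secondary difficulty is checking that $\maintype{\cdot}$-erasure and the set definitions really commute with unfolding a named interface to its literal. If either fails, I would strengthen the second auxiliary fact into a separate lemma proved by induction on $\underlying_\Delta$.
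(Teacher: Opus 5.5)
Your proof is correct and follows essentially the paper's argument: a case analysis on both derivations, using transitivity of $\imp_\Delta$ for the $\assg_{\imp}$ case, and the fact that a named interface and its underlying literal have the same method and type sets for the $\assg_{tL}$ and $\assg_{Lt}$ cases. The differences are minor. You split on the $\assg_\Delta$ derivation first, whereas the paper splits on $\imp_\Delta$ first. You also prove transitivity inline and handle the $\ttilde V$ disjunct of $\imp_I$ explicitly via $\underlying_\Delta(\ttilde V) = V$, whereas the paper's cited transitivity lemma treats the type-set condition as plain inclusion.
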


\ROUNDTWO{
\begin{theorem}[Type Preservation]\label{thm:wg-preservation}
  If ${\Delta;\Gamma} \vdash e : T$ and $e \longrightarrow e'$
  then ${\Delta;\Gamma}\vdash e' : U$ for some $U$ such that
  $ \vdash U \imp_\ROUNDTWO{\Delta} T$.
\end{theorem}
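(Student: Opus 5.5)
We prove the statement by induction on the derivation of $e \longrightarrow e'$, with an inner case analysis on the last typing rule applied to $e$. In the base cases the reduction is one of the axioms of Figure~\ref{fig:wg-reduction} applied directly to $e$. The case \textsc{r-context} is the inductive step, handled by a further induction on the structure of the evaluation context $E$.

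\textbf{Base cases.} In most base cases the redex produces a conversion $U(\cdot)$ whose type is fixed syntactically, so we show that $e'$ has exactly type $T$ and close with reflexivity of $\imp_\Delta$.
For \textsc{r-field}, $T\br{\ov v}.f_i$ is typed by \textsc{t-field} with type $\maintype{\Phi_i}$. The result $\maintype{\Phi_i}(v_i)$ is a conversion, and it is well typed because \textsc{t-literal} gives $\vtype(v_i)\assg_\Delta\maintype{\Phi_i}$, which is the assignability premise of the conversion rule.
For \textsc{r-call}, we unfold $\mbody$ and use the typing of the method declaration from \textsc{t-func}, instantiated by the substitution $\theta$. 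This needs a type substitution lemma (preservation of typing, $\imp$ and $\assg$ under $\eta$ when the bounds are satisfied), which we take as one of the omitted standard lemmas.
\begin{itemize}
\item Each argument conversion $T_j(v_j)$ is well typed by the premise $\ov{T[\eta]\assg_\Delta\maintype{\Phi}[\eta]}$ of \textsc{t-call}.
\item Term Substitution (the lemma above) then types the body at some $U'$ with $U'\assg\maintype{\Psi}[\theta]$.
\item So the outer conversion $U(\cdot)$ is well typed with type exactly $\maintype{\Psi}[\eta]$.
\end{itemize}
Two points need care here. First, the receiver's runtime type $\vtype(v')$ must be the named type used by $\mbody$, which holds because $\mbody$ is looked up from $\vtype(v')$. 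Second, the method signature in $\methods(R)$ must coincide with the declared one, up to the erasure $\maintype{\cdot}$.
The cases \textsc{r-op}, \textsc{r-convert-b} and \textsc{r-convert-s} follow from an assumption that $\delta$ is type-respecting and from inversion on conversion typing. Converting a struct value to a struct type preserves the fields, since the underlying types agree or are related by $\assg$.
The genuinely subtyping cases are \textsc{r-assert} and \textsc{r-convert-i}, where $e'=v$. For \textsc{r-assert}, the premise gives $\vtype(v)\imp_\emptyset T$ directly. For \textsc{r-convert-i}, conversion typing gives $\vtype(v)\assg T$ with $T$ an interface. Since $\assg_{tL}$ and $\assg_{Lt}$ require $L$ to be a literal, and an interface target must be reached via $\assg_{\imp}$ (or else the underlying types coincide), we recover $\vtype(v)\imp T$. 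Here we need a weakening lemma from $\emptyset$ to $\Delta$ for ground types.

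\textbf{Context cases.} These are the main obstacle. From $E[d]:T$ and $d\to d'$, the induction hypothesis gives $d':U$ with $U\imp T_d$, and we must retype the surrounding constructor.
\begin{itemize}
\item \emph{Argument and field positions.} The premises are assignability checks $T_d\assg S$, and Quasi-transitivity yields $U\assg S$; the result type of the constructor is unchanged.
\item \emph{Receiver of a call.} We need that $U\imp R$ implies $\methods(U)\supseteq\methods(R)$ up to $\maintype{\cdot}$. This is immediate from \textsc{<:$_I$} or reflexivity, and the result type $\maintype{\Psi}[\eta]$ is unchanged.
\item \emph{Field selection.} Because $T_d$ is a struct and the only rule relating a non-interface type is \textsc{<:$_T$}, we get $U=T_d$.
\item \emph{Assertion.} The result type $T$ is fixed.
\item \emph{Conversion.} We use Quasi-transitivity for the $\assg$ branch. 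For the underlying-type branch, $U\imp T_d$ with $T_d$ non-interface forces $U=T_d$. If $T_d$ is an interface, we show that $\underlying(U)=\underlying(T)$ still yields a valid conversion, possibly by rephrasing it via $\assg$.
\item \emph{Operators.} We need that $U\imp T_d$ with $T_d$ a base or type-variable type forces $U=T_d$ under \textsc{t-op}. This holds because \textsc{<:$_I$} requires $\neg\utyvar(U)$ on the right, and a value typed at $\alpha$ reduces only to something typed at $\alpha$.
\end{itemize}
I expect the conversion and operator subcases, where equality rather than mere subtyping must be recovered, to be the delicate points. I would first establish an auxiliary lemma: if $U\imp T$ and $\neg\uinterface(T)$ then $U=T$.
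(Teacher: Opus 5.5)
Your overall plan matches the paper's proof: induction on the reduction; the conversions inserted by \textsc{r-field} and \textsc{r-call} make those base cases exact; Quasi-transitivity handles the assignability premises in context cases; and \textsc{r-assert} and \textsc{r-convert-i} are the only steps that genuinely move to a proper subtype. The gap is in the context case $e.(T)\longrightarrow e'.(T)$, which you close with ``the result type $T$ is fixed''.

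The assertion rules constrain the type $S$ of the subject, not just the result. \textsc{t-assert}$_I$ and \textsc{t-assert}$_S$ both require $\uinterface(S)$, and \textsc{t-assert}$_S$ also requires $T\imp_\Delta\bounds_\Delta(S)$. Your induction hypothesis only gives $e':S'$ with $S'\imp_\Delta S$, which causes two problems.
\begin{itemize}
\item If $S'$ is not an interface, neither rule applies. This happens, for instance, when the subject $S(v)$ steps to $v$ by \textsc{r-convert-i}. You then need \textsc{t-stupid}, which your proposal never invokes.
\item If $S'$ is an interface strictly below a non-variable $S$, you cannot obtain $T\imp S'$ from $T\imp S$ and $S'\imp S$, since transitivity runs the wrong way. \textsc{t-stupid} is also unavailable there. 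So the \textsc{t-assert}$_S$ subcase cannot be closed with the hypothesis as you state it.
\end{itemize}

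The fix is to strengthen the induction: a step either preserves the type exactly, or its result is a value, and hence has the non-interface type $\vtype(v)$. This holds because the only type-changing redexes are \textsc{r-assert} and \textsc{r-convert-i}, and both return values. Every non-hole evaluation context yields the same result type after the step:
\begin{itemize}
\item for calls, because the erased signatures coincide, as you note;
\item for selection, because a struct type has no proper implementors.
\end{itemize}
With this invariant, $\uinterface(S')$ forces $S'=S$, so \textsc{t-assert}$_S$ and \textsc{t-assert}$_I$ reapply. The case $\neg\uinterface(S')$ is handled by \textsc{t-stupid}. This is the ``value restriction'' step in the paper's proof.

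The rest of your context analysis is sound, and in places more explicit than the paper's. That covers the receiver via erased method sets, field selection and operators via the observation that nothing properly implements a non-interface type or a type variable, and the $\underlying$-branch of conversion via $\assg$.
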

}

We say expression $e$ panics if there exists an evaluation context
$E$, value $v$ and type $T$ such that $e = E[v.(T)]$ and $\vtype(v)
\notimp T$.

\begin{theorem}[Progress]\label{thm:wg-progress}
  If $\emptyset ; \emptyset \vdash e : T$ then either $e$ is a value,
  $e \longrightarrow e'$ for some $e'$, or $e$ panics.

\end{theorem}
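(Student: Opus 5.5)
We prove the statement by induction on the derivation of $\emptyset;\emptyset \vdash e : T$, in the usual style. We first establish a canonical forms lemma. If $\emptyset;\emptyset \vdash v : T$ for a value $v$, then either $v = U(c)$ with $\ubase(U)$, or $v = U\br{\ov{v}}$ with $\ustruct(U)$. Moreover, $\vtype(v) \imp_\emptyset T$ or $\vtype(v) \assg_\emptyset T$. If $T$ is itself a struct (resp.\ base) type, then $\fields_\emptyset(\vtype(v))$ has the same shape as $\fields_\emptyset(T)$ (resp.\ same underlying base type). Since the context is empty, no type variable can occur, so the cases \textsc{t-var} and \textsc{t-op-$\alpha$} are vacuous at top level. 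We also use the fact that $\Delta=\emptyset$ makes $\types$ and $\methods$ ground and computable. Since the conclusion allows "$e$ panics", a stuck assertion is acceptable; every other redex must fire.

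We proceed case by case, and in each case with subterms we appeal to the induction hypothesis on the leftmost non-value subterm. If that subterm reduces, \textsc{r-context} gives a step for $e$. If it panics, say $E'[v.(T')]$, then $e = E[E'[v.(T')]]$ for the obvious evaluation context, so $e$ panics. This works because the evaluation contexts of Figure~\ref{fig:wg-reduction} cover every argument position in left-to-right order. It remains to treat the case where all immediate subterms are values:

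\begin{itemize}
\item \textsc{t-literal}: the term is itself a value.
\item \textsc{t-field}: $\ustruct(T)$ and canonical forms make the receiver $U\br{\ov{v}}$ with matching fields, so \textsc{r-field} fires.
\item \textsc{t-op}: the operands are constants of base type in $\opdom(\circ)$, and we assume $\delta$ is total on its domain, so \textsc{r-op} fires.
\item Type assertions: either $\vtype(v)\imp_\emptyset T$ and \textsc{r-assert} fires, or the term panics by definition.
\item Conversions $T(v)$: split on $T$. If $\uinterface(T)$, \textsc{r-convert-i} applies. Otherwise, the typing rule for conversions (via $\assg$ or equal underlying types) forces $v$ to be a constant when $\ubase(T)$ and a struct literal when $\ustruct(T)$, so \textsc{r-convert-b} or \textsc{r-convert-s} applies. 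Canonical forms supplies this shape information.
\end{itemize}

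The main obstacle is \textsc{t-call}, where we must show that $\mbody(\vtype(v').m[\ov{S}])$ is defined. We know $m[\ldots] \in \methods_\emptyset(R)$, where $R$ is the static type of the receiver, and canonical forms gives $\vtype(v') \imp_\emptyset R$ or $\vtype(v') \assg_\emptyset R$. We split on $R$:
\begin{itemize}
\item If $R$ is a named non-interface type, $\assg$ with distinct types would require one side to be an anonymous literal, which has no methods. Hence $\vtype(v') = R$ and the method declaration exists.
\item If $R$ is an interface, rule \textsc{<:$_I$} yields $\maintype{\methods_\emptyset(\vtype(v'))} \supseteq \maintype{\methods_\emptyset(R)}$. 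Since $\vtype(v')$ is a non-interface type, its method set consists exactly of declared methods, so a declaration for $m$ on $\vtype(v')$ exists.
\end{itemize}
The delicate point here is the union case of $\methods$. When $R$ is a union interface, the receiver type must lie in the union, and $\methods$ of the union is the intersection of the members' method sets. So whichever member $\vtype(v')$ is, including via a $\ttilde V$ term, it carries $m$. For this we use $\methods_\emptyset(\ttilde T)=\methods_\emptyset(T)$ together with the well-formedness restriction that unions contain no type variables or method-bearing interfaces. Once the declaration is found, the arities of $\ov{S}$ and $\ov{v}$ match the declaration by distinctness of method declarations in \textsc{t-prog}, so \textsc{r-call} fires.
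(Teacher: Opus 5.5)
Your overall structure (induction on the typing derivation, propagating steps and panics through evaluation contexts, plus a canonical-forms lemma for closed values) matches the paper's proof. The paper details only the assertion, constant and conversion cases and defers the rest to FGG. However, your \textsc{t-call} case has a hole, because your canonical-forms lemma is too weak. You only claim $\vtype(v)\imp_\emptyset T$ or $\vtype(v)\assg_\emptyset T$.

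Under that claim, take a receiver $v' = L\br{\ov{v}}$ with $L$ an anonymous struct, typed at $R = t[\ov{S}]$ whose underlying type is $L$, so that $L \assg_\emptyset R$ by rule $\assg_{Lt}$. Then $m$ can lie in $\methods_\emptyset(R)$ as a declared method. Your remark that the anonymous side ``has no methods'' excludes nothing here. The anonymous side is the \emph{runtime} type, so $\mbody(L.m[\ov{S}])$ is undefined and \textsc{r-call} cannot fire. Hence your inference ``$\vtype(v') = R$'' does not follow from what you established; if this configuration were typeable, the term would be stuck. This is exactly the mismatch that the conversions inserted by \textsc{r-call} and \textsc{r-field} exist to prevent.

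The missing fact is that WG has no subsumption rule. The only rules applicable to a value are \textsc{t-literal} and \textsc{t-const}, which type $T\br{\ov{v}}$ and $T(c)$ at exactly $T$. This is the paper's canonical-forms lemma: a closed value of type $T$ is either $T\br{\ov{v}}$ with $T$ a struct type, or $T(c)$ with $T$ a base type.

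With that lemma, the receiver's static type $R$ is literally $\vtype(v')$, a non-interface type. The method sets of base types and of struct literals are empty, so $R$ must be a named $t[\ov{S'}]$. Its method set consists of its declared methods, so $\mbody$ is defined. Your interface and union subcases, which you flagged as the delicate point, are then vacuous, and the same fact makes your conversion case immediate.

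One smaller point: the arities of type arguments and value arguments come from the premises of \textsc{t-call} and the well-formedness of $t[\ov{S'}]$. Distinctness of declarations only makes $\mbody$ single-valued.
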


\section{Low-Level Welterweight Go}
\label{sec:lwg}

\newcommand{\lwgExInterface}{\texttt{Shape}}
\newcommand{\lwgExBound}{\texttt{Drawable}}
\newcommand{\lwgExStruct}{\texttt{Circle}}
\newcommand{\lwgMethName}{\texttt{Draw}}
\newcommand{\lwgMethNameBar}{\texttt{Render}}
\newcommand{\lwgMethNameBaz}{\texttt{asStruct}}

\begin{Figure}
  \begin{ruled}
    \begin{minipage}[t]{\textwidth}
      \begin{tabular}[t]{ll@{\hspace{3cm}}ll}
        Field index                     & $i$ \\
        Method table                    & \multicolumn{3}{l}{$\rho$ ::= $\ov{m \mapsto T.m} \cup \ov{\circ \mapsto \circ_T}$}
    \\
        Expression                      & $e$ ::= \\
        \quad Variable                  & \quad $x$
                                        & %
        \\
        \quad Dynamic Method call       & \quad $e.m\black{[\ov{S}]}(\ov{e})$
                                        &      \quad Type assertion            & \quad $e.(\ghostytypair{T}{V})$
        \\
        \quad Static Method call        & \quad $e \smsep  T .  m\black{[\ov{S}]}(\ov{e})$
                                        &      \quad Make Type                 & \quad $\MAKETYPE{e}{\Phi}{S}{\rho}$
        \\
        \quad Structure literal         & \quad $\ghost{T} \br{\ov{e}}$
                                        &         \quad Change Type               & \quad $\CHANGETYPE{e}{\ghostytypair{T}{V}}{S}$
        \\
        \quad Interface value      & \quad $\ghost{\Phi}(S,\rho, e)$
                                        & \quad Static Change             & \quad $\STATICCHANGE{e}{T}{S}$
        \\
        \quad Select                    & \quad $e.i$
                                        &         \quad Constant                  & \quad $\ghost{T}(c)$
        \\
        \quad Dynamic operation         & \quad $\circ(\ov{e})$
                                        &        \quad Static operation          & \quad $\circ\smsep T(\ov{e}) $ \\
      \end{tabular}
    \end{minipage}
  \end{ruled}
  \caption{LWG syntax}
  \label{fig:lwgg-syntax}
\end{Figure}

We propose a \emph{low-level} model of Go that, compared to WG, more
closely reflects the runtime mechanisms of actual Go.
It introduces an additional kind of value, the \emph{interface value}
which we illustrate with the following WG example:
\begin{lstlisting}[language=go,basicstyle=\footnotesize\ttfamily]
type Drawable interface { ... }               type Shape interface { ... } // Assume $\texttt{Shape} \imp_\emptyset \texttt{Drawable}$
type C struct {}                              func (t C) f[T Drawable](x T) T { return x }
type $\lwgExStruct$ struct {p Point; r int} // Assume  $\texttt{Circle} \assg_\emptyset \texttt{Shape}$
func main() { _ = C{}.f[Shape](Circle{}) }
\end{lstlisting}
where we assume that $\texttt{Circle} \assg_\emptyset \texttt{Shape} \imp_\emptyset \texttt{Drawable}$.
In LWG when the value $\ghost{\text{\lwgExStruct}}\br{\ldots}$ is
passed to function \goinl{f} at runtime, it is boxed in an interface
value such as:

\smallskip\centerline{\small%
$\ghost{\typair{\lwgExInterface}{\lwgExBound}}(\lwgExStruct, \rho, \ghost{\text{\lwgExStruct}}\br{\texttt{\ghost{Point}}\br{1,1}, \texttt{\ghost{int}}(2)})$
}\smallskip

\noindent
In this interface value, the ghost $\ghost{\text{\lwgExInterface}}$
specifies the interface logical type (which may be an instantiated type
parameter), $\ghost{\lwgExBound}$ (another interface) represents the
type bound that $\lwgExInterface$ must satisfy, and
${\text{\lwgExStruct}}$ is the concrete implementation type
(i.e., a struct).
The \ghost{pale red} shade indicates ghost types, i.e., information
that has no operational runtime significance. It is used in the formal
type system and compilation only, not in the operational semantics.
In the interface value above, the ghost
$\ghost{\typair{\lwgExInterface}{\lwgExBound}}$ specifies the type of
the interface, i.e., it is a value of type $\lwgExInterface$ which has
only access to the methods of $\lwgExBound$.

We also introduce a notion of \emph{boxy-ness}.
We say a struct value by itself is an \emph{unboxed} value, while
an interface value is a box that contains a
struct value (above, the nested $\ghost{\text{\lwgExStruct}}\br{\ldots}$).
The other elements of an interface value represent RTTI (the
non-ghost ${\text{\lwgExStruct}}$) and the \emph{method table} $\rho$
(mapping abstract methods of $\lwgExBound$ to their
implementations in $\lwgExStruct$).
For any interface value $ \ghost{\typair{T}{V}}(S, \rho, v)$ we always
require that $S \assg_\emptyset T \imp_\emptyset V$.

In accordance with actual Go, our low-level model
distinguishes two kinds of method calls.

\smallskip
\centerline{\small
  \hfill
  $\ghost{\text{\lwgExStruct}}\br{\ldots}\#{\text{\lwgExStruct}}.\lwgMethName()
  \qquad\qquad \ghost{\typair{\lwgExInterface}{\lwgExBound}}
  ({\text{\lwgExStruct}}, \rho,
  \ghost{\text{\lwgExStruct}}\br{\ldots}).\lwgMethName()$
  \hfill
}
\smallskip

\noindent
Above on the left is a \emph{static} method call.
The Go compiler generates these when the type of the receiver expression is a
\emph{struct} (or named base) type, which above is ${\text{\lwgExStruct}}$.
The ${\text{\lwgExStruct}}.\lwgMethName$ part of the static call can
be interpreted as a \emph{statically}-compiled jump address for the target method.
Above on the right is a \emph{dynamic} method call.
Operationally, the receiver must be an \emph{interface} value.
The low-level operational semantics dynamically looks up the target method
$\lwgMethName$ in the table $\rho$.

A key point is that structural subtyping in WG's type system permits,
e.g., a \emph{struct} value to be passed as a method call argument for
a compatible \emph{interface}-typed parameter---yet the two kinds are
\emph{not} operationally interchangeable at the lower-level.
Besides the distinct low-level mechanisms for method calls, the
two kinds have different representations in memory: the
layout of a struct value depends on its field types, while the layout
of an interface value is a fixed ``fat pointer'' structure.

Mediation between the two kinds of values is resolved by the Go compiler.
Based on static typing information, it inserts operations for ``boxing''
of struct values and ``box conversion'' between interface types.
We add these operations as expressions to our low-level language,
matching what can be concretely witnessed in the actual Go compiler.

The syntax of LWG (Figure~\ref{fig:lwgg-syntax}) reflects the
characteristics discussed above.
For simplicity LWG re-uses the syntax of declarations and types from
WG (Figure~\ref{fig:wg-syntax}) but the syntax of expressions is
revised to include: interface values, two types of method calls, and
operations for boxing and conversions. Additionally, field names in
field accesses are substituted by the index of the field, type
assertion uses $\typair{T}{V}$ targets, and type conversions are
no longer present.
This changes reflects the implementation of actual Go and the
assumptions that only interfaces carry RTTI at runtime.

The semantics of LWG is given in Figure~\ref{fig:lwg-reduction}.
We discuss LWG's semantics before its type system because they
directly correspond to Go's runtime behaviour, whereas LWG's type
system (presented in Section~\ref{sub:lwg-typing}) exists solely for
our meta-theory. %
We start by giving a high-level overview of the operations for value
conversions. Consider the following Go
declarations:

\begin{lstlisting}[language=go,basicstyle=\footnotesize\ttfamily]
type $\lwgExInterface$ interface { $\lwgMethName$() $\lwgExInterface$ } $\qquad\qquad\qquad$ func (x $\lwgExStruct$) $\lwgMethName$() $\lwgExInterface$ { return x }
\end{lstlisting}

\begin{Figure}
  \begin{ruled}
    \begin{center}
      \begin{minipage}{0.55\textwidth}
        \begin{tabular}{ll}
          Values & $w$ ::=  $\ghost{T}\{ \ov{v} \}
                   \;|\;
                   \ghost{T}(c)$
                   \qquad \,
                   $ v ::= $  $w
                   \;|\;
                   \ghost{\Phi}(S, \rho, w)$
          \\
          Evaluation context	        & $E$ ::= \\
          \quad Hole                    & \quad $\Hole$ \\
          \quad Dyn Operator                & \quad $\circ(\ov{v}, E, \ov{e})$ \\
          \quad Operator                & \quad  $\circ\smsep T(\ov{v}, E, \ov{e}) $ \\
          \quad Dyn Method call receiver    & \quad $E.m\black{[\ov{T}]}(\ov{e})$ \\
          \quad Dyn Method call arguments	& \quad $v'.m\black{[\ov{T}]}(\ov{v},E,\ov{e})$ \\
          \quad Method call receiver    & \quad $E \smsep  T.m\black{[\ov{T}]}(\ov{e})$ \\
          \quad Method call arguments	& \quad $v' \smsep  T.m\black{[\ov{T}]}(\ov{v},E,\ov{e})$ \\
        \end{tabular}
      \end{minipage}
      \begin{minipage}{0.4\textwidth}
        \begin{tabular}{ll}
          ~\\
          \quad Structure	        & \quad $T\br{\ov{v},E,\ov{e}}$ \\
        \quad Select		        & \quad $E.i$ \\
        \quad Type assertion	        & \quad $E.(\typair{T}{V})$ \\
        \quad Make type  	        & \quad $\MAKETYPE{E}{\Phi}{S}{\rho}$ \\
        \quad Change type  	        & \quad $\CHANGETYPE{E}{\ghostytypair{T}{V}}{S}$ \\
        \quad Static change  	        & \quad $\STATICCHANGE{E}{T}{S}$ \\
      \end{tabular}
    \end{minipage}
  \end{center}
Auxiliary functions \hfill \; %
\begin{mathpar}
      \inferrule
    {
      (\func~(x~t[\ov{\alpha~S}])~m[\ov{\beta~T}](\ov{y~\Phi})~\Psi~\br{\return~e}) \in \ov{D}
    }
    {\mjump(t.m) =      [\ov\alpha \stoup \ov\beta](x,  \ov{y}).e }
    
    \inferrule
    {
      L \imp_\Delta T 
    }
    {
      \MKTABLE{L}{T}{\Delta} = \emptyset
    }
    
    \inferrule
    {
      t[\ov{R}] \imp_\Delta T \and
      \mathcal{O} =  \{   \circ \mid
      \{ \underlying_\Delta(V)  \mid V \in \types_\Delta(T) \}
      \subseteq 
      \opdom(\circ)  
      \} }
    {
      \MKTABLE{t[\ov{R}]}{T}{\Delta} =
      [ m \mapsto t.m \mid mM \in \methods_\Delta(T) ]
      \cup
      [ \circ \mapsto \circ_B \mid  \circ \in \mathcal{O} ,  B = {\underlying(t[\ov{R}])}  ]
    }
  \end{mathpar}

  Reductions \hfill \; \fbox{$e \becomes e'$}
  \begin{mathpar}
    \inferrule[select]{
     }{
     \ghost{S}\{\ov{v} \}.i
     \rightarrow
     v_i
     }

     \inferrule[make]
     { %
     }
     {\MAKETYPE{w}{T@U}{S}{\rho}
     \rightarrow
     \ghost{T@U}(S,\rho, w )
     }

     \inferrule[change-type]
     {
     \rho' = \MKTABLE{S}{U}{\emptyset}}
     {
     \CHANGETYPE{\ghost\Phi(S,\rho , w)}{\ghostytypair{T'}{U}}{T}
     \rightarrow
     \ghost{{\typair{T'}{U}}}( S, \rho' , w)
     }

     \inferrule[static-change-s]
     {
     }
     {
     \STATICCHANGE{ \ghost{S}\{ \ov{v} \}}{T}{S}
     \rightarrow
     \ghost{T}\{ \ov{v} \}
     }

     \inferrule[static-change-c]
     {
     }
     {
     \STATICCHANGE{\ghost{S}(c)}{T}{S}
     \rightarrow
     \ghost{T}(c)
     }

     \inferrule[assert-ok${}_S$]{
     }{
     \ghost\Phi(S, \rho, w)
     .(\typair{S}{S})
     \rightarrow
     w
     }

     \inferrule[assert-ok${}_I$]{
       \uinterface(V)
       \quad
       S \imp_\emptyset U
       \quad
       \rho = \MKTABLE{S}{V}{\emptyset}
     }{
       \ghost\Phi(S, \rho', w)
       .(\typair{U}{V})
       \rightarrow
       \ghost{{\typair{U}{V}}}(S, \rho, w)
     }

     \inferrule[call${}$-dyn]{
     U = t[\ov{R}]
     \quad
     [\ov\alpha \stoup \ov\beta](x,  \ov{y}).e = \mjump(\rho(m))
     }{
     \ghost{\Phi}(U, \rho, v). m[\ov{S}](\ov{v})
     \rightarrow
     e[\ov{\alpha := R}][\ov{\beta := S}][x := v][\ov{y := v}]
     }

     \inferrule[call${}$-static]{
     [\ov\alpha \stoup \ov\beta](x,  \ov{y}).e = \mjump(t.m)
     }{
     w \smsep t[\ov{R}] . m[\ov{S}](\ov{v})
     \rightarrow
     e[\ov{\alpha := R}][\ov{\beta := S}][x :=w][\ov{y := v}]
     }

     \inferrule[op-dyn]
     {
     \forall i . \,
     v_i = \ghost{\Phi}(S, \rho, v_i')
     \\
     \delta(\rho(\circ), \ov{v'}) = v'' }
     { \circ(\ov{v}) \becomes \ghost{\Phi}(S, \rho, v'')}

     \inferrule[op-static]
     {
       \underlying_\emptyset(T) = B
       \and
       \delta(\circ_B, \ov{v}) = v' }
     {   \circ\smsep T(\ov{v}) \becomes v'}
     
     \inferrule[context]{
     e \rightarrow d
     }{
     E[e] \rightarrow E[d]
     }
   \end{mathpar}
 \end{ruled}
   \caption{LWG reduction}\label{fig:lwg-reduction}
\end{Figure}

\noindent
\textbf{Boxing -- a.k.a.\ ``Make Type''.}
  Consider the $\lwgMethName$ method above which returns \texttt{x} of
  type $\lwgExStruct$ while its signature declares $\lwgExInterface$ as return type.
  Since $\lwgExStruct$ is a structural subtype of $\lwgExInterface$
  ($\lwgExStruct\ \IMPLOP \lwgExInterface$), the method is well-typed
  in WG.
  At the low-level, however, values of type $\lwgExStruct$ are of struct
  kind, while the method return type $\lwgExInterface$ is of interface kind, which
  have incompatible memory formats.
  To resolve such cases, the Go compiler inserts a \texttt{make} operation.
  In the compiled program, the return expression of $\lwgMethName$ becomes:
  {\small$\MAKETYPE{\mathtt{x}}{\typair{\lwgExInterface}{\lwgExInterface}}{\lwgExStruct}{\rho}$}.
 The compiler knows $\lwgExInterface$ and $\lwgExStruct$ statically, hence
$\rho$ can be produced at compile-time.
At runtime, this \texttt{make} operation will box the $\lwgExStruct$ struct value  in a $\lwgExInterface$ interface value.
Notice that the RTTI $\lwgExStruct$ comes from the $\lwgExStruct$ statically embedded by the
compiler.

\smallskip
\centerline{\small
  $\MAKETYPE{\ghost\lwgExStruct\br{\ldots}}{\typair{\lwgExInterface}{\lwgExInterface}}{\lwgExStruct}{\rho}
  \quad \becomes \quad
  \ghost{{\typair{\lwgExInterface}{\lwgExInterface}}}(\lwgExStruct, \rho, \ghost\lwgExStruct\br{\ldots})$
}
\smallskip

\noindent
\textbf{Box conversion -- a.k.a.\ ``Change Type''.}
Assume we add another method:
\begin{lstlisting}[language=go,basicstyle=\footnotesize\ttfamily]
func (x $\lwgExStruct$) $\lwgMethNameBar$() any { return x.$\lwgMethName$() }
\end{lstlisting}
The return types of $\lwgMethName$ and $\lwgMethNameBar$ are both of
interface kind.
However, interface values contain method tables that must \emph{align} with
the type of the interface.
The method table of a $\lwgExInterface$ interface value contains $\lwgMethName$,
whereas the method table of $\mathtt{any}$ is empty, regardless of the
boxed struct value (i.e., $\ghost{\lwgExStruct}\br{\ldots}$ for both).
To resolve such cases, the Go compiler inserts a \emph{Change-Type} operation.
The return expression of $\lwgMethNameBar$ becomes
{\small
$\CHANGETYPE{\mathtt{x}.\lwgMethName()}{\ghostytypair{\mathtt{any}}{\mathtt{any}}}{\lwgExInterface}$
}.
At runtime, this operation will \emph{re}-box the $\lwgExStruct$ struct value (that was
boxed by the $\lwgExInterface$ interface value returned by $\lwgMethName$) in an
$\mathtt{any}$ interface value.
To do so, it \emph{dynamically} creates the new method table (cf.\ Make) based
on the \emph{RTTI} $\lwgExStruct$ carried by the old interface value and the \emph{new
interface type} $\mathtt{any}$ (embedded by the
compiler)---in this case, yielding the empty method table.

\smallskip
\centerline{\small
    $\CHANGETYPE{\ghost{{\typair{\lwgExInterface}{\lwgExInterface}}}(\lwgExStruct, \rho, \ghost\lwgExStruct\br{\ldots})}
    {\ghostytypair{\mathtt{any}}{\mathtt{any}}}{\lwgExInterface}
    \ \ \becomes\ \
    {\ghost{{\typair{\mathtt{any}}{\mathtt{any}}}}(\lwgExStruct, \emptyset, \ghost\lwgExStruct\br{\ldots})}$
}
\smallskip

\noindent
\textbf{Static cast -- a.k.a.\ ``Static change''.}
Assume we add another method:
\begin{lstlisting}[language=go,basicstyle=\footnotesize\ttfamily]
func (x $\lwgExStruct$) $\lwgMethNameBaz$() struct{p Point; r int} { return x }
\end{lstlisting}
The ``Static-Change'' operation performs a ``conversion'' between
assignable non-interface types.
This operation does not perform any actions related to boxing such as
RTTI or method tables (in Go it supports reflection which we are not
modelling here).
The return expression of $\lwgMethNameBaz$ becomes:
{\small
  $\STATICCHANGE{\mathtt{x}}{ \struct~\br{p~\texttt{Point}, r~\texttt{int}} }{\lwgExStruct}$
}.

Since it is converting between struct types, the compiler statically knows
both the old and the new types anyway.
Static-change is necessary to obtain a type
safety result for LWG.

\smallskip
\centerline{\footnotesize
    $\STATICCHANGE{\ghost\lwgExStruct\br{\texttt{\ghost{Point}}\br{1,1}, \texttt{\ghost{int}}(2)}}{ \struct~\br{p~\texttt{Point}, r~\texttt{int}} }{\lwgExStruct}
    \becomes
    \ghost{\struct~\br{p~\texttt{Point}, r~\texttt{int}}}\br{\texttt{\ghost{Point}}\br{1,1}, \texttt{\ghost{int}}(2)}$
}
\smallskip

We review Figure~\ref{fig:lwg-reduction} in more detail.
Values in LWG consist of structure literals $\ghost{T}\{ \ov{v} \}$,
typed constants $\ghost{T}(c)$, and interface value
$\ghost{\Phi}(S, \rho, w)$.
The reduction rules rely on context $E$, similar to
WG.
Function $\MKTABLE{S}{T}{\Delta}$ returns a table mapping method names
specified in interface $T$ to those implemented in the type named $t$
when $S = t[\ov{R}]$, as well as operators defined on all types listed
in $T$ to those defined on $t$ ($\circ_t$ denote the definition of
$\circ$ for $t$).
When $S$ is an anonymous type, then the table is empty.
Function $\mjump(t.m)$ returns the body of method $m$ defined for type
$t$ as well relevant bindings for parameters.
Note that the type arguments $\ov{T}$ passed to generic types
($t[\ov{T}]$) and methods ($m[\ov{T}](\ov{e})$) correspond to the
dictionaries used by the Go runtime (cf.\ Section~\ref{sec:overview}).
For readability, we syntactically separate type parameters
$\ov\alpha \stoup \ov\beta$ from value parameters $x, \ov{y}$, but
both are treated uniformly as substitutable variables in the
operational semantics.

Following the discussion above, the reduction rules are
straightforward. Rule \textsc{select} extract a field from a struct,
using the field index $i$.
Rules \textsc{make}, \textsc{change-type}, and \textsc{static-change-s},
\textsc{static-change-c} formalise the behaviour of boxing and type
conversion operations discussed above.
Observe %
that \textsc{assert-ok$_S$} allows an interface value to be
``unboxed''.
Rule \textsc{call-dyn} and \textsc{call-static} are for dynamic and
static method calls, respectively.
In the dynamic case, we first have to look-up the ``address'' of
method $m$ in the method table $\rho$; while we can jump directly to
$t.m$ in the static case. In both cases, formal (type) parameters are
substituted by the (type) arguments.
The rules for operators work similarly. In the dynamic case, we
look-up the implementation of $\circ$ for type $S$ via the table
$\rho$; in the static case we can perform the concrete operation directly.

\subsection{Typing in LWG}
\label{sub:lwg-typing}

\begin{Figure}
  \begin{ruled}
    Well-formed types and declarations \hfill \fbox{$ \Delta \vdash  \Phi \nice$}
    \begin{mathpar}
      \inferrule[t@v-b]
      {
        \Delta \vdash T \nice
        \\
        \Delta \vdash V \ok
        \\
        \neg\utyvar_\Delta(V)
      }
      { \Delta \vdash \typair{T}{V} \nice }
      
      \inferrule[t-func]
      {
        \neg\uinterface(t[\ov{\alpha}]) \\
        \distinct(x, \ov{y}) \\
        (\type~t[\ov{\alpha~R}]~T) \in \ov{D} \\
        \emptyset \vdash  \ov{\alpha~R}  \ok \\
        \ov{\alpha: R} \vdash \ov{\beta~S} \ok\\
        \Delta =  \ov{\alpha: R} , \ov{\beta: S}
        \\
        \Delta \vdash \ov{\Phi} \nice \\
        \Delta \vdash \Psi \nice \\
        \Delta \stoup
        x : \typair{t[\ov{\alpha}]}{t[\ov{\alpha}]} \comma \ov{y : \Phi} \vdash e : \Psi
      }
      { \func~(x~t[\ov{\alpha~R}])~m[\ov{\beta~S}](\ov{y~\Phi})~\Psi~\br{\return~e} \ok }
    \end{mathpar}
    Expressions \hfill \fbox{$\Delta \stoup \Gamma \vdash e : \Phi$}
    \begin{mathpar}
      \inferrule[t-dyn-call]
      {
        \Delta \vdash \ov{S} \nice
        \and
        \Delta \stoup \Gamma \vdash e : \typair{R}{V} \\
        \uinterface(V)\\
        (m[\ov{\beta~S'}](\ov{y~\Phi})~\Psi) \in \methods_\Delta(V) \\
        \eta = (\ov{\beta \by S})    \\
        \Delta \stoup \Gamma \vdash \ov{e : \Phi[\eta]} \\
        \ov{\beta[\eta] \imp_\Delta S'[\eta]}
      }
      { \Delta \stoup \Gamma \vdash e.m[\ov{S}](\ov{e}) : \Psi [\eta] }

      \inferrule[t-static-call]
      {
        \Delta \vdash \ov{S} \nice
        \and
        \Delta \stoup \Gamma \vdash e : \typair{R}{R} \\
        \neg\uinterface(R) \\
        (m[\ov{\beta~S'}](\ov{y~\Phi})~\Psi) \in \methods_\Delta(R) \\
        \eta = (\ov{\beta \by S})    \\
        \Delta \stoup \Gamma \vdash \ov{e : \Phi[\eta]} \\
        \ov{\beta[\eta] \imp_\Delta S'[\eta]}
      }
      { \Delta \stoup \Gamma \vdash e \smsep R .m[\ov{S}](\ov{e}) : \Psi [\eta]}      

      \inferrule[t-interface]{
        \emptyset \vdash \typair{T}{V} \nice \\ 
        \neg\uinterface(S) \\
        \uinterface(V)\\
        \neg\uinterface(T) \Rightarrow S=T\\
        \emptyset \stoup \Gamma \vdash e \mathbin{:}  \typair{S}{S} \\
        S \assg_\emptyset T  \imp_\emptyset V \\
        \rho = \MKTABLE{S}{V}{\emptyset}
      }{
        \emptyset \stoup \Gamma \vdash
        \ghost{\typair{T}{V}}(S, \rho, e)
        \mathbin{:}
        \typair{T}{V}
      }

      \inferrule[t-var]
      {
        (x : \Phi) \in \Gamma
      }
      { \Delta \stoup \Gamma \vdash x : \Phi }
      
      \inferrule[t-literal]
      {
        \Delta \vdash T \nice
        \and
        \ustruct(T)
        \and
        \Delta \stoup \Gamma \vdash \ov{e : \Phi}
        \and
        (\ov{\_~\Phi}) = \fields_\Delta(T)
      }
      { \Delta \stoup \Gamma \vdash  \ghost{T}\br{\ov{e}} : \typair{T}{T} }
      
      \inferrule[t-field]
      {
        \ustruct(T)
        \\
        \Delta \stoup \Gamma \vdash e : \typair{T}{T}
        \\
        (\ov{\_~\typair{S}{V}}) = \fields_\Delta(T)
      }
      { \Delta \stoup \Gamma \vdash e.i : \typair{S_i}{V_i} }    
    \end{mathpar}
  \end{ruled}
  \caption{LWG: Typing for declarations and expressions}\label{fig:lwg:type-expr1}\label{fig:lwgg-type-expr}
\end{Figure}

\begin{Figure}
  \begin{ruled}
    Expressions \hfill \fbox{$\Delta \stoup \Gamma \vdash e : \Phi$}        
    \begin{mathpar}
      \inferrule[t-make]
      {
        \Delta \vdash T \nice
        \\
        \Delta \vdash \typair{U}{V} \nice
        \\
        \neg\uinterface(T)
        \\
        \neg\uinterface(U) \Rightarrow (T=U)
        \\
        \uinterface(V)
        \\
        \Delta\stoup\Gamma \vdash e :  \typair{T}{T} \\
        T \assg_\Delta U \imp_\Delta V \\
        \rho = \MKTABLE{T}{V}{\Delta}\\
      }
      {\Delta\stoup\Gamma \vdash  \MAKETYPE{e}{\typair{U}{V}}{T}{\rho} : \typair{U}{V}}
      
      \inferrule[t-change]
      {
        \Delta \vdash \typair{T}{V} \nice        
        \\
        \Delta \vdash \typair{U}{V'} \nice
        \\
        \uinterface(V)
        \\
        \neg\uinterface(U)\Rightarrow T=U
        \\
        \uinterface(V')
        \\
        \Delta \stoup \Gamma \vdash e : \typair{T}{V} \and
        T \assg_\Delta U  \imp_\Delta V'
      }
      {\Delta \stoup \Gamma \vdash \CHANGETYPE{e}{\ghostytypair{U}{V'}}{T} : {\typair{U}{V'}} }

      \inferrule[t-static-change]
      {
        \Delta \vdash T \nice
        \and
        \Delta \vdash U \nice
        \and
        \neg\uinterface(T)
        \and
        \neg\uinterface(U)
        \and
        \Delta \stoup \Gamma \vdash e : \typair{T}{T} \and
        \underlying_\Delta(U) = \underlying_\Delta(T)
      }
      {\Delta \stoup \Gamma \vdash \STATICCHANGE{e}{U}{T} : \typair{U}{U}}

      \inferrule[t-assert$_I$]
      {
        \uinterface(T)
        \\
        \Delta \vdash \typair{T}{U} \nice
        \\
        \Delta \stoup \Gamma \vdash e : \typair{S}{V}
        \\
        \uinterface(V) \\
        \uinterface(U)\\
        T\assg_\Delta U \\ 
      }
      { \Delta \stoup \Gamma \vdash e.(\typair{T}{U}) : \typair{T}{U} }

      \squeeze
      \inferrule[t-assert$_S$]
      {
        \neg\uinterface(T)
        \\
        \Delta \vdash T \nice
        \\
        \Delta \stoup \Gamma \vdash e : \typair{S}{V}
        \\
        \uinterface(V)
        \\
        T \imp_\Delta V %
      }
      { \Delta \stoup \Gamma \vdash e.(\typair{T}{T}) : {\typair{T}{T}} }

      \fbox{
      \inferrule[t-stupid]
      {
        \neg\uinterface(T)
        \\
        \Delta \vdash \typair{T}{U} \nice
        \\
        \Delta \stoup \Gamma \vdash e : \typair{S}{V}
        \\
        \uinterface(V) \\
        T\assg_\Delta U \\ 
      }
      { \Delta \stoup \Gamma \vdash e.(\typair{T}{U}) : \typair{T}{U}
      }
    }\\

      \inferrule[t-const]
      {  \Delta \vdash T \nice
        \and
        \underlying_\Delta(T) = 
        \ctype(c)
      }       
      {\Delta \stoup  \Gamma \vdash T(c) : {\typair{T}{T}}}    

      \inferrule[t-op-static]
      {        
        \Delta  \stoup \Gamma \vdash \ov{e} : \typair{T}{T}
        \and
        \underlying_\Delta(T) = B
        \and
        B \in \textit{dom}(\circ)
      }
      {\Delta \stoup  \Gamma \vdash  \circ\smsep T(\ov{e}) : \typair{T}{T}}

      \inferrule[t-op-dyn]
      {
        \Delta  \stoup \Gamma \vdash \ov{e} : \typair{\alpha}{T}
        \and
        (\alpha : T) \in \Delta
        \and
        \{ \underlying_\Delta(V) \mid V \in \types(T) \}
        \subseteq
        \opdom(\circ)
      }
      {\Delta \stoup  \Gamma \vdash \circ(\ov{e}) :  \typair{\alpha}{T}}
      
      \inferrule[t-op-runtime]
      {        
        \Delta  \stoup \Gamma \vdash \ov{e} : \typair{T}{U}
        \and
        \underlying_\Delta(T) = B
        \and
        B \in 
        \{ \underlying_\Delta(V) \mid V \in \types(T) \}
        \subseteq
        \opdom(\circ)                
      }
      {\Delta \stoup  \Gamma \vdash  \circ (\ov{e}) : \typair{T}{U}}
    \end{mathpar}
  \end{ruled}
  \caption{LWG: additional typing rules}\label{fig:lwgg-type-expr-app}
\end{Figure}

The typing rules for LWG are presented in
Figures~\ref{fig:lwg:type-expr1} and~\ref{fig:lwgg-type-expr-app}.
A key feature of the LWG type system are types of the form
$\typair{T}{V}$, where $T$ represents the logical type and $V$ tracks
the available methods and the runtime representation constraint.
Our compilation (Section~\ref{sec:compilation}) uses these annotated
types to determine boxing status. Indeed type variables denote boxed
values, but after instantiation this information is lost---a type
variable instantiated with a base type loses its boxing
annotation. Without this information, the compiler cannot determine
when (un)boxing is required.

When $T = V$, we have standard Go types that are either consistently
boxed (interfaces) or unboxed (structs/base types). When $T \neq V$,
then $T$ is a type parameter $\alpha$ or a concrete instantiation
that must be boxed and satisfy interface $V$.
For example, $\typair{\texttt{int}}{\texttt{any}}$ represents an
integer value that has been boxed into an {\texttt{any}} interface.
As a general rule, if $\neg\uinterface(V)$ then $T = V$.
The $V$ component determines both the runtime representation
("boxy-ness") and the available method set for dynamic dispatch, while
$T$ preserves the precise information needed for static typing.
This dual tracking is essential because Go's structural subtyping
allows the same logical operation to require different runtime
representations depending on context: a method call on a struct
receiver expects unboxed arguments, while a matching method signature
in a generic interface may expect boxed arguments, yet we cannot
always determine statically which will be needed.

The well-formedness checks for LWG are similar to those of WG. We do
not repeat the rules for type declarations as they are the same as for
WG.
Judgement $\Delta \vdash \typair{T}{V} \nice$ holds when $T$ is a
basic interface (i.e., not a union) and $V$ is a well-formed
type.
An LWG method declaration is well-formed if it is defined on a
concrete type ($t$) and its body $e$ has type $\Psi$, matching the
method signature, with the appropriate context. Note how $x$ has type
$\typair{t[\ov{\alpha}]}{t[\ov{\alpha}]}$.

Method calls are treated similarly to WG but deal with the two forms
of method calls and subtyping is not used between actual arguments and
declared types.
Rule \textsc{t-static-call} applies when the receiver expression has
type $\typair{R}{R}$ with $R$ not an interface.
Rule \textsc{t-dyn-call} applies when the receiver has type
$\typair{R}{V}$ with $V$ an interface.
The method must be in $V$'s method table.

Rule \textsc{t-interface} ensures the boxed
expression $e$ is a concrete struct or base type, that $S$ is
assignable to the interface type $T$, and that $T$ implements the
bound $V$.
The method table $\rho$ must map the methods of the bound $V$ to their
concrete implementations in $S$.
If the value is boxing into a concrete type, then the RTTI must match the
logical type ($S=T$).

Rules \textsc{t-var}, \textsc{t-literal}, and \textsc{t-field} are
straightforward.
Note that, a struct type $T$ is typed with $\typair{T}{T}$ which
implies an unboxed value since $\neg\uinterface(T)$.

We comment on the rules for boxing and conversion, see
Figure~\ref{fig:lwgg-type-expr-app}.
Make, $\MAKETYPE{e}{\ghost{\typair{U}{V}}}{T}{\rho}$, creates an
interface value by boxing a concrete value. Hence Rule \textsc{t-make}
applies only when converting from a non-interface type to an interface
type.
Change-type, $\CHANGETYPE{e}{\ghostytypair{U}{V'}}{T}$, converts
between different interface types by updating the method table while
preserving the boxed value, hence Rule \textsc{t-change} applies only
when both source and target are interface types.
Note that the premises of these rules match those of Rule
\textsc{t-interface} as such terms reduce to interface values.
Static-change, $\STATICCHANGE{e}{U}{T}$, performs conversions between
struct or base types without any boxing operations, hence rule
\textsc{t-static-change} applies only when both $T$ and $U$ are
concrete type.
Type assertions and constants (Rules \textsc{t-assert$_I$} and
\textsc{t-assert$_S$}, and \textsc{t-const}) are straightforward. We
write $\ctype(c)$ for the type of constant $c$, e.g., $\ctype(42) =$
\goinl{int}.
Rule \textsc{t-stupid} is a standard
rule~\cite{Igarashi-et-al-2001,DBLP:journals/pacmpl/GriesemerHKLTTW20}
to avoid that expressions become ill-typed during reduction. It is
used for runtime terms only.

The last three rules deal with static and dynamic form of operators.
Rule \textsc{t-op-static} handles operators on concrete types that are
resolved statically. In this case, the operator must be available on
the underlying type of $T$.
Rule \textsc{t-op-dyn} handles operators for type parameters, in which
case the operator must be available for all underlying types of types
listed in $\alpha$'s type bound ($T$).
Rule \textsc{t-op-runtime} is necessary to handle runtime terms where
a generic parameter bounded by $V$ has been instantiated by a type $T$
whose underlying type is a base type.

\subsection{Metatheory of LWG}

We prove LWG's type safety through progress and preservation
arguments. Unlike WG, type preservation does not require the
implements relation because interface and struct values are distinct
at runtime, and typing requires exact type matches rather than
assignability.
Though this seems restrictive, our compilation procedure
(Section~\ref{sec:compilation}) shows that all well-typed WG programs
can be compiled to LWG using suitable make and change type primitives.

\ROUNDTWO{
\begin{theorem}[Type Preservation]\label{thm:lwg-preservation}
If $\Delta;\Gamma\vdash e : T$ and $e \longrightarrow e'$
then $\Delta;\Gamma \vdash e' : T$.
\end{theorem}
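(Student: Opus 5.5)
The proof will go by induction on the derivation of $e \longrightarrow e'$, with a case analysis on the last reduction rule and inversion on the typing derivation of $e$. First I will establish the usual auxiliary lemmas for LWG, stated for annotated types $\Phi$:
\begin{itemize}
\item \textbf{term substitution}: if $\Delta;\Gamma,\ov{x:\Phi}\vdash e:\Psi$ and $\Delta;\Gamma\vdash\ov{v:\Phi}$, then $\Delta;\Gamma\vdash e[\ov{x\by v}]:\Psi$, with exact types and no subtyping, mirroring the WG Term Substitution lemma;
\item \textbf{type substitution}: if $\ov{\alpha:R},\Delta;\Gamma\vdash e:\Psi$ and $\ov{S\imp R}$, then the judgement holds after applying $\eta=(\ov{\alpha\by S})$ to $\Delta$, $\Gamma$, $e$ and $\Psi$, including stability of $\methods$, $\fields$, $\underlying$ and $\mathsf{mkTable}$ under $\eta$;
\item \textbf{weakening}.
\end{itemize}
Because LWG typing demands exact type matches, preservation should give the same type $T$ rather than a subtype.

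For the contextual rule \textsc{context}, I will use a standard evaluation-context lemma: if $\Delta;\Gamma\vdash E[d]:\Phi$, then $d$ has some type $\Psi$ in the derivation, and replacing $d$ by any $d'$ of type $\Psi$ preserves the type $\Phi$. This is easy here precisely because no rule uses assignability between actual and declared types. The axiom cases then proceed as follows.

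\textsc{select}: inverting \textsc{t-field} and \textsc{t-literal} gives $v_i:\Phi_i=\typair{S_i}{V_i}$ directly.

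\textsc{make}: invert \textsc{t-make}. Its premises $T\assg U\imp V$ and $\rho=\MKTABLE{T}{V}{\emptyset}$ are exactly those of \textsc{t-interface}, using that $\Delta=\emptyset$ at runtime. I will need the observation that interface values only arise in closed terms, or else generalise \textsc{t-interface}'s context.

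\textsc{change-type}: invert \textsc{t-change} and then \textsc{t-interface} on the inner value. This yields $S\assg T$ and $T\assg U\imp V'$. I need $S\assg U$, which follows from the fact that $S$ is a concrete non-interface runtime type together with the WG Quasi-transitivity lemma; the latter must be adapted, since $S\assg T$ may hold by $\assg_{tL}$. The recomputed table $\rho'=\MKTABLE{S}{U}{\emptyset}$ must coincide with the one \textsc{t-interface} demands for bound $V'$. I will need to check the indexing here and may have to show $\mathsf{mkTable}$ depends only on the method set.

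\textsc{static-change-s/c}: the rule \textsc{t-static-change} gives equal underlying types, so the fields (resp.\ $\ctype(c)$) coincide and \textsc{t-literal}/\textsc{t-const} apply at $U$.

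\textsc{assert-ok$_S$}: the premise $\neg\uinterface(S)$ holds, and since the interface value types its payload at $\typair{S}{S}$, $w:\typair{S}{S}$.

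\textsc{assert-ok$_I$}: the result type $\typair{U}{V}$ follows by \textsc{t-interface} from the premises $S\imp U$ and the freshly computed table.

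\textsc{call-static} and \textsc{call-dyn}: the well-formed declaration rule \textsc{t-func} types the body at $\Psi$ under $\ov{\alpha:R},\ov{\beta:S'}$ with $x:\typair{t[\ov\alpha]}{t[\ov\alpha]}$. Type substitution then yields $\Psi[\eta]$, and term substitution handles the arguments. In the dynamic case, the receiver bound is $t[\ov R]$'s boxed payload $v$. The lookup $\rho(m)=t.m$ is justified by \textsc{t-interface}'s table condition, and $\methods(V)$'s signature for $m$ must equal $t[\ov R]$'s declared signature after instantiation; I expect to derive this from $S\assg T\imp V$.

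\textsc{op-static} and \textsc{op-dyn}: these require the assumption that $\delta$ respects types, i.e.\ $\delta(\circ_B,\ov{v})$ is a constant of the common type. In \textsc{op-dyn}, the rebuilt interface value reuses $S$ and $\rho$ and so retypes by \textsc{t-interface}. Here \textsc{t-op-runtime} or \textsc{t-op-dyn} inversion supplies that each operand has the same type $\typair{T}{U}$.

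The main obstacle is the dynamic-dispatch and conversion cases (\textsc{call-dyn}, \textsc{change-type}, \textsc{assert-ok$_I$}). There the dynamically computed method table and RTTI must be related to the ghost logical type and bound, which needs lemmas connecting $\methods$, $\assg$, $\imp$ and $\mathsf{mkTable}$ (signatures of $m$ in $V$ agree with those of $S$ under instantiation). A further subtlety is that \textsc{t-dyn-call} types arguments against $V$'s signature while the callee body expects $t$'s signature. I would first attempt to prove that these signatures are syntactically identical after erasure and annotation. If annotations differ, for example generic bounds versus concrete ones, I would instead appeal to \textsc{t-stupid} or strengthen the invariant on interface values.
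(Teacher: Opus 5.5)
Your plan follows the paper's proof: induction on $\longrightarrow$ with inversion, \textsc{t-interface} to retype the results of \textsc{make}, \textsc{change-type} and \textsc{assert-ok$_I$}, and type/term substitution plus a body-typing lemma for the two call rules.

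\textbf{The \textsc{call-dyn} step.} On the step you single out as the main obstacle, the paper does no more than you do. It concludes that $V$'s signature for $m$ lies in $\methods_\Delta(t[\ov R])$ simply ``since $t[\ov R]\imp V$''. Your doubt is well founded. Rule $\imp_I$ compares only $\maintype{\methods(\cdot)}$, i.e.\ erased signatures. Nothing in LWG well-formedness (its \textsc{t-func} accepts any non-variable bound annotation) forces the $\typair{T}{V}$ annotations of the interface signature and of the declared method to agree. So the equality cannot be derived from $S\assg T\imp V$.

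Your fallback via \textsc{t-stupid} does not work. That rule only types assertions, while the mismatch appears wherever the substituted argument is used in the body, since substitution needs the exact declared parameter type. The step goes through only with your other option: an explicit invariant, on \textsc{t-interface} or on well-formed programs, that the annotated signatures coincide. The paper assumes this tacitly.

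\textbf{The \textsc{make} step.} Your caveat is the same kind of gloss in the paper. It writes ``by \textsc{t-interface} and weakening'', which only works when the types are closed. You must either restrict the statement to $\Delta=\emptyset$ (all that Progress needs) or generalise \textsc{t-interface} to arbitrary $\Delta$.

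\textbf{Two patches to replace.}

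In \textsc{change-type}, do not extend Quasi-transitivity to a chain of two assignabilities. That is false in general: two named structs with the same underlying literal, like \texttt{Point} and \texttt{Coord}, are each assignable to and from the literal but not to each other. Instead split on $\uinterface(T)$, as the paper does.
\begin{itemize}
\item If $T$ is an interface, $S\assg T$ with $S$ a non-interface can only come from $S\imp T$. The existing lemma then gives $S\assg U$.
\item Otherwise \textsc{t-interface} forces $S=T$, and \textsc{t-change}'s premise $T\assg U$ is exactly what you need.
\end{itemize}
In both subcases, $\neg\uinterface(U)\Rightarrow S=U$ follows from \textsc{t-change}'s $\neg\uinterface(U)\Rightarrow T=U$.

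There is also no table-indexing problem. In the rule's own notation the target is $\typair{T'}{U}$ and the new table is $\MKTABLE{S}{U}{\emptyset}$. That $U$ is your bound $V'$, which is exactly what \textsc{t-interface} demands.

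In \textsc{assert-ok$_I$} you need two facts you did not mention.
\begin{itemize}
\item $U\imp V$, from the premise $U\assg V$ of \textsc{t-assert$_I$}/\textsc{t-stupid}. Assignability into an interface type always collapses to $\imp$.
\item In the \textsc{t-stupid} subcase, $S=U$, from $S\imp U$ with $U$ not an interface.
\end{itemize}
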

}

We define panics as in WG, noting that progress implies that only type
assertions may effectively fail at runtime, not make or change type.

\begin{theorem}[Progress]\label{thm:lwg-progress}
  If $\emptyset ; \emptyset \vdash e : T$ then either $e$ is a value,
  $e \longrightarrow e'$ for some $e'$, or $e$ panics.
\end{theorem}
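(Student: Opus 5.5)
We prove progress for LWG by induction on the derivation of $\emptyset;\emptyset \vdash e : T$, first establishing a canonical forms lemma for closed values. Canonical forms: if $\emptyset;\emptyset \vdash v : \typair{T}{V}$ with $\uinterface(V)$, then $v$ is an interface value $\ghost{\typair{T}{V}}(S,\rho,w)$ with $S \assg_\emptyset T \imp_\emptyset V$, $\neg\uinterface(S)$ and $\rho = \MKTABLE{S}{V}{\emptyset}$. If instead $\neg\uinterface(V)$, so that $T=V$, then $v$ is a raw value $w$: a struct literal $\ghost{T}\{\ov{v}\}$ with $\ustruct(T)$, or a constant $\ghost{T}(c)$ with $\underlying_\emptyset(T)=\ctype(c)$. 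The proof is by inspection of the value-typing rules \textsc{t-literal}, \textsc{t-const} and \textsc{t-interface}. This dichotomy between boxed and unboxed values, driven by whether the bound $V$ is an interface, is what makes progress possible despite LWG's two kinds of calls and conversions.

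For the main induction, the variable case is impossible in the empty context, and values are immediate. For each compound form we first apply the induction hypothesis to the subexpressions in evaluation-context position, from left to right. If some subexpression steps, \textsc{context} gives a step. If some subexpression panics, then $e$ panics, since the evaluation contexts compose. Otherwise all of them are values, and canonical forms gives their shape.

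\textsc{select}: the receiver has type $\typair{T}{T}$ with $\ustruct(T)$, so it is a struct literal. Its arity matches $\fields_\emptyset(T)$, so the rule fires.

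\textsc{call-static}: the receiver is a raw $w$ of non-interface type $R$. Since $\emptyset$ has no type variables, $R=t[\ov{R}]$ and $m \in \methods_\emptyset(R)$, so the method was declared and $\mjump(t.m)$ is defined.

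\textsc{call-dyn}: the receiver is $\ghost{\Phi}(S,\rho,w)$ with $\rho=\MKTABLE{S}{V}{\emptyset}$ and $m\in\methods_\emptyset(V)$. Since $S\imp V$ and $V$ has a method, $S$ cannot be anonymous (anonymous types have empty method sets), so $S=t[\ov{R}]$ and $\rho(m)=t.m$ is defined. The same argument makes \textsc{op-dyn} fire: the \textsc{t-op-runtime} and \textsc{t-op-dyn} premises guarantee that $\circ$ lies in the operator part of $\rho$. For \textsc{op-static}, $\delta$ is assumed total on $\opdom(\circ)$.

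\textsc{make}: canonical forms gives a raw value, so the rule fires.

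\textsc{change-type}: the argument is an interface value, and $\MKTABLE{S}{U}{\emptyset}$ is defined because $S\assg T$, $T\imp_\emptyset U$, and assignability from a non-interface $S$ to an interface $T$ coincides with $\imp$. This gives $S \imp U$.

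\textsc{static-change}: the argument is a raw struct or constant, so \textsc{static-change-s} or \textsc{static-change-c} fires.

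Assertions: the argument is an interface value $\ghost{\Phi}(S,\rho',w)$. If the target type is $\typair{S}{S}$, \textsc{assert-ok}$_S$ fires. If the target is an interface and $S\imp_\emptyset U$, then \textsc{assert-ok}$_I$ fires. Otherwise $e$ panics by definition.

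The main obstacle is the dynamic call and dynamic operator cases. There we must show that the method table stored in a well-typed interface value really contains $m$ (resp.\ $\circ$) pointing to an existing declaration. This requires a lemma: if $S\imp_\emptyset V$ with $\neg\uinterface(S)$, then every $mM\in\methods_\emptyset(V)$ is implemented by a declaration on $S$'s type name, and every operator admitted by $\types(V)$ is defined on $\underlying(S)$. Proving this means unfolding the \textsc{<:$_I$} rule, namely method-set inclusion together with the type-set condition with $\ttilde$. I would try to prove it directly from the definitions of $\methods$ and $\types$, handling the type-set part by noting that $\types_\emptyset(S)=\{S\}$ for non-interface $S$.
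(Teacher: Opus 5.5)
Your route is the paper's. You do induction on the typing derivation, with a canonical-forms lemma saying the bound is an interface exactly when the value is an interface value $\ghost{\typair{T}{V}}(S,\rho,w)$. For \textsc{t-dyn-call} you observe that the RTTI must be a named type $t[\ov{R}]$, so $\rho(m)=t.m$ and $\mjump(t.m)$ exist. You work out more cases than the paper, which shows only \textsc{t-make}, \textsc{t-change}, \textsc{t-static-change} and \textsc{t-dyn-call} and calls the rest straightforward. Apart from the point below, your cases go through.

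The step that fails is the dynamic-operator case. In a closed derivation \textsc{t-op-dyn} can never be used, because it needs $(\alpha:T)\in\Delta$. So $\circ(\ov{e})$ can only be typed by \textsc{t-op-runtime}. As written, that rule does not require the bound $U$ of $\typair{T}{U}$ to be an interface, and it says nothing about $\types(U)$: its side condition ranges over $\types(T)=\{T\}$ and amounts to $\underlying_\emptyset(T)\in\opdom(\circ)$.
\begin{itemize}
\item $+(\mathtt{int}(1),\mathtt{int}(2))$ is typable at $\typair{\mathtt{int}}{\mathtt{int}}$. It matches neither \textsc{op-dyn} (its operands are raw, not interface values) nor \textsc{op-static} (it lacks the $\smsep T$ annotation), and it does not panic.
\item When $U$ is, say, $\mathtt{any}$, the operands are boxed, but $\MKTABLE{T}{U}{\emptyset}$ has no operator entries. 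Its operator part requires every type in $\types(U)$ to support $\circ$.
\end{itemize}
So your claim that the premises guarantee $\circ$ lies in the operator part of $\rho$ is false for the rules as stated. The first example is a counterexample to the statement itself, and the paper's sketch never examines this case.

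To close it you need three things:
\begin{itemize}
\item a strengthened \textsc{t-op-runtime} requiring $\uinterface(U)$ and $\{\underlying_\emptyset(V)\mid V\in\types(U)\}\subseteq\opdom(\circ)$;
\item a reading of $\MKTABLE{\cdot}{\cdot}{}$ that also covers unnamed base types such as $\mathtt{int}$, which its definition omits;
\item an argument that all operands carry the same RTTI and table, as the shared metavariables of \textsc{op-dyn} demand. This follows from \textsc{t-interface}: $\neg\uinterface(T)$ forces $S=T$, hence $\rho=\MKTABLE{T}{U}{\emptyset}$ for every operand.
\end{itemize}

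There is a smaller hole of the same kind. \textsc{t-interface} types $\ghost{\Phi}(S,\rho,e)$ for arbitrary $e$, but there is no evaluation context inside an interface value. Your ``values are immediate'' step therefore also needs that rule restricted to raw values $w$.
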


\section{Compilation}
\label{sec:compilation}

\begin{Figure}
  \begin{ruled}
  Compilation: types \hfill \fbox{$\tybox{T}{}$}  \fbox{$mM \mapsto mM'$} \fbox{$\Phi \mapsto \Phi'$}   \fbox{$C \mapsto C$} \fbox{$T \mapsto T'$}
  \fbox{$\tycomp{T}{}$}
  \fbox{$\tycomp{\Phi}{}$} 
  \begin{mathpar}
    \inferrule
    {\uinterface(\uppertype{\Phi})}
    {\tybox{\Phi}{\Delta} = \Phi}
    
    \inferrule
    {\neg \uinterface(T)}
    {\tybox{\typair{T}{T}}{\Delta} = \typair{T}{\tt Any}}

    \inferrule[d-sig]
    {
      \ov{S \mapsto S'}
      \and
      \ov{\Phi \mapsto \Phi'}
      \and
      \Psi \mapsto \Psi'
    }
    {m[\ov{\beta~S}](\ov{y~\Phi})~\Psi
      \mapsto
      m_D[\ov{\beta~S'}](\ov{y~\tybox{ \Phi' }{\Delta}})~\tybox{ \Psi' }{\Delta}
    }

    \inferrule[d-type-b]
    { \,
    }
    {
      B \mapsto B
    }

    \inferrule[d-tvar]
    { \,
    }
    {
     \alpha \mapsto \alpha
    }
    
    \inferrule[d-type-named]
    {
      \ov{ S \mapsto S'}
    }
    { t{[\ov S]}\mapsto  t{[\ov S']}
    }

    \inferrule[d-type-struct]
    {
      \ov{\Phi \mapsto \Phi'}
    }
    {
      \struct~\br{\ov{f~{\Phi}}} \mapsto  \struct~\br{\ov{f~\Phi'}}
    }

    \inferrule[d-type-iface]
    {
      \ov{F \mapsto F'}
    }
    {
      \interface~\br{\ov{F}} \mapsto \interface~\br{\ov{F'}}
    }

    \inferrule[d-tilde]
    {
      T \mapsto T'
    }
    {
      \ttilde T \mapsto \ttilde T'
    }

    \inferrule[d-or]
    {
      C \mapsto C'
      \and
      E \mapsto E'
    }
    {
      C \typeor E \mapsto  C' \typeor E'
    }

    \inferrule[d-pair]
    {
      T \mapsto  S
      \and
      V \mapsto U
    }
    {
      \typair{T}{V} \mapsto \typair{S}{U}
    }
        
    \inferrule
    {  T \mapsto S}
    {\tycomp{T}{\Delta} = S}
    
    \inferrule
    {  T \mapsto S
      \and
      V \mapsto U
    }
    {\tycomp{\typair{T}{V}}{\Delta} = \typair{S}{U}}
  \end{mathpar}
\end{ruled}
\caption{\ROUNDTWO{Auxiliary notations and} compilation rules for types}\label{fig:compil-types}
\end{Figure}

Monomorphisation tackles the problem of data layouts for generic code by
generating all possibly needed specialisations, which typically requires a
whole program analysis.
An alternative approach more compatible with separate compilation is a uniform
(or \emph{boxed}) representation using pointer indirection, which requires
runtime boxing/unboxing actions.

Our compilation strategy is based on the latter approach.
The key points and challenges are how we (i) reuse Go's pre-existing runtime
infrastructure for interfaces to perform the type-directed static compilation
and the runtime boxing, and (ii) achieve a uniform compilation strategy given
Go's structural typing, as we discuss in Section~\ref{sec:boxesstructtyp}.

\subsection{Repurposing Go's Runtime Infrastructure}
\label{sec:repurpose}

We repurpose interface values as boxes that represent values of \emph{generic}
type, with boxing performed by LWG's make and change type operations.
This is natural since interface values are already boxed in the Go runtime,
and generic values have interface-like status with bounds that constrain their
operations.
Unboxing is performed by type assertions.
We present our compilation strategy as a translation from well-typed WG
programs to LWG programs using a $\mapsto$ arrow.

Figure~\ref{fig:compil-types} gives the compilation of types to prepare the program with uniform handling of dynamic methods. Each method $m$ occurring in interfaces is renamed to
$m_D$ ($D$ for dynamic) and its signature is adapted so
that all argument and return types are boxed (using
$\tybox{T}{}$). The rules ensure these changes
are made consistently, including in literal interface types in struct definitions.
We introduce $\tycomp{T}{}$ (resp.\ $\tycomp{\Phi}{}$) as a
shorthand for a compiled type.

\begin{Figure}
  \begin{ruled}
    Compilation: synthetic casts \hfill \fbox{$\Delta \stoup \Gamma \vdash e : U \mapsto_\Phi e'$}
  \begin{mathpar}
        \inferrule[make-iface]
    {
      \uinterface(V)
      \and
      \neg\uinterface(S)
      \and
      \uinterface(T)
      \and
      \Delta\stoup\Gamma \vdash e \mapsto e' \and \rho =\MKTABLE{\tycomp{S}{}}{\tycomp{V}{}}{\Delta}
    }
    {\Delta\stoup\Gamma \vdash e \mathbin{:} S \mapsto_{\typair{T}{V}} \MAKETYPE{e'}{{\tycomp{\typair{T}{V}}{\Delta}}}{\tycomp{S}{\Delta}}{\rho}}

    \inferrule[make-bs]
    {
      \uinterface(V)
      \and
      \neg\uinterface(S)
      \and
      \neg\uinterface(T)
      \and
      \Delta\stoup\Gamma \vdash e \mapsto e' \and \rho =\MKTABLE{\tycomp{T}{}}{\tycomp{V}{}}{\Delta}
    }
    {\Delta\stoup\Gamma \vdash e \mathbin{:}S \mapsto_{\typair{T}{V}}
      \MAKETYPE{
        \STATICCHANGE{e'}{\tycomp{T}{}}{\tycomp{S}{}}
      }{{\tycomp{\typair{T}{V}}{\Delta}}}{\tycomp{T}{\Delta}}{\rho}}

    \inferrule[change]
    {
      \uinterface(U)
      \and
      \uinterface(V)
      \and
      \Delta\stoup\Gamma \vdash e \mapsto e'}
    {\Delta\stoup\Gamma \vdash e  \mathbin{:} U \mapsto_{\typair{T}{V}} \CHANGETYPE{e'}{\tycomp{\ghostytypair{T}{V}}{\Delta}}{\tycomp{U}{\Delta}}}

    \inferrule[static-change]
    {
      \neg\uinterface(U)
      \and
      \neg\uinterface(T)
      \and \Delta\stoup\Gamma \vdash e \mapsto e'}
    {\Delta\stoup\Gamma \vdash e  \mathbin{:} U \mapsto_{\typair{T}{T}} \STATICCHANGE{e'}{\tycomp{T}{\Delta}}{\tycomp{U}{\Delta}}}
  \end{mathpar}
\end{ruled}
\caption{Compilation rules: generate synthetic casts}\label{fig:synthcast}
\end{Figure}

To compile expressions and method declarations, we rely on an auxiliary
judgement given in Figure~\ref{fig:synthcast}. In
$\Delta ; \Gamma \vdash e : U \mapsto_{\Phi} e'$, $e$ is a
well-typed WG expression, $e'$ is an LWG expression that $e$ compiles into, and
$\Phi$ is the target LWG type of $e'$.
This judgement determines which synthetic casts (make,
change or static-change) are needed in the resulting expression.
Interface values are created in two situations.
Rule {\sc make-iface} handles non-interface expressions used at
interface types (standard Go interface creation), computing method
table $\rho$ with methods of $S$ listed in interface $V$.
Rule {\sc make-bs} handles generic value boxing: non-interface source
type $S$ with non-interface target type $T$ but interface bound $V$,
creating an interface value with method table from $T$ for 
$V$.

For example, consider the WG declaration \goinl{type Cell[T any]
  struct\{ val T@any \}}.
Initialising \goinl{Cell[int]\{int(42)\}} requires boxing the integer
for uniform representation.
The integer field compiles to
$\MAKETYPE{\mathtt{int}(42)}{{\tycomp{\typair{\mathtt{int}}{\mathtt{any}}}{}}}{\mathtt{int}}{\rho}$
via Rule {\sc make-bs}. Field access generates
unboxing via Rule {\sc select-boxed} and the bound $\ghost{\texttt{any}}$ allows us to identify a boxed \goinl{int}.

Rule {\sc change} handles interface type $U$ to interface type $V$
conversion via Rule \textsc{change-type} (Figure~\ref{fig:lwg-reduction}).
Unlike {\sc make}, the method table must be computed at runtime when the
runtime type becomes known.
This showcases reuse of Go's interface infrastructure: existing
interface values instantiating type parameters are adapted to the
interface specified by the type parameter bound.

Rule {\sc static-change} applies when neither source type $U$ nor target
bound $T$ are interfaces. Such transformations convert between assignable
WG types, similar to type conversions in WG's operational semantics.
Consider \goinl{type StringerCell struct\{ val Stringer@Stringer \}}.
Assume \goinl{int}~$\imp_\emptyset$~\goinl{Stringer}, in \goinl{StringerCell\{int(42)\}}, the field \goinl{int(42)} compiles to
$\MAKETYPE{\mathtt{int}(42)}{{\tycomp{\typair{\mathtt{Stringer}}{\mathtt{Stringer}}}{}}}{\mathtt{int}}{\rho}$
by rule {\sc make-iface}.
Rules \textsc{box} and \textsc{nbox} are variants to ensure all values
are boxed, this is used in dynamic method calls and matches the
$\tybox{\Phi}{}$ function introduced earlier.

The compilation of expressions is formalised in
Figures~\ref{fig:compexp1} and~\ref{fig:compmcalls}.
In judgement $\Delta ; \Gamma \vdash e \mapsto e'$, $e$ is a
well-typed WG expression (under suitable contexts $\Delta$ and
$\Gamma$) and $e'$ is an LWG expression that $e$ compiles into.
Rules \textsc{Var}, \textsc{Const}, and \textsc{Assert} are
straightforward.
Rule \textsc{Conversion} essentially replaces a WG static conversion
with the appropriate synthetic cast.

The compilation of field selection is given by Rules
\textsc{Select-boxed} and \textsc{Select-unboxed}, the selected rule
depends on whether the struct field needs to be unboxed or not.
For Rule \textsc{Select-boxed}, the type of field $f_i$ is
$\typair{S_i}{V_i}$, where $S_i$ is not a type variable and $V_i$ is
an interface. This means that either the field denotes a generic type
that has been instantiated or it is a value of interface type (in
which case $S_i = V_i$).
In the former
case the value will be boxed at runtime but is treated in the
remaining source WG program as if it were a plain value of type
$S_i$.
In both cases we insert a type assertion to $\tycomp{S_i}{}$ which
will unbox the value in the former case and leave it unchanged (as
needed) in the latter.
Rule \textsc{Select-unboxed} applies when $S_i$ is a type variable or
$V_i$ is not an interface type.
No unboxing is needed: if $S_i$ is a type variable, the field must
remain boxed to use operations from its bound; if $V_i$ is not an
interface, the field is already unboxed.

Rules \textsc{op-static} and \textsc{op-dyn} deal with operators.
Recall that the WG type system enforces that the operands have all the
same type, which is either a type whose underlying type is a base
type, or a type variable whose bound supports $\circ$.
The first rule applies when the type of the operand is not a type
variable, hence it must be a static operator invocation. If the type
of the operand is a type variable, then it is a dynamic invocation and
the second rule applies.

\begin{Figure}
  \begin{ruled}
    Compilation: expressions \hfill \fbox{$\Delta \stoup \Gamma \vdash e \mapsto e'$}
    \begin{mathpar}
      \inferrule[Var]{ \,
      }{
        \Delta\stoup\Gamma \vdash x
        \mapsto x
      }

      \inferrule[Const]{ \,
      }{
        \Delta\stoup\Gamma \vdash T(c)
        \mapsto T(c)
      }

    \inferrule[Assert]{
      \Delta\stoup\Gamma \vdash e \mapsto e'
    }{
      \Delta\stoup\Gamma \vdash e . (T)
      \mapsto
      e' . (\typair{\tycomp{T}{\Delta}}{\tycomp{\bounds_\Delta(T)}{\Delta}})
    }
    
    \inferrule[Conversion]{
      \Delta\stoup\Gamma \vdash e \mathbin{:} S
      \and
      \Delta\stoup\Gamma \vdash e :S \mapsto_{\typair{T}{\bounds_\Delta(T)}} e'
    }{
      \Gamma \vdash T(e)
      \mapsto
      e'
    }
    \qquad
    \inferrule[Struct]{
      \ov{\_~\Phi} = \fields_\Delta(T)
      \and
      \Delta\stoup\Gamma \vdash \ov{e : U}
      \and
      \ov{\Delta\stoup\Gamma \vdash e  \mathbin{:} U \mapsto_\Phi e'}
    }{
      \Delta\stoup\Gamma \vdash T \{ \ov{e} \}
      \mapsto\ghost{ {\tycomp{T}{\Delta}}} \{ \ov{e'} \}
    }

    \inferrule[Select-boxed]{
      \Delta\stoup\Gamma \vdash e \mapsto e'
      \and
      \Delta\stoup\Gamma \vdash e \mathbin{:} T
      \and
      \ov{f\; \typair{S}{V}} = \fields_\Delta(T)
      \and
      \neg\utyvar_\Delta(S_i)
      \and
      \uinterface(V_i)
      \and
      f_i = f
    }{
      \Delta\stoup\Gamma \vdash e . f
      \mapsto
      e' . i . {\color{black} (\typair{\tycomp{S_i}{\Delta}}{\tycomp{S_i}{\Delta}}) }
    }

      \inferrule[Select-unboxed]{
        \Delta\stoup\Gamma \vdash e \mapsto e'
        \and
        \Delta\stoup\Gamma \vdash e \mathbin{:} T
        \and
        \ov{f\; \typair{S}{V}} = \fields_\Delta(T)
        \and
        ( \utyvar_\Delta(S_i)  \lor  \neg \uinterface(V_i))
        \and
        f_i = f
      }{
        \Delta\stoup\Gamma \vdash e . f
        \mapsto
        e' . i
      }

      \inferrule[op-static]{
        \Delta  \stoup \Gamma \vdash \ov{e} : T
        \and
        \neg \utyvar_\Delta(T)
        \and
        \ov{\Delta\stoup\Gamma \vdash e \mapsto e'}
      }
      {
        \Delta\stoup\Gamma \vdash
        \circ(\ov{e})
        \mapsto
        \circ\smsep {\tycomp{T}{\Delta}}(\ov{e'})
      }
      
      \inferrule[op-dyn]{
        \Delta  \stoup \Gamma \vdash \ov{e} : \alpha
        \and
        \ov{\Delta\stoup\Gamma \vdash e \mapsto e'}
      }
      {
        \Delta\stoup\Gamma \vdash
        \circ(\ov{e})
        \mapsto
        \circ(\ov{e'})
      }
    \end{mathpar}
  \end{ruled}
  \caption{Compilation rules: expressions (part 1)}\label{fig:compexp1}
\end{Figure}

\subsection{Boxing and Structural Typing}
\label{sec:boxesstructtyp}

We now move on to the compilation of methods and method calls. It
turns out that Go's structural typing makes the treatment of methods
especially subtle as we illustrate below.

\begin{example} \label{ex:duck-typing-adapters}
Consider the following WG code:
\begin{golst}
type Processor[T any] interface { process(input T) T }
type IntProcessor struct {} // IntProcessor $\imp_\emptyset$ Processor[int]
func (p IntProcessor) process(input int) int { return input }
type GenericProcessor[T any] struct {} // GenericProcessor[T] $\imp_{\{\texttt{T:any}\}}$ Processor[T]
func (p GenericProcessor[T any]) process(input T) T { return input }
type Client struct {}
func (c Client) useProcessor(processor Processor[int]) int { return processor.process(int(42)) }
\end{golst}
\end{example}

The code above defines a generic interface \goinl{Processor[T]} with a
method \goinl{process}.
We then define a non-generic struct
type \goinl{IntProcessor} with a \goinl{process} method from
\goinl{int} to \goinl{int}.
Notably, \goinl{IntProcessor} implements
\goinl{Processor[int]}.
Struct type \goinl{GenericProcessor} is a generic version
\goinl{IntProcessor}, it implements \goinl{Processor[T]} for any
\goinl{T}.
We further define a \goinl{useProcessor}
method on a client struct, which takes an argument of type
\goinl{Processor[int]} and invokes the method \goinl{process} on it.

Following our uniform compilation strategy and considering interface
\goinl{Processor}, it would be expected that its \goinl{process}
method would take a boxed argument and return \ROUNDTWO{a} result.
Thus, when compiling the body of \goinl{useProcessor} the compiler
would box the argument to \goinl{process} and then unbox the return
value, in order for it to match the signature of \goinl{useProcessor}
which has a non-generic return.
The attentive reader might have already identified the problem with
this naive approach: while \goinl{IntProcessor} implements
\goinl{Processor[int]}, its  \goinl{process} method is a plain
\goinl{int} to \goinl{int} method with no reason for it to receive or
return boxed values.
Thus if we pass a value of type \goinl{IntProcessor} as an argument to
\goinl{useProcessor}, the box and unbox operations would be
incorrect. If instead we were to pass a value of type
\goinl{GenericProcessor[int]}, which also implements
\goinl{Processor[int]}, the boxing strategy described above would be
correct.

\begin{Figure}
  \begin{ruled}
  Compilation: boxing casts \hfill
  \fbox{$\Delta \stoup \Gamma \vdash d : U \Mapsto_\Phi d'$}
  \begin{mathpar}
    \inferrule[box]{
      \neg\uinterface(V)
      \and
      \Delta \stoup\Gamma \vdash d  \mathbin{:} U  \mapsto_{\typair{T}{\texttt{Any}}}  d'
    }
    {
      \Delta \stoup\Gamma \vdash d  \mathbin{:} U  \Mapsto_{\typair{T}{V}}  d'
    }

    \inferrule[nbox]{
      \uinterface(V)
      \and
      \Delta \stoup\Gamma \vdash d  \mathbin{:} U  \mapsto_{\typair{T}{V}}  d'
    }
    {
      \Delta \stoup\Gamma \vdash d  \mathbin{:} U  \Mapsto_{\typair{T}{V}}  d'
    }
  \end{mathpar}
    
  Compilation: method calls \hfill \fbox{$\Delta \stoup \Gamma \vdash e \mapsto e'$}
  \begin{mathpar}

    \inferrule[Call${}_S$-boxed]{
      \Delta\stoup\Gamma \vdash e \mathbin{:}  R
      \and
      \neg\uinterface(R)
      \and
      m[\ov{\beta~S''}](\ov{y\:\Phi})\: {\color{black} \typair{T}{V} } \mathbin{\in} \methods_\Delta(R)
      \and
      \eta = (\ov{\beta \by S})
      \and
      \uinterface(V[\eta])
      \\
      \neg\utyvar_\Delta(T[\eta])
      \and
      \Delta\stoup\Gamma \vdash e \mapsto e'
      \and
      \Delta\stoup\Gamma \vdash \ov{d : U}
      \and
      \Delta \stoup\Gamma \vdash \ov{ d  \mathbin{:} U  \mapsto_{\Phi[\eta]}  d'}
    }{
      \Delta\stoup\Gamma \vdash e . m[\ov{S}](\ov{d})
      \mapsto
      e'\smsep \tycomp{R}{\Delta} .  m[\ov{\tycomp{S}{\Delta}}]( \ov{d'}) .
      {\color{black} (\typair{\tycomp{T[\eta]}{\Delta}}{\tycomp{T[\eta]}{\Delta}}) }
    }
    \vspace{-3pt}

      \inferrule[Call${}_S$-unboxed]{
        \Delta\stoup\Gamma \vdash e \mathbin{:}  R
        \and
        \neg\uinterface(R)
        \and
        m[\ov{\beta~S''}](\ov{y\:\Phi})\: {\color{black} \typair{T}{V} } \mathbin{\in} \methods_\Delta(R)
        \and
        \eta = (\ov{\beta \by S})
        \\
        ( \neg \uinterface(V[\eta]) \lor \utyvar_\Delta(T[\eta]))
        \and
        \Delta\stoup\Gamma \vdash e \mapsto e'
        \and
        \Delta\stoup\Gamma \vdash \ov{d : U}
        \and
        \Delta \stoup\Gamma \vdash \ov{ d  \mathbin{:} U  \mapsto_{\Phi[\eta]}  d'}
      }{
        \Delta\stoup\Gamma \vdash e . m[\ov{S}](\ov{d})
        \mapsto
        e'\smsep \tycomp{R}{\Delta} .  m[\ov{\tycomp{S}{\Delta}}]( \ov{d'})
      }

      \vspace{-3pt}
      \inferrule[Call${}_I$]{
      \Delta\stoup\Gamma \vdash e \mathbin{:} R
      \and
      \uinterface(R)
      \\
      m[\ov{\beta~S''}](\ov{y\:\Phi})\: {\color{black} \typair{T}{V} } \mathbin{\in} \methods_\Delta(R)
      \and
      \neg\utyvar_\Delta(T[\eta])
      \and
      \\
      \eta = (\ov{\beta \by S})
      \and
      \Delta\stoup\Gamma \vdash e \mapsto e'
      \and
      \Delta\stoup\Gamma \vdash \ov{d : U}
      \and
      \Delta \stoup\Gamma \vdash \ov{ d  \mathbin{:} U  \Mapsto_{\Phi[\eta]}  d'}
    }{
      \Delta\stoup\Gamma \vdash e . m[\ov{S}](\ov{d})
      \mapsto
      e' .  m_D[\ov{{\tycomp{S}{\Delta}}}](\ov{d'}) .
      {\color{black} (\typair{\tycomp{T[\eta]}{\Delta}}{\tycomp{T[\eta]}{\Delta}}) }
    }
    
\vspace{-3pt}
      \inferrule[Call${}_{I\alpha}$]{
        \Delta\stoup\Gamma \vdash e \mathbin{:} R
        \and
        \uinterface(R)
        \\
        m[\ov{\beta~S''}](\ov{y\:\Phi})\: {\color{black} \typair{T}{V} } \mathbin{\in} \methods_\Delta(R)
        \and
        \utyvar_\Delta(T[\eta])
        \\
        \eta = (\ov{\beta \by S})
        \and
        \Delta\stoup\Gamma \vdash e \mapsto e'
        \and
        \Delta\stoup\Gamma \vdash \ov{d : U}
        \and
        \Delta \stoup\Gamma \vdash \ov{ d  \mathbin{:} U  \Mapsto_{\Phi[\eta]}  d'}
      }{
        \Delta\stoup\Gamma \vdash e . m[\ov{S}](\ov{d})
        \mapsto
        \CHANGETYPE{e' .  m_D[\ov{\tycomp{S}{\Delta}}](\ov{d'}) }{
          \tycomp{\ghostytypair{T[\eta]}{V[\eta]}}{}        
        }{{\tycomp{T[\eta]}{\Delta}}}
      }
    \end{mathpar}
  \end{ruled}
  \caption{Compilation rules: expressions (part 2)}
  \label{fig:compmcalls}
\end{Figure}

This example reveals a fundamental challenge in the uniform
compilation of methods in the presence of structural typing: When
compiling a method body we cannot \emph{a priori} predict which
interfaces the type will implement and so we cannot locally determine
what arguments must be boxed or not (e.g.~method \goinl{process} in
\goinl{IntProcessor} and in \goinl{GenericProcessor}). Dually, when
considering the call site to an interface method (e.g.,~the call to
\goinl{process} in \goinl{useProcessor}) we also cannot statically
determine the type of the method receiver and so we cannot predict
whether arguments and return values need to be boxed by inspecting the
signature of the interface.
We can manifest the issue further by considering an additional
interface \goinl{ type SimpleProcessor interface \{process(x int)
  int\}}.  We have that both \goinl{IntProcessor} and
\goinl{GenericProcessor[int]} implement \goinl{SimpleProcessor} and
any context performing a call to \goinl{process} on a value of type
\goinl{SimpleProcessor} has no way of determining if the receiver
expects boxing (i.e.,~\goinl{GenericProcessor}) or not
(i.e.~\goinl{IntProcessor}).

\paragraph{Adaptor Methods}
Our solution to this issue is to compile \emph{two} versions of each
method $m$: One whose signature (and name) is identical to the source
WG signature and whose body is compiled accordingly; and a
\emph{dynamic adaptor} version $m_D$ of the method, that expects \emph{all}
arguments and return value to be boxed. The body of $m_D$
redirects to $m$ with the appropriate unboxing operations.

We first address the compilation of method calls. The rules are given
in Figure~\ref{fig:compmcalls}.
The rules are split into two categories, depending on whether the call
is on an interface (i.e.,~a \emph{dynamic} call) or on a non-interface
(i.e.,~a \emph{static} call): Rule {\sc Call}$_I$ and Rule {\sc
  Call}$_{I\alpha}$ deal with the former, and the two {\sc Call}$_S$
rules deal with the latter.
Two rules per category are required to deal with potential unboxing of
the value returned by the call.

In the static call cases, the arguments are compiled according to
their WG signature, via the synthetic cast judgement. If the return
type of $m$ is boxed then we unbox it ({\sc Call}$_S${\sc -boxed})
since the calling context expects an unboxed result; otherwise, we
leave it unchanged ({\sc Call}$_S${\sc -unboxed}).

The compilation of dynamic calls boxes all arguments, as needed by the
adaptor (the $\Mapsto$ judgement boxes all values, see
Figure~\ref{fig:compmcalls}).
As for the return type, we check whether it is a type variable in
WG. If so, then the method call happens in a context where the return
type is still parametric and must therefore be boxed according to the
bound of the type variable.
We will see that the adaptor method $m_D$ returns a boxed value with
an \emph{empty} method table (the bound information is only determined
at the call site). Thus, we {\sc change} the type accordingly, so that
the appropriate method table is generated (rule {\sc
  Call}$_{I\alpha}$).
If the return type is \emph{not} a type variable, then we type assert
the result to the return type. This has the effect of unboxing the
result if needed by the calling context.

Revisiting our initial example, the call to \goinl{process} in the
body of \goinl{useProcessor} compiles to a dynamic call to the adaptor
for \goinl{process}, boxing the integer argument and unboxing the
return value, since the return of \goinl{process} in
\goinl{Processor[int]} is not a type variable, Rule \textsc{Call$_I$} applies, we obtain:
$
\texttt{processor.process}_D(
\CHANGETYPE{\texttt{int}(42)}{\ghostytypair{\texttt{int}}{\texttt{any}}}{\texttt{int}}).(\typair{\texttt{int}}{\texttt{int}})
$.

The compilation of method declarations is given in
Figure~\ref{fig:methdecl}. Rule {\sc d-func} generates two methods per
WG method $m$.
Method $m$ is the compiled version of the original method, with an identical signature (modulo
type compilation) and with method body obtained by compiling the
source body.
Adaptor method $m_D$ bridges dynamic method call destined to $m$.
Its signature is
obtained by forcing the types of arguments and the return type to be
boxed---achieved by the auxiliary function $\tybox{\Phi}{}$.

The adaptor method body consists of a (static) call to method $m$,
generated by an auxiliary $\mathit{Adapt}$ function that inspects the WG
signature of $m$ to determine which arguments need to be unboxed and
whether the return value of $m$ needs to be boxed.
To ensure the adaptor returns a boxed result, we insert a make
instruction with an empty method table via the synthetic cast
$({\tycomp{V}{\Delta}} \rightarrow_\emptyset
\ghost{{\tycomp{\typair{V}{\texttt{Any}}}{\Delta}}})$ when required.
Finally, Rule \textsc{d-type} compiles type declarations. Its sole
purpose is to apply the $\tycomp{\Phi}{}$ transformation on all
(anonymous) interfaces it may contain.

\begin{Figure}
  \begin{ruled}
    Compilation: declarations \hfill \fbox{$\Delta \stoup \Gamma \vdash D \mapsto \ov{D}$}
    \begin{mathpar}
      \inferrule[d-func]
      {
        mM = m[\ov{\beta~S}](\ov{y~\Phi})~\Psi
        \and
        \ov{\Phi} = \ov{\typair{T}{T'}}
        \and
        \Delta = \ov{\alpha : R},\ov{\beta : S}
        \and
        \Gamma = x: t[\ov{\alpha}], \ov{y: T}
        \\\\
        \ov{R \mapsto R'}
        \and
        \ov{S \mapsto S'}
        \and
        \Delta \stoup \Gamma  \vdash e \mathbin{:} V
        \and
        \Delta \stoup \Gamma \vdash e  \mathbin{:} V \mapsto_\Psi e'
        \\\\
        D
        =
        {\func~(x~t[\ov{\alpha~R'}])~m_D[\ov{\beta~S'}](\ov{y~\tycomp{\tybox{\Phi}{\Delta}}{\Delta}})~\tycomp{\tybox{\Psi}{\Delta}}{\Delta}~\br{\, \return~\textit{Adapt}((x~t[\ov{\alpha~R}])~mM) \, }
        }
      }
      {
        {\func~(x~t[\ov{\alpha~R}])~mM ~\br{\return~e}}
        \mapsto
        \{D, \ \func~(x~t[\ov{\alpha~{R'}}])~m[\ov{\beta~S'}](\ov{y~\tycomp{\Phi}{\Delta}})~\tycomp{\Psi}{\Delta}~\br{\return~e'}\}
      }

      \inferrule
      {
        \uinterface(V')
        \and
        \neg\uinterface(T'_i) \implies d_i = y_i . (\tycomp{\typair{T_i}{T'_i}}{\Delta})
        \and
        \uinterface(T'_i) \implies d_i =  \CHANGETYPE{y_i}{\typair{T_i}{T'_i}}{T_i}
      }
      {
        \textit{Adapt}(
        (x~t[\ov{\alpha~R}])~m[\ov{\beta~S}](\ov{y~\typair{T}{T'}})~\typair{V}{V'})
        =
        x. t[\ov{\alpha}] \smsep m[\ov{\beta}](\ov{d})
      }

      \inferrule
      {
        \neg \uinterface(V)
        \and
        \neg\uinterface(T'_i) \implies d_i = y_i . (\tycomp{\typair{T_i}{T'_i}}{\Delta})
        \and
        \uinterface(T'_i) \implies d_i = \CHANGETYPE{y_i}{\typair{T_i}{T'_i}}{T_i}
      }
      {
        \textit{Adapt}(
        (x~t[\ov{\alpha~R}])~m[\ov{\beta~S}](\ov{y~{\typair{T}{T'}}})~\typair{V}{V}
        )
        =
        \MAKETYPE{x. t[\ov{\alpha}] \smsep m[\ov{\beta}](\ov{d})}{{\tycomp{\typair{V}{\texttt{Any}}}{\Delta}}}{\tycomp{V}{\Delta}}{\emptyset}
      }

      \inferrule[d-type]
      {
        \ov{R \mapsto R'}
        \and
        T \mapsto T'
      }
      {
        \type~t{[\ov{\alpha~R}]}~T \mapsto \type~t{[\ov{\alpha~R'}]}~T'
      }
    \end{mathpar}
  \end{ruled}
  \caption{Compilation rules: method declarations}\label{fig:methdecl}
\end{Figure}

\subsection{Properties of Compilation}

Compilation of a well-typed WG program results in a well-typed LWG
program that preserves the behaviour of the original WG program.
The former property is embodied in Theorem~\ref{thm:comptyp}, which
states that a well-typed WG expression of type $T$ compiles to a
well-typed LWG expression of type $\lfloor T \rfloor @ \lfloor
\bounds_\Delta(T)\rfloor$, presupposing all declarations in the ambient
program are compiled. The resulting type captures the essence of our
typing scheme for LWG, where expressions are assigned a (logical) type
but also track the type bound. %

\begin{theorem}\label{thm:comptyp}
  ~
  If $\Delta ; \Gamma \vdash e : T$ and $\Delta ;\Gamma \vdash e \mapsto
e'$ then $\lfloor \Delta \rfloor ; \lfloor \Gamma \rfloor_\Delta
\vdash e' : \lfloor T \rfloor @ \lfloor \bounds_\Delta(T)\rfloor$.
\end{theorem}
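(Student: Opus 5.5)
We prove the statement by induction on the derivation of $\Delta ; \Gamma \vdash e \mapsto e'$, simultaneously with an auxiliary claim for the synthetic-cast judgements of Figures~\ref{fig:synthcast} and~\ref{fig:compmcalls}. The auxiliary claim reads: if $\Delta;\Gamma \vdash e : U$, $\Delta;\Gamma\vdash e : U \mapsto_\Phi e'$ with $\Phi$ well-formed, and $U \assg_\Delta \maintype{\Phi}$ (or $\underlying_\Delta(U)=\underlying_\Delta(\maintype{\Phi})$ for conversions), then $\lfloor\Delta\rfloor;\lfloor\Gamma\rfloor_\Delta \vdash e' : \tycomp{\Phi}{\Delta}$. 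The analogue for $\Mapsto_\Phi$ gives type $\tycomp{\tybox{\Phi}{\Delta}}{\Delta}$.

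We first establish preliminary lemmas about the type translation. The translation $\tycomp{\cdot}{}$ commutes with substitution $\eta$ and with $\underlying$, $\fields$, $\bounds$, and $\uinterface$/$\utyvar$. It preserves $\imp_\Delta$ and $\assg_\Delta$, since the renaming $m\mapsto m_D$ and the boxing of signatures are applied uniformly to both sides, so method-set inclusion is preserved. It preserves well-formedness $\nice$, and it maps $\methods_\Delta(R)$ for interface $R$ to the $m_D$ signatures, while non-interface named types acquire both $m$ and $m_D$ via \textsc{d-func}. We also record that $\MKTABLE{\tycomp{S}{}}{\tycomp{V}{}}{\Delta}$ agrees with what \textsc{t-make} requires.

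With these lemmas, the synthetic-cast cases follow by matching premises. \textsc{make-iface} and \textsc{make-bs} yield \textsc{t-make*}, with \textsc{t-static-change} used for the inner conversion in \textsc{make-bs}. \textsc{change} yields \textsc{t-change}, and \textsc{static-change} yields \textsc{t-static-change}. The side conditions $\neg\uinterface(U)\Rightarrow T=U$ hold because of the shape of $\Phi$ (if $T$ is not an interface or variable then $V=T$ or the pair comes from a bound).

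The expression cases then go as follows. \textsc{Var} holds by the definition of $\lfloor\Gamma\rfloor_\Delta$, which assigns $x$ the pair $\tycomp{T}{}@\tycomp{\bounds_\Delta(T)}{}$. \textsc{Const} uses \textsc{t-const}. \textsc{Conversion} applies the auxiliary claim with $\Phi=\typair{T}{\bounds_\Delta(T)}$. \textsc{Struct} uses \textsc{t-literal} after the auxiliary claim on each field, since $\fields$ commutes with translation. For \textsc{Select-boxed}, \textsc{t-field} gives $\tycomp{S_i}{}@\tycomp{V_i}{}$ with $V_i$ an interface; the appended assertion types via \textsc{t-assert$_S$} when $S_i$ is not an interface (needing $S_i\imp V_i$, which comes from well-formedness of the struct declaration) and via \textsc{t-assert$_I$} otherwise. \textsc{Select-unboxed} holds directly, noting that for $\utyvar(S_i)$ the field's bound is exactly $\bounds_\Delta(S_i)$. \textsc{op-static} and \textsc{op-dyn} follow from \textsc{t-op-static} and \textsc{t-op-dyn}.

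The method-call cases are the main obstacle, and they reduce to checking that the result type matches $\tycomp{\maintype{\Psi}[\eta]}{}@\tycomp{\bounds(\cdot)}{}$ after each wrapper. For \textsc{Call$_S$-unboxed}, the declared return pair $\typair{T}{V}$ under $\eta$ must equal $T[\eta]@\bounds_\Delta(T[\eta])$. This holds when $V[\eta]$ is not an interface (then $T=V$) or when $T[\eta]$ is a type variable (then $V$ is its declared bound, because annotations are inferred by the well-formedness rule \textsc{t@v-bc}). When $T$ was a method type parameter $\beta$ instantiated to a variable, we need the annotation invariant to survive substitution; we would prove a lemma stating that $\bounds$ commutes with $\eta$ on well-formed annotated signatures. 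The boxed variants close with an assertion to $\typair{T[\eta]}{T[\eta]}$; if $T[\eta]$ is itself an interface, then $\bounds(T[\eta])=T[\eta]$ and \textsc{t-assert$_I$} applies. \textsc{Call$_I$} relies on the compiled interface containing $m_D$ with boxed signature $\tybox{\Phi}{}$, so the arguments compiled by $\Mapsto$ have exactly the required types. \textsc{Call$_{I\alpha}$} uses \textsc{t-change} from $T[\eta]@\texttt{Any}$ to $T[\eta]@V[\eta]$, which requires $T[\eta]\imp V[\eta]$ in $\lfloor\Delta\rfloor$; this follows since $T[\eta]$ is a variable whose bound is $V[\eta]$. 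Throughout, the WG premise $\ov{T[\eta]\assg_\Delta\maintype{\Phi}[\eta]}$ is exactly what the auxiliary claim consumes.
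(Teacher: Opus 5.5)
Your architecture is the paper's: a simultaneous induction on the compilation judgement, with auxiliary claims for $\mapsto_\Phi$ and $\Mapsto_\Phi$, driven by the facts that $\tycomp{\cdot}{}$ preserves well-formedness, $\imp$, $\assg$ and $\underlying$.

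\textbf{The gap.} It sits exactly at the step you flag as the main obstacle. The lemma that $\bounds$ commutes with substitution on well-formed annotated signatures is false. Rule \textsc{t@v-bc} enforces $V=\bounds_\Delta(T)$ only in the context where the annotation is written. Substituting a type parameter by a type variable with a stronger bound breaks it. The paper's own Lemma~\ref{lem:wgbounds} gives only $\bounds(S[\eta])\imp\bounds(S)[\eta]$. This affects parameters of generic types too, not just method parameters.

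A concrete instance: take \texttt{List[a any]} from Figure~\ref{fig:WGvariant}, whose \texttt{Fold} returns $\typair{a}{\texttt{any}}$, and a context $\Delta=\alpha:\texttt{MyNum}$ with $x:\texttt{List}[\alpha]$.
\begin{itemize}
\item In $\methods_\Delta(\texttt{List}[\alpha])$ the return annotation is $\typair{\alpha}{\texttt{any}}$. Substitution acts on both components and nothing re-normalises the second.
\item So \textsc{Call$_{I\alpha}$} produces a term of LWG type $\typair{\alpha}{\texttt{any}}$, whereas the theorem demands $\typair{\alpha}{\texttt{MyNum}}$.
\item Your claims in \textsc{Select-unboxed} and \textsc{Call$_S$-unboxed} fail the same way; for instance, field \texttt{head} of $\texttt{Cons}[\alpha]$ is annotated $\typair{\alpha}{\texttt{any}}$.
\end{itemize}
Weakening the statement does not rescue the induction. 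LWG has no subsumption, and the exact bound is consumed downstream: \textsc{t-op-dyn} needs $(\alpha:T)\in\Delta$, and a subsequent \texttt{String} call needs $\texttt{String}_D$ in the bound's method set.

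\textbf{Relation to the paper.} The paper's proof asserts the same equality without justification:
\begin{itemize}
\item $\bounds(T)=V$ ``by construction'' in \textsc{Call$_S$-unboxed};
\item $\tycomp{\typair{T[\eta]}{V[\eta]}}{}=\tycomp{\typair{T[\eta]}{\bounds(T[\eta])}}{}$ in \textsc{Call$_{I\alpha}$};
\item nothing at all in \textsc{Select-unboxed}.
\end{itemize}
So you inherit its hole rather than create one, but no lemma can close it with the rules as written. The compilation rules themselves must re-box to the current bound in these three cases, for example via $\CHANGETYPE{\cdot}{\tycomp{\typair{T[\eta]}{\bounds_\Delta(T[\eta])}}{}}{\tycomp{T[\eta]}{}}$. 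This is what the prose for \textsc{Call$_{I\alpha}$} describes. Then \textsc{t-change} applies, since $T[\eta]\imp_\Delta\bounds_\Delta(T[\eta])$.

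\textbf{Minor points.}
\begin{itemize}
\item The \textsc{Assert} case is missing; it is routine via \textsc{t-assert$_I$}/\textsc{t-assert$_S$}.
\item In \textsc{Call$_{I\alpha}$} the adaptor returns $\typair{T[\eta]}{V[\eta]}$, not $\typair{T[\eta]}{\texttt{Any}}$, because $\tybox{\cdot}{}$ leaves interface-bounded pairs unchanged.
\item $T[\eta]\imp V[\eta]$ comes from the instantiation checks in \textsc{t-named}/\textsc{t-call}, not from $V[\eta]$ being the bound of $T[\eta]$.
\end{itemize}
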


Before stating our behavioural correspondence result, we define how WG
and LWG programs should be related. All definitions introduced here
are given formally in \appendixref{app:beh-eq}{Section}.

There is a lowering simulation between a WG program $\ov{D} \prog e$ and
an LWG program $\ov{D'} \prog e'$, if whenever $e \becomes d$
then $e' \becomesast d'$, and $e'$ and $d'$ are in a lowering
simulation. Also, if $e$ is a value, then $d$ is an \emph{equivalent}
value; and if $e$ is stuck, then so is $d$.
A lifting simulation is the reverse, e.g., if $d$ can make a move,
then $e$ should be able to match it, etc.
The theorem below relies on relation $\ceqcast{}{}$ (a weak
bisimulation) on LWG expressions and which essentially adds or removes
synthetic casts.

\begin{theorem}\label{thm:compbisim}
  Suppose $\ov{D} \prog e \ok$, then:
    $
      \mathcal{R} \eqdef
      \{ \,
      (e,d') \st
      \exists d . \
      \emptyset \stoup \emptyset \vdash
      e
      \mapsto
      d
      \text{ and }
      \ceqcast{d}{d'}
      \,
      \}
      $
    is a lowering simulation; and
    $
      \mathcal{R} \eqdef
      \{ \,
      (d,e) \st
      \exists d' . \
      \emptyset \stoup \emptyset \vdash
      e
      \mapsto
      d'
      \text{ and }
      \ceqcast{d}{d'}
      \,
      \}
      $
    is a lifting simulation.
\end{theorem}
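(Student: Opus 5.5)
We prove both parts at once, by constructing a single correspondence between WG reductions and LWG reductions up to $\bumpeq$. Three preliminary lemmas come first.

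\emph{Compilation commutes with substitution.} If $\Delta;\Gamma,\ov{x:T}\vdash e\mapsto e'$ and each value $v_i$ compiles via the synthetic-cast judgement at the parameter type to some $v_i'$, then the compilation of $e[\ov{x\by v}]$ is $\bumpeq$-related to $e'[\ov{x\by v'}]$. The same holds for type substitution, where we also track $\tycomp{\cdot}{}$ and $\MKTABLE{\cdot}{\cdot}{\cdot}$ under $\eta$.

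\emph{Values correspond.} A WG value compiles, up to $\bumpeq$, to an LWG expression that reduces by \textsc{make}, \textsc{change-type} and \textsc{static-change} steps alone to an equivalent LWG value. Each such administrative step is absorbed by $\bumpeq$.

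\emph{$\bumpeq$ is a weak bisimulation.} It is closed under evaluation contexts and commutes with administrative steps. We take this as given, since the statement asserts that $\bumpeq$ is one.

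With these in hand, the lowering direction goes by induction on the derivation of $e\becomes e''$, with a case split on the WG rule. For \textsc{r-context} we use the fact that compilation is compositional on evaluation contexts: a WG context $E$ compiles to an LWG context $E'$, possibly wrapped in casts. We therefore first drive $d'$ by administrative steps until the compiled redex is exposed, and then apply the induction hypothesis.

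The base cases go as follows.
\begin{itemize}
\item \textsc{r-field} is matched by \textsc{select}, followed by \textsc{assert-ok$_S$} when \textsc{Select-boxed} was used. The WG conversion $\maintype{\Phi_i}(v_i)$ compiles via \textsc{Conversion} to a cast, and that cast is $\bumpeq$-equivalent to the unboxed value.
\item \textsc{r-convert-*} are matched by \textsc{static-change-*}, \textsc{make} or \textsc{change-type}, or by zero steps, all up to $\bumpeq$.
\item \textsc{r-assert} is matched by \textsc{assert-ok$_S$} or \textsc{assert-ok$_I$}. For this we need that $\vtype(v)\imp_\emptyset T$ in WG implies the LWG side-condition on the RTTI $S$, and that $\MKTABLE{S}{V}{\emptyset}$ is defined.
\item \textsc{r-op} is matched by \textsc{op-static}, or by \textsc{op-dyn} through the table entry $\circ\mapsto\circ_B$.
\end{itemize}
The value and stuck clauses follow from the value lemma together with the progress theorems (Theorems~\ref{thm:wg-progress} and~\ref{thm:lwg-progress}) and the compilation typing theorem (Theorem~\ref{thm:comptyp}). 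A WG expression panics exactly when its compilation panics, because a failing WG assertion $\vtype(v)\notimp T$ compiles to an LWG assertion whose RTTI is $\maintype{\vtype(v)}$ and which therefore also fails.

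The lifting direction reuses the same case analysis, read backwards. By LWG determinism, each LWG step from $d$ is either administrative, in which case we answer it with zero WG steps and stay in $\mathcal{R}$ because of $\bumpeq$, or it is the compiled image of a redex, in which case we answer it with the corresponding WG step.

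The main obstacle is \textsc{r-call}, specifically dynamic calls through adaptors. Under \textsc{Call$_I$} or \textsc{Call$_{I\alpha}$}, the LWG side takes a \textsc{call-dyn} step into the body of $m_D$, which is $\mathit{Adapt}(\ldots)$. It then takes assertion and change steps on the boxed arguments, a \textsc{call-static} step into $m$'s compiled body $e'$, and finally a trailing make with empty table together with the caller's assert or change on the result. The WG side instead produces $U(e[x\by v',\ov{y\by T(v)}])$.

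My plan for this case is to show that the composite LWG sequence
\[\text{box argument}\;\to\;\text{unbox or change in $\mathit{Adapt}$}\]
collapses under $\bumpeq$ to exactly the synthetic cast that \textsc{d-func} and the substitution lemma expect for $T(v)$, and to treat the return casts in the same way. This reduces to a ``cast-cancellation'' lemma: a make followed by an assertion to the same concrete type is $\bumpeq$ the original value, and two consecutive change-types are $\bumpeq$ a single one. For the tables, I would first prove that $\MKTABLE{\cdot}{\cdot}{\cdot}$ commutes with the substitution $\eta$ used in \textsc{call-dyn}. That commutation is needed so that the table recomputed at runtime by \textsc{change-type} agrees with the one the compiler would have produced statically.
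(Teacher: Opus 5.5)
Your overall architecture is the paper's. Lowering goes by induction on the WG step with a rule-by-rule match, and lifting by the LWG rules read backwards. Both rest on a value lemma (Lemma~\ref{lem:value-comp}), a substitution/body lemma (Lemmas~\ref{lem:compositionality} and~\ref{lem:bodies}) and the weak-bisimulation property of $\bumpeq$, with the adaptor call as the long case. Your cast-cancellation lemma follows directly from \textsc{dmakeS}, \textsc{dassert}, \textsc{dchange} and the congruence rules of $\bumpeq$. However, there are gaps.

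\emph{The weak-bisimulation property is not given.} The statement does not assert that $\bumpeq$ is a weak bisimulation; it only uses $\bumpeq$ to define $\mathcal{R}$. That property is a separate result (Theorem~\ref{thm:eqcast-bisimulation}). You need it to pass from the literal compilation $d$ to an arbitrary $d'\bumpeq d$, and its assertion case depends on the next missing ingredient.

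\emph{You never show that compiler-inserted assertions cannot panic.} These are the unboxing in \textsc{Select-boxed}, the result assertions of \textsc{Call$_S$-boxed} and \textsc{Call$_I$}, and the casts $y_i.(\tycomp{\typair{T_i}{T'_i}}{})$ in $\mathit{Adapt}$; the paper isolates this as Lemma~\ref{lem:synth-type-assert-reduce}. You use it silently in lowering (``followed by \textsc{assert-ok$_S$}'', ``assertion \ldots{} steps on the boxed arguments''). The stuck clause of lifting needs it outright: your ``panics exactly when'' is argued only from WG to LWG, whereas a stuck LWG term could a priori be stuck at a synthetic assertion with no WG counterpart.

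The fix is a typing argument. Each synthetic assertion targets exactly the main type $T$ of its operand, whose bound is an interface. By Theorem~\ref{thm:comptyp}, LWG preservation and canonical forms, if the operand reaches a value it is $\ghost{\typair{T}{V}}(S,\rho,w)$. The premise $\neg\uinterface(T)\Rightarrow S=T$ of \textsc{t-interface} then forces the RTTI to be $T$, so \textsc{assert-ok$_S$} fires. If $T$ is an interface, then $S\imp_\emptyset T$ and \textsc{assert-ok$_I$} fires instead, re-boxing rather than unboxing. So an interface-typed field in \textsc{r-field} is matched by \textsc{assert-ok$_I$}, not \textsc{assert-ok$_S$}.

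\emph{Your substitution and context lemmas are not a matter of cast cancellation.} Compilation is type-directed, so reducing an interface conversion in receiver position, or instantiating a type parameter, can turn a \textsc{Call$_I$} into a \textsc{Call$_S$}, or an \textsc{op-dyn} into an \textsc{op-static}. The paper relates the two call forms by the dispatch bridge \textsc{mcallbridge} of $\bumpeq$. Accordingly, in lifting it answers \textsc{call-dyn} into $m_D$ with zero WG steps and matches only \textsc{call-static} with \textsc{r-call}. The adaptor body still contains the unreduced static call to $m$, which $\bumpeq$ cannot relate to the callee's body, so your dichotomy ``administrative or image of a redex'' must classify \textsc{call-dyn} as administrative. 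There is no analogous rule relating $\circ(\ov{d})$ to $\circ\smsep T(\ov{d'})$. Your matching of \textsc{r-op} by \textsc{op-dyn}, for example after a call to a method whose body applies $+$ to type-parameter operands instantiated at \texttt{int}, therefore requires adding such a rule.
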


\section{Related Work}
\label{lab:related}
\label{sec:related}

This work fits within the established tradition of programming
language research that studies formal models of real-world languages,
e.g.,~\cite{DBLP:journals/pacmpl/GriesemerHKLTTW20,
  Igarashi-et-al-2001,DBLP:conf/popl/Grigore17,DBLP:conf/popl/AminR17,Kennedy-and-Syme-2001,DBLP:journals/pacmpl/0002JKD18,DBLP:conf/pldi/ParkSR15}.

\textit{Formal investigations of Go.}
\citet{DBLP:journals/pacmpl/GriesemerHKLTTW20}~presented the first
core formalism of Go in FGG (Featherweight Generic Go), modelling
generic types and structural subtyping.
WG was initially motivated by an investigation into the type safety of the
features of type unions and type sets.
This quickly led to the observation that FGG's core subset fails to
characterise significant semantic features of Go.
By relying on nominal typing of structs and omitting underlying and
anonymous types, FGG does not accurately model Go's structural typing
that allows methods to accept differently-named structs with the same
underlying structure.
FGG relies solely on interface-driven subtyping, which does not
capture Go's full structural typing nor expose the subtleties in
different type coercions: assignability (a non-transitive superset of
interface subtyping), static conversions (safe no-ops), and dynamic
assertions (requiring RTTI).

Various approaches to implementing Go have been studied based on FGG.
\citet{DBLP:journals/pacmpl/GriesemerHKLTTW20} formalised monomorphisation as a
translation from FGG to non-generic FG.
\citet{DBLP:journals/pacmpl/EllisZYS22}~presented an approach based on
translating FGG to non-generic FG using structs to implement RTTI dictionaries
for generic type arguments.
\citet{DBLP:journals/jfp/SulzmannW23}~presented an approach of translating FGG
sans dynamic type assertions (and RTTI) to an \emph{untyped}
$\lambda$-calculus similar to that known from Haskell type classes.
All these remain within FGG, and thus do not consider the features that we
have introduced in WG.
None consider the underlying mechanics of the Go runtime, such as interface
values, RTTI, runtime type conversions and static versus dynamic resolution of
operations, as in LWG.
In LWG, we distinguish concrete runtime items (e.g., RTTI)
from ghost entities used by our metatheory to establish our correctness
properties.
Note that LWG actions and rules that operate solely on ghost entities
correspond to no-ops in an actual implementation.

Regarding correctness, \citet{DBLP:journals/pacmpl/GriesemerHKLTTW20} shows FGG
programs and their FG monomorphisations are bisimilar.
\citet{DBLP:journals/pacmpl/EllisZYS22} show their FGG to FG translation is a
bisimulation up to dictionary resolution.
\citet{DBLP:journals/jfp/SulzmannW23} show their (untyped) $\lambda$-calculus
translation is value and divergence preserving.
By contrast, we establish a behavioural equivalence
between high level WG and compilations into low level LWG.
The challenges we address in this paper relate to establishing
safety for LWG and the correspondence between high and low level
behaviours.

\textit{Implementations of generics.}
Various approaches have been used to implement generics in OO languages with
differing implications and trade-offs.
Java employs an erasure approach~\cite{Bracha-et-al-1998} where generic type
information is discarded by the compiler, which restricts programs from
performing operations that may dynamically depend on such.
C\# employs a mixed approach~\cite{DBLP:conf/popl/YuKS04,Kennedy-and-Syme-2001} where generic code
for object types is generated once statically, supported by RTTI at runtime,
but generic code for primitives types is specialised dynamically by JIT
compilation.
The type theoretic study of RTTI for polymorphism
dates back to~\citet{DBLP:conf/popl/HarperM95} and~\citet{DBLP:conf/icfp/CraryWM98}.
Rust employs static
monomorphisation~\cite{Turon-2015,rust-book-generics}, enabling
zero-cost abstractions at the cost of output code size and compilation time.
All use nominal subtyping, and do not tackle the interplay between structural
subtyping, generics and non-uniform representations (interfaces vs.\
constants/structs) as in Go.

As discussed earlier, Go employs static
monomorphisation~\cite{DBLP:journals/pacmpl/GriesemerHKLTTW20} and
runtime dictionary passing~\cite{godictionaries}. The Go compiler
monomorphises code up to GC shapes with RTTI for lost type information
to balance specialisation against code size.
This incurs compilation costs and
restrictions such as disallowing polymorphic recursion.
This paper instead combines Go's runtime infrastructure for type
conversions with statically generated adapter code, incurring more
runtime conversions but supporting all of WG, including proposed
features like generic methods, while lifting restrictions and
maintaining separate compilation compatibility.

\citet{DBLP:conf/popl/Leroy92} presented an approach of boxing and unboxing
generic function arguments and results for ML, with careful treatment of
higher order functions.
It does not consider subtyping and the language disallows polymorphic
recursion.
By contrast, our work tackles the problem of compiling generic code in the
presence of structural subtyping between non-uniform representations.

In future work, we will consider techniques to optimise our approach, such
as combining our adaptors with localised monomorphisation of package private
code, and statically safe elimination of type
conversions~\cite{DBLP:conf/popl/Leroy92}.
The overhead of runtime assembly of runtime type-reps can be optimised by
pre-computing and caching maps from open types to their reps when a generic
type or method is instantiated~\cite{DBLP:conf/oopsla/ViroliN00}.

\begin{acks}
  We would like to thank Ian Lance Taylor for valuable discussions,
  and the anonymous reviewers for their feedback.
  This work was supported by national funds through Fundação para a Ciência
  e a Tecnologia, I.P. (FCT) under projects UID/50021/2025 and
  UID/PRR/50021/2025.
\end{acks}

\section*{Data-Availability Statement}

We have implemented a minimal prototype of WG and LWG in Go as interpreters,
and our compilation approach as a translation from WG to LWG.  The
interpreters perform type checking of programs and reduction of their main
expressions.  Our software artifact is available on Zenodo~\cite{wg-artifact}.
It also includes the examples from this paper as tests.

\bibliography{main}

\iflong
\section{Examples}

\subsection{Desugared version of Fig.~\ref{fig:WGexample}}
\label{app:mynum}
\begin{golst}
package main
type MyNum interface { MyInt | MyFloat }
type MyInt int
func (x MyInt) String() string { return strconv.Itoa(int(x)) }
type MyFloat float64
func (x MyFloat) String() string { return strconv.FormatFloat(float64(x), 'E', -1, 64) }
type BiFunc[a Any, b Any, c Any] interface { apply(x a, y b) c }
type List[a any] interface { FoldL[b any](f BiFunc[b, a, b], z b) b }
type Nil[a any] struct {}
type Cons[a any] struct { head a; tail List[a] }
func (x Nil[a]) FoldL[b any](f BiFunc[b, a, b], z b) b { return z }
func (x Cons[a]) FoldL[b any](f BiFunc[b, a, b], z b) b { return x.tail.FoldL[b](f, f(z, x.head)) }
type Add[a MyNum] struct {}
func (x Add[a]) apply(y1 string, y2 a) string { return y1 + ", " + y2.String() }
type Dummy struct {}
func (x Dummy) join[a MyNum](x List[a]) String { return x.FoldL[string](Add[a]{}, "") }
func main() {
    fmt.Println(Dummy{}.join[MyFloat](
        Cons[MyFloat]{MyFloat(1.0), Cons[MyFloat]{MyFloat(2.0), Nil[MyFloat]{}}}
    ))
}
\end{golst}
 
\subsection{Adaptors for Example~\ref{ex:duck-typing-adapters}}
\label{app:adaptors}

Revisiting Example~\ref{ex:duck-typing-adapters}, the adaptors for
\goinl{process} of \goinl{IntProcessor} and \goinl{GenericProcessor}
are:
\begin{golst}
func (p IntProcessor) process$_D$(input int@Any) int@Any {
        return p#IntProcessor.process(input.(int@int)).(int ->$_\emptyset$ int@any) }
func (p GenericProcessor[T any]) process$_D$(input T@Any) T@Any {
        return p#GenericProcessor.process(input.(T -> T@any)) }
\end{golst}
The \goinl{IntProcessor} adaptor calls \goinl{process} by first unboxing the argument
and boxing the result as needed. The \goinl{GenericProcessor} adaptor passes the
already boxed argument to \goinl{process} and returns the already boxed
result.

\section{Additional Definitions and Rules for WG}

\begin{Figure}
  \begin{ruled}
    \begin{mathpar}
      \inferrule
      {\,}
      {\ubase(B)}

      \inferrule
      {(\type~t[\ov{\alpha~S}]~T) \in \ov{D}
        \and
        \ubase(T)
      }
      {\ubase(t[\ov{R}])}
      
      \inferrule
      {\,}
      {\ustruct(\struct~\br{\ov{f~{\Phi}}})}      

      \inferrule
      {(\type~t[\ov{\alpha~S}]~T) \in \ov{D}
        \and
        \ustruct(T)
      }
      {\ustruct(t[\ov{R}])}      

      \inferrule
      {\,}
      {\uinterface(\alpha)}

      \inferrule
      {\,}
      {\uinterface(\interface~\br{\ov{F}})}

      \inferrule
      {(\type~t[\ov{\alpha~S}]~T) \in \ov{D}
        \and
        \uinterface(T)
      }
      {\uinterface(t[\ov{R}])}

      \inferrule
      {c \in \mathbb{Z}}
      {\ctype(c) = \mathtt{int}}
      
      \inferrule
      {c \in \{ \mathtt{true},  \mathtt{false}\}}
      {\ctype(c) = \mathtt{bool}}
      
      \inferrule
      {\ustruct(T)}
      {\vtype(T\br{\ov{v}}) = T}     
      
      \inferrule
      {\ubase(T)}
      {\vtype(T(  c)) = T}
      
      \inferrule
      {\text{$mM_1, mM_2 \in \ov{mM}$ implies $M_1 = M_2$}}
      {\unique(\ov{mM})}

      \inferrule
      {}
      {\tdecls(\ov{D}) = \lst{t \mid (\type~t[\ov{\alpha~S}]~T) \in \ov{D}}}
      
      \inferrule
      {}
      {\mdecls(\ov{D}) = \lst{t.m \mid (\func~(x~t[\ov{\alpha~S}])~mM~\br{\return~e}) \in \ov{D}}}           
      
      \inferrule
      {}
      {\sigdecls(\ov{F}) = \lst{m \mid mM  \in \ov{F}}}
      
      \inferrule
      { \, }
      {\maintype{B} = B}

      \inferrule
      { \, }
      {\maintype{\alpha} = \alpha}
      
      \inferrule
      { \, }
      {\maintype{t[\ov{R}]} = t[\ov{\maintype{R}}]}
        
      \inferrule
      { \, }
      {\maintype{\struct~\br{\ov{f~{\typair{T}{V}}}}} =
        \struct~\br{\ov{f~\maintype{T}}}
      }
      
      \inferrule
      { \, }
      {\maintype{\interface~\br{\ov{F}}} =
        \interface~\br{\ov{\maintype{F}}}
      }
    \end{mathpar}
  \end{ruled}
  \caption{WG: additional auxiliary definitions}\label{fig:wg-aux-defs-app}
\end{Figure}

\begin{Figure}
  \begin{ruled}
    \begin{mathpar}      
      \inferrule[t-b]
      {\, }
      { \Delta \vdash B \ok }

      \inferrule[t-param]
      { (\alpha : T) \in \Delta }
      { \Delta \vdash \alpha \ok }
      
      \inferrule[t-struct]
      {
        \distinct(\ov{f}) \\
        \Delta  \vdash \ov{\Phi} \supernice
      }
      {
        \Delta \vdash \struct~\br{\ov{f~\Phi}} \ok
      }
      
      \inferrule[t-interface]
      {
        \Delta \vdash    \ov{F} \okelmt 
        \and
        \distinct(\sigdecls(\ov{F}))
        \and
        \unique(\methods_\Delta(\interface~\br{ \ov{F} }))
      }
      { \Delta \vdash \interface~\br{ \ov{F} } \ok }       
    \end{mathpar}
    
    Well-formed F
    \hfill \fbox{$\Delta \vdash F \okelmt$}
    \begin{mathpar}
      \inferrule[t-union]
      {\Delta \vdash C_i \okelmt \and C_i \in E}
      { \Delta \vdash E \okelmt}      
      
      \inferrule[t-method]
      {\Delta \vdash mM\ok}
      {\Delta \vdash mM\okelmt}
    \end{mathpar}
    
    Well-formed type element   \hfill \fbox{$\Delta \vdash C \okelmt$}
    \begin{mathpar}      
      \inferrule[t-element]
      {
        \Delta \vdash T \ok
        \\
        \neg\utyvar_\Delta(T)
        \\
        \uinterface(T) \implies \methods_\Delta(T) = \emptyset}
      {\Delta \vdash T \okelmt}

      \inferrule[t-approx]
      {\Delta \vdash T \ok
        \\
        \underlying_\Delta(T)=T
        \\
        \neg\uinterface(T) 
      }
      {\Delta \vdash \ttilde T \okelmt}
    \end{mathpar}

    Well-formed type formals
    \hfill \fbox{$\Delta \vdash \ov{\beta~T} \ok$}
    \begin{mathpar}
      \inferrule[t-formal]
      {
        \uinterface(\ov{S})
        \\
        \neg\utyvar_\Delta(\ov{S})
        \\
        \Delta = \ov{\alpha : T}
        \\
        \distinct(\ov{\alpha},\ov{\beta}) \\
        \Delta , \ov{\beta:S} \vdash \ov{S} \ok
      }
      { \Delta \vdash \ov{\beta~S} \ok}
    \end{mathpar}

    Well-formed method specifications
    \hfill \fbox{$\Delta  \vdash mM \ok$} \qquad \fbox{$\Delta  \vdash T \ok$}
    \begin{mathpar}
      \inferrule[t-specification]
      {
        \Delta \vdash \ov{\beta~S} \ok\\
        \distinct(\ov{x}) \\
        \Delta' = \Delta, \ov{\beta: S}\\
        \Delta' \vdash \ov{ \Phi} \supernice \\
        \Delta' \vdash \Psi \supernice  
      }
      { \Delta \vdash m[\ov{\beta~S}](\ov{x~\Phi})~\Psi \ok }
    \end{mathpar}
  \end{ruled}
  \caption{WG: well-formedness}\label{fig:wg-types-wf-app}\label{fig:wg-iface-types-wf}
\end{Figure}

\begin{Figure}
  \begin{ruled}
    \begin{mathpar}
      \\
      \inferrule[t-assert$_I$]
      {
        \uinterface(T)
        \\
        \Delta \vdash T \nice
        \\
        \Delta \stoup \Gamma \vdash e : S
        \\
        \uinterface(S)
      }
      { \Delta \stoup \Gamma \vdash e.(T) : T }

      \squeeze
      \inferrule[t-assert$_S$]
      {
        \neg\uinterface(T)
        \\
        \Delta \vdash T \nice
        \\
        \Delta \stoup \Gamma \vdash e : S
        \\
        \uinterface(S)
        \\
        T \imp_\Delta \bounds_\Delta(S)
      }
      { \Delta \stoup \Gamma \vdash e.(T) : T }
      \squeeze
      
      \fbox{
        \inferrule[t-stupid]
        {
          \Delta \vdash T \nice
          \\
          \Delta \stoup \Gamma \vdash e : S
          \\
          \neg\uinterface(S)
        }
        { \Delta \stoup \Gamma \vdash e.(T) : T }
      }
      \\
      \inferrule[t-const]
      {
        \Delta  \vdash T \nice
        \\
        \underlying_\Delta(T) = 
        \ctype(c)
      }       
      {\Delta \stoup  \Gamma \vdash T(c) : T}    
      
      \inferrule[t-conv-a]
      {  \Delta  \vdash T \nice
        \and
        \Delta \stoup \Gamma \vdash e : U
        \and
        U \assg_\Delta T
      }       
      {\Delta \stoup \Gamma \vdash T(e) : T}

      \inferrule[t-conv-u]
      { \Delta  \vdash T \nice
        \and
        \Delta \stoup \Gamma \vdash e : U
        \and
        \neg\utyvar_\Delta(U)
        \and
        \neg\utyvar_\Delta(T) 
        \and
        \underlying_\Delta(U) = \underlying_\Delta(T)
      }       
      {\Delta \stoup  \Gamma \vdash T(e) : T}
    \end{mathpar}
  \end{ruled}
  \caption{WG: additional typing rules for expressions}\label{fig:wg-typing-expr-app}
\end{Figure}

Figures~\ref{fig:wg-aux-defs-app} ~\ref{fig:wg-iface-types-wf},
and~\ref{fig:wg-typing-expr-app} give the omitted definitions and
rules of WG.

\section{Proofs}\label{app:proofs}

\subsection{WG Type Safety}
\label{app:wg}

\begin{lemma}[Weakening]
  \label{lem:wgg_weak}
Let $\alpha \not\in \mathit{dom}(\Delta)$ and $x \not\in \mathit{dom}(\Gamma)$:
  \begin{enumerate}
  \item If $\Delta \vdash T \ok$ then $\Delta , \alpha : S \vdash T \ok$
   \item If $\Delta \vdash T \nice$ then $\Delta , \alpha : S \vdash
     T \nice$
  \item If $\Delta \vdash \ov{\beta \, T} \ok$ then $\Delta  , \alpha
    : S \vdash \ov{\beta \,
      T} \ok$
   \item If $\Delta \vdash m[\ov{\beta~T}](\ov{x~U})~U \ok$ then
     $\Delta , \alpha : S \vdash m[\ov{\beta~T}](\ov{x~U})~U \ok$
    \item If $\Delta ; \Gamma \vdash e : T$ then $\Delta , \alpha : S
      ; \Gamma \vdash e : T$ and $\Delta ; \Gamma , x : S \vdash e :
      T$
     \item If $\Delta \vdash T \ok$ and $\Delta \vdash U \ok$ and $T \imp_{\Delta} U$ implies $T \imp_{\Delta, \alpha : S}
       T'$
      \item  If $\Delta \vdash T \ok$ and $\Delta \vdash U \ok$ and $T \imp_{\Delta} U$ implies $T \prec_{\Delta, \alpha : S}
       T'$
  \end{enumerate}
  
\end{lemma}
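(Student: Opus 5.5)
The whole lemma rests on one observation: every auxiliary function used by the judgements ($\underlying_\Delta$, $\utyvar_\Delta$, $\unonconstraint_\Delta$, $\bounds_\Delta$, $\fields_\Delta$, $\types_\Delta$, $\methods_\Delta$) depends on $\Delta$ only through lookups $(\beta : T) \in \Delta$ for type parameters $\beta$ occurring in the argument. So I would first prove an auxiliary \emph{stability} fact. If every type parameter free in $T$ lies in $\dom(\Delta)$ and $\alpha \notin \dom(\Delta)$, then each of these functions returns the same result under $\Delta$ and under $\Delta, \alpha : S$.

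This goes by induction on the defining derivations, or on the structure of $T$ and of the unfolding of named types, which terminates for well-formed declarations. The only interesting cases are:
\begin{itemize}
\item the type-variable case of $\underlying_\Delta$ and $\bounds_\Delta$, which is unaffected because $\alpha$ does not occur in $T$;
\item the named-type case, where the substitution $\eta$ replaces declared parameters by types whose free variables are already in $\dom(\Delta)$.
\end{itemize}
A well-formedness derivation $\Delta \vdash T \ok$ guarantees the free-variable condition, which is why the statement presupposes it.

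With stability in hand, I would prove items (6) and (7) before the others, since the expression case needs them. For $\imp$, I would do a case analysis on the last rule:
\begin{itemize}
\item $\imp_\alpha$ and $\imp_T$ carry over trivially; for $\imp_T$, the predicate $\uinterface$ does not depend on $\Delta$ at all.
\item For $\imp_I$, each premise ($\neg\utyvar$, the method-set inclusion, and the type-set condition involving $\underlying_\Delta(C)$) is preserved by stability.
\end{itemize}
I read the conclusion $T \imp_{\Delta,\alpha:S} T'$ as a typo for $U$. Assignability then follows from the $\imp$ case together with stability of $\underlying$ for $\assg_{tL}$ and $\assg_{Lt}$.

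Items (1)--(4) go by simultaneous induction on the well-formedness derivations (t-b, t-param, t-named, t-struct, t-interface, t-union, t-element, t-approx, t-formal, t-specification). Side conditions such as $\neg\utyvar$, $\unique(\methods_\Delta(\cdot))$ and $\ov{\alpha[\eta] \imp_\Delta R[\eta]}$ are handled by stability and item (6). Item (2) then adds only $\unonconstraint$, which is stable. One subtlety arises in t-formal and t-specification, which extend $\Delta$ with bound $\ov\beta$. Here I would use Barendregt's convention and rename the $\ov\beta$ so they are distinct from $\alpha$; the extended contexts $\Delta, \ov{\beta:S}, \alpha:S$ and $\Delta, \alpha:S, \ov{\beta:S}$ then agree up to permutation, and the judgements are invariant under that permutation.

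Item (5) is by induction on the typing derivation, again splitting into cases by rule:
\begin{itemize}
\item t-var is immediate in both the $\Delta$ and $\Gamma$ extensions, the latter because $x$ is fresh.
\item t-call, t-literal, t-field, t-op, t-op-$\alpha$, the assertion rules and the conversion rules use the induction hypothesis on subterms, items (1), (2), (6) and (7), and stability of $\methods_\Delta$, $\fields_\Delta$, $\types$ and $\underlying_\Delta$.
\end{itemize}
Weakening $\Gamma$ is routine, since no premise inspects $\Gamma$ except t-var.

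I expect the main obstacle to be the stability lemma for $\types_\Delta$ and $\methods_\Delta$, together with the condition in $\imp_I$. These are defined by mutual recursion through underlying types and unions, and I would have to argue carefully that $\Delta$-lookups are reached only on type parameters that are already bound. I would first try to obtain this from a preliminary lemma: if $\Delta \vdash T \ok$, then the free type variables of $T$ lie in $\dom(\Delta)$, and this property is preserved through $\underlying_\Delta$ and through the substitutions $\eta$ used in t-named.
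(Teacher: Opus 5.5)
Your proposal is correct and follows the same route as the paper, whose proof is just an induction on the given derivation for each item. Your stability lemma for $\underlying_\Delta$, $\utyvar_\Delta$, $\bounds_\Delta$, $\methods_\Delta$ and $\types_\Delta$ makes explicit what the paper leaves implicit. One piece of bookkeeping needs tightening: inside item (5) you appeal to items (6) and (7), but the typing rules do not give you $\Delta \vdash T \ok$ for argument types (for example in the premise $\ov{T[\eta] \assg_\Delta \maintype{\Phi}[\eta]}$ of \textsc{t-call}), so state stability, and hence (6)--(7), under the hypothesis that $\alpha$ is fresh for the types involved and for the bounds in $\Delta$, rather than under well-formedness.
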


\begin{proof}
Each statement above is proved by a straightforward induction on the
given derivation. 
\end{proof}

\begin{lemma}\label{lem:subt_refltrans}
  If $\Delta \vdash T \ok$ and $\Delta \vdash S \ok$ then:
  \begin{enumerate}
  \item $T \imp_\Delta T$
  \item $T \imp_\Delta S$ and $S \imp_\Delta U$ implies $T \imp_\Delta
    U$
  \item $T \prec_\Delta T$

  \end{enumerate}

\end{lemma}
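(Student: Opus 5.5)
I would prove the three items separately, each by case analysis on the shape of the types involved, following the rules of Figure~\ref{fig:wg-impl-assig}.

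\textbf{Reflexivity (1).} I split on whether $T$ is an interface. If $T = \alpha$, rule \textsc{<:$_\alpha$} applies directly. If $\neg\uinterface(T)$, rule \textsc{<:$_T$} applies. Otherwise $\uinterface(T)$ and $\neg\utyvar_\Delta(T)$, so I use \textsc{<:$_I$}. Its method-set premise is trivial, since $\maintype{\methods_\Delta(T)} \supseteq \maintype{\methods_\Delta(T)}$. Its type-set premise holds with the left disjunct, because every $C \in \types_\Delta(T)$ is in $\types_\Delta(T)$. Well-formedness of $T$ is needed only to guarantee that $\methods_\Delta$ and $\types_\Delta$ are defined, i.e.\ that $\underlying_\Delta(T)$ is an interface literal.

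\textbf{Reflexivity of assignability (3).} This is immediate from (1) via rule $\assg_{\imp}$.

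\textbf{Transitivity (2).} I assume $U$ is also well formed. I case on the last rule of $S \imp_\Delta U$.
\begin{itemize}
\item If it is \textsc{<:$_\alpha$} or \textsc{<:$_T$}, then $S = U$ and the result is the hypothesis $T \imp_\Delta S$.
\item Otherwise it is \textsc{<:$_I$}, so $\uinterface(U)$ and $U$ is not a type variable. I then case on $T \imp_\Delta S$.
\item If that is by \textsc{<:$_\alpha$} or \textsc{<:$_T$}, then $T = S$ and we are done.
\item Otherwise both steps are by \textsc{<:$_I$}, and I must establish the \textsc{<:$_I$} premises for $T \imp_\Delta U$.
\end{itemize}
The method-set inclusion follows by transitivity of $\supseteq$ on the erased method sets.

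\textbf{Main obstacle: type-set transitivity.} I take $C \in \types_\Delta(T)$. Either $C \in \types_\Delta(S)$, or $\ttilde V \in \types_\Delta(S)$ with $\underlying_\Delta(C) = V$.

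In the first case I apply the $S \imp U$ premise to $C$, which directly gives the required disjunction for $U$.

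In the second case I apply the $S \imp U$ premise to the element $\ttilde V$. Either $\ttilde V \in \types_\Delta(U)$, and we are done. Or $\ttilde V' \in \types_\Delta(U)$ with $\underlying_\Delta(\ttilde V) = V'$. Since $\underlying_\Delta(\ttilde V) = V$, we get $V' = V$, so again $\ttilde V \in \types_\Delta(U)$ and $\underlying_\Delta(C) = V$, as needed.

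This step needs one auxiliary fact, which I would prove first as a small lemma by induction on the definition of $\types_\Delta$:
\begin{itemize}
\item the $\Universe$ element only arises from pure method signatures;
\item intersections and unions of type sets are computed element-wise on terms $C$;
\item so membership reasoning on individual elements, including $\ttilde$-terms, is sound;
\item when $\types_\Delta(S) = \Universe$, the condition for $U$ is inherited pointwise.
\end{itemize}
The subtle point is that type sets may contain $\Universe$ or mix $\ttilde$ and plain terms. I must check that the universally quantified condition in \textsc{<:$_I$} composes in that setting. This is where I expect the real work, especially in handling the interaction of $\Universe$ with intersections.
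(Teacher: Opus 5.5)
Your proof is correct, and its skeleton is the paper's. Reflexivity is by direct application of the rules. Transitivity is by case analysis on the last rules of the two derivations, and the case where both are $\imp_I$ is the only substantive one.

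The difference lies in that case. The paper reads the type-set premise of $\imp_I$ as plain inclusion, $\types_\Delta(T) \subseteq \types_\Delta(S)$, and concludes by transitivity of $\subseteq$. That glosses over the $\ttilde$ disjunct of the rule as actually stated. Your pointwise argument handles that disjunct. The step where you apply the premise of $S \imp_\Delta U$ to the term $\ttilde V$ itself, and use $\underlying_\Delta(\ttilde V) = V$, is exactly what makes the $\ttilde$ case compose.

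Two simplifications are available:
\begin{itemize}
\item The auxiliary lemma about $\Universe$ is unnecessary. Your argument uses membership in $\types_\Delta(\cdot)$ only pointwise and never depends on how these sets are computed. So the presence of $\Universe$, or of mixed plain and $\ttilde$ terms, does not matter.
\item You dispatch the reflexive rules by observing $S = U$ or $T = S$. The paper instead declares those combinations impossible; excluding $\imp_I$ followed by $\imp_\alpha$ implicitly needs $\Delta \vdash S \ok$. Your transitivity argument therefore uses no well-formedness hypothesis at all, and your extra assumption that $U$ is well formed can be dropped.
\end{itemize}
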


\begin{proof}
Statements (1) and (3) above follow straightforwardly from the
definitions of implements and assignable, respectively, with
reflexivity of assignability relying on reflexivity of implements.

Statement (2) follows by induction on the given derivations:

\begin{description}
\item[Case:] $\imp_\alpha$ and $\imp_T$ -- Impossible due to
  $\neg\uinterface_\Delta(\alpha)$.
\item[Case:] $\imp_\alpha$ and $\imp_I$ -- Immediate by $\imp_I$.

\item[Case:] $\imp_T$ and $\imp_\alpha$ -- Impossible.
 
\item[Case:] $\imp_T$ and $\imp_I$ -- Immediate by $\imp_I$.

\item[Case:] $\imp_I$ and $\imp_\alpha$ -- Impossible.

\item[Case:] $\imp_I$ and $\imp_T$ -- Impossible.

\item[Case:] $\imp_I$ and $\imp_I$

  \begin{tabbing}
           $\types_\Delta(S) \subseteq \types_\Delta(T)$\\
           $\methods_\Delta(S) \supseteq \methods_\Delta(T)$\\
          
          $\uinterface_\Delta(T)$\\
          
          $\neg\utyvar_\Delta(T) $ \` by inversion \\
           $\types_\Delta(T) \subseteq \types_\Delta(U)$\\
           $\methods_\Delta(T) \supseteq \methods_\Delta(U)$\\
          
          $\uinterface_\Delta(U)$\\
          
          $\neg\utyvar_\Delta(U) $ \` by inversion \\
          $\types_\Delta(S) \subseteq \types_\Delta (U)$ \` by
          transitivity of $\subseteq$\\
          $\methods_\Delta(S) \supseteq \methods_\Delta(U)$ \` by
          transitivity of $\supseteq$\\
          $S \imp_\Delta U$ \` by $\imp_I$
          
  \end{tabbing}
\end{description}
\end{proof}

\begin{lemma}[Quasi-transitivity]\label{lem:quasitrans}
  Let $\Delta \vdash T, T',S\ok$. If $T \imp_\Delta T'$ and $T' \assg_\Delta S$ then $T
  \assg_\Delta S$. 
\end{lemma}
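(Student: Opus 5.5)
The proof proceeds by case analysis on the derivation of $T' \assg_\Delta S$, with an inner case analysis on $T \imp_\Delta T'$. Assignability has three rules: $\assg_{\imp}$, $\assg_{tL}$ and $\assg_{Lt}$.

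\textbf{Case $\assg_{\imp}$.} Here $T' \imp_\Delta S$. By transitivity of implements (Lemma~\ref{lem:subt_refltrans}(2)), $T \imp_\Delta S$. Rule $\assg_{\imp}$ then gives $T \assg_\Delta S$. This is the main case; the well-formedness hypotheses $\Delta \vdash T, T', S \ok$ are exactly what Lemma~\ref{lem:subt_refltrans} requires.

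\textbf{Cases $\assg_{tL}$ and $\assg_{Lt}$.} In these cases $T'$ is either a named type $t[\ov{S}]$ with $\underlying_\Delta(t[\ov{S}]) = S = L$, or an anonymous literal $L$ with $S$ named. I invert $T \imp_\Delta T'$:
\begin{itemize}
\item If the last rule is $\imp_\alpha$ or $\imp_T$, then $T = T'$, and the original assignability derivation already gives $T \assg_\Delta S$.
\item If the last rule is $\imp_I$, then $\uinterface(T')$ and $\neg\utyvar_\Delta(T')$. So $T'$ is an interface literal or a named interface. Since $S$ (resp.\ $T'$) is its underlying literal, $S$ is an interface with the same method set and type set as $T'$: for named interfaces, $\methods_\Delta$ and $\types_\Delta$ are computed through $\underlying_\Delta$. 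Hence the two premises of $\imp_I$ for $T \imp_\Delta T'$ transfer directly to give $T \imp_\Delta S$. The side conditions $\uinterface(S)$ and $\neg\utyvar_\Delta(S)$ also hold, since $S$ is a literal interface or a named type whose declaration is an interface. Applying $\assg_{\imp}$ closes the case.
\end{itemize}

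\textbf{Main obstacle.} The delicate step is the $\imp_I$ subcase. There I must check carefully that $\types_\Delta$ and $\methods_\Delta$ agree on a named interface and its underlying literal. This should follow because both functions are defined via $\underlying_\Delta(T) = \interface~\br{\ov{F}}$, and $\underlying_\Delta$ is idempotent on literals. I must also ensure that the $\ttilde V$ clause in $\imp_I$ refers only to $\types_\Delta(U)$, so it transfers unchanged.

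If $T$ happens to be a type variable, then $\imp_I$ still applies with the same sets, so no separate treatment is needed. The remaining cases are the impossible ones, and they are dismissed by the kind mismatches $\uinterface$ versus $\neg\uinterface$, as in the proof of Lemma~\ref{lem:subt_refltrans}.
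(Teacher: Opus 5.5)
Your proposal is correct and follows essentially the same argument as the paper. The $\assg_{\imp}$ case goes by transitivity of $\imp$, the reflexive implements cases are trivial, and the $\imp_I$ case under $\assg_{tL}$/$\assg_{Lt}$ rests on a named interface and its underlying literal having identical method and type sets. The only difference is that you split on the assignability derivation first, while the paper splits on the implements derivation first (and handles $\imp_\alpha$ via transitivity rather than via $T = T'$); neither change is substantive.
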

\begin{proof}
By case analysis on the given derivations.

\begin{description}

\item[Case:]$ alpha \imp alpha$
  \begin{tabbing}
    {\bf Subcase:} $\alpha \assg S$ from $\alpha \imp S$\\
    Follows by transitivity of $\imp$\\
    No other possible derivation of $\assg$\\
  \end{tabbing}

  \item[Case:]$ T \imp T$ and $\neg\uinterface(T)$
  \begin{tabbing}
    Immediate since $T = T'$
  \end{tabbing}

  \item[Case:]$ T \imp T'$ from $\uinterface(T')$, $\types(T)
    \subseteq \types(T')$, $\methods(T) \supseteq \methods(T')$, $\neg\utyvar(T')$
  \begin{tabbing}
    {\bf Subcase:} $T' \assg S$ from $T \imp S$\\
    Follows by transitivity of $\imp$\\
       {\bf Subcase:} $T' \assg S$ from $\underlying(T') = S$\\
       Since $\uinterface(T')$ then $S = \uinterface{\dots}$\\
       $\methods(S) = \methods(T')$ and $\types(S) = \types(T')$
       therefore\\
       $T\imp S$ and so $T\assg S$.\\
       {\bf Subcase:} $T' = L$ and $L \assg t[\dots]$ with $L =
       \underlying(t[\dots])$\\
       Since $T' = L$ then $\uinterface(L)$ and so $\uinterface(t[\dots])$\\ and 
so $methods(L) = methods(t[\dots])$ and $types(L) = types(t[\dots])$ and\\
therefore $T \imp t[\dots]$ and so $T \assg t[\dots]$
  \end{tabbing}

\end{description}

 \end{proof}

 \begin{lemma}
If $T \imp S$ then $\types(T) \subseteq \types(S)$ and $\methods(T)
\supseteq \methods(S)$.
\end{lemma}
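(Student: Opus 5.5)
We argue by case analysis on the last rule of the derivation of $T \imp S$. There are three rules, $\imp_\alpha$, $\imp_T$ and $\imp_I$; the first two are reflexive and the third carries the two inclusions (almost) as premises. As in the proof of Lemma~\ref{lem:subt_refltrans}, the statement is read with the erasure $\maintype{\_}$ applied to method sets, matching the premise of $\imp_I$.

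\textbf{Reflexive cases.} In cases $\imp_\alpha$ and $\imp_T$ we have $S = T$. Hence $\types_\Delta(T) \subseteq \types_\Delta(S)$ and $\methods_\Delta(T) \supseteq \methods_\Delta(S)$ hold trivially by reflexivity of $\subseteq$ and $\supseteq$. No inspection of the definitions of $\types$ or $\methods$ is needed, so the fact that $\types_\Delta(\alpha)$ is the type set of $\alpha$'s bound is irrelevant here.

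\textbf{Case $\imp_I$.} Inverting the rule gives $\uinterface(S)$, $\neg\utyvar_\Delta(S)$, and $\maintype{\methods_\Delta(T)} \supseteq \maintype{\methods_\Delta(S)}$. The method-set half is therefore immediate. For the type-set half, the premise gives that for every $C \in \types_\Delta(T)$, either $C \in \types_\Delta(S)$ or there is $\ttilde V \in \types_\Delta(S)$ with $\underlying_\Delta(C) = V$. The first disjunct is exactly membership.

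The second disjunct needs an interpretation. Following the paper's own examples, such as $\types(\texttt{Addable})$ containing \texttt{MyString}, we read $\types_\Delta(S)$ as the \emph{denoted} set: a literal $\ttilde V$ element stands for every type whose underlying type is $V$. Under that reading the second disjunct also places $C$ in $\types_\Delta(S)$, and this is precisely how the proof of Lemma~\ref{lem:subt_refltrans} already treated $\types_\Delta(T) \subseteq \types_\Delta(S)$ as the content of $\imp_I$.

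\textbf{Main obstacle.} The only real difficulty is making this type-set reading precise. I would state explicitly that inclusion of type sets is taken modulo $\ttilde$-closure, i.e.\ $\types(T) \subseteq \types(S)$ means every element of $\types(T)$ is either in $\types(S)$ or covered by a $\ttilde V \in \types(S)$.

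Two remaining points then need to be checked:
\begin{itemize}
\item If $C$ is itself of the form $\ttilde W$, then $\underlying_\Delta(\ttilde W) = W$. This matches $V$ only when $W = V$, so $\ttilde W \in \types_\Delta(S)$ literally and the inclusion holds directly.
\item When $\types_\Delta(T) = \Universe$, for example for a method-only interface $T$, the premise forces $\types_\Delta(S)$ to cover $\Universe$ as well, so the inclusion holds in that case too.
\end{itemize}
With this convention fixed, the lemma follows directly from inversion.
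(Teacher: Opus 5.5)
Your proof is correct and follows the paper's argument, which just says the lemma is straightforward by definition. The argument is inversion on the three rules of $\imp$: $\imp_\alpha$ and $\imp_T$ are reflexive, and the premises of $\imp_I$ are the two inclusions. Your explicit convention, comparing method sets modulo $\maintype{\_}$ and taking type-set inclusion modulo $\ttilde$-covering, is actually needed for the statement to hold. Read literally it fails: \texttt{MyString} $\notin \types(\texttt{Addable})$, and the annotated signature of \texttt{process} in \texttt{IntProcessor} differs from the one in \texttt{Processor[int]}. So making the convention explicit is a genuine improvement over the one-line proof.
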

\begin{proof}
Straightforward by definition.
\end{proof}

\begin{lemma}[Type Substitution preserves Subtyping]\label{lem:wgsubstsub}
If $T \imp_{\Delta_0, \ov{\alpha : U} , \Delta_1 } S$ , $\Delta_0
\vdash \ov{U} \ok$ and $\ov{U_0} \imp_{\Delta_0} \ov{U}[\ov{\alpha} :=
\ov{U_0}]$ then $T [\ov{\alpha} :=
\ov{U_0}] \imp_{\Delta_0,\Delta_1 [\ov{\alpha} :=
\ov{U_0}]} S [\ov{\alpha} :=
\ov{U_0}]$.
\end{lemma}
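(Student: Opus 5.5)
Write $\sigma = [\ov{\alpha} := \ov{U_0}]$, $\Delta = \Delta_0, \ov{\alpha : U}, \Delta_1$ and $\Delta' = \Delta_0, \Delta_1\sigma$. The proof goes by case analysis on the last rule of the derivation of $T \imp_\Delta S$ (rules $\imp_\alpha$, $\imp_T$, $\imp_I$). Before that, I would establish a few commutation lemmas, each by induction on the structure of the type or interface element:
\begin{enumerate}
\item $\underlying_{\Delta'}(T\sigma) = \underlying_\Delta(T)\sigma$ whenever $T$ is not one of the $\ov{\alpha}$;
\item $\methods_{\Delta'}(T\sigma) = \methods_\Delta(T)\sigma$ for $T$ not among the $\ov{\alpha}$, and $\methods_{\Delta'}(\alpha_i\sigma) \supseteq \methods_\Delta(\alpha_i)\sigma$, obtained from the hypothesis $U_{0,i} \imp_{\Delta_0} U_i\sigma$ via the $\imp_I$ method condition;
\item the analogous statement for $\types$: $\types_{\Delta'}(T\sigma) = \types_\Delta(T)\sigma$ away from the $\ov{\alpha}$, and for $\alpha_i$ every member of $\types_{\Delta'}(U_{0,i})$ is covered, possibly through a $\ttilde V$ entry, by $\types_\Delta(U_i)\sigma$.
\end{enumerate}
Lemma 3 uses that well-formed unions contain no type variables (rule \textsc{t-element}), so substitution does not change the shape of a union. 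Weakening (Lemma~\ref{lem:wgg_weak}) moves the hypothesis $\ov{U_0} \imp_{\Delta_0} \ov{U}\sigma$ into $\Delta'$.

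\textbf{Cases $\imp_\alpha$ and $\imp_T$.} For $\imp_\alpha$, if $T = S = \beta$ with $\beta \in \Delta_0, \Delta_1$, then $\beta\sigma = \beta$ and the same rule applies. If $T = S = \alpha_i$, then $T\sigma = S\sigma = U_{0,i}$, and reflexivity (Lemma~\ref{lem:subt_refltrans}(1)) gives the result; this needs $U_{0,i}$ well formed, which I would read off the instantiation hypothesis. For $\imp_T$, substitution preserves $\neg\uinterface$, since only type variables are interface-like among substitutable positions, so the same rule applies to $T\sigma$.

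\textbf{Case $\imp_I$.} Here $\uinterface(S)$ and $\neg\utyvar_\Delta(S)$, so $S$ is not an $\alpha_i$. Hence $S\sigma$ is again an interface and not a type variable in $\Delta'$. For the method condition, the inclusion $\methods(T) \supseteq \methods(S)$ passes through $\sigma$ by Lemma 2; when $T = \alpha_i$, I chain it with the larger method set of $U_{0,i}$. For the type-set condition, take $C \in \types_{\Delta'}(T\sigma)$. If $T$ is not an $\alpha_i$, then $C = C'\sigma$ with $C' \in \types_\Delta(T)$. The premise then gives $C' \in \types_\Delta(S)$, or $\ttilde V \in \types_\Delta(S)$ with $\underlying_\Delta(C') = V$, and Lemmas 1 and 3 transport this to $S\sigma$. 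If $T = \alpha_i$, I first use the hypothesis $U_{0,i} \imp U_i\sigma$ to land $C$ in $\types(U_i\sigma)$, up to a $\ttilde$, and then use that $\types_\Delta(\alpha_i)$ is computed from its bound $U_i$. This is essentially the transitivity argument of Lemma~\ref{lem:subt_refltrans}(2).

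\textbf{Main obstacle.} I expect the hardest step to be the $\types$ part when $T = \alpha_i$, because $\ttilde$-approximations must compose. An element may be covered by $\ttilde V$ in $U_i\sigma$ and then needs to be covered again in $S\sigma$. I would handle this with an auxiliary lemma stating that the $\ttilde$-covering relation is transitive. It holds because $\underlying(\ttilde V) = V$ and approximated types satisfy $\underlying(V) = V$ (rule \textsc{t-approx}), so underlying types are idempotent along chains of coverings. If that lemma is too delicate, the fallback is to prove the substitution result simultaneously with transitivity of $\imp$ by induction on derivation size.
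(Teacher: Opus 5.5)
Your case analysis is the same as the paper's. The paper dismisses $\imp_\alpha$ and $\imp_T$ as immediate and says of $\imp_I$ only that the method-set and type-set conditions are ``preserved under substitution''. You are more careful than the paper in the case $T=\alpha_i$, where one must go through $U_{0,i}\imp_{\Delta_0}U_i\sigma$ and compose $\ttilde$-coverings; that auxiliary lemma is fine. The gap is in your commutation Lemmas~2 and~3, which are false as equalities. The type set of an interface with several elements is an \emph{intersection} of syntactically computed sets of open types, and so is the method set of a union with several terms. Since $\sigma$ can identify distinct terms, $(A\cap B)\sigma$ can be strictly smaller than $A\sigma\cap B\sigma$. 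In general only $\types_\Delta(X)\sigma\subseteq\types_{\Delta'}(X\sigma)$ and $\methods_\Delta(X)\sigma\subseteq\methods_{\Delta'}(X\sigma)$ hold. Your $\imp_I$ case uses the converse inclusions in three places:
\begin{itemize}
\item ``$C=C'\sigma$ with $C'\in\types_\Delta(T)$'' for the subtype;
\item $\methods_{\Delta'}(S\sigma)\subseteq\methods_\Delta(S)\sigma$ for the supertype;
\item when $T=\alpha_i$, pulling $\types_{\Delta'}(U_i\sigma)$ back into $\types_\Delta(U_i)\sigma$.
\end{itemize}

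This cannot be repaired inside the proof, because it yields a counterexample to the statement itself. Take method-less declarations $\type~t[\beta~\mathtt{any}]~\struct~\br{}$ and $\type~u~\struct~\br{}$. Let $\Delta=\alpha:\mathtt{any}$, let $T=\interface~\br{t[\alpha];\ t[\mathtt{int}]}$ (which is well formed), and let $S=\interface~\br{u}$. Then $\types_\Delta(T)=\{t[\alpha]\}\cap\{t[\mathtt{int}]\}=\emptyset$ and $\methods_\Delta(S)=\emptyset$, so $T\imp_\Delta S$ by $\imp_I$; also $\mathtt{int}\imp_\emptyset\mathtt{any}$. After $\alpha:=\mathtt{int}$, however, $\types(T\sigma)=\{t[\mathtt{int}]\}$ is not covered by $\types(S\sigma)=\{u\}$, so $T\sigma\notimp S\sigma$. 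Replacing $T$ by a type parameter $\gamma\in\Delta_1$ bounded by this interface fails in the same way, so restricting $T$ to basic types does not help.

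The method side fails dually. Suppose $t[\beta]$ declares $m()~\beta$ and let $S'=\interface~\br{t[\alpha]\mid t[\mathtt{int}]}$. Its method set is empty before the substitution and $\{m()~\mathtt{int}\}$ after, so $\interface~\br{u;\ \mathtt{bool}}\imp_\Delta S'$ holds but does not survive the substitution.

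The paper's one-line proof has exactly this hole. A correct version needs an extra hypothesis excluding such collisions (for instance, no intersections over union terms that mention $\ov{\alpha}$), or type and method sets computed only on ground instances. Under such a hypothesis your case analysis becomes a proof.
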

\begin{proof}
By induction on the derivation of $T \imp_{\Delta_0, \ov{\alpha : U} ,
  \Delta_1 } S$.

When the derivation is by $\imp_\alpha$ or $\imp_{T}$ the result
follows immediately. When the derivation is by $\imp_I$, we have that
$S$ is a non-type variable interface type, $\types(T) \subseteq
\types(S)$ and $\methods(T) \supseteq \methods(S)$ and both properties
are preserved under substitution.
\end{proof}

\begin{lemma}[Type Substitution preserves Assignability]
\label{lem:wgsubstassign}
If $T \prec_{\Delta_0, \ov{\alpha : U} , \Delta_1 }: S$ , $\Delta_0
\vdash \ov{U} \ok$ and $\ov{U_0} \prec_{\Delta_0} \ov{U}[\ov{\alpha} :=
\ov{U_0}]$ then $T [\ov{\alpha} :=
\ov{U_0}] \prec_{\Delta_0,\Delta_1: [\ov{\alpha} :=
\ov{U_0}]} S [\ov{\alpha} :=
\ov{U_0}]$.
  \end{lemma}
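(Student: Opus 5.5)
The plan is a case analysis on the last rule of the derivation of $T \assg_{\Delta_0, \ov{\alpha : U}, \Delta_1} S$. There are three assignability rules, $\assg_{\imp}$, $\assg_{tL}$ and $\assg_{Lt}$. Throughout, write $\theta = [\ov{\alpha} := \ov{U_0}]$ and $\Delta' = \Delta_0, \Delta_1\theta$.

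\textbf{Case $\assg_{\imp}$.} Here $T \imp_{\Delta_0,\ov{\alpha:U},\Delta_1} S$, and I would appeal to Lemma~\ref{lem:wgsubstsub}. That lemma requires $\ov{U_0} \imp_{\Delta_0} \ov{U}\theta$, whereas our hypothesis only gives $\ov{U_0} \assg_{\Delta_0} \ov{U}\theta$. The step from assignability to implementation uses the fact that each bound $U_i$ is an interface (rule \textsc{t-formal} forces $\uinterface(\ov{S})$ on bounds).
\begin{itemize}
\item If $U_0$ is not a type literal, or $U_i\theta$ is not a named type, then the assignability of $U_0$ to $U_i\theta$ can only come from $\assg_{\imp}$.
\item In the remaining subcases ($\assg_{tL}$ or $\assg_{Lt}$ with an interface side), the underlying interface has the same $\types$ and $\methods$ as the named interface, as already observed in the proof of Lemma~\ref{lem:quasitrans}. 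Hence $\imp$ holds anyway.
\end{itemize}
With this, Lemma~\ref{lem:wgsubstsub} gives $T\theta \imp_{\Delta'} S\theta$, and $\assg_{\imp}$ concludes.

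\textbf{Case $\assg_{tL}$.} Here $T = t[\ov{R}]$, $S = L$ and $\underlying_{\Delta_0,\ov{\alpha:U},\Delta_1}(t[\ov{R}]) = L$. Substitution keeps $T\theta = t[\ov{R}\theta]$ a named type and $L\theta$ a type literal, since literals are closed under substitution. So it remains to show the commutation property $\underlying_{\Delta'}(t[\ov{R}\theta]) = L\theta$.

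I would prove this as an auxiliary claim by induction on the $\underlying$ derivation: $\underlying_{\Delta'}(X\theta) = \underlying_{\Delta_0,\ov{\alpha:U},\Delta_1}(X)\theta$ whenever the latter result is not a type variable. The cases go as follows.
\begin{itemize}
\item Base types and literals are immediate.
\item For named types, unfold the declaration and use that composing the instantiation $\eta$ with $\theta$ commutes.
\item For a type variable, the bound is looked up in the context. Variables of $\Delta_1$ keep their bounds up to $\theta$. For $\alpha_i$ itself, the unfolding yields $\underlying(U_i)$, which is an interface, whereas $\underlying(U_0)$ may not be. However, since $t[\ov R]$ is declared with body $T_t$ and \textsc{t-type} forbids $T_t$ from being a bare type variable, the unfolding of a named type never reaches a variable at the head. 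It therefore suffices to prove the claim for named types and literals, where the variable case does not arise.
\end{itemize}

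\textbf{Case $\assg_{Lt}$.} This is symmetric to $\assg_{tL}$, using the same commutation claim.

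I expect the main obstacle to be the first case. The paper's Lemma~\ref{lem:wgsubstsub} is stated with $\imp$ on the instantiating types, so I need to carefully downgrade the assignability hypothesis using the interface nature of bounds. If that fails for some anonymous interface literal $U_0$, the fallback is to note that $\types$ and $\methods$ are invariant under unfolding to the underlying interface. In that case I would redo the proof of Lemma~\ref{lem:wgsubstsub} directly, since only $\types$/$\methods$ inclusions are used there.
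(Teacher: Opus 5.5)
Your proposal is correct and follows the paper's route: the paper's proof is a case split on the last assignability rule, appealing to Lemma~\ref{lem:wgsubstsub} for $\assg_{\imp}$ and calling the $\assg_{tL}$/$\assg_{Lt}$ cases straightforward. You additionally supply two details the paper glosses over, and both are right. First, you downgrade the hypothesis $\ov{U_0} \assg_{\Delta_0} \ov{U}[\ov{\alpha} := \ov{U_0}]$ to the $\imp$ that Lemma~\ref{lem:wgsubstsub} actually requires, using that bounds are interfaces (your first bullet's condition only rules out $\assg_{Lt}$, not $\assg_{tL}$, but your second bullet covers both). Second, you show that $\underlying$ commutes with substitution on named types, correctly using \textsc{t-type} so that the failing type-variable case never arises.
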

  \begin{proof}
If assignability is derived by rule $\prec_{\imp}$ then the result
follows by Lemma~\ref{lem:wgsubstsub}; if derived by rule $\prec_{tL}$
or  $\prec_{Lt}$ then the result is straightforward.
\end{proof}

\begin{lemma}[Bounds commutes with type substitution]\label{lem:wgbounds}
If $\Delta_0 , \ov{\alpha : T} , \Delta_1 \vdash S \ok$ and $\ov{U}
  \imp_{\Delta_0} \ov{T}[\ov{\alpha} := \ov{U}]$ with $\Delta_0 \vdash
  \ov{U} \ok$ then $\bounds_{\Delta_0, \Delta_1 [\ov{\alpha} :=
    \ov{U}]}(S [\ov{\alpha} := \ov{U}] ) \imp_{\Delta_0,\Delta_1
    [\ov{\alpha} := \ov{U}]} ( \bounds_{\Delta_0 , \ov{\alpha : T} , \Delta_1}(S)) [\ov{\alpha} := \ov{U}]$
\end{lemma}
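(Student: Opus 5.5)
The plan is a case analysis on whether $S$ is a type variable, using $\bounds$ from Figure~\ref{fig:wg-aux-defs}. Write $\theta = [\ov{\alpha} := \ov{U}]$, $\Delta = \Delta_0,\ov{\alpha:T},\Delta_1$ and $\Delta' = \Delta_0,\Delta_1\theta$.

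\textbf{Case $\neg\utyvar_\Delta(S)$.} Then $\bounds_\Delta(S) = S$. Substitution does not turn a base type, type literal or named type $t[\ov{R}]$ into a type variable, so $S\theta$ is still not a type variable in $\Delta'$. Hence $\bounds_{\Delta'}(S\theta) = S\theta$, and the goal is $S\theta \imp_{\Delta'} S\theta$. This is reflexivity, Lemma~\ref{lem:subt_refltrans}(1). Its well-formedness side condition $\Delta' \vdash S\theta \ok$ comes from a standard type-substitution lemma for well-formedness, which I would state and prove alongside by induction on $\Delta \vdash S \ok$ using \textsc{t-named} and Lemma~\ref{lem:wgsubstsub}.

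\textbf{Case $S = \beta$ with $(\beta : T_\beta) \in \Delta_0$ or $\Delta_1$.} Here $\beta\theta = \beta$. In $\Delta'$ the bound of $\beta$ is $T_\beta$ (if in $\Delta_0$, which does not mention $\ov\alpha$) or $T_\beta\theta$ (if in $\Delta_1$). Both sides therefore coincide syntactically, and again reflexivity finishes. I would use the convention that the domains are distinct, so no capture arises.

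\textbf{Case $S = \alpha_i$.} This is the main case. The left side is $\bounds_{\Delta'}(U_i)$ and the right side is $T_i\theta$. Since $\Delta_0 \vdash U_i \ok$ and $\Delta_1$ does not bind variables of $U_i$, lookups agree in $\Delta_0$ and $\Delta'$. If $U_i$ is not a type variable, $\bounds_{\Delta'}(U_i) = U_i$, and the hypothesis $U_i \imp_{\Delta_0} T_i\theta$ weakened by Lemma~\ref{lem:wgg_weak} to $\Delta'$ gives the goal. If $U_i = \gamma$ with $(\gamma:B)\in\Delta_0$, the left side is $B$ and we have $\gamma \imp_{\Delta_0} T_i\theta$. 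This can only be derived by \textsc{<:$_\alpha$}, which forces $T_i\theta = \gamma$. That is impossible because bounds are non-variable interfaces by \textsc{t-formal}. So it is derived by \textsc{<:$_I$} with $\types_{\Delta_0}(\gamma) = \types(B)$ and $\methods(\gamma)=\methods(B)$, since $\underlying_{\Delta_0}(\gamma) = \underlying_{\Delta_0}(B)$. The same inclusions then give $B \imp_{\Delta_0} T_i\theta$ by \textsc{<:$_I$}, noting $\uinterface(B)$ and $T_i\theta$ is a non-variable interface. Weakening completes this subcase.

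The expected obstacle is exactly this last subcase: justifying that the type and method sets of a type variable coincide with those of its bound. This needs $\types_\Delta(\alpha)$ and $\methods_\Delta(\alpha)$ to unfold via $\underlying$. If the definitions do not directly give that, I would first prove a small auxiliary lemma stating $\types_\Delta(\gamma)=\types_\Delta(\Delta(\gamma))$ and $\methods_\Delta(\gamma)=\methods_\Delta(\Delta(\gamma))$, and then invoke \textsc{<:$_I$}.
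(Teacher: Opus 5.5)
Your proof is correct and is the paper's proof made explicit. The paper only says ``by case analysis on the implements relation and the derivation of $S \ok$'', and that is exactly your split on whether $S$ is a variable, plus the inversion of $\gamma \imp_{\Delta_0} T_i[\ov{\alpha} := \ov{U}]$. The obstacle you anticipate does not arise: $\uinterface(\gamma)$ holds and $\underlying_\Delta(\gamma) = \underlying_\Delta(\Delta(\gamma))$, so $\types$ and $\methods$ of $\gamma$ and of its bound coincide by definition.

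One caution: do not discharge the reflexivity side condition through a substitution lemma for well-formedness. The paper's Lemma~\ref{lem:wgsubstwf} is itself proved from this lemma, and the $\supernice$ premise $V = \bounds_\Delta(T)$ on annotated struct fields is not stable under substitution. You do not need it anyway, because $S\theta \imp_{\Delta'} S\theta$ follows directly from rule $\imp_T$, $\imp_I$ or $\imp_\alpha$.
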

\begin{proof}
  By case analysis on the implements relation and the derivation of $S \ok$.
\end{proof}

\begin{lemma}[Type substitution preserves well-formedness (1)]\label{lem:wgsubstwf}
If $\Delta_0 , \ov{\alpha : T} , \Delta_1 \vdash S \ok$ and $\ov{U}
\imp_{\Delta_0} T[\ov{\alpha} := \ov{U}] $ with $\Delta_0 \vdash
\ov{U}\nice$ then $\Delta_0, \Delta_1 [\ov{\alpha} := \ov{U}] \vdash S[\ov{\alpha} := \ov{U}]\ok$
\end{lemma}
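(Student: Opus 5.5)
We proceed by induction on the derivation of $\Delta_0 , \ov{\alpha : T} , \Delta_1 \vdash S \ok$, writing $\sigma$ for the substitution $[\ov{\alpha} := \ov{U}]$ and $\Delta' = \Delta_0, \Delta_1\sigma$. The induction is mutual with the auxiliary well-formedness judgements that $S \ok$ depends on: $\Delta \vdash \Phi \supernice$ for struct fields, $\Delta \vdash F \okelmt$ and $\Delta \vdash C \okelmt$ for interface elements, $\Delta \vdash mM \ok$ for method specifications, and $\Delta \vdash \ov{\beta~S} \ok$ for formals. Each is closed under $\sigma$ in the same way, so we strengthen the statement to cover all of them simultaneously.

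\textbf{Base and structural cases.}
\begin{itemize}
\item \textsc{t-b}: trivial.
\item \textsc{t-param} with $S = \beta \neq \alpha_i$: $\beta$ is either in $\Delta_0$ or in $\Delta_1$. In the latter case it is still bound in $\Delta_1\sigma$.
\item \textsc{t-param} with $S = \alpha_i$: the result is $U_i$, and we need $\Delta' \vdash U_i \ok$. This follows from the hypothesis $\Delta_0 \vdash \ov{U} \nice$ (which gives $\ok$ by inversion of \textsc{t-nc}) together with Weakening (Lemma~\ref{lem:wgg_weak}).
\item \textsc{t-struct}: apply the IH to each field annotation. Distinctness of field names is untouched.
\item \textsc{t-formal} and \textsc{t-specification}: apply the IH under the extended environment $\Delta_1, \ov{\beta : S}$, after $\alpha$-renaming the bound $\ov{\beta}$ away from $\ov{\alpha}$ and from the free variables of $\ov{U}$.
\item \textsc{t-named}, $S = t[\ov{R'}]$: the IH gives $\Delta' \vdash \ov{R'\sigma} \ok$. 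The remaining premise is $\ov{\beta[\eta] \imp R[\eta]}$ with $\eta = (\ov{\beta \by R'})$. Since the declaration's bounds $R$ mention only $\ov{\beta}$, we have $R[\eta]\sigma = R[\eta\sigma]$, and Lemma~\ref{lem:wgsubstsub} transports the implements judgement. Its hypothesis $\ov{U} \imp_{\Delta_0} \ov{T}\sigma$ is exactly our assumption.
\end{itemize}

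\textbf{Case-analysis lemmas under $\sigma$.} Several side conditions depend on $\uinterface$, $\utyvar$, $\unonconstraint$, $\methods$, $\types$ and $\underlying$, so we first establish how these behave under $\sigma$.
\begin{itemize}
\item $\neg\utyvar_{\Delta}(V)$ is preserved, because $\sigma$ maps non-variable types to non-variable types.
\item $\uinterface(V)$ is preserved. A type variable becomes some $U_i$, and $U_i \imp T_i\sigma$ with $T_i$ an interface bound; in the \textsc{t-formal} cases only the bounds themselves matter, and those are substituted structurally.
\item $\methods_{\Delta'}(V\sigma) = \methods_{\Delta}(V)\sigma$ for non-variable $V$. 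This is by induction on the $\methods$ definition, using that method declarations instantiate via $\eta$, which commutes with $\sigma$.
\item $\types$ commutes with $\sigma$ analogously.
\end{itemize}

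\textbf{Main obstacle: the interface cases.} These are \textsc{t-interface} with its elements \textsc{t-element} and \textsc{t-approx}.
\begin{itemize}
\item For \textsc{t-element}, the premise $\uinterface(V) \implies \methods(V) = \emptyset$ follows from the $\methods$ commutation lemma, and $\neg\utyvar(V)$ rules out the problematic variable substitution.
\item For \textsc{t-approx}, we need $\underlying_{\Delta'}(V\sigma) = V\sigma$ from $\underlying_\Delta(V) = V$. Since $V$ is non-interface and fixed by $\underlying$, it is a base type or a literal. A literal struct may contain $\alpha$, but $\underlying$ of a literal is the literal itself, so the equation is preserved.
\item For the $\unique(\methods(\ldots))$ premise: substitution may in principle identify two signatures with the same name. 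Since $\distinct(\sigdecls(\ov{F}))$ already forces each name to occur once, uniqueness is preserved.
\end{itemize}
I expect the $\methods$ and $\types$ commutation facts to be the technically heavy part, since the intersection taken for unions must be shown stable under $\sigma$. I would prove them as a separate auxiliary lemma before the main induction.
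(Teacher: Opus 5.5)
Your skeleton is the same as the paper's proof: induction on the well-formedness derivation, with Lemma~\ref{lem:wgsubstsub} handling the bound check in \textsc{t-named}. The paper also cites Lemma~\ref{lem:wgbounds}, and the step it is needed for is exactly where your proposal breaks. You assert that $\Delta \vdash \Phi \supernice$ is ``closed under $\sigma$ in the same way''. But \textsc{t@v-bc} demands $V = \bounds_\Delta(T)$, and $\bounds$ does not commute with substitution at type variables. An annotation $\typair{\alpha_i}{T_i}$ becomes $\typair{U_i}{T_i\sigma}$, while $\bounds_{\Delta'}(U_i) = U_i$ whenever $U_i$ is not a variable. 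Concretely, $\alpha : \mathtt{any} \vdash \struct~\br{f~\typair{\alpha}{\mathtt{any}}} \ok$ holds and $\mathtt{int} \imp_\emptyset \mathtt{any}$. Yet $\struct~\br{f~\typair{\mathtt{int}}{\mathtt{any}}}$ would need $\mathtt{any} = \bounds_\emptyset(\mathtt{int}) = \mathtt{int}$. So in \textsc{t-struct} and \textsc{t-specification} you cannot simply ``apply the IH''. Lemma~\ref{lem:wgbounds} gives only $\bounds_{\Delta'}(T\sigma) \imp_{\Delta'} \bounds_\Delta(T)\sigma$, so the annotation invariant survives substitution only up to $\imp$. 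This matches the paper's convention that an instantiated parameter keeps its parameter's bound as annotation. Your mutual induction must therefore carry, for annotated types, the weaker invariant $\Delta' \vdash T\sigma \nice$ and $\bounds_{\Delta'}(T\sigma) \imp_{\Delta'} V\sigma$. Under a literal reading of \textsc{t@v-bc}, the statement itself fails at this point.

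A second problem is your auxiliary claim $\methods_{\Delta'}(V\sigma) = \methods_\Delta(V)\sigma$, which is false for unions. Substitution can make signatures contributed by different members coincide, so the intersection can grow, and you only get $\supseteq$. For example, suppose $t_1[\beta~\mathtt{any}]$ and $t_2[\gamma~\mathtt{any}]$ each declare $m()$ returning their own parameter. Then $\methods(t_1[\alpha] \mid t_2[\mathtt{int}]) = \emptyset$, but after $\alpha := \mathtt{int}$ it becomes $\{m()~\typair{\mathtt{int}}{\mathtt{any}}\}$. This breaks your \textsc{t-element} argument, since an interface used as a union member can acquire methods. It also breaks your $\unique$ argument. $\distinct(\sigdecls(\ov F))$ constrains only the interface's own method elements, not methods contributed by union members, so adding a direct element $m()~\typair{\mathtt{bool}}{\mathtt{bool}}$ gives a non-unique method set after substitution. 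The merging of signatures by $\sigma$ that you worried about is in fact harmless; the danger is new signatures. The paper's one-line proof does not treat this corner either, but your proposal explicitly rests on the false equation, so you would need an extra restriction on unions or a weaker statement there. Finally, ``$\uinterface$ is preserved'' fails at variables (take $\alpha := \mathtt{int}$). You only use it at non-variable types, where it does hold.
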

\begin{proof}
Straightforward induction on the derivation of $\Delta_0 , \ov{\alpha
  : T} , \Delta_1 \vdash S \ok$, appealing to
Lemmas~\ref{lem:wgsubstsub} and~\ref{lem:wgbounds}.
\end{proof}
\begin{lemma}[Type substitution preserves well-formedness (2)]\label{lem:wgsubstnice}
If $\Delta_0 , \ov{\alpha : T} , \Delta_1 \vdash S \nice$ and $\ov{U}
\imp_{\Delta_0} T[\ov{\alpha} := \ov{U}] $ with $\Delta_0 \vdash
\ov{U}\nice$ then $\Delta_0, \Delta_1 [\ov{\alpha} := \ov{U}] \vdash S[\ov{\alpha} := \ov{U}]\nice$
\end{lemma}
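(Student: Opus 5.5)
The plan is to derive the statement from the preceding result, Lemma~\ref{lem:wgsubstwf}, by inverting the single rule that concludes $\Delta \vdash S \nice$, namely \textsc{t-nc}. From $\Delta_0 , \ov{\alpha : T} , \Delta_1 \vdash S \nice$ inversion yields two premises: $\Delta_0 , \ov{\alpha : T} , \Delta_1 \vdash S \ok$ and $\unonconstraint_{\Delta_0 , \ov{\alpha : T} , \Delta_1}(S)$. Lemma~\ref{lem:wgsubstwf} applies directly to the first premise, under the same hypotheses on $\ov{U}$, and gives $\Delta_0, \Delta_1[\ov{\alpha} := \ov{U}] \vdash S[\ov{\alpha} := \ov{U}] \ok$. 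So the work lies in showing that $\unonconstraint$ is preserved by the substitution; \textsc{t-nc} then reassembles the conclusion.

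For $\unonconstraint$ I would split on the rule that derived it.
\begin{itemize}
\item If $\ustruct(S)$ or $\ubase(S)$, then $S$ is a struct literal, a base type, or a named type whose declaration has a struct or base right-hand side. Substitution does not change the head constructor or the type name, so $S[\ov{\alpha} := \ov{U}]$ satisfies the same predicate. This follows the auxiliary definitions in Figure~\ref{fig:wg-aux-defs-app}, which mention only the declaration and not the arguments.
\item If $\underlying(S) = \interface~\br{\ov{mM}}$ with no union elements, I would first prove a small auxiliary fact. When $S$ is not a type variable, $\underlying_{\Delta'}(S[\eta])$ equals $\underlying_{\Delta}(S)[\eta]$, by induction on the unfolding of $\underlying$ through named-type declarations. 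The result is again an interface containing only method elements, so $\unonconstraint$ holds after substitution.
\item If $S = \beta$ is a type variable, there are two subcases. When $\beta \notin \ov{\alpha}$, $\beta$ is bound in $\Delta_0$ or in $\Delta_1[\ov{\alpha} := \ov{U}]$, so $\utyvar$ holds and hence $\unonconstraint$ does too.
\end{itemize}

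The main obstacle is the remaining subcase $S = \alpha_i$, where the substituted type is $U_i$. Here the hypothesis $\Delta_0 \vdash \ov{U} \nice$ is exactly what is needed: inverting \textsc{t-nc} on it gives $\unonconstraint_{\Delta_0}(U_i)$. Weakening (Lemma~\ref{lem:wgg_weak}) then transports this to $\Delta_0, \Delta_1[\ov{\alpha} := \ov{U}]$, since $\unonconstraint$ depends on the environment only through $\underlying$ and type-variable lookups.

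The point to double-check is that no clause in the definition of $\underlying$ makes the interface case sensitive to the substituted arguments in a way that introduces unions. Unions only arise from the literal interface elements in a declaration, and the well-formedness conditions forbid type variables inside unions (\textsc{t-element} requires $\neg\utyvar$). Therefore substitution cannot turn a method-only interface into a union interface, and the interface case should go through.
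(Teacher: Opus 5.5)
Your proof is correct and takes the paper's route: the paper's entire proof is the single citation of Lemma~\ref{lem:wgsubstwf}, and you additionally discharge the $\unonconstraint$ premise of \textsc{t-nc}, which that citation leaves implicit, correctly noting that the case $S = \alpha_i$ is exactly where the hypothesis $\Delta_0 \vdash \ov{U} \nice$ is needed. Your closing appeal to \textsc{t-element} is unnecessary, because substitution sends $\interface~\br{\ov{mM}}$ to an interface whose elements are still all method signatures, whatever is substituted.
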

\begin{proof}
By Lemma~\ref{lem:wgsubstwf}
\end{proof}

\begin{lemma}[Type Substitution preserves typing]\label{lem:wgsubsttype}
  If $\Delta_0 , \ov{\alpha : T} , \Delta_1 ; \Gamma \vdash e : S$ and
  $\ov{U} \imp_{\Delta_0} T[\ov{\alpha} := \ov{U}]$ with $\Delta_0
  \vdash \ov{U} \nice$ then $\Delta_0, \Delta_1[\ov{\alpha} := \ov{U}]  ; \Gamma[\ov{\alpha} := \ov{U}]
  \vdash e[\ov{\alpha} := \ov{U}] : S[\ov{\alpha} := \ov{U}]$
 \end{lemma}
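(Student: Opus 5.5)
We prove the statement by induction on the derivation of $\Delta_0 , \ov{\alpha : T} , \Delta_1 ; \Gamma \vdash e : S$. Write $\theta = [\ov{\alpha} := \ov{U}]$ and $\Delta' = \Delta_0, \Delta_1\theta$. Each case uses the induction hypothesis on the premises. The well-formedness, implements and assignability premises are then transported by the substitution lemmas already proved:
\begin{itemize}
\item Lemmas~\ref{lem:wgsubstwf} and~\ref{lem:wgsubstnice} for $\ok$ and $\nice$ judgements;
\item Lemma~\ref{lem:wgsubstsub} for $\imp$ premises;
\item Lemma~\ref{lem:wgsubstassign} for $\assg$ premises. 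Its hypothesis $\ov{U} \prec_{\Delta_0} \ov{T}\theta$ follows from $\ov{U} \imp_{\Delta_0} \ov{T}\theta$ by rule $\assg_{\imp}$.
\end{itemize}
We also use three routine commutation facts, each proved by induction on the definitions of Figures~\ref{fig:wg-aux-defs} and~\ref{fig:wg-impl-assig}:
\begin{itemize}
\item $\fields_{\Delta}(R)\theta = \fields_{\Delta'}(R\theta)$ when $R$ is a struct type;
\item $\methods_{\Delta}(R)\theta \subseteq \methods_{\Delta'}(R\theta)$, with equality when $R$ is not a type variable;
\item $\maintype{\Phi}\theta = \maintype{\Phi\theta}$.
\end{itemize}

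The structural cases are direct. \textsc{t-var} holds because $(x:S)\in\Gamma$ gives $(x:S\theta)\in\Gamma\theta$. \textsc{t-literal}, \textsc{t-field} and \textsc{t-const} follow from the fields commutation and from $\ustruct$ and $\underlying$ being stable under substitution for non-variable types. In \textsc{t-conv-u}, the premises $\neg\utyvar$ ensure that $\underlying_\Delta(U)=\underlying_\Delta(T)$ transports to $\underlying_{\Delta'}(U\theta)=\underlying_{\Delta'}(T\theta)$. The assertion rules need a little care: $\uinterface(S)$ might seem to become false when $S=\alpha_i$ is instantiated by a non-interface $U_i$. If this happens in \textsc{t-assert}$_I$ or \textsc{t-assert}$_S$, we retype the result with \textsc{t-stupid}, which exists for exactly this purpose. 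In \textsc{t-assert}$_S$ the premise $T \imp \bounds(S)$ is otherwise handled by Lemma~\ref{lem:wgbounds} combined with transitivity, Lemma~\ref{lem:subt_refltrans}(2).

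For \textsc{t-call}, consider the method signature $m[\ov{\beta~S''}](\ov{x~\Phi})\Psi \in \methods(R)$ and the instantiation $\eta=(\ov{\beta\by S'})$. After substitution the method appears as the $\theta$-instance in $\methods_{\Delta'}(R\theta)$. If $R=\alpha_i$, this uses the bound hypothesis: $U_i \imp T_i\theta$ gives $\methods(U_i) \supseteq \methods(T_i\theta)$. By choosing the $\beta$'s fresh, $\eta$ and $\theta$ commute, so $(\ldots)[\eta]\theta = (\ldots)\theta[\eta\theta]$. The premises $\ov{\beta[\eta] \imp S''[\eta]}$ and $\ov{T[\eta]\assg\maintype{\Phi}[\eta]}$ then transfer by Lemmas~\ref{lem:wgsubstsub} and~\ref{lem:wgsubstassign}, and the result type $\maintype{\Psi}[\eta]\theta$ is as required.

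The main obstacle is \textsc{t-op-$\alpha$} when the operand type is one of the substituted $\alpha_i$. After substitution the operands have type $U_i$, which may be a concrete type, so we must rederive the judgement with \textsc{t-op} or \textsc{t-op-$\alpha$}. If $U_i$ is a type variable $\gamma$, then $U_i \imp T_i\theta$ gives $\types(\Delta'(\gamma)) \subseteq \types(T_i\theta)$ up to $\ttilde$-covering. Underlying types of covered elements coincide, so the operator-domain condition is inherited. If $U_i$ is not a type variable, then $\types(U_i)=\{U_i\}$, and the $\imp_I$ condition places $U_i$ in $\types(T_i)$ directly or via a $\ttilde V$ with $\underlying(U_i)=V$. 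In both situations $\underlying(U_i)\in\opdom(\circ)$, and well-formedness of union elements (\textsc{t-element}, \textsc{t-approx}) ensures this is a base type, so \textsc{t-op} applies. I expect this type-set reasoning, together with the assertion cases where $\uinterface$ changes under substitution, to require the most care.
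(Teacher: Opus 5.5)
Your route is the paper's. You induct on the typing derivation and transport the well-formedness, $\imp$ and $\assg$ premises with Lemmas~\ref{lem:wgsubstwf}, \ref{lem:wgsubstnice}, \ref{lem:wgsubstsub} and~\ref{lem:wgsubstassign}. You commute $\fields$ and $\methods$ with the substitution, and you treat \textsc{t-call} on a receiver $\alpha_i$ and \textsc{t-op-$\alpha$} specially. Your \textsc{t-op-$\alpha$} case is more careful than the paper's, which re-derives only with \textsc{t-op} and so misses $U_i$ being itself a type variable. The genuine gap is in the assertion cases. The paper's proof has the same gap, since it only asserts ``by \textsc{t-assert}$_S$ or \textsc{t-stupid}'' there. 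Under WG's rules as given, two subcases are false, so no argument can close them.

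\emph{First gap.} Take \textsc{t-assert}$_S$ with source $S=\alpha_j$ and $U_j$ an interface. Here ``Lemma~\ref{lem:wgbounds} plus transitivity'' runs the wrong way. You have $T\theta\imp T_j\theta$ and $\bounds(S\theta)=U_j\imp T_j\theta$, and these do not compose to the required $T\theta\imp U_j$. Concretely, take an empty struct $s$ and an interface $I$ declaring a method $s$ lacks. Then $\alpha:\texttt{any};\,x:\alpha\vdash x.(s):s$ by \textsc{t-assert}$_S$, and $U=I$ satisfies $\emptyset\vdash I\nice$ and $I\imp_\emptyset\texttt{any}$. Yet $x.(s)$ with $x:I$ is derivable by none of the three rules: \textsc{t-assert}$_S$ needs $s\imp_\emptyset I$, \textsc{t-assert}$_I$ needs $\uinterface(s)$, and \textsc{t-stupid} needs $\neg\uinterface(I)$.

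\emph{Second gap.} You only consider the \emph{source} of an assertion losing interface status. Since $\uinterface(\alpha)$ holds, \textsc{t-assert}$_I$ also accepts a \emph{target} $\alpha_i$. That target can become a non-interface while the source stays an interface. For example, $\alpha:\texttt{any};\,x:I\vdash x.(\alpha):\alpha$, but after $\alpha:=\texttt{int}$ the term $x.(\texttt{int})$ with $x:I$ is untypable for analogous reasons.

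\emph{Repair.} The fix must change the system or the statement. One option is to drop the $\neg\uinterface(S)$ premise of the runtime-only \textsc{t-stupid}. Progress is unaffected, since a failing assertion panics, and every assertion case then goes through by \textsc{t-stupid}.

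A smaller slip: in \textsc{t-op-$\alpha$}, the claim $\types(U_i)=\{U_i\}$ requires $\neg\uinterface(U_i)$. A basic interface $U_i$ has $\types(U_i)=\Universe$. That case is vacuous only because $U_i\imp T_i\theta$ would force every type, structs included, to be covered by a type set whose underlying types lie in $\opdom(\circ)$, which is impossible.
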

\begin{proof}
  By induction on typing. Most cases follow immediately by the
  induction hypothesis and the corresponding typing rule.
Let $\eta = (\ov{\alpha} := \ov{U})$ and $\Delta_0 , \ov{\alpha : T} , \Delta_1 =\Delta$:
  \begin{description}
  \item[Case:] {\sc t-field} ($e = e.f_i$)

    \begin{tabbing}
$\Delta_0 , \ov{\alpha : T} , \Delta_1 ; \Gamma \vdash e : S'$ \quad $\ustruct(S')$\quad
$(f\,\ov{\Phi}) = \fields_{\Delta_0 , \ov{\alpha : T} , \Delta_1}(S')$
\` by inversion\\
$(f\,\ov{\Phi}[\eta]) = \fields_{\Delta_0 ,\Delta_1[\eta]}(S'[\eta])$
\` by definition\\
$\Delta_0 , \Delta_1[\eta] ; \Gamma[\eta] \vdash e[\eta] : S'[\eta]$
\` by i.h.\\
$\Delta_0 , \Delta_1[\eta] ; \Gamma[\eta] \vdash e[\eta].f_i : \Phi_i [\eta]\downharpoonright_1$
\` by {\sc t-field}
\end{tabbing}

\item[Case:] {\sc t-call} ($e = e.m[S'](\ov{e})$\quad $\Delta_0 ,
  \ov{\alpha : T} , \Delta_1 = \Delta$)

  \begin{tabbing}
    $\Delta_0 , \ov{\alpha : T} , \Delta_1 ; \Gamma \vdash e : R$ \quad
    $\Delta \vdash \ov{S'}\nice$ \quad
    $m[\ov{\alpha'\,S}][\ov{x\,\Phi}]\,\Psi \in \methods(R)$\\
    $\Delta ; \Gamma \vdash \ov{e : T'}$ \quad $\eta' = (\ov{\alpha'
      := S'})$ \quad $\ov{\alpha'[\eta'] \imp_\Delta S[\eta']}$ \quad
    $\ov{T'[\eta'] \assg_\Delta \Phi\downharpoonright_1 [\eta']}$ \`
    by inversion\\
    $\Delta_0,\Delta_1[\eta] ; \Gamma[\eta] \vdash e[\eta] : R[\eta]$ \` by
    i.h.\\
    $\Delta_0,\Delta_1[\eta] ; \Gamma[\eta] \vdash \ov{e[\eta] :
      T'[\eta]}$ \` by i.h.\\
        $m[\ov{\alpha'\,S[\eta]}][\ov{x\,\Phi[\eta]}]\,\Psi[\eta] \in
        \methods(R[\eta])$ \` by def.\\
 $\Delta_0 , \Delta_1[\eta] \vdash \ov{S'[\eta]}\nice$ \` by
 Lemma~\ref{lem:wgsubstnice}\\
 $\ov{\alpha'[\eta'][\eta] \imp_{\Delta_0,\Delta_1[\eta]}
   S[\eta'][\eta]}$ \` by Lemma~\ref{lem:wgsubstsub}\\
 $\ov{T'[\eta'][\eta] \assg_{\Delta_0,\Delta_1[\eta]}
   \Phi\downharpoonright_1 [\eta'][\eta]}$
 \` by Lemma~\ref{lem:wgsubstassign}\\
 $\Delta_0, \Delta_1[\eta] ; \Gamma[\eta] \vdash
 e[\eta].m[\ov{S'[\eta]}](\ov{e[\eta]}) : \Psi\downharpoonright_1
 [\eta'][\eta]$ \` by {\sc t-call}
  \end{tabbing}

\item[Case:] {\sc t-op-param} $\Delta_0 , \ov{\alpha : T} , \Delta_1 =\Delta$

  \begin{tabbing}
$\Delta_0 , \ov{\alpha : T} , \Delta_1 ; \Gamma \vdash \ov{e} :
\beta$ \quad $\beta : T' \in \Delta_0 , \ov{\alpha : T} , \Delta_1$
\quad
$\{ \underlying_\Delta(V) \mid V \in \types(T')\} \subseteq
\opdom(\circ)$\\ \` by inversion\\
$\Delta_0 , \Delta_1 [\eta] ; \Gamma[\eta] \vdash \ov{e[\eta]} :
\beta[\eta]$ \` by i.h.\\
if $\beta \not\in \ov{\alpha}$ then $\Delta_0,\Delta_1[\eta] \vdash
\circ(\ov{e[\eta]}) : \beta$ \` by {\sc t-op-param}\\
if $\beta \in \ov{\alpha}$ then $\Delta_0 , \Delta_1 [\eta] ; \Gamma[\eta] \vdash \ov{e[\eta]} :
U_i$ with $U_i \imp T'$\\
$\Delta_0,\Delta_1[\eta] \vdash \circ(\ov{e[\eta]}) : U_i$ \` by {\sc t-op}
  \end{tabbing}

\item[Case:] {\sc t-assert}$_I$

  \begin{tabbing}
$\uinterface(T)$ \quad $\Delta \vdash T\nice$ \quad $\Delta ; \Gamma
\vdash e : S$ \quad $\uinterface(S)$ \` by inversion\\
$\Delta_0 ; \Delta_1[\eta] ; \Gamma[\eta] \vdash e[\eta] : S[\eta]$ \` by
i.h.\\
$\Delta_0 , \Delta_1[\eta] \vdash T[\eta]\nice$ \` by
Lemma~\ref{lem:wgsubstnice}\\
if $T \neq \alpha_i$ and $S = \alpha_j$ and $\neg\uinterface(S[\eta])$
$\Delta_0,\Delta_1[\eta];\Gamma[\eta] \vdash e[\eta].(T[\eta]) :
T[\eta]$ \\\` by {\sc t-stupid}\\
if $T \neq \alpha_i$ and $S = \alpha_j$ and $\uinterface(S[\eta])$
then
$\Delta_0,\Delta_1[\eta];\Gamma[\eta] \vdash e[\eta].(T[\eta]) :
T[\eta]$ \\\` by {\sc t-assert}$_I$\\
if $T \neq \alpha_i$ and $S \neq \alpha_j$ 
then
$\Delta_0,\Delta_1[\eta];\Gamma[\eta] \vdash e[\eta].(T[\eta]) :
T[\eta]$ \\\` by {\sc t-assert}$_I$\\

if $T = \alpha_i$ and $S \neq \alpha_j$ and $\uinterface(T[\eta])$
$\Delta_0,\Delta_1[\eta];\Gamma[\eta] \vdash e[\eta].(T[\eta]) :
T[\eta]$ \\\` by {\sc t-assert}$_I$\\

if $T = \alpha_i$ and $S \neq \alpha_j$ and $\neg\uinterface(T[\eta])$
$\Delta_0,\Delta_1[\eta];\Gamma[\eta] \vdash e[\eta].(T[\eta]) :
T[\eta]$ \\\` by {\sc t-stupid} or {\sc t-assert-s}\\

if $T = \alpha_i$ and $S = \alpha_j$ and $\uinterface(T[\eta])$
and $\uinterface(S[\eta])$ then
$\Delta_0,\Delta_1[\eta];\Gamma[\eta] \vdash e[\eta].(T[\eta]) :
T[\eta]$ \\\` by {\sc t-assert-i}\\

if $T = \alpha_i$ and $S = \alpha_j$ and $\neg\uinterface(T[\eta])$
and $\uinterface(S[\eta])$ then
$\Delta_0,\Delta_1[\eta];\Gamma[\eta] \vdash e[\eta].(T[\eta]) :
T[\eta]$ \\\` by {\sc t-assert-s}\\

if $T = \alpha_i$ and $S = \alpha_j$ and $\uinterface(T[\eta])$
and $\neg\uinterface(S[\eta])$ then
$\Delta_0,\Delta_1[\eta];\Gamma[\eta] \vdash e[\eta].(T[\eta]) :
T[\eta]$ \\\` by {\sc t-stupid}\\

if $T = \alpha_i$ and $S = \alpha_j$ and $\neg\uinterface(T[\eta])$
and $\neg\uinterface(S[\eta])$ then
$\Delta_0,\Delta_1[\eta];\Gamma[\eta] \vdash e[\eta].(T[\eta]) :
T[\eta]$ \\\` by {\sc t-stupid}\\

  \end{tabbing}

\item[Case:] {\sc t-assert}$_S$
  \begin{tabbing}
$\neg\uinterface(T)$ \quad $\Delta \vdash T\nice$ \quad $\Delta ;
\Gamma \vdash e : S$ \quad $\uinterface(S)$ \quad
$T\imp_\Delta\bounds_\Delta(S)$ \` by inversion\\
$\Delta_0,\Delta_1 [\eta];\Gamma[\eta]\vdash e :S[\eta]$ \` by i.h.\\
if $S \neq \alpha_i$ then $\Delta_0,\Delta_1 [\eta];\Gamma[\eta]\vdash
e[\eta].(T[\eta]) : T[\eta]$ \` by  {\sc t-assert}$_S$ and
Lemma~\ref{lem:wgbounds}\\
if $S = \alpha_i$ then $\Delta_0,\Delta_1 [\eta];\Gamma[\eta]\vdash
e[\eta].(T[\eta]) : T[\eta]$ \` by  {\sc t-assert}$_S$ or {\sc t-stupid}\\
\end{tabbing}

  \end{description}
\end{proof}

\begin{lemma}[Substitution preserves Typing]\label{lem:wgsubst}
  If $\Delta ; \Gamma , \ov{x : T} \vdash e : S$ and $\Delta ;
  \Gamma \vdash \ov{e : T}$ then $\Delta ; \Gamma \vdash e[\ov{x} :=
  \ov{e}] : S$.
\end{lemma}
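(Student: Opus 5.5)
We prove the statement by induction on the derivation of $\Delta ; \Gamma , \ov{x : T} \vdash e : S$. Because WG's reduction inserts explicit conversions at calls and field accesses, the substituted expressions $\ov{e}$ have exactly the types $\ov{T}$ recorded in the context. So no subtyping or assignability reasoning is needed at variable occurrences; this is the observation made before the statement of Term Substitution in the main text. We write $\sigma = [\ov{x} := \ov{e}]$ and assume, as usual, that bound variables are disjoint from $\ov{x}$ and from the free variables of $\ov{e}$. This is automatic here, since WG expressions have no binders, apart from method bodies, which are closed declarations.

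\textbf{Base cases.}
\begin{itemize}
\item \textsc{t-var}: either the variable is some $x_i$, in which case $x_i\sigma = e_i$ and $\Delta;\Gamma \vdash e_i : T_i$ holds by hypothesis; or it is some $y \in \dom(\Gamma)$, in which case it is unchanged and typed by \textsc{t-var} in $\Gamma$.
\item \textsc{t-const}: the expression contains no variables, and the premises mention only $\Delta$, so the rule reapplies directly.
\end{itemize}

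\textbf{Inductive cases.} For \textsc{t-call}, \textsc{t-literal}, \textsc{t-field}, \textsc{t-op}, \textsc{t-op-$\alpha$}, \textsc{t-assert$_I$}, \textsc{t-assert$_S$}, \textsc{t-stupid}, \textsc{t-conv-a} and \textsc{t-conv-u}, we apply the induction hypothesis to each typed subexpression. Each then has the same type under $\Delta;\Gamma$, and the same rule is reapplied. This works because every side condition depends only on $\Delta$, the program declarations $\ov{D}$, and the types of the subexpressions, never on $\Gamma$:
\begin{itemize}
\item the well-formedness judgements $\Delta \vdash \ov{S'} \nice$ and $\Delta \vdash T \nice$;
\item method-set membership $\methods_\Delta(R)$ and the field lookup $\fields_\Delta(T)$;
\item the relations $\imp_\Delta$ and $\assg_\Delta$;
\item the predicates $\uinterface$, $\ustruct$ and $\utyvar_\Delta$, and the underlying-type equations.
\end{itemize}
Since those types are unchanged by the induction hypothesis, every premise transfers verbatim.

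\textbf{Main obstacle.} The delicate part is making the induction well-founded for rules with lists of premises, such as \textsc{t-call}'s $\Delta;\Gamma\vdash \ov{e:T}$ for the arguments and \textsc{t-literal}'s field expressions. We handle these by applying the induction hypothesis pointwise and noting that substitution distributes over the syntax. We also check that no rule's premises refer to $\Gamma$ except through subexpression typings. Inspecting the rules in Figure~\ref{fig:fgg-typing} and Figure~\ref{fig:wg-typing-expr-app} confirms this.

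Finally, should the lemma later be needed with arguments typed at assignable rather than identical types, we would first wrap the arguments in conversions $T_i(e_i)$, as \textsc{r-call} does, and type them via \textsc{t-conv-a}. This is exactly why the stated form suffices.
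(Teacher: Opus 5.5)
Your proof is correct and takes the same route as the paper: the paper proves the lemma by induction on the typing derivation of $\Delta ; \Gamma , \ov{x : T} \vdash e : S$ and says only that all cases are standard. Your case analysis supplies exactly what ``standard'' leaves out: only \textsc{t-var} touches the substituted variables, and every other rule's side conditions depend on $\Delta$, $\ov{D}$ and the unchanged subexpression types, so each rule reapplies verbatim. (The ``main obstacle'' you flag for list premises is not a real difficulty, since induction on derivations already covers each premise pointwise.)
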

\begin{proof}
By induction on typing of $\Delta ; \Gamma , \ov{x : T} \vdash e :
S$. All cases are standard.
\end{proof}

\begin{lemma}\label{lem:wgbody}
If  $(m[\ov{\alpha~S}](\ov{x~T})~T) \in \methods_\Delta(U)$ and
$\mbody(U.m[\ov{S'}]) = (x , \ov{x}).e_0$ with
$\Delta \vdash U \ok$,
$\Delta \vdash \ov{S'} \nice$,
$\ov{S'}\imp_\Delta \ov{S[\ov{\alpha} := \ov{S'}]}$ then there exists
$T'$ such that $\Delta \vdash T' \ok$, $T'\imp_\Delta T[\ov{\alpha} :=
\ov{S'}]$ and 
$\Delta ; x : U , \ov{x : T[\ov{\alpha} := \ov{S'}]} \vdash e_0 : T'$
\end{lemma}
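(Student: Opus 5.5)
The plan is to unfold $\mbody$, recover the typing of the method body from the well-formedness of the declaration, and then instantiate its type parameters with Lemma~\ref{lem:wgsubsttype}. Since $\mbody(U.m[\ov{S'}])$ is defined, $U$ must be a named non-interface type $t[\ov{R'}]$. So there is a declaration
\[
\func~(x~t[\ov{\alpha_0~R}])~m[\ov{\beta~S_0}](\ov{y~\Phi})~\Psi~\br{\return~e}
\]
in $\ov{D}$, and $e_0 = e$. By the definition of $\methods_\Delta(t[\ov{R'}])$ applied to this declaration, the signature in the hypothesis is this declared signature instantiated by $\eta_0 = (\ov{\alpha_0 \by R'})$. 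Writing $\theta = \eta_0 \cup (\ov{\beta \by S'})$, the declared parameter types satisfy $\ov{\maintype{\Phi}[\theta]} = \ov{T[\ov{\alpha}\by\ov{S'}]}$, and the return type satisfies $\maintype{\Psi}[\theta] = T[\ov{\alpha}\by\ov{S'}]$. This matching requires identifying the bound variables $\ov\alpha$ of the signature in the statement with $\ov\beta$, up to renaming.

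Because the program is well-typed, the declaration satisfies \textsc{t-func}. This gives $\ov{\alpha_0:R},\ov{\beta:S_0};\ x:t[\ov{\alpha_0}],\ \ov{y:\maintype{\Phi}} \vdash e : V$ with $V \assg \maintype{\Psi}$. Then apply the type substitution lemma, Lemma~\ref{lem:wgsubsttype}, twice, first for $\ov{\alpha_0}$ and then for $\ov\beta$, with $\Delta_0 = \emptyset$ weakened to $\Delta$ via Lemma~\ref{lem:wgg_weak}. This yields
\[
\Delta;\ x:U,\ \ov{y:T[\ov\alpha\by\ov{S'}]} \vdash e_0 : V[\theta].
\]
The side conditions of the lemma come from two sources. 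For $\ov{\beta}$, the hypothesis gives $\ov{S'}\imp_\Delta \ov{S[\ov\alpha\by\ov{S'}]}$. For $\ov{\alpha_0}$, the bound $\ov{R'}\imp \ov{R}[\eta_0]$ follows from $\Delta\vdash U\ok$ by inversion of \textsc{t-named}. That same inversion gives the well-formedness of $\ov{R'}$, which we also need as $\nice$. Because the $\beta$-bounds $S_0$ may mention $\ov{\alpha_0}$, the substitutions have to be done in this order, first $\ov{\alpha_0}$ and then $\ov\beta$. Well-formedness of $V[\theta]$ follows from Lemma~\ref{lem:wgsubstwf}.

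The main obstacle is the relation between the return types. We only have $V \assg \maintype\Psi$, and Lemma~\ref{lem:wgsubstassign} carries this to $V[\theta]\assg \maintype\Psi[\theta]$. The statement, however, asks for $\imp$. If the assignability derivation is $\assg_\imp$, Lemma~\ref{lem:wgsubstsub} gives the required $T'=V[\theta]$. The cases $\assg_{tL}$ and $\assg_{Lt}$ relate a named type to its anonymous underlying literal and do not give $\imp$ in general. To handle them, I would first try to argue that, for a body compiled through \textsc{r-call}, one can pick the witness $T'$ to be the declared return type itself. This works because \textsc{r-call} wraps the body in the conversion $U(\cdot)$ and \textsc{t-conv-a} accepts exactly $V\assg T$. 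In that case the lemma is really used with $T' = \maintype{\Psi}[\theta]$, which implements itself by reflexivity, Lemma~\ref{lem:subt_refltrans}. If the intended reading requires the body's own type, I would instead weaken the conclusion to $T'\assg T[\ov\alpha\by\ov{S'}]$, which is all that preservation for \textsc{r-call} needs.
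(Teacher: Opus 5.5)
Your skeleton matches the paper's own proof: invert $\mbody$, read the body's typing off \textsc{t-func}, instantiate first the receiver's and then the method's type parameters with Lemma~\ref{lem:wgsubsttype}, and transport the return-type relation with the substitution lemmas. The obstacle you isolate is genuine. The paper's one-line proof steps over it by citing Lemma~\ref{lem:wgsubstsub}, as though \textsc{t-func} delivered $\imp_\Delta$ instead of $\assg_\Delta$. In fact the statement with $\imp_\Delta$ is false. Take \texttt{toPoint} from Example~\ref{ex:assign}: its body is the variable \texttt{coord}. WG has no subsumption rule and variables are typed only by \textsc{t-var}, so the witness is forced to be $T' = \struct~\br{x~\texttt{int};\, y~\texttt{int}}$. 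That type is assignable to \texttt{Point} (rule $\assg_{Lt}$), but it does not implement it. The non-interface reflexivity rule needs syntactic equality, and the interface rule needs $\uinterface(\texttt{Point})$.

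The same example shows why your first fallback fails. The conclusion is a typing judgement for $e_0$ itself, not for the wrapped term $\maintype{\Psi}[\theta](e_0)$ that \textsc{r-call} produces. So you cannot choose $T' = \maintype{\Psi}[\theta]$ unless $e_0$ already has that type, and in the example it does not. The conversion wrapper rescues the preservation argument, not this lemma. Your second fallback is the correct repair: conclude $T' \assg_\Delta T[\ov{\alpha} \by \ov{S'}]$ using Lemma~\ref{lem:wgsubstassign}. This is exactly how the paper uses the lemma in the \textsc{r-call} case of Theorem~\ref{thm:wgtpres}, where it writes $U' \assg_\Delta \maintype{\Psi}[\eta]$ ``by Lemma~\ref{lem:wgbody}'' and closes with \textsc{t-conv-a}. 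Commit to that version and drop the first. One smaller unjustified step remains. Inverting \textsc{t-named} on $\Delta \vdash U \ok$ yields only $\Delta \vdash \ov{R'} \ok$, while Lemma~\ref{lem:wgsubsttype} requires $\Delta \vdash \ov{R'} \nice$ for the receiver substitution. Nothing in the hypotheses supplies this, and the paper is silent on it too. It needs an extra hypothesis or a stronger \textsc{t-named}.
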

\begin{proof}
By inversion on $\mbody$ and well-formedness, followed by
Lemmas~\ref{lem:wgsubstsub} and~\ref{lem:wgsubsttype}.
\end{proof}

\begin{theorem}[Type Preservation]\label{thm:wgtpres}
If $\ROUNDTWO{\Delta ; \Gamma} \vdash e : T$ and $e \longrightarrow e'$
then $\ROUNDTWO{\Delta ; \Gamma}  \vdash e' : S$, for some $S$ such that $S
\imp_\ROUNDTWO{\Delta} T$
\end{theorem}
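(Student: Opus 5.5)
We argue by induction on the derivation of $e \longrightarrow e'$. Within each case we then do a case analysis on the last typing rule used for $e$. There are two groups of cases: the axioms (\textsc{r-field}, \textsc{r-call}, \textsc{r-assert}, \textsc{r-op}, and the three conversion rules) and the congruence rule \textsc{r-context}.

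For the axioms, inversion on the typing rule of the redex mostly gives the result directly, with $S = T$ and reflexivity of $\imp$ (Lemma~\ref{lem:subt_refltrans}).
\begin{itemize}
\item \textsc{r-field}: \textsc{t-field} gives $e : \maintype{\Phi_i}$. The reduct $\maintype{\Phi_i}(v_i)$ is typed by \textsc{t-conv-a}, because the subterm $v_i$ was typed at some $S_i$ with $S_i \assg_\Delta \maintype{\Phi_i}$ by inversion of \textsc{t-literal}.
\item \textsc{r-call}: we use Lemma~\ref{lem:wgbody} to type the instantiated body at some $T' \imp T[\eta]$. Each argument is wrapped in a conversion $T_j(v_j)$ typed by \textsc{t-conv-a} using the premises $T_j[\eta] \assg \maintype{\Phi_j}[\eta]$ of \textsc{t-call}. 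Lemma~\ref{lem:wgsubst} then discharges the term variables. Lemma~\ref{lem:quasitrans} yields $T' \assg U$, so \textsc{t-conv-a} types the outer conversion $U(\cdot)$ at exactly the method's return type.
\item \textsc{r-assert}: the premise $\vtype(v) \imp_\emptyset T$ together with the fact that a value is typed at its $\vtype$ gives $S = \vtype(v)$.
\item \textsc{r-convert-i}: $T(v) : T$ reduces to $v : \vtype(v)$. Inverting \textsc{t-conv-a} or \textsc{t-conv-u} gives $\vtype(v) \assg T$ with $\uinterface(T)$; since the $\assg_{tL}$ and $\assg_{Lt}$ rules require $T$ to be a literal of matching kind, this forces $\vtype(v) \imp T$.
\item \textsc{r-convert-b} and \textsc{r-convert-s}: the reduct is retyped at $T$ itself using \textsc{t-const} or \textsc{t-literal}. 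For this we need that the underlying types agree, so that the field types and the constant type match. This follows from both conversion rules, because assignability between non-interface types only relates types with equal underlying types.
\item \textsc{r-op}: we assume the standard soundness of $\delta$, namely that it returns a value of the operand type.
\end{itemize}

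For \textsc{r-context}, the induction hypothesis gives that the subterm retypes at some $S' \imp S$. We must then rebuild the enclosing typing. Most premises that consume a subterm's type use assignability, and these are handled by Lemma~\ref{lem:quasitrans}: the arguments of \textsc{t-call}, the fields of \textsc{t-literal}, and the argument of \textsc{t-conv-a}.

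The delicate premises are the ones that use the exact type of the subterm.
\begin{itemize}
\item Receiver of a call: $S' \imp R$ gives $\methods(S') \supseteq \methods(R)$, so the method is still available. The return type is unchanged because we compare erased signatures.
\item Receiver of a field selection: here $R$ is a struct, so $\imp$ can only hold by $\imp_T$, and $S' = R$.
\item Assertions: if $S'$ stops being an interface we switch to \textsc{t-stupid}. Otherwise \textsc{t-assert}$_S$ uses $\bounds(S')$, and we need transitivity of $\imp$ (Lemma~\ref{lem:subt_refltrans}) to hold for it.
\item Operators \textsc{t-op} and \textsc{t-op-$\alpha$}: in closed reductions the operand types are non-interface, so again $S' = S$.
\item \textsc{t-conv-u}: when $e$ is non-interface, $S'$ equals $e$'s type.
\end{itemize}
Every case except the receiver case yields $S = T$.

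The main obstacle is the \textsc{r-call} case. The difficulty is reconciling the assignability of arguments to declared parameter types with the body's typing under $\ov{y : \maintype{\Phi}}$. The inserted conversions are exactly what makes Lemma~\ref{lem:wgsubst} applicable without subtyping. A second subtle point is the receiver case in \textsc{r-context}: the method signature in $\methods(S')$ must coincide with the one in $\methods(R)$ up to $\maintype{\cdot}$. I would obtain this from the $\unique$ condition in interface well-formedness.
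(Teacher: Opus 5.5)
Your structure matches the paper's: induction on $\longrightarrow$, Lemma~\ref{lem:quasitrans} for the assignability premises, and Lemmas~\ref{lem:wgbody} and~\ref{lem:wgsubst} for \textsc{r-call}. However, the context case for \textsc{t-assert}$_S$ does not go through as you argue it. There you know $T \imp_\Delta \bounds_\Delta(S)$, and the induction hypothesis gives $e' : S'$ with $S' \imp_\Delta S$. When $\uinterface(S')$, you must re-establish $T \imp_\Delta \bounds_\Delta(S')$. Transitivity of $\imp$ cannot give this: it would need $\bounds_\Delta(S) \imp_\Delta \bounds_\Delta(S')$, which is the converse of what you have, and $S'$ may demand strictly more methods. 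Nothing in your hypothesis excludes, say, $S = \texttt{any}$ and $S'$ an interface with a method that $T = \texttt{int}$ lacks. In that situation neither \textsc{t-assert}$_S$ nor \textsc{t-stupid} (which needs $\neg\uinterface(S')$) types $e'.(T)$.

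You also never treat the \textsc{t-conv-u} context case when the source type $U$ is an interface. That one closes directly. Here $\underlying_\Delta(U)=\underlying_\Delta(T)$ is a single interface literal, so $U$ and $T$ have the same method and type sets. Hence $U' \imp_\Delta U$ yields $U' \imp_\Delta T$, and you switch to \textsc{t-conv-a}.

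The missing idea for the assertion case is what the paper invokes as ``value restriction''. A step changes the type only when the reduct is a value produced by \textsc{r-assert} or \textsc{r-convert-i}, and $\vtype(v)$ is never an interface. Every other rule falls into one of these groups:
\begin{itemize}
\item it fixes its result type syntactically;
\item it reads the type off a subterm whose type cannot change (\textsc{t-field});
\item it recovers the same erased return type (\textsc{t-call});
\item it inherits an operand type with no proper $\imp$-subtype (\textsc{t-op}, \textsc{t-op-$\alpha$}).
\end{itemize}
For \textsc{t-op-$\alpha$}, note that $\alpha$ \emph{is} an interface. So the right reason there is that $S' \imp_\Delta \alpha$ forces $S' = \alpha$, not that operand types are non-interface.

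Strengthen your induction hypothesis to ``$S = T$, or $e'$ is a value and $S = \vtype(e')$''. Then in the \textsc{t-assert}$_S$ case, $\uinterface(S')$ forces $S' = S$, so the original premise is reused verbatim; the remaining case goes to \textsc{t-stupid}.

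Two smaller points. Your receiver case also yields $S = T$. And you do not need $\unique$: \textsc{t-call} only needs \emph{some} signature in $\methods_\Delta(S')$ with the same erasure, which $\imp_I$ provides.
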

\begin{proof}
By induction on the reduction relation $e \longrightarrow e'$.
We show illustrative cases.

\begin{description}
\item[Case:] {\sc r-assert}
  \begin{tabbing}
    $v.(T) \becomes v$ with $\vtype(v) \imp T$ \` this case\\
    {\bf Subcase:} $\vdash v.(T) : T$ by {\sc t-assert}$_S$\\
    $\ROUNDTWO{\Delta ; \Gamma}  \vdash v : T'$ \quad $\uinterface_\ROUNDTWO{\Delta}  (T')$ \quad $T\imp_\ROUNDTWO{\Delta }  \bounds(T')$
    \` by inversion\\
    $\vtype(v) = T' \imp T$ \\
    {\bf Subcase:} $\ROUNDTWO{\Delta ; \Gamma}  \vdash v.(T) : T$ by {\sc t-assert}$_I$\\
    $\ROUNDTWO{\Delta ; \Gamma} \vdash v : T'$ \quad $\uinterface(T')$ 
    \` by inversion\\
    $\vtype(v) = T' \imp T$ \\
    {\bf Subcase:} $\ROUNDTWO{\Delta ; \Gamma} \vdash v.(T) : T$ by {\sc t-stupid}\\
    $\ROUNDTWO{\Delta ; \Gamma} \vdash v : T'$ 
    \` by inversion\\
    $\vtype(v) = T' \imp_\ROUNDTWO{\Delta }  T$ \\
  \end{tabbing}

 \item[Case:] $e.(T) \becomes e'.(T)$

   \begin{tabbing}
     By inversion on typing we have 3 subcases.\\
     {\bf Subcase:} $\ROUNDTWO{\Delta ; \Gamma} \vdash e.(T) : T$ by {\sc t-assert}$_I$\\
     $\ROUNDTWO{\Delta ; \Gamma} \vdash e : T'$ \` by inversion\\
     $\ROUNDTWO{\Delta ; \Gamma} \vdash e' : S'$ with $S'\imp_\ROUNDTWO{\Delta }  T'$ \` by i.h.\\
     if $\uinterface_\ROUNDTWO{\Delta }  (S')$ then $\ROUNDTWO{\Delta ; \Gamma}  \vdash e'.(T) : T$ by {\sc t-assert}$_I$\\
     if $\neg\uinterface_\ROUNDTWO{\Delta }  (S')$ then $\ROUNDTWO{\Delta ; \Gamma}  \vdash e'.(T) : T$ by {\sc t-stupid}\\
     {\bf Subcase:} $\ROUNDTWO{\Delta ; \Gamma} \vdash e.(T) : T$ by {\sc t-assert}$_S$\\
     $\ROUNDTWO{\Delta ; \Gamma} \vdash e : T'$ \quad $T\imp_\ROUNDTWO{\Delta }  T'$  \` by inversion\\
     $\ROUNDTWO{\Delta ; \Gamma} \vdash e' : S'$ with $S'\imp_\ROUNDTWO{\Delta } T'$ \` by i.h.\\
     if $T \imp_\ROUNDTWO{\Delta}  S'$ then $e'.(T) : T$ by {\sc t-assert}$_S$\\
     if $T \notimp_\ROUNDTWO{\Delta }  S'$ then $e'.(T) : T$ by {\sc t-stupid}\\
     if $\uinterface_\ROUNDTWO{\Delta }  (S')$ then $S' = T'$ by value restriction and so
     $T\imp_\ROUNDTWO{\Delta }  S'$\\
     thus, $e'.(T) : T$ by {\sc t-assert}$_S$\\
     if $\neg\uinterface_\ROUNDTWO{\Delta }  (S')$ then
     $\ROUNDTWO{\Delta ; \Gamma}  \vdash e'.(T) : T$ by {\sc t-stupid}\\
     {\bf Subcase:} $\ROUNDTWO{\Delta ; \Gamma} \vdash e.(T) : T$ by {\sc t-stupid}\\
     $\ROUNDTWO{\Delta ; \Gamma} \vdash e' : S'$ with $S'\imp_\ROUNDTWO{\Delta }  T'$ \` by i.h.\\
     $\ROUNDTWO{\Delta ; \Gamma} \vdash e'.(T) : T$ by {\sc t-stupid}\\
   \end{tabbing}

 \item[Case:] {\sc r-convert-b}
   \begin{tabbing}
     $T(U(c)) \becomes T(c)$ with $\ubase(T)$ \` this case\\
     $\ROUNDTWO{\Delta ; \Gamma} \vdash T(U(c)) : T$ \` by inversion on {\sc t-conv-a} or {\sc t-conv-u}\\
     $\ROUNDTWO{\Delta ; \Gamma} \vdash U(c) : U$ \` by {\sc t-const} with $\underlying_\ROUNDTWO{\Delta }  (U) =
     \ctype(c)$\\
     $\ROUNDTWO{\Delta ; \Gamma} \vdash T(c) : T$ \\\` by {\sc t-const} since
     $\underlying_\ROUNDTWO{\Delta }  (T)=\ctype(c)$ by {\sc t-conv-u}
     \\ \` and $U \assg_\ROUNDTWO{\Delta }  T$
     implies $\underlying_\ROUNDTWO{\Delta }  (T) = \underlying_\ROUNDTWO{\Delta }  (U)$
   \end{tabbing}

 \item[Case:] {\sc r-convert-s}
   \begin{tabbing}
     $T(U\{\ov{v}\}) \becomes T\{\ov{v}\}$ with $\ustruct_\ROUNDTWO{\Delta } (T)$\\
     $\ROUNDTWO{\Delta ; \Gamma} \vdash T(U\{\ov{v}\}) : T$ \` by {\sc t-conv-a} or {\sc
       t-conv-u}\\
     {\bf Subcase:} $\ROUNDTWO{\Delta ; \Gamma} \vdash T(U\{\ov{v}\}) : T$ \` by {\sc t-conv-a}
     \\
     $\ROUNDTWO{\Delta ; \Gamma} \vdash U\{\ov{v}\} : U$ and $U \assg_\ROUNDTWO{\Delta }  T$ \` by inversion\\
     $\ROUNDTWO{\Delta ; \Gamma} \vdash \ov{v:S}$ and $(\ov{f\,\Phi}) = \fields_\ROUNDTWO{\Delta } (U)$ and $\ov{S
       \assg_\ROUNDTWO{\Delta }  \Phi\downharpoonright_1}$ \` by inversion\\
     $\fields_\ROUNDTWO{\Delta } (U) = \fields_\ROUNDTWO{\Delta } (T)$ since $U \assg_\ROUNDTWO{\Delta }  T$\\
     $\ROUNDTWO{\Delta ; \Gamma} \vdash T\{\ov{v}\} : T$ \` by {\sc t-literal}\\
            {\bf Subcase:} $\ROUNDTWO{\Delta ; \Gamma} \vdash T(U\{\ov{v}\}) : T$ \` by {\sc
              t-conv-u}\\
     $\ROUNDTWO{\Delta ; \Gamma} \vdash U\{\ov{v}\} : U$ and $\underlying_\ROUNDTWO{\Delta } (U) = \underlying_\ROUNDTWO{\Delta } (T)$ \`
     by inversion\\
     $\ROUNDTWO{\Delta ; \Gamma} \vdash \ov{v:S}$ and $(\ov{f\,\Phi}) = \fields_\ROUNDTWO{\Delta } (U)$ and  $\ov{S
       \assg_\ROUNDTWO{\Delta }  \Phi\downharpoonright_1}$ \` by inversion\\
     $\ROUNDTWO{\Delta ; \Gamma} \vdash T\{\ov{v}\} : T$ \` by {\sc t-literal}
   \end{tabbing}

 \item[Case:] $T(e) \becomes T(e')$ by {\sc r-context}

   \begin{tabbing}
     {\bf Subcase:} $\ROUNDTWO{\Delta;\Gamma } \vdash T(e) : T$ by {\sc t-conv-a}\\
     $\ROUNDTWO{\Delta;\Gamma }\vdash e:U$ and $U \assg_\ROUNDTWO{\Delta }   T$ \` by inversion\\
     $\ROUNDTWO{\Delta;\Gamma }\vdash e':U'$ with $U' \imp_\ROUNDTWO{\Delta }   U$ \` by i.h.\\
     $U' \assg_\ROUNDTWO{\Delta }   T$ \` by Lemma~\ref{lem:quasitrans}\\
     $\ROUNDTWO{\Delta;\Gamma }\vdash T(e') : T$ \` by {\sc t-conv-a}\\
     {\bf Subcase:} $\ROUNDTWO{\Delta;\Gamma }\vdash T(e) : T$ by {\sc t-conv-u}\\
     $\ROUNDTWO{\Delta;\Gamma }\vdash e:U$ and $\underlying_\ROUNDTWO{\Delta }  (U) = \underlying_\ROUNDTWO{\Delta }  (T)$ \` by inversion\\
     $\ROUNDTWO{\Delta;\Gamma }\vdash e':U'$ with $U' \imp_\ROUNDTWO{\Delta }   U$ \` by i.h.\\
     if $\uinterface_\ROUNDTWO{\Delta }  (U)$ then\\
     $U' \imp_\ROUNDTWO{\Delta }   T$ \` since $\underlying_\ROUNDTWO{\Delta }  (U) = \underlying_\ROUNDTWO{\Delta }  (T)$\\
     $U' \assg_\ROUNDTWO{\Delta }   T$ \` by $\assg$ rule \\
     $\ROUNDTWO{\Delta;\Gamma }\vdash T(e) : T$ \` by  {\sc t-conv-a}\\
     if $\neg\uinterface_\ROUNDTWO{\Delta }  (U)$ then $U' = U$ \` by inversion on $\imp$\\
     $\underlying_\ROUNDTWO{\Delta }  (U') = \underlying_\ROUNDTWO{\Delta }  (T)$ \` by assumption\\
     $\ROUNDTWO{\Delta;\Gamma }\vdash T(e) : T$ \` by  {\sc t-conv-u}\\
     \end{tabbing}
   \item[Case:] {\sc r-convert-i}

     \begin{tabbing}
       $\uinterface_\ROUNDTWO{\Delta }  (T)$ \` by inversion \\
       $T(w) \becomes w$ \` this case\\
       {\bf Subcase:} $\ROUNDTWO{\Delta;\Gamma }\vdash T(w) : T$ by {\sc t-conv-a}\\
       $\ROUNDTWO{\Delta;\Gamma }\vdash w : U$ with $U \assg_\ROUNDTWO{\Delta }   T$ \` by inversion\\
       Since either $\ubase_\ROUNDTWO{\Delta }  (U)$ or $\ustruct_\ROUNDTWO{\Delta }  (U)$ then\\
       $U \assg_\ROUNDTWO{\Delta }   T$ must follow from $U \imp_\ROUNDTWO{\Delta }   T$ and so\\
       $\ROUNDTWO{\Delta;\Gamma }\vdash w : U$ with $U \imp_\ROUNDTWO{\Delta }   T$ as needed.\\
        {\bf Subcase:} $\ROUNDTWO{\Delta;\Gamma }\vdash T(w) : T$ by {\sc t-conv-u}\\
       $\ROUNDTWO{\Delta;\Gamma }\vdash w : U$ with $\underlying_\ROUNDTWO{\Delta }  (U) = \underlying_\ROUNDTWO{\Delta }  (T)$ \` by inversion\\
       Impossible since $w$ is a value and $\uinterface_\ROUNDTWO{\Delta }  (T)$.
       \end{tabbing}
     \item[Case:] {\sc r-field}
       \begin{tabbing}
         $(\ov{f\,\Phi}) = \fields_\ROUNDTWO{\Delta } (T)$ \` this case\\
         $\ROUNDTWO{\Delta;\Gamma }\vdash T\{\ov{v}\}.f_i : \Phi_i$ \` by {\sc t-field}\\
         $\ROUNDTWO{\Delta;\Gamma }\vdash T\{\ov{v}\} : T$ \` by  inversion\\
         $\ROUNDTWO{\Delta;\Gamma }\vdash \ov{v : S}$ and $\ov{S \assg_\ROUNDTWO{\Delta }  \Phi\downharpoonright_1}$
         \` by inversion\\
         $\ROUNDTWO{\Delta;\Gamma }\vdash \Phi_i\downharpoonright_1(v_i) :
         \Phi_i\downharpoonright_1$ \` by {\sc t-conv-a}
       \end{tabbing}
       
       \item[Case:] $T\{\ov{v}, e , \ov{e}\} \becomes T\{\ov{v}, e' ,
         \ov{e}\}$
         \begin{tabbing}
           $\ROUNDTWO{\Delta;\Gamma }\vdash T\{\ov{v}, e , \ov{e}\} $ \` by {\sc t-literal}\\
           $\ROUNDTWO{\Delta;\Gamma }\vdash \ov{v},e,\ov{e} : \ov{S}$ with $(\ov{f\,\Phi}) =
           \fields_\ROUNDTWO{\Delta } (T)$ and $\ov{S\assg_\ROUNDTWO{\Delta } \Phi\downharpoonright_1}$ \` by
           inversion\\
           $\ROUNDTWO{\Delta;\Gamma }\vdash e' : S_i'$ with $S_i' \imp_\ROUNDTWO{\Delta }  S_i$ \` by i.h.\\
           $S_i' \assg_\ROUNDTWO{\Delta }  \Phi_i\downharpoonright_1$ \` by
           Lemma~\ref{lem:quasitrans}\\
           $\ROUNDTWO{\Delta;\Gamma }\vdash T\{\ov{v}, e' ,
         \ov{e}\} : T$ \` by {\sc t-literal}

         \end{tabbing}

       \item[Case:] $v'.m[\ov{S}](\ov{v}) \becomes U(e[x := v',\ov{y
           := \Phi(v)}])$ with $(x,\ov{y : \Phi}).e : U = \mbody(\vtype(v').m[\ov{S}])$

         \begin{tabbing}
           $\ROUNDTWO{\Delta;\Gamma }\vdash v'.m[\ov{S}](\ov{v}) : \Psi\downharpoonright
           [\eta]$ \` by {\sc t-call}
           \\
           $\ROUNDTWO{\Delta;\Gamma }\vdash v' : t[\ov{T}]$ and $m[\ov{\beta \, S'}](\ov{x \,
             \Phi})\,\Psi \in \methods_\ROUNDTWO{\Delta }(t[\ov{T}])$ \` by inversion\\
           $\ROUNDTWO{\Delta;\Gamma }\vdash \ov{v : T'}$ and $\eta = (\ov{\beta := S})$ and
           $\ov{\beta[\eta] \imp_\ROUNDTWO{\Delta } S'[\eta]}$ and $\ov{T'[\eta] \assg_\ROUNDTWO{\Delta }
             \Phi\downharpoonright_1[\eta]}$ \` by inversion\\
           $\ROUNDTWO{\Delta;\Gamma }\vdash \ov{\Phi(\ov{v})} : \ov{\Phi}$ \` by {\sc
             t-conv-a}\\
           $\ROUNDTWO{\Delta}; x:R , \ov{y : \Phi} \vdash e : U'$ with $U' \assg_\ROUNDTWO{\Delta }
           \Psi\downharpoonright_1[\eta]$ \` by
           Lemma~\ref{lem:wgbody}\\
           $\ROUNDTWO{\Delta ; \emptyset} \vdash e[x := v',\ov{y
             := \Phi(v)}] : U'$ \` by Lemma~\ref{lem:wgsubst}\\
           $\ROUNDTWO{\Delta ; \emptyset} \vdash U(e[x := v',\ov{y
             := \Phi(v)}]) : U$ \` by {\sc t-conv-a}\\
           \ROUNDTWO{$\Delta;\Gamma \vdash U(e[x := v',\ov{y
             := \Phi(v)}]) : U$} \` \ROUNDTWO{by weakening}
         \end{tabbing}

       \item[Case:] {\sc r-context} for method call

         Identical to contextual case for struct literals, appealing
         the induction hypothesis and Lemma~\ref{lem:quasitrans}.

       \item[Case:] {\sc r-op}

         \begin{tabbing}
           $\circ(\ov{v}) \becomes v'$ with $\delta(\circ,\ov{v}) =
           v'$ \` this case\\
           $\circ(\ov{v}) :T$ \` by {\sc t-op}\\
           $\ROUNDTWO{\Delta;\Gamma }\vdash \ov{v} : T$ and $\underlying_\ROUNDTWO{\Delta }(T) = B$ and $B \in
           \textit{base}(\circ)$ \` by inversion\\
           $\ROUNDTWO{\Delta;\Gamma }\vdash v' : T$ \` by definition of $\delta$\\
         
         \end{tabbing}

         \item[Case:] $\circ(\ov{v}, e, \ov{e}) \becomes \circ(\ov{v},
           e', \ov{e})$

           \begin{tabbing}
             $\circ(\ov{v}, e, \ov{e}) :T$ \` by {\sc t-op}\\
           $\ROUNDTWO{\Delta;\Gamma }\vdash \ov{v} , e , \ov{e} : T$ and $\underlying_\ROUNDTWO{\Delta }(T) = B$ and $B \in
           \textit{base}(\circ)$ \` by inversion\\
           $\ROUNDTWO{\Delta;\Gamma }\vdash e' : T'$ with $T' \imp_\ROUNDTWO{\Delta } T$ \` by i.h.\\
           $T' = T$ \` since $\underlying_\ROUNDTWO{\Delta }(T) = B$ \\
           $\ROUNDTWO{\Delta;\Gamma }\vdash \circ(\ov{v}, e', \ov{e}) : T$ \` by {\sc t-op}
           \end{tabbing}
         
  \end{description}
\end{proof}

\begin{lemma}[WG Canonical Forms]
  \label{lem:wgcanon}

  If $e$ is a value and $\emptyset;\emptyset \vdash e : T$ then either
  $T$ is a struct type and $e = T\br{\ov{v}}$, for some $\ov{v}$ or
  $T$ is a base type and $e = T(c)$, for some $c$.

\end{lemma}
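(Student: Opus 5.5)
The proof will go by case analysis on the shape of the value $e$, followed by inversion on the last typing rule in the derivation of $\emptyset;\emptyset \vdash e : T$. WG values are only of the forms $S\br{\ov{v}}$ and $S(c)$, so there are just two cases. In each case we identify which typing rules can end a derivation for that syntactic form and read $T$ off the conclusion.

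\textbf{Struct literal.} Let $e = S\br{\ov{v}}$. The only rule whose conclusion is a structure literal is \textsc{t-literal}. Inversion gives $T = S$, $\ustruct(S)$ and $\emptyset \vdash S \nice$. Hence $T$ is a struct type and $e = T\br{\ov{v}}$, as required. No other rule has a struct literal in subject position: \textsc{t-var}, \textsc{t-call}, \textsc{t-field}, the assertion rules, the conversion rules and the operator rules all have syntactically different subjects. So this case is immediate.

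\textbf{Typed constant.} Let $e = S(c)$. There is a subtlety here, because the syntax $S(c)$ coincides with a conversion $S(e_0)$ whose argument happens to be a constant. We resolve this by reading constants as the dedicated production $T(c)$, which the paper distinguishes from $T(e)$ in Figure~\ref{fig:wg-syntax}, with $c$ itself not an expression. Then the only applicable rule is \textsc{t-const}. Inversion gives $T = S$ and $\underlying_\emptyset(S) = \ctype(c)$, a base type. By the definition of $\ubase$ in Figure~\ref{fig:wg-aux-defs-app}, an underlying base type means $T$ is a base type in the intended sense, namely $\ubase(T)$: either a base type $B$ or a named type whose declaration bottoms out in one.

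\textbf{Main obstacle.} The only delicate step is this conversion/constant ambiguity, together with the point that ``base type'' must mean $\ubase(T)$ rather than literally $B$. I would state that convention explicitly. Should the conversion reading be forced, the fallback is this: \textsc{t-conv-a} and \textsc{t-conv-u} would require the argument $c$ to be typed on its own, and no rule types a bare constant, so those rules cannot apply.

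Finally, we note that $T$ cannot be an interface or a type variable. The empty $\Delta$ rules out type variables, and neither \textsc{t-literal} nor \textsc{t-const} produces an interface type. This is the form in which the lemma will be used in the progress proof for Theorem~\ref{thm:wg-progress}.
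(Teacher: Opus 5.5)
Your proof is correct and is essentially the paper's argument. The paper just says ``straightforward induction on typing'', but WG has no subsumption rule, so the only derivations whose subject is a value end in \textsc{t-literal} or \textsc{t-const}; the induction hypothesis is never used, and the proof reduces to exactly your case analysis plus inversion. Reading ``base type'' as $\ubase(T)$ is the right convention, since \textsc{t-const} only fixes $\underlying_\emptyset(T) = \ctype(c)$.
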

\begin{proof}
Straightforward induction on typing.
 \end{proof}

\begin{theorem}[WG Progress]
If $\emptyset ; \emptyset \vdash e : T$ then either $e$ is a value, $e
\, \mathsf{panic}$ or $e \rightarrow e'$.
  \end{theorem}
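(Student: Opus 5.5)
We argue by induction on the derivation of $\emptyset;\emptyset \vdash e : T$, analysing the last typing rule. Rules \textsc{t-var} cannot apply since $\Gamma$ is empty. Typed constants $T(c)$ are values directly. For every compound rule, we first apply the induction hypothesis to the immediate subexpressions in evaluation-context order (receiver, then arguments left to right, struct fields left to right, operands, the body of a conversion or assertion). If some subexpression reduces, the whole expression reduces by \textsc{r-context} with the corresponding evaluation context of Figure~\ref{fig:wg-reduction}. If some subexpression panics, i.e.\ has the form $E'[v.(T')]$ with $\vtype(v)\notimp T'$, then composing $E'$ with the surrounding context shows the whole expression panics. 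So it remains to treat the case where all relevant subexpressions are values, and we use Lemma~\ref{lem:wgcanon} (canonical forms) to fix their shape.

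The redex cases are then handled rule by rule. For \textsc{t-literal} with all fields values, $T\br{\ov{v}}$ is itself a value. For \textsc{t-field}, canonical forms give the receiver as $T\br{\ov{v}}$ with $\ustruct(T)$; since $\fields_\emptyset(T)$ is defined and $f_i$ is among its fields, \textsc{r-field} fires. For \textsc{t-op}, all operands are constants of a type with base underlying type in $\opdom(\circ)$, so $\delta$ is defined and \textsc{r-op} applies. The rule \textsc{t-op-$\alpha$} cannot occur under an empty $\Delta$. For conversions (\textsc{t-conv-a}, \textsc{t-conv-u}), we case on the target: if $\uinterface(T)$ use \textsc{r-convert-i}. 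If $\ubase(T)$, the argument value must be a constant. Assignability or equal underlying types rule out a struct here, so \textsc{r-convert-b} fires. The struct case is symmetric with \textsc{r-convert-s}. For assertions, if $\vtype(v)\imp_\emptyset T$ we reduce by \textsc{r-assert}; otherwise $e$ panics by definition.

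The main obstacle is the method call case \textsc{t-call}. Here the receiver value $v'$ has type $R$ with $m[\ov{\alpha~S}](\ov{x~\Phi})\Psi\in\methods_\emptyset(R)$, and we must show $\mbody(\vtype(v').m[\ov{S'}])$ is defined. By canonical forms $R=\vtype(v')$ is a struct or base type, never an interface. If $R$ is an anonymous struct or a base type $B$, $\methods_\emptyset(R)=\emptyset$, contradicting membership, so $R=t[\ov{S''}]$ is named. The \methods{} rule for named non-interface types then yields a declaration $\func~(x~t[\ov{\alpha~S}])~mM~\br{\return~e}\in\ov{D}$, which is exactly the premise of $\mbody$. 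Distinctness of method declarations from \textsc{t-prog} makes the lookup unique, and \textsc{r-call} fires. We will double-check that canonical forms indeed yield $T$ equal to the value's own type annotation, including named types whose underlying type is base or struct. This requires reading \textsc{t-const}/\textsc{t-literal} as assigning exactly the annotated type, and we expect this detail to need the most care.
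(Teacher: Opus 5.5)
Your proposal is correct and takes the same route as the paper. Both argue by induction on the typing derivation, propagate reductions and panics through evaluation contexts, and use canonical forms (Lemma~\ref{lem:wgcanon}) to fire \textsc{r-assert}, \textsc{r-convert-b}/\textsc{-s}/\textsc{-i}, or panic in the value cases. The paper details only the assertion, constant and conversion cases and defers the rest to FGG. Your explicit handling of field selection, operators and the method call (the receiver's type must be a named non-interface type, so $\mbody$ is defined) is a more complete version of the same argument.
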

  \begin{proof}
By induction on typing. The proof is essentially identical to that of
FGG~\cite{DBLP:journals/pacmpl/GriesemerHKLTTW20}.
We detail the cases pertaining to constants and type conversions.

\begin{description}

\item[Case:] {\sc t-assert}$_I$

  \begin{tabbing}
    $\vdash e.(T) : T$ \` this case\\
    $\vdash e :S$ and $\uinterface(S)$ \` by inversion\\
    $e \becomes e'$ or $e$ is a value or $e\,\mathsf{panic}$ \` by
    i.h.\\
    If $e$ reduces or panics the result is trivial\\
    {\bf Subcase:} $e$ is a value\\
    $e = w$ for some $w$ \` by Lemma~\ref{lem:wgcanon}\\
    if $\vtype(w) \imp T$ then we conclude by {\sc r-assert}\\
    otherwise then $e.(T)\,\mathsf{panic}$.
   \end{tabbing}

   Other type assert cases follow identical reasoning.

 \item[Case:] {\sc t-const}

   Trivial since constants are values.

 \item[Case:] {\sc t-conv-a}
   \begin{tabbing}
     $\vdash T(e) : T$ with $\vdash e : U$ and $U \assg T$ \` by
     inversion\\
     $e \becomes e'$, $e$ is a value or $e\,\mathsf{panic}$ \` by
     i.h.\\
     $e$ reducing or panic are trivial.\\
     {\bf Subcase:} $e$ is a value\\
     if $\mathit{base}(T)$ then we conclude by {\sc r-convert-b}\\
     if $\mathit{struct}(T)$ then we conclude by {\sc r-convert-s}\\
     
     if $\mathit{iface}(T)$ then we conclude by {\sc r-convert-i}\\
     \end{tabbing}
   
\end{description}

  \end{proof}

  \subsection{LWG Type Safety}
\label{app:lwg}
  
  \begin{lemma}[Quasi-transitivity]\label{lem:lwgquasitrans}
  Let $\Delta \vdash T, T',S\ok$. If $T \imp_\Delta T'$ and $T' \assg_\Delta S$ then $T
  \assg_\Delta S$.
\end{lemma}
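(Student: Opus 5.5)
The plan is to reuse the argument of Lemma~\ref{lem:quasitrans} essentially verbatim. LWG reuses the WG declarations, types, $\types_\Delta$, $\methods_\Delta$, $\underlying_\Delta$, and the relations $\imp_\Delta$ and $\assg_\Delta$ of Figure~\ref{fig:wg-impl-assig}; what LWG changes is only the expressions and the typing of expressions. The statement therefore concerns the same relations as in WG, now applied to compiled types $\tycomp{T}{}$ whose interface methods are renamed to $m_D$ with boxed signatures. Neither relation inspects expression typing. I would proceed by case analysis on the derivation of $T \imp_\Delta T'$, and within each case on the derivation of $T' \assg_\Delta S$.

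The cases are as follows. If $T \imp_\Delta T'$ is by $\imp_\alpha$ or $\imp_T$, then $T = T'$ and the conclusion is the hypothesis $T' \assg_\Delta S$. If it is by $\imp_I$, then $\uinterface(T')$ and $\neg\utyvar_\Delta(T')$, and we split on the assignability derivation.
\begin{itemize}
\item Rule $\assg_{\imp}$ gives $T' \imp_\Delta S$. Transitivity of $\imp_\Delta$ (Lemma~\ref{lem:subt_refltrans}(2)) yields $T \imp_\Delta S$, hence $T \assg_\Delta S$.
\item Rule $\assg_{tL}$ gives $T' = t[\ov{S}]$ and $S = L = \underlying_\Delta(T')$. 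Since $T'$ is an interface, $L$ is an interface literal. From the definitions, $\methods_\Delta$ and $\types_\Delta$ depend on $T'$ only through $\underlying_\Delta(T')$, so $\methods_\Delta(S) = \methods_\Delta(T')$ and $\types_\Delta(S) = \types_\Delta(T')$. The premises of $\imp_I$ for $T \imp_\Delta T'$ then transfer to $T \imp_\Delta S$; in particular $S$ is a non-variable interface.
\item Rule $\assg_{Lt}$ is symmetric: $T' = L = \underlying_\Delta(S)$ with $S = t[\ov{S}]$. Then $\uinterface(S)$ holds, the method and type sets coincide, and $\imp_I$ gives $T \imp_\Delta S$.
\end{itemize}

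The main point to check is that the method-set and type-set computations really factor through $\underlying_\Delta$ for named interface types. For $\methods_\Delta$ this is immediate from the interface rule. For $\types_\Delta$ it holds because the interface rule is stated in terms of $\underlying_\Delta(T)$. One subtlety: the $\ttilde V$ clause in $\imp_I$ compares $\underlying_\Delta(C)$ for $C \in \types_\Delta(T)$, and this is unaffected because $T$ is unchanged. I would also note that the well-formedness hypotheses $\Delta \vdash T, T', S \ok$ are used only to invoke Lemma~\ref{lem:subt_refltrans}. The renaming to $m_D$ in compiled types is uniform, so it does not interfere with the equality of method sets in either direction.
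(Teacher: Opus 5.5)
Your proposal is correct and follows the paper's proof. The paper says only that the LWG relations $\imp_\Delta$ and $\assg_\Delta$ are the WG ones and reuses Lemma~\ref{lem:quasitrans}, whose proof is exactly your case analysis: trivial when $T=T'$, transitivity for $\assg_{\imp}$, and, for $\assg_{tL}$ and $\assg_{Lt}$, method and type sets that coincide through the shared underlying interface literal. Your aside about compiled types and the $m_D$ renaming is unnecessary, since the lemma is stated for arbitrary well-formed types.
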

\begin{proof}
Identical to Lemma~\ref{lem:quasitrans} since the relations are the same.
\end{proof}

\begin{lemma}[Type Substitution preserves Subtyping]\label{lem:lwgsubstsub}
If $T \imp_{\Delta_0, \ov{\alpha : U} , \Delta_1 } S$ , $\Delta_0
\vdash \ov{U} \ok$ and $\ov{U_0} \imp_{\Delta_0} \ov{U}[\ov{\alpha} :=
\ov{U_0}]$ then $T [\ov{\alpha} :=
\ov{U_0}] \imp_{\Delta_0,\Delta_1 [\ov{\alpha} :=
\ov{U_0}]} S [\ov{\alpha} :=
\ov{U_0}]$.
\end{lemma}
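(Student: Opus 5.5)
The plan is to follow the WG proof of Lemma~\ref{lem:wgsubstsub} essentially verbatim, since LWG reuses the WG type syntax and the relations $\imp_\Delta$ and $\assg_\Delta$ unchanged. Write $\eta = (\ov{\alpha \by U_0})$. I will proceed by induction on the derivation of $T \imp_{\Delta_0, \ov{\alpha : U}, \Delta_1} S$ and case split on the last rule.

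\textbf{Rule $\imp_\alpha$.} Here $T = S = \beta$. If $\beta \notin \ov{\alpha}$, then $\beta[\eta] = \beta$ and $\beta$ remains bound in $\Delta_0, \Delta_1[\eta]$, so $\imp_\alpha$ applies again. If $\beta = \alpha_i$, then both sides become $U_{0i}$, and the goal is reflexivity $U_{0i} \imp U_{0i}$. This holds by Lemma~\ref{lem:subt_refltrans}(1), using the well-formedness of $U_{0i}$, which follows from the hypothesis $\Delta_0 \vdash \ov{U}\ok$ together with $\ov{U_0} \imp_{\Delta_0} \ov{U}[\eta]$ (in the intended uses $U_0$ is well-formed).

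\textbf{Rule $\imp_T$.} Here $\neg\uinterface(T)$ and $T = S$. Substitution does not touch the head constructor of $T$, because $T$ is not a type variable. So $\neg\uinterface(T[\eta])$ still holds, and $\imp_T$ applies.

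\textbf{Rule $\imp_I$.} This is the main case. The premises are that $S$ is an interface, $S$ is not a type variable, the method set of $T$ contains that of $S$, and every type term of $T$ lies in the type set of $S$ or is covered by some $\ttilde V$ in it. Since $S$ is not a type variable, $S[\eta]$ is still an interface that is not a type variable. The remaining work is two commutation facts, each proved by induction on the definitions in Figures~\ref{fig:wg-aux-defs} and~\ref{fig:wg-impl-assig}:
\begin{enumerate}
\item $\methods_{\Delta_0,\Delta_1[\eta]}(X[\eta]) = \methods_{\Delta}(X)[\eta]$ for non-variable $X$;
\item $\types_{\Delta_0,\Delta_1[\eta]}(X[\eta]) = \types_{\Delta}(X)[\eta]$ for non-variable $X$, with $\underlying$ commuting with $\eta$ in the same way.
\end{enumerate}
With these, both the superset condition on method sets and the type-set covering condition transfer by applying $\eta$ pointwise.

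The hard part is when $T$ itself is some $\alpha_i$. In that case the method and type sets of $\alpha_i$ are read off its bound $U_i$, while after substitution they are read off $U_{0i}$. Here I would use the hypothesis $U_{0i} \imp U_i[\eta]$. By the lemma stating that $\imp$ implies type-set inclusion and method-set superset, combined with transitivity of $\subseteq$ and $\supseteq$ as in Lemma~\ref{lem:subt_refltrans}(2), I obtain $U_{0i} \imp S[\eta]$. The $\ttilde$ clause needs care: $\underlying(C)$ must be preserved along the chain, and the well-formedness restriction that unions contain no type variables ensures the type terms are unaffected by $\eta$.
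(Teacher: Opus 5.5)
You follow the same route as the paper, which proves this lemma by deferring to the WG version: induction on the $\imp$ derivation, with $\imp_\alpha$ and $\imp_T$ immediate, and $\imp_I$ discharged because the type-set and method-set inclusions are ``preserved under substitution''. Your handling of $T=\alpha_i$, via $U_{0i}\imp U_i[\eta]$ and transitivity, is more careful than the paper's one-liner. (Also, reflexivity of $\imp$ follows directly from the three rules, so you do not need the well-formedness of $U_{0i}$, which you could not actually derive from the hypotheses.)

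The step that fails is the one carrying the $\imp_I$ case: your commutation equalities (1) and (2). The type set of an interface is an \emph{intersection} over its elements, and the method set of a multi-term union is an intersection over its terms. Since $\eta$ need not be injective, you only get $\types(X)[\eta]\subseteq\types(X[\eta])$ and $\methods(X)[\eta]\subseteq\methods(X[\eta])$. But $\imp_I$ needs \emph{upper} bounds on $\types(T[\eta])$ and on $\methods(S[\eta])$. Your appeal to ``unions contain no type variables'' does not rescue this. \textsc{t-element} only forbids a top-level variable ($\neg\utyvar_\Delta(T)$), so a term such as $A[\alpha]$ is legal and is changed by $\eta$.

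No cleverer induction can close this, because the statement itself fails in such degenerate cases. Take a method-less $\type~A[\beta~\texttt{any}]~\struct~\br{}$, with $\Delta_0=\Delta_1=\emptyset$, $\alpha:\texttt{any}$ and $U_0=\texttt{int}$. Let $T=\interface~\br{A[\alpha];~A[\texttt{int}]}$ and $S=\interface~\br{\texttt{bool}}$.
\begin{itemize}
\item Before substitution, $\types(T)=\{A[\alpha]\}\cap\{A[\texttt{int}]\}=\emptyset$ and $\methods(T)=\methods(S)=\emptyset$. So $T\imp_{\alpha:\texttt{any}}S$ holds vacuously by $\imp_I$.
\item After substitution, $\types(T[\eta])=\{A[\texttt{int}]\}\not\subseteq\{\texttt{bool}\}$, and no $\ttilde$ term covers it. So $T[\eta]\notimp_{\emptyset}S[\eta]$.
\end{itemize}
The paper's claim that ``both properties are preserved under substitution'' has exactly the same hole. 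A correct version needs an extra side condition that prevents such intersections from collapsing under $\eta$, or a semantic treatment of type sets. Under such a condition, equalities of your kind become available and the rest of your argument goes through.
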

\begin{proof}
Identical to Lemma~\ref{lem:wgsubstsub} since the relations are the same.
\end{proof}

\begin{lemma}[Type Substitution preserves Assignability]
\label{lem:lwgsubstassign}
If $T \prec_{\Delta_0, \ov{\alpha : U} , \Delta_1 }: S$ , $\Delta_0
\vdash \ov{U} \ok$ and $\ov{U_0} \prec_{\Delta_0} \ov{U}[\ov{\alpha} :=
\ov{U_0}]$ then $T [\ov{\alpha} :=
\ov{U_0}] \prec_{\Delta_0,\Delta_1: [\ov{\alpha} :=
\ov{U_0}]} S [\ov{\alpha} :=
\ov{U_0}]$.
\end{lemma}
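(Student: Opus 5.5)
The plan is to argue by case analysis on the last rule used to derive $T \prec_{\Delta_0,\ov{\alpha:U},\Delta_1}: S$, exactly mirroring the proof of Lemma~\ref{lem:wgsubstassign}. LWG reuses the WG definitions of $\imp$, $\assg$, $\underlying$, $\types$ and $\methods$, so the argument transfers directly. Throughout, write $\eta = (\ov{\alpha} := \ov{U_0})$ and $\Delta = \Delta_0,\ov{\alpha:U},\Delta_1$.

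\textbf{Case $\assg_{\imp}$.} Here $T \imp_\Delta S$. I would invoke Lemma~\ref{lem:lwgsubstsub} to obtain $T[\eta] \imp_{\Delta_0,\Delta_1[\eta]} S[\eta]$, and then close with rule $\assg_{\imp}$. One wrinkle is that Lemma~\ref{lem:lwgsubstsub} asks for $\ov{U_0} \imp_{\Delta_0} \ov{U}[\eta]$, whereas our hypothesis only gives assignability. Since each $U_i$ is a bound, it is an interface type by \textsc{t-formal}. I would therefore check that when the target is an interface, $\assg$ collapses to $\imp$:
\begin{itemize}
\item $\assg_{Lt}$ would force $U_0$ to be a literal $L$ equal to the underlying type of a named interface. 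In that case the types and methods of $L$ coincide with those of $U_i[\eta]$, so $\imp_I$ still applies.
\item $\assg_{tL}$ would require the target $U_i[\eta]$ to be a literal. If that literal is an interface, the same equality of type and method sets again gives $\imp$.
\end{itemize}
This is the same reasoning as the second and third subcases of Lemma~\ref{lem:quasitrans}, so it yields the needed $\ov{U_0} \imp_{\Delta_0} \ov{U}[\eta]$.

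\textbf{Cases $\assg_{tL}$ and $\assg_{Lt}$.} Here we have $\underlying_\Delta(t[\ov{R}]) = L$, with $T=t[\ov{R}]$ and $S=L$, or the reverse. Substitution leaves the named or literal shape intact, giving $t[\ov{R}[\eta]]$ and $L[\eta]$. It therefore suffices to show that substitution commutes with the underlying type: $\underlying_{\Delta_0,\Delta_1[\eta]}(t[\ov{R}][\eta]) = L[\eta]$. I would prove this by induction on the unfolding of $\underlying$ through type declarations. The substitution composes with the declaration instantiation $(\ov{\beta := R})$ because the declared $T$ mentions only its own parameters $\ov\beta$. The $\alpha$ clause of $\underlying$ never fires at the head, since $t[\ov R]$ is named and $L$ is a literal. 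Once this is established, rule $\assg_{tL}$ (respectively $\assg_{Lt}$) applies.

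The main obstacle is the $\assg_{\imp}$ case: reconciling the assignability hypothesis on $\ov{U_0}$ with the implements hypothesis required by Lemma~\ref{lem:lwgsubstsub}. I would discharge it by the interface-target collapse described above, and fall back to strengthening the hypothesis to $\imp$ (as in Lemma~\ref{lem:wgsubstsub}) if some edge case fails.
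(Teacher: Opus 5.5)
Your proposal is correct and follows the paper's route. The paper defers to the WG version, which splits on the last assignability rule, closes $\assg_{\imp}$ with the substitution lemma for $\imp$, and calls the $\assg_{tL}$/$\assg_{Lt}$ cases straightforward. Your extra step---showing that assignability to an interface bound collapses to $\imp$, so that the hypothesis of Lemma~\ref{lem:lwgsubstsub} is actually met---fills in a detail the paper silently skips, and it is sound given that bounds are interfaces by \textsc{t-formal}.
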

\begin{proof}
Identical to Lemma~\ref{lem:lwgsubstassign} since the relations are the same.
\end{proof}

\begin{lemma}[Type substitution preserves well-formedness (1)]\label{lem:lwgsubstwf}
If $\Delta_0 , \ov{\alpha : T} , \Delta_1 \vdash S \ok$ and $\ov{U}
\imp_{\Delta_0} T[\ov{\alpha} := \ov{U}] $ with $\Delta_0 \vdash
\ov{U}\nice$ then $\Delta_0, \Delta_1 [\ov{\alpha} := \ov{U}] \vdash S[\ov{\alpha} := \ov{U}]\ok$
\end{lemma}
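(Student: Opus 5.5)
The plan is to prove the statement by induction on the derivation of $\Delta_0 , \ov{\alpha : T} , \Delta_1 \vdash S \ok$, following the structure of the proof of Lemma~\ref{lem:wgsubstwf}. Since LWG reuses WG's types, declarations, well-formedness rules and the relations $\imp$ and $\assg$, the argument should transfer nearly verbatim. Throughout, write $\eta = (\ov{\alpha} := \ov{U})$ and $\Delta' = \Delta_0, \Delta_1[\eta]$.

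The cases go as follows.

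\begin{itemize}
\item \textbf{Rule \textsc{t-b}.} Here $S=B$ and $B[\eta]=B$, so the result is immediate.
\item \textbf{Rule \textsc{t-param}.} Here $S=\beta$. If $\beta \notin \ov{\alpha}$, then $\beta$ lies in $\Delta_0$ or $\Delta_1$ and still has an entry in $\Delta'$, so \textsc{t-param} applies. If $\beta = \alpha_i$, then $S[\eta] = U_i$. From $\Delta_0 \vdash U_i \nice$ and inversion on \textsc{t-nc} we get $\Delta_0 \vdash U_i \ok$, and weakening (the LWG analogue of Lemma~\ref{lem:wgg_weak}) lifts this to $\Delta'$.
\item \textbf{Rule \textsc{t-named}.} Here $S = t[\ov{R'}]$. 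The induction hypothesis gives $\Delta' \vdash \ov{R'[\eta]} \ok$. For the bound premises $\ov{\alpha'[\eta'] \imp R[\eta']}$, where $\eta'$ instantiates the parameters of $t$, we use that the declared bounds $R$ are closed with respect to $\ov{\alpha}$. Then $(R[\eta'])[\eta] = R[\eta'[\eta]]$, and Lemma~\ref{lem:lwgsubstsub} transports the implements judgements.
\item \textbf{Rules \textsc{t-struct} and \textsc{t-interface}.} The struct case reduces to preserving $\supernice$ on field annotations $\typair{T'}{V'}$. This needs the induction hypothesis together with preservation of $\unonconstraint$ and the bounds equation $V' = \bounds(T')$.
\end{itemize}

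I expect the struct annotation case to be the main obstacle. When $T' = \alpha_i$, the bound $V'$ becomes $T_i[\eta]$ while $T'[\eta] = U_i$, and $\bounds_{\Delta'}(U_i)$ need not literally equal $T_i[\eta]$. My first attempt is to read the annotation premise in the LWG form \textsc{t@v-b}, which only requires $V$ well-formed and not a type variable, rather than the WG equality. If the WG form must be kept, I would fall back on Lemma~\ref{lem:wgbounds}, which gives the weaker relation $\bounds(T'[\eta]) \imp V'[\eta]$, and adjust the statement accordingly.

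The interface case requires preserving several things:
\begin{itemize}
\item the element judgements $\okelmt$;
\item distinctness of signature names, which is trivial because substitution does not touch names;
\item $\unique$ of the method set, using that $\methods_{\Delta'}(F[\eta]) = \methods(F)[\eta]$ for the union and intersection clauses, since unions contain no type variables;
\item for \textsc{t-approx}, $\underlying(T')=T'$ and $\neg\uinterface(T')$, which hold because approximation elements are variable-free.
\end{itemize}

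Method specifications and type formals (\textsc{t-specification}, \textsc{t-formal}) follow by the induction hypothesis on the extended context $\Delta_1, \ov{\beta : S}$. This is why the lemma is stated with an arbitrary suffix $\Delta_1$.
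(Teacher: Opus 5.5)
Your proof has the same skeleton as the paper's. The paper proves the LWG lemma by deferring to Lemma~\ref{lem:wgsubstwf}: induction on the derivation of $\Delta_0,\ov{\alpha:T},\Delta_1 \vdash S \ok$, using the fact that substitution preserves $\imp$ for the bound checks of \textsc{t-named}. Your \textsc{t-b}, \textsc{t-param}, \textsc{t-named}, \textsc{t-specification} and \textsc{t-formal} cases match this.

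The real difference is the pair-annotation case. The paper cites Lemma~\ref{lem:wgbounds} there. As you point out, that lemma only gives $\bounds(T'[\eta]) \imp V'[\eta]$, which cannot discharge the equality premise $V=\bounds_\Delta(T)$ of \textsc{t@v-bc}. Already $\struct~\br{f~\typair{\alpha}{\texttt{any}}}$ under $\alpha:=\texttt{int}$ violates it. Reading the premise through LWG's \textsc{t@v-b} is the reading LWG needs anyway. Its \textsc{t-func} and its typing rule for interface values use $\nice$ on pairs, and runtime types such as $\typair{\texttt{int}}{\texttt{any}}$ must be well-formed for preservation. Under that reading the case needs only three things:
\begin{itemize}
\item the induction hypothesis;
\item preservation of $\unonconstraint$;
\item the fact that a non-variable $V$ stays non-variable after substitution.
\end{itemize}
So your route actually closes this case, with no bounds lemma and no need for your fallback. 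The paper's one-line argument glosses over exactly this point.

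One justification needs tightening. Rule \textsc{t-element} forbids only a \emph{top-level} type variable in a union term, so a term like $t_1[\alpha]$ is admitted. Substitution then need not commute with the intersection that defines $\methods(E)$. For example, suppose $t_1[a]$ declares $m(y~\typair{a}{\texttt{any}})~\Psi$ and $t_2[b]$ declares $m(y~\typair{b}{\texttt{any}})~\Psi$. Then $\interface~\br{t_1[\alpha] \mid t_2[\texttt{int}];~m(y~\typair{\texttt{string}}{\texttt{string}})~\Psi}$ satisfies $\unique$. Under $\alpha:=\texttt{int}$, however, the intersection gains $m(y~\typair{\texttt{int}}{\texttt{any}})~\Psi$, and $\unique$ fails.

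Your step ``since unions contain no type variables'' must therefore be an explicit side condition. The paper asserts this restriction in prose, but its rules do not enforce it. Granting it, $E[\eta]=E$ and your uniqueness argument goes through. Similarly, for \textsc{t-approx} the right reason is that the term is a base type or a struct literal, a shape that substitution preserves, not that it is variable-free.
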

\begin{proof}
Identical to Lemma~\ref{lem:wgsubstwf} since the relations are the same.
\end{proof}
\begin{lemma}[Type substitution preserves well-formedness (2)]\label{lem:lwgsubstnice}
If $\Delta_0 , \ov{\alpha : T} , \Delta_1 \vdash S \nice$ and $\ov{U}
\imp_{\Delta_0} T[\ov{\alpha} := \ov{U}] $ with $\Delta_0 \vdash
\ov{U}\nice$ then $\Delta_0, \Delta_1 [\ov{\alpha} := \ov{U}] \vdash S[\ov{\alpha} := \ov{U}]\nice$
\end{lemma}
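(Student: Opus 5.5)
The statement is the LWG counterpart of Lemma~\ref{lem:wgsubstnice}, and it follows almost entirely from Lemma~\ref{lem:lwgsubstwf}. I would first unfold the definition of $\Delta_0, \ov{\alpha : T}, \Delta_1 \vdash S \nice$. Writing $\Delta = \Delta_0, \ov{\alpha : T}, \Delta_1$ and $\eta = (\ov{\alpha} := \ov{U})$, the derivation ends either with \textsc{t-nc}, for a plain type $S$, or with \textsc{t@v-b}, for an annotated pair $S = \typair{T'}{V}$. I would treat the two cases separately.

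\textbf{Case \textsc{t-nc}.} The premises are $\Delta \vdash S \ok$ and $\unonconstraint_\Delta(S)$. The first premise is carried to $\Delta_0, \Delta_1[\eta] \vdash S[\eta] \ok$ directly by Lemma~\ref{lem:lwgsubstwf}. For the second, I would check that $\unonconstraint$ is stable under substitution by inspecting each of its defining rules.
\begin{itemize}
\item If $S$ is a struct, a base type, or a type whose underlying type is a method-only interface, then the same holds for $S[\eta]$. This is because $\underlying$ commutes with substitution when the head is not a type variable.
\item If $S = \alpha_i$, then $S[\eta] = U_i$, and the hypothesis $\Delta_0 \vdash \ov{U} \nice$ gives $\unonconstraint_{\Delta_0}(U_i)$. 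Weakening (Lemma~\ref{lem:wgsubstwf} style) lifts this to $\Delta_0, \Delta_1[\eta]$.
\item If $S$ is a type variable of $\Delta_0$ or $\Delta_1$, it is left unchanged by $\eta$ and remains in the context.
\end{itemize}

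\textbf{Case \textsc{t@v-b}.} I would apply the first case to $T'$ and Lemma~\ref{lem:lwgsubstwf} to $V$. The remaining premise is $\neg\utyvar(V[\eta])$. This is the one point that needs care: if $V$ were one of the $\alpha_i$, substitution could change it, but the premise $\neg\utyvar_\Delta(V)$ rules that out. Since $V$ is not a type variable, $V[\eta]$ has the same non-variable head and so is not a type variable either.

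The main obstacle is the $\unonconstraint$ case where the underlying type is an interface. There one must argue that substituting into the interface cannot introduce a union element. That holds because a type union $E$ is syntactically preserved by $\eta$ and interfaces are never erased, so a method-only interface stays method-only after substitution.
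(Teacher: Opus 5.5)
Your proof is correct and takes the same route as the paper, whose proof is just an appeal to Lemma~\ref{lem:lwgsubstwf}. You go further by explicitly discharging the $\unonconstraint$ premise and, in the \textsc{t@v-b} case, the $\neg\utyvar$ premise, both of which the paper's one-line proof leaves implicit. One small slip: lifting $\unonconstraint_{\Delta_0}(U_i)$ to $\Delta_0, \Delta_1[\ov{\alpha} := \ov{U}]$ is weakening (Lemma~\ref{lem:wgg_weak}), not Lemma~\ref{lem:wgsubstwf}.
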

\begin{proof}
By Lemma~\ref{lem:lwgsubstwf}
\end{proof}

\begin{lemma}[Type Substitution preserves Typing]
  \label{lem:lwgsubsttype}
    If $\Delta_0 , \ov{\alpha : T} , \Delta_1 ; \Gamma \vdash e : S@V$ and
  $\ov{U} \imp_{\Delta_0} T[\ov{\alpha} := \ov{U}]$ with $\Delta_0
  \vdash \ov{U} \nice$ then $\Delta_0, \Delta_1[\ov{\alpha} := \ov{U}]
  ; \Gamma[\ov{\alpha} := \ov{U}]
  \vdash e[\ov{\alpha} := \ov{U}] : (S@V)[\ov{\alpha} := \ov{U}]$
\end{lemma}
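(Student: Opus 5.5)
We prove the statement by induction on the derivation of $\Delta_0 , \ov{\alpha : T} , \Delta_1 ; \Gamma \vdash e : S@V$. We follow the structure of Lemma~\ref{lem:wgsubsttype} and write $\eta = (\ov{\alpha} := \ov{U})$ and $\Delta' = \Delta_0,\Delta_1[\eta]$. The auxiliary facts are the LWG counterparts already stated: Lemma~\ref{lem:lwgsubstsub} for $\imp$, Lemma~\ref{lem:lwgsubstassign} for $\assg$, and Lemmas~\ref{lem:lwgsubstwf} and~\ref{lem:lwgsubstnice} for well-formedness. We also need three routine commutation facts, each proved by unfolding definitions:
\[
\fields_{\Delta'}(R[\eta]) = \fields(R)[\eta], \qquad \methods_{\Delta'}(R[\eta]) \supseteq \methods(R)[\eta], \qquad \MKTABLE{S[\eta]}{V[\eta]}{\Delta'} = \MKTABLE{S}{V}{\Delta}[\eta] \text{ on method entries.}
\]

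\textbf{Structural cases.} Rules \textsc{t-var}, \textsc{t-literal}, \textsc{t-field}, \textsc{t-const}, \textsc{t-static-change} and \textsc{t-op-static} are direct. We apply the i.h.\ to each premise, use the commutation facts and Lemma~\ref{lem:lwgsubstnice}, and re-apply the same rule. For \textsc{t-static-change} we also need $\underlying(R)[\eta]=\underlying(R[\eta])$ for the non-type-variable $R$ involved; this holds because those types are not variables. Rule \textsc{t-interface} lives only in the empty $\Delta$, so it is closed and unaffected by $\eta$.

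\textbf{Calls.} For \textsc{t-dyn-call}, the i.h.\ gives the receiver type $(R@V)[\eta]$. Since $V$ is an interface and not a variable (by $\neg\utyvar$ in \textsc{t@v-b}), $V[\eta]$ is still an interface. The method set commutes, and the argument premises follow by the i.h. The bound checks $\ov{\beta[\eta']\imp S'[\eta']}$ are transported by Lemma~\ref{lem:lwgsubstsub}, composing $\eta'$ with $\eta$ as in the WG proof. \textsc{t-static-call} is handled the same way. The side condition $\neg\uinterface(R)$ together with $R@R$ is preserved, because if $R$ were a type variable then $\uinterface(R)$ would hold.

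\textbf{Boxing and interface-kind premises.} The rules \textsc{t-make}, \textsc{t-change}, \textsc{t-assert}$_I$, \textsc{t-assert}$_S$, \textsc{t-stupid}, \textsc{t-op-dyn} and \textsc{t-op-runtime} carry side conditions such as $\uinterface(\cdot)$, $\neg\uinterface(\cdot)$, or implications $\neg\uinterface(U)\Rightarrow T=U$. These are not stable under substitution when the type involved is some $\alpha_i$.

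The key observation is that all such conditions are stated on either bound components (the $V$ in $@V$), which are never type variables, or logical components. When a logical component is $\alpha_i$, it is an interface before substitution and possibly a non-interface after.
\begin{itemize}
\item In \textsc{t-make} and \textsc{t-change}, an implication $\neg\uinterface(U)\Rightarrow T=U$ with $U=\alpha_i$ was vacuous before substitution. If $U[\eta]$ is concrete, we must show $T[\eta]=U[\eta]$. We would try to derive this from $T\assg U$ together with $\neg\uinterface(T)$, since assignability into a concrete type forces equality or an underlying-type match. If that fails, these rules need the extra case split used for \textsc{t-assert}$_I$.
\item For assertions, we reproduce the case analysis of Lemma~\ref{lem:wgsubsttype}, choosing among \textsc{t-assert}$_I$, \textsc{t-assert}$_S$ and \textsc{t-stupid} according to whether $T[\eta]$ is an interface. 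The required $T[\eta]\imp V[\eta]$ or $T[\eta]\assg U[\eta]$ comes from Lemmas~\ref{lem:lwgsubstsub} and~\ref{lem:lwgsubstassign}.
\item For \textsc{t-op-dyn}, if the variable is substituted by $U_i$, we conclude by \textsc{t-op-runtime}. Its premise holds because $\underlying(U_i)$ lies among the underlying types of $\types(T)$, using $U_i\imp T$ and the type-set clause of \textsc{<:$_I$}.
\end{itemize}

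I expect the main obstacle to be the side conditions in \textsc{t-make} and \textsc{t-change} when a logical type that was a parameter becomes concrete under $\eta$. This is where the $T@V$ invariant, that $T\neq V$ only for boxed generics, must be shown to survive substitution.
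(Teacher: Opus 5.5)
Your approach is the paper's own. Its proof is the single remark that one inducts on the typing derivation as in the WG lemma, and your structural, call and assertion cases go through as you describe.

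\textbf{The gap in \textsc{t-make} and \textsc{t-change}.} This is the case you flag yourself, and the route you sketch would not close it. After substitution, $T[\eta]\assg U[\eta]$ with both sides non-interface does not give $T[\eta]=U[\eta]$, because $\assg_{tL}$ and $\assg_{Lt}$ relate a named type to its distinct underlying literal. There is also no make or change analogue of \textsc{t-stupid} to fall back on, so the extra case split is unavailable.

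The missing observation concerns the premise \emph{before} substitution: $T\assg_\Delta\alpha$ forces $T=\alpha$. Indeed, $\imp_I$ is blocked by its side condition $\neg\utyvar_\Delta(U)$, $\imp_T$ needs a non-interface type, $\imp_\alpha$ gives $T=\alpha$, and $\assg_{tL}$, $\assg_{Lt}$ have a literal or named type on the right. This settles both rules:
\begin{itemize}
\item In \textsc{t-make}, where $\neg\uinterface(T)$, the logical target $U$ is therefore never a type variable. For non-variable $U$ we have $\uinterface(U[\eta])\iff\uinterface(U)$, so the premise $\neg\uinterface(U)\Rightarrow T=U$ transports directly.
\item In \textsc{t-change}, $U=\alpha_i$ forces $T=\alpha_i$, hence $T[\eta]=U[\eta]$; a non-variable $U$ is handled as before.
\end{itemize}
The bounds $V$ and $V'$ are non-variables by the well-formedness premises (rule \textsc{t@v-b}), so their interface-ness is stable as well.

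\textbf{A second failure in \textsc{t-op-dyn}.} The hypothesis $\Delta_0\vdash\ov{U}\nice$ allows $U_i$ to be a type variable $\gamma$ of $\Delta_0$, since $\unonconstraint_{\Delta_0}(\gamma)$ holds. Your appeal to \textsc{t-op-runtime} then fails, because $\underlying(\gamma)$ is an interface rather than a base type. \textsc{t-op-dyn} does not help either: it requires the annotation $\gamma@T_i[\eta]$ to carry exactly $\gamma$'s declared bound, which in general differs from $T_i[\eta]$. No rule types $\circ(\ov{e}[\eta])$ in this case.

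You therefore need to restrict $\ov{U}$ to non-variable (for instance ground) types, which suffices for reductions of closed programs, or else change the rule. The paper's one-line appeal to the WG proof skips this too. In WG the case can be repaired via \textsc{t-op-$\alpha$}, but in LWG it cannot, so this is as much a defect of the statement as of your argument. As written, though, your proof does not cover it.
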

\begin{proof}
By induction on the given typing derivation. Similar to the analogue
property for WG.
\end{proof}

\begin{lemma}[Substitution preserves Typing]
  \label{lem:lwgsubst}
    If $\Delta ; \Gamma , \ov{x : T@V} \vdash e : S@U$ and $\Delta ;
  \Gamma \vdash \ov{e : T@V}$ then $\Delta ; \Gamma \vdash e[\ov{x} :=
  \ov{e}] : S@U$.
\end{lemma}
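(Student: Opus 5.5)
The plan is a standard induction on the derivation of $\Delta ; \Gamma , \ov{x : T@V} \vdash e : S@U$, generalised in the usual way so that the extended context may appear anywhere in $\Gamma$. Since LWG typing has no subsumption and substitution replaces variables by terms of \emph{exactly} the declared annotated types $\ov{T@V}$, no appeal to assignability or to Lemma~\ref{lem:lwgquasitrans} is needed. This is the point made for WG, where the operational semantics makes casts explicit.

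\textbf{Variable case (\textsc{t-var}).} If $e = x_i$ then $S@U = T_i@V_i$ and $e[\ov{x}:=\ov{e}] = e_i$, typed by hypothesis. If $e = y \notin \ov{x}$, then $(y : S@U) \in \Gamma$ and \textsc{t-var} applies directly.

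\textbf{Structural cases.} All remaining rules (\textsc{t-dyn-call}, \textsc{t-static-call}, \textsc{t-literal}, \textsc{t-field}, \textsc{t-make}, \textsc{t-change}, \textsc{t-static-change}, the assertion rules including \textsc{t-stupid}, \textsc{t-const} and the three operator rules) are handled uniformly. Each of their premises is either:
\begin{itemize}
\item a typing judgement on an immediate subterm under the same $\Delta$ and context, to which the induction hypothesis applies and which yields exactly the same annotated type; or
\item a side condition mentioning only $\Delta$, types and tables ($\nice$, $\imp_\Delta$, $\assg_\Delta$, $\underlying_\Delta$, $\methods_\Delta$, $\MKTABLE{\cdot}{\cdot}{\Delta}$), which is unaffected by term substitution because type variables are not touched.
\end{itemize}
The same rule is therefore re-applied to the substituted subterms, and since substitution commutes with every expression former, the conclusion follows. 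Ghost annotations and the RTTI/method table components contain no term variables and are left unchanged.

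\textbf{Main obstacle: \textsc{t-interface}.} This rule types its boxed subterm under $\emptyset ; \Gamma$ rather than $\Delta ; \Gamma$. I would check that the induction hypothesis is applied with the empty type environment, which is allowed because the lemma is stated for arbitrary $\Delta$. Two further points need care:
\begin{itemize}
\item Context ordering. Since the well-formedness premises here are closed, the only concern is that the substituted terms $\ov{e}$ are typed under $\Delta ; \Gamma$, whereas the premise requires typing under $\emptyset$. I expect interface values to arise only as runtime terms with closed contents. If that is not enough, I would fall back on requiring $\ov{e}$ to be typable under the $\Delta$ used at each node, using weakening for $\Gamma$ (the LWG analogue of Lemma~\ref{lem:wgg_weak}).
\item Binders. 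LWG expressions have no binders, so capture-avoidance is trivial. Method bodies are not subterms of expressions.
\end{itemize}
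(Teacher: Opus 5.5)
Your proposal is correct and takes the same approach as the paper, which proves the lemma by a routine induction on the typing derivation: each rule is re-applied to the substituted subterms, and there is no subsumption to handle. The step you flag as the main obstacle is not one: \textsc{t-interface} has type environment $\emptyset$ in its \emph{conclusion}, so whenever it is the last rule we already have $\Delta = \emptyset$, the hypothesis on the substituted terms is exactly $\emptyset ; \Gamma \vdash \ov{e : T@V}$, and the induction hypothesis applies directly with no fallback or closedness assumption needed.
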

\begin{proof}
Straightforward induction, similar to the analogue
property for WG.
\end{proof}

\begin{lemma}
  \label{lem:lwgbody}
  If  $(m[\ov{\beta~S}](\ov{y~\Phi})~\Psi) \in \methods_\Delta(t[\ov{R}])$ and
$\mjump(t.m) = [\ov{\alpha};\ov{\beta}].(x , \ov{x}).e_0$ with
$\Delta \vdash \ov{S'} \nice$,
$\ov{S'}\imp_\Delta \ov{S[\ov{\beta} := \ov{S'}]}$ then 
$\Delta ; x : t[\ov{R}]@t[\ov{R}] , \ov{x : \Phi[\ov{\beta := S'}]} \vdash e_0[\ov{\alpha := R}][\ov{\beta := S'}] : \Psi[\ov{\beta := S'}]$
\end{lemma}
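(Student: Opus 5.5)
The proof is the LWG analogue of Lemma~\ref{lem:wgbody}: invert the definitions of $\mjump$ and $\methods$, recover the well-formedness derivation of the method declaration, and apply Lemma~\ref{lem:lwgsubsttype} twice (once per block of type parameters), followed by weakening.

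\textbf{Step 1: inversion.} Since $\mjump(t.m) = [\ov{\alpha};\ov{\beta}].(x,\ov{x}).e_0$, some declaration $\func~(x~t[\ov{\alpha~R_0}])~m[\ov{\beta~S_0}](\ov{x~\Phi_0})~\Psi_0~\br{\return~e_0}$ is in $\ov{D}$, with $(\type~t[\ov{\alpha~R_0}]~\_)\in\ov{D}$. Because $t[\ov{R}]$ is not an interface, the membership $(m[\ov{\beta~S}](\ov{y~\Phi})~\Psi) \in \methods_\Delta(t[\ov{R}])$ comes from the named non-interface case of $\methods$. Distinctness of method declarations (\textsc{t-prog}) makes this the same declaration. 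Hence $S = S_0[\ov{\alpha:=R}]$, $\Phi=\Phi_0[\ov{\alpha:=R}]$ and $\Psi=\Psi_0[\ov{\alpha:=R}]$.

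\textbf{Step 2: the method's own typing.} Since the program is well formed, rule \textsc{t-func} of LWG gives
\[
\ov{\alpha:R_0},\ov{\beta:S_0};\ x:\typair{t[\ov{\alpha}]}{t[\ov{\alpha}]},\ \ov{x:\Phi_0} \vdash e_0 : \Psi_0 .
\]

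\textbf{Step 3: substitute the receiver's type arguments.} Apply Lemma~\ref{lem:lwgsubsttype} with $\Delta_0=\emptyset$, the block $\ov{\alpha:R_0}$ and $\Delta_1=\ov{\beta:S_0}$. This needs $\ov{R}\imp R_0[\ov{\alpha:=R}]$ and $\vdash\ov{R}\nice$, which I obtain as follows:
\begin{itemize}
\item I add the implicit hypothesis $\Delta\vdash t[\ov{R}]\ok$ (as in Lemma~\ref{lem:wgbody}), which the call sites provide from the receiver's typing.
\item Inverting \textsc{t-named} on it yields the bounds check $\ov{R}\imp R_0[\ov{\alpha:=R}]$.
\item The $\Delta$ versus $\emptyset$ mismatch is handled by taking $\Delta_0=\Delta$ and using weakening (Lemma~\ref{lem:wgg_weak}, LWG variant) to extend the method's context by $\Delta$, since the declaration was checked under the closed context.
\end{itemize}
The result is a judgement over $\Delta,\ov{\beta:S}$ with receiver type $\typair{t[\ov{R}]}{t[\ov{R}]}$ and parameter types $\ov{\Phi}$.

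\textbf{Step 4: substitute the method's type arguments.} Apply Lemma~\ref{lem:lwgsubsttype} again, now for $\ov{\beta:=S'}$, using the hypotheses $\Delta\vdash\ov{S'}\nice$ and $\ov{S'}\imp_\Delta \ov{S[\ov{\beta:=S'}]}$. The receiver type $t[\ov{R}]$ contains no $\beta$ (the $\beta$ are fresh, by \textsc{t-formal} distinctness), so it is unchanged. This yields exactly the conclusion.

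\textbf{Main obstacle.} The delicate points are the bookkeeping of the two substitutions, and the fact that the typing in the conclusion is exact ($\Psi[\ov{\beta:=S'}]$ with no subtyping slack), in contrast with the WG version. Exactness holds because LWG \textsc{t-func} types the body at exactly $\Psi_0$, and Lemma~\ref{lem:lwgsubsttype} maps annotated types $\typair{S}{V}$ to exactly $(\typair{S}{V})[\eta]$. Both substitutions must commute on pair types, including their bound components. I expect checking that $\nice$-ness and the $\neg\utyvar$ side conditions on bounds survive these substitutions to be the hardest part; I would discharge them with Lemma~\ref{lem:lwgsubstnice}.
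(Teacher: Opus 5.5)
Your proposal is the paper's proof spelled out: the paper just cites inversion on $\mjump$ together with Lemmas~\ref{lem:lwgsubsttype} and~\ref{lem:lwgsubstsub}, and your route unfolds those citations exactly (recover the \textsc{t-func} derivation, substitute $\ov{\alpha:=R}$, then $\ov{\beta:=S'}$). You are right that the statement silently needs $\Delta\vdash t[\ov{R}]\ok$, as in Lemma~\ref{lem:wgbody}, but \textsc{t-named} only gives you $\Delta\vdash\ov{R}\ok$, not the $\Delta\vdash\ov{R}\nice$ premise of Lemma~\ref{lem:lwgsubsttype} that you list as needed, so that premise also has to be assumed (the paper's one-line proof is equally silent on it).
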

\begin{proof}
By inversion on $\mjump(t.m)$, Lemmas~\ref{lem:lwgsubsttype} and~\ref{lem:lwgsubstsub}.
 \end{proof}
  
  \begin{theorem}[Type Preservation]
    If $\ROUNDTWO{\Delta ; \Gamma} \vdash e : T@T'$ and $e \becomes e'$
    then $\ROUNDTWO{\Delta ; \Gamma}\vdash e : T@T'$.
  \end{theorem}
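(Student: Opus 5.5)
We will prove the statement (read, as intended, with conclusion $\Delta;\Gamma\vdash e' : T@T'$) by induction on the derivation of $e \becomes e'$, with an inner case analysis on the last typing rule used for the redex. Since LWG typing uses exact types rather than assignability, the contextual cases are the easy half. For each evaluation context $E$, we invert the typing rule for the enclosing construct, which types the hole at some exact $\Phi$. The induction hypothesis then retypes the reduct at the same $\Phi$, and we reapply the same rule unchanged. No analogue of Lemma~\ref{lem:lwgquasitrans} is needed here, in contrast to the WG proof.

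For the axioms, the cases go as follows.
\begin{itemize}
\item \textsc{select}: invert \textsc{t-field} and \textsc{t-literal}; the $i$-th component was typed exactly at the field's $\typair{S_i}{V_i}$.
\item \textsc{make}: the premises of \textsc{t-make} are precisely those of \textsc{t-interface} with $\rho = \MKTABLE{T}{V}{\Delta}$. The only mismatch is that \textsc{t-interface} is stated for closed $\Delta$, which is fine since reductions are of closed terms; we restrict to $\Delta=\emptyset$ when necessary.
\item \textsc{change-type}: invert \textsc{t-change} and then \textsc{t-interface} on the inner value. This gives $S \assg_\emptyset T \imp_\emptyset V$ and $T \assg_\emptyset U \imp_\emptyset V'$. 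We then need $S \assg_\emptyset U$. Since $T$ is an interface (or $S=T$), $S\assg T$ comes from $S \imp T$, so Lemma~\ref{lem:lwgquasitrans} yields $S\assg U$. We also need the recomputed table $\rho'=\MKTABLE{S}{V'}{\emptyset}$, which matches the rule.
\item \textsc{static-change-s/c}: equality of underlying types gives equal $\fields$, respectively equal $\ctype$. We retype with \textsc{t-literal} or \textsc{t-const} at $\typair{T}{T}$.
\item \textsc{assert-ok}$_S$: inversion of \textsc{t-interface} with $S$ non-interface gives $\emptyset;\Gamma\vdash w:\typair{S}{S}$.
\item \textsc{assert-ok}$_I$: we build a fresh \textsc{t-interface} derivation from $S\imp_\emptyset U$ and the computed $\rho$. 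We use $U\assg V$ from \textsc{t-assert}$_I$, transitivity of $\imp$ (Lemma~\ref{lem:subt_refltrans}) and \textsc{t-stupid} to cover assertion targets that are not interfaces.
\item \textsc{op-static}: by the assumption on $\delta$, it returns a constant of the same type.
\item \textsc{op-dyn}: the result is rewrapped in the same $\ghost\Phi(S,\rho,\cdot)$. We check with \textsc{t-interface} that $\delta(\rho(\circ),\ldots)$ returns a raw value of type $\typair{S}{S}$, using the operator part of $\MKTABLE{}{}{}$.
\end{itemize}

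The two call rules are the substantive cases and the main obstacle. For \textsc{call-static}, we invert \textsc{t-static-call} to get the signature in $\methods_\Delta(t[\ov{R}])$ and argument typings at $\Phi[\eta]$. We then apply Lemma~\ref{lem:lwgbody} to type the instantiated body, and Lemma~\ref{lem:lwgsubst} to substitute the receiver and arguments. For \textsc{call-dyn}, we invert \textsc{t-dyn-call} and then \textsc{t-interface} on the receiver. This tells us $\rho=\MKTABLE{U}{V}{\emptyset}$ with $U=t[\ov{R}]\assg T\imp V$, so $\rho(m)=t.m$.

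The delicate point is showing that the signature of $m$ in $\methods(V)$ coincides exactly, not just up to $\imp$, with the declared signature of $t.m$ instantiated at $\ov{R}$. We expect to extract this from $t[\ov R]\imp_\emptyset V$: its premise $\maintype{\methods(t[\ov{R}])}\supseteq\maintype{\methods(V)}$, together with $\unique$ and distinctness of method declarations, forces equal signatures. However, this holds only after erasing annotations, and LWG argument typing is annotation-sensitive. Our plan is to rely on the compiled program calling only the adaptors $m_D$, whose signatures are fully boxed by $\tybox{\cdot}{}$ on both sides. In that setting erased equality lifts to annotated equality, and Lemma~\ref{lem:lwgbody} and Lemma~\ref{lem:lwgsubst} then close the case as in the static one. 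If this fails in general, we would add as an invariant that interface method signatures in LWG programs are in boxed form.
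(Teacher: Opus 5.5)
Your case analysis is the paper's. You use induction on the reduction and exact-type contextual cases, handle \textsc{change-type} via Lemma~\ref{lem:lwgquasitrans}, and close both call cases with Lemma~\ref{lem:lwgbody} and Lemma~\ref{lem:lwgsubst}. The axiom cases you list go through. Restricting \textsc{make} to $\Delta=\emptyset$ is no worse than the paper's appeal to weakening, which also needs closed types.

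The gap is in \textsc{call-dyn}. The paper's proof simply asserts that the annotated signature taken from $\methods(V)$ belongs to $\methods(t[\ov R])$ ``since $t[\ov R]\imp V$''. You correctly notice that $\imp_I$ only yields this after $\maintype{\cdot}$-erasure, but your repair does not work, for two reasons:
\begin{itemize}
\item Restricting to compiled programs changes the theorem, which concerns arbitrary well-typed LWG terms.
\item Even for compiled programs, erased equality does not lift to annotated equality. $\tybox{\Phi}{}$ is the identity whenever $\uppertype{\Phi}$ is an interface, and only rewrites a non-interface upper component to $\texttt{Any}$. So two $m_D$ signatures can still differ in their interface upper components.
\end{itemize}

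Here is a concrete instance. Take an interface $\texttt{J}$ with method $m(x~\texttt{MyInt})~\texttt{MyInt}$, and a struct $\texttt{G}[\alpha~\texttt{Stringer}]$ declaring $m(x~\alpha)~\alpha$, with $\texttt{MyInt}\imp_\emptyset\texttt{Stringer}$.
\begin{itemize}
\item After compilation, $\texttt{J}$'s $m_D$ has signature $(x~\typair{\texttt{MyInt}}{\texttt{Any}})~\typair{\texttt{MyInt}}{\texttt{Any}}$.
\item $\texttt{G}[\texttt{MyInt}]$'s $m_D$ has signature $(x~\typair{\texttt{MyInt}}{\texttt{Stringer}})~\typair{\texttt{MyInt}}{\texttt{Stringer}}$.
\item These agree after erasure, so $\texttt{G}[\texttt{MyInt}]\imp_\emptyset\texttt{J}$. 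Hence \textsc{make-iface} legitimately builds a $\texttt{J}$-typed interface value with RTTI $\texttt{G}[\texttt{MyInt}]$ and $\rho(m_D)=\texttt{G}.m_D$.
\item A dynamic call of $m_D$ on this value is typed by \textsc{t-dyn-call} at $\typair{\texttt{MyInt}}{\texttt{Any}}$.
\item After \textsc{call-dyn}, the reduct is $\textit{Adapt}$'s static call to $\texttt{G}.m$. \textsc{t-static-call} can type that only at $\typair{\texttt{MyInt}}{\texttt{Stringer}}$.
\end{itemize}
The argument side could be rescued by retyping the body directly, since \textsc{t-change} accepts any interface source bound, though not by Lemma~\ref{lem:lwgsubst} as stated. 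The result side cannot be rescued: LWG has no subsumption, so the exact-type conclusion fails in this instance.

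Your fallback invariant (interface signatures in boxed form) does not help, for the same reason. Closing this case needs a change to the system rather than an extra lemma. Possible changes are:
\begin{itemize}
\item let $\imp$, and hence $\MKTABLE{S}{V}{\emptyset}$, compare annotated signatures;
\item normalise every $m_D$ signature to $\typair{T}{\texttt{Any}}$, and let $\textit{Adapt}$'s change-type restore the bound;
\item weaken preservation so that the interface upper component of the type may change, together with a substitution lemma that tolerates this.
\end{itemize}
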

  \begin{proof}
    By induction on the reduction relation.

    \begin{description}
    \item[Case:] {\sc select}
      \begin{tabbing}
        $\ROUNDTWO{\Delta ; \Gamma}\vdash S\{\ov{v}\}.i : S_i @ V_i$ \` by inversion\\
        $\ROUNDTWO{\Delta ; \Gamma}\vdash S\{\ov{v}\} : T@T$ and $\ustruct_\ROUNDTWO{\Delta } (T)$ and $\fields_\ROUNDTWO{\Delta } (T) =
        (\ov{\_ \, S@V})$ \` by inversion\\
        $\ROUNDTWO{\Delta ; \Gamma}\vdash \ov{v : S@V}$ \` by inversion\\
        $\ROUNDTWO{\Delta ; \Gamma}\vdash v_i : S_i @ V_i$ as needed.
      \end{tabbing}

    \item[Case:] {\sc make}
      \begin{tabbing}
        {\bf Subcase:} $w.(S \rightarrow_\rho U@V) \becomes U@V(S,\rho,w)$ and $\uinterface_\ROUNDTWO{\Delta } (U)$ \\
        $\ROUNDTWO{\Delta ; \Gamma}\vdash w.(S \rightarrow_\rho U@V) : U@V$ \` by inversion on
        typing\\
        $\ROUNDTWO{\Delta ; \Gamma}\vdash w : S@S$ and $S\assg_\ROUNDTWO{\Delta }  U \imp_\ROUNDTWO{\Delta }  V$ and $\rho =
        \MKTABLE{S}{V}{\ROUNDTWO{\Delta } } $ \` by inversion\\
        $\ROUNDTWO{\Delta ; \Gamma}\vdash U@V(S,\rho,w) : U@V$ \` by
        {\sc t-interface} \ROUNDTWO{and weakening}\\
         {\bf Subcase:} $w.(S \rightarrow_\rho U@V) \becomes U@V(S,\rho,w)$ and $\neg\uinterface_\ROUNDTWO{\Delta } (U)$ \\
        $\ROUNDTWO{\Delta ; \Gamma}\vdash w.(S \rightarrow_\rho U@V) : U@V$ \` by inversion on
        typing\\
        $\ROUNDTWO{\Delta ; \Gamma}\vdash w : S@S$ and $S\assg_\ROUNDTWO{\Delta } U \imp_\ROUNDTWO{\Delta } V$ and $\rho =
        \MKTABLE{S}{V}{\ROUNDTWO{\Delta }} $ and $U=S$ \` by inversion\\
        $\ROUNDTWO{\Delta ; \Gamma}\vdash U@V(S,\rho,w) : U@V$ \` by
        {\sc t-interface} \ROUNDTWO{and weakening}
      \end{tabbing}

    \item[Case:] {\sc change-type}
      \begin{tabbing}
      $\Phi(S,\rho,w).(T\rightarrow T'@U) \becomes T'@U(S,\rho',w)$
      with $\rho' = \MKTABLE{S}{V}{\emptyset} $\\
      $\ROUNDTWO{\Delta ; \Gamma}\vdash \Phi(S,\rho,w).(T\rightarrow T'@U) : T'@U$ \` by
      inversion on typing\\
      $\ROUNDTWO{\Delta ; \Gamma}\vdash \Phi(S,\rho,w) : T@V$ and $T \assg_\ROUNDTWO{\Delta }  T' \imp_\ROUNDTWO{\Delta }  U$ and
      $\neg\uinterface_\ROUNDTWO{\Delta } (T)\Rightarrow S=T$ \` by
      inversion\\
      $\ROUNDTWO{\Delta ; \Gamma}\vdash w : S@S$ and $S \assg_\ROUNDTWO{\Delta }  T \imp_\ROUNDTWO{\Delta }  V$ \` by inversion\\
      if $\uinterface_\ROUNDTWO{\Delta } (T)$ then $S\assg_\ROUNDTWO{\Delta }  T$ gives us $S \imp T$\\
      $S \assg_\ROUNDTWO{\Delta }  T'$ \` by Lemma~\ref{lem:lwgquasitrans}\\
      $\ROUNDTWO{\Delta ; \Gamma} \vdash T'@U(S,\rho',w)$ \` by {\sc t-interface}\\
      if $\neg\uinterface(T)$ then $S=T$ and so $S \assg_\ROUNDTWO{\Delta }  T'$\\
      $\ROUNDTWO{\Delta ; \Gamma} \vdash T'@U(S,\rho',w)$ \` by {\sc t-interface}\\
    \end{tabbing}

  \item[Case:] {\sc static-change-s}

    \begin{tabbing}
      $\STATICCHANGE{S\{\ov{v}\}}{\tycomp{T}{}}{\tycomp{S}{}} \becomes
      T\{\ov{v}\}$\\
      $\ROUNDTWO{\Delta ; \Gamma} \vdash \STATICCHANGE{S\{\ov{v}\}}{\tycomp{T}{}}{\tycomp{S}{}} \becomes
      T\{\ov{v}\} : T@T$ \` this case\\
      $\ROUNDTWO{\Delta ; \Gamma} \vdash S\{\ov{v}\} : S@S$ and $\underlying_\ROUNDTWO{\Delta } (S) = \underlying_\ROUNDTWO{\Delta } (T)$ \` by inversion\\
      $\ROUNDTWO{\Delta ; \Gamma} \vdash \ov{v : \Phi}$ and $\ov{\Phi} = \fields_\ROUNDTWO{\Delta } (S)$ \` by
      inversion\\
      $\ov{\Phi} = \fields_\ROUNDTWO{\Delta } (T)$ \` by $\underlying_\ROUNDTWO{\Delta } (S) = \underlying_\ROUNDTWO{\Delta } (T)$ and
      $\neg\uinterface_\ROUNDTWO{\Delta } (S,T)$\\
      $\ROUNDTWO{\Delta ; \Gamma} \vdash T\{\ov{v}\} :T$ \` by {\sc t-literal}\\

     \end{tabbing}

   \item[Case:] {\sc static-change-c}
     \begin{tabbing}
 $\STATICCHANGE{S(c)}{\tycomp{T}{}}{\tycomp{S}{}} \becomes
 T(c)$ \` this case\\
  $\STATICCHANGE{S(c)}{\tycomp{T}{}}{\tycomp{S}{}} : T@T$ \` this case\\
$\ROUNDTWO{\Delta ; \Gamma} \vdash S(c) : S@S   $and $\underlying_\ROUNDTWO{\Delta }(S) = \underlying_\ROUNDTWO{\Delta }(T)$ \` by
inversion\\
$\underlying_\ROUNDTWO{\Delta }(S) = \mathit{ctype}(c)$ \` by inversion\\
$= \underlying_\ROUNDTWO{\Delta }(T)$ \\
$\ROUNDTWO{\Delta ; \Gamma} \vdash T(c) : T@T$ \` by {\sc t-const}

    \end{tabbing}

  \item[Case:] {\sc assert-ok-s}

    \begin{tabbing}
      $T@V(S,\rho,w).(S@S)\becomes w$ \` this case\\
      $\ROUNDTWO{\Delta ; \Gamma} \vdash T@V(S,\rho,w).(S@S) :S@S$ and $S \imp_\ROUNDTWO{\Delta } V$ \` by
      inversion\\
      $\ROUNDTWO{\Delta ; \Gamma} \vdash w : S@S$ \` by inversion\\
    \end{tabbing}

      \item[Case:] {\sc assert-ok-i}

    \begin{tabbing}
      $T@V(S,\rho',w).(U@V)\becomes U@V(S,\rho,w)$ with $S\imp_\ROUNDTWO{\Delta } U$ and
      $\rho = \MKTABLE{S}{V}{\emptyset} $ \` this case\\
      $\ROUNDTWO{\Delta ; \Gamma} \vdash T@V(S,\rho,w).(U@V) :U@V$ \` by
      inversion\\
      $\ROUNDTWO{\Delta ; \Gamma} \vdash T@V(S,\rho,w) : T@V$ and $U\imp_\ROUNDTWO{\Delta } V$ and $\uinterface_\ROUNDTWO{\Delta }(U)$  \` by inversion\\
      $\ROUNDTWO{\Delta ; \Gamma} \vdash w : S@S$ and $S\assg_\ROUNDTWO{\Delta } T \imp_\ROUNDTWO{\Delta } V$ \` by inversion\\
      To show: $S \assg_\ROUNDTWO{\Delta } U \imp_\ROUNDTWO{\Delta } V$\\
      $S \assg_\ROUNDTWO{\Delta } U$ \` by $S\imp_\ROUNDTWO{\Delta } U$\\
      $U \imp_\ROUNDTWO{\Delta } V$ \` assumption\\
      $\ROUNDTWO{\Delta ; \Gamma} \vdash  U@V(S,\rho,w)$ \` by {\sc t-interface} 
    \end{tabbing}

  \item[Case:] {\sc call-dyn}
    \begin{tabbing}
      $T_1@T_2(U,\rho,w).m[\ov{S}](\ov{v}) \becomes e[\ov{\alpha :=
        R}][\ov{\beta := S}][x := w][\ov{y := v}]$ \` this case\\
      $U = t[\ov{R}]$ and $[\ov{\alpha} ; \ov{\beta}](x,\ov{y}).e =
      \mjump(\rho(m))$ \` inversion\\
      $\ROUNDTWO{\Delta ; \Gamma}\vdash \ov{ v : \Phi[\eta]}$ and
      $T_1@T_2(U,\rho,w).m[\ov{S}](\ov{v}) : \Psi[\eta]$ \\
      $T_1@T_2(U,\rho,w) : T_1 @T_2$ and $\uinterface_\ROUNDTWO{\Delta }(T_2)$ and
      $\ov{\beta[\eta] \imp S'[\eta]}$ \\
      $m[\ov{\beta \,S'}][\ov{y\,\Phi}]\,\Psi \in \methods_\ROUNDTWO{\Delta }(T_2)$ and
      $\eta = (\ov{\beta := S})$ \` by inversion\\
      $\ROUNDTWO{\Delta ; \Gamma}\vdash w : t[\ov{R}]@t[\ov{R}]$ and $t[\ov{R}] \assg_\ROUNDTWO{\Delta } T_1 \imp_\ROUNDTWO{\Delta }
      T_2$ and $\rho = \MKTABLE{t[\ov{R}]}{T_2}{\ROUNDTWO{\Delta }}$ \` by
      inversion\\
      $t[\ov{R}] \imp_\ROUNDTWO{\Delta } T_2$ and $\rho(m) = t.m$\` by inversion\\
      $m[\ov{\beta \,S'}][\ov{y\,\Phi}]\,\Psi \in \methods_\ROUNDTWO{\Delta }(t[\ov{R}])$
      \` since $t[\ov{R}] \imp T_2$\\

      $\ROUNDTWO{\Delta } ; x:t[\ov{R}]@t[\ov{R}] , \ov{ y : \Phi[\eta]} \vdash e[\ov{\alpha
          := R}][\ov{\beta := S}] : \Psi[\eta]$ \` by
        Lemma~\ref{lem:lwgbody}\\
      $\ROUNDTWO{\Delta ; \Gamma}\vdash e[\ov{\alpha :=
        R}][\ov{\beta := S}][x := w][\ov{y := v}] : \Psi[\eta]$ \` by 
      Lemma~\ref{lem:lwgsubst}\\
    \end{tabbing}

  \item[Case:] {\sc call-static}

    \begin{tabbing}
      $w \smsep t[\ov{R}] .m[\ov{S}](\ov{v}) : \Psi[\eta]$ and $[\ov{\alpha} ; \ov{\beta}](x,\ov{y}).e =
      \mjump(t.m)$ and
      $m[\ov{\beta\,S'}](\ov{y \,\Phi})\,\Psi \in \methods(t[\ov{R}])$\\
      $\ROUNDTWO{\Delta ; \Gamma}\vdash w : t[\ov{R}]@t[\ov{R}]$ and $\eta = (\ov{\beta := S})$
      \` by inversion\\
      $\ROUNDTWO{\Delta ; \Gamma}\vdash \ov{v : \Phi[\eta]}$ and $\ov{\beta[\eta] \imp S'[\eta]}$
        \` by inversion\\
        $\ROUNDTWO{\Delta } ; x:t[\ov{R}]@t[\ov{R}] , \ov{y : \Phi[\eta]}
        \vdash e[\ov{\alpha := R}][\eta] : \Psi[\eta]$ \` by
        Lemma~\ref{lem:lwgbody}\\
        $\ROUNDTWO{\Delta ; \Gamma}\vdash e[\ov{\alpha :=
        R}][\ov{\beta := S}][x := w][\ov{y := v}] : \Psi[\eta]$ \` by 
      Lemma~\ref{lem:lwgsubst} \ROUNDTWO{and weakening}\\
     \end{tabbing}

   \item[Case:] {\sc op-dyn}
     \begin{tabbing}
$\circ(\ov{v}) \becomes v''$ with $\forall i. v_i = T@V(t[\ov{S}],\rho,v_i')$
and $\delta(\rho(\circ),\ov{v'}) = v''$ \` this case\\
$\ROUNDTWO{\Delta ; \Gamma}\vdash \circ(\ov{v}) : T@V$ \` by inversion\\
$\ROUNDTWO{\Delta ; \Gamma}\vdash \ov{v} : T@V$ and $\underlying_\ROUNDTWO{\Delta}(T) = B$ and $B \in \dots$ \` by
inversion\\
$\ROUNDTWO{\Delta ; \Gamma}\vdash \ov{v} : t[\ov{S}]@t[\ov{S}]$ and $t[\ov{S}]\assg_\ROUNDTWO{\Delta} T \imp_\ROUNDTWO{\Delta} V$ and $\rho =
\MKTABLE{S}{V}{\ROUNDTWO{\Delta }} $ \` by inversion\\
$\rho(\circ) = \circ_t$ and $t[\ov{S}] \imp_\ROUNDTWO{\Delta} V$ \` by inversion\\
$\ROUNDTWO{\Delta ; \Gamma}\vdash v'' : T@V$ \` by definition of $\delta$
     \end{tabbing}

     Remaining cases are standard.
     
  \end{description}

  \end{proof}

\begin{lemma}[LWG Canonical Forms]
\label{lem:lwgcanon}
  If $e$ is a value and $\emptyset;\emptyset \vdash e : T@V$ then either:

  \begin{itemize}
    \item $\uinterface(V)$ and $e = T@V(S,\rho,w)$ for some $S, \rho,
      w$.
    \item $\neg\uinterface(V)$ and $T=V$ with $\ustruct(T)$ and $e =
      T\{\ov{v}\}$, for some $\ov{v}$ or $\ubase(T)$ and $e = T(c)$
      for some $c$.
   \end{itemize}
   
 \end{lemma}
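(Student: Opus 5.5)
The plan is to argue by induction on the typing derivation of $\emptyset;\emptyset \vdash e : T@V$, but since $e$ is a value only the introduction rules for values can apply. The last rule must be one of \textsc{t-interface}, \textsc{t-literal} or \textsc{t-const}, because every other rule types a non-value syntactic form. In particular, \textsc{t-var} is excluded because $\Gamma$ is empty, and the assertion rules (including \textsc{t-stupid}) type $e.(\Phi)$, which is not a value. So the proof reduces to inverting each of these three rules and checking that the resulting type has the required shape.

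\textbf{Case \textsc{t-interface}.} Here $e = \ghost{\typair{T}{V}}(S,\rho,w)$, and the premises give $\uinterface(V)$ directly. This is the first alternative of the statement.

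\textbf{Case \textsc{t-literal}.} Here $e = \ghost{T}\br{\ov{v}}$ with type $\typair{T}{T}$, so $V = T$. The premise gives $\ustruct(T)$. We must also show $\neg\uinterface(T)$, which amounts to checking that $\ustruct$ and $\uinterface$ are mutually exclusive. This follows from the auxiliary definitions in Figure~\ref{fig:wg-aux-defs-app}: a type variable or interface literal is never a struct, and for named types the underlying declaration is unique because $\distinct(\tdecls(\ov{D}))$ holds in a well-typed program. So we are in the second alternative with $e = T\br{\ov{v}}$.

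\textbf{Case \textsc{t-const}.} Here $e = T(c)$ with type $\typair{T}{T}$, and $\underlying_\emptyset(T) = \ctype(c)$ is a base type. The environment is empty, so $T$ cannot be a type variable. $T$ also cannot be a literal interface or struct, since its underlying type would then be itself and hence not a base type. It follows that $\ubase(T)$ holds, by unfolding $\underlying$ along named-type declarations, and again $\neg\uinterface(T)$ holds by disjointness.

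The only step needing care is this disjointness of $\ubase$, $\ustruct$ and $\uinterface$ for ground types. I would prove it first as a small auxiliary lemma by induction on the unfolding of named types, relying on well-founded, distinct type declarations. Everything else is direct inversion.
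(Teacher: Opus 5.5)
Your proof is correct and follows the paper's approach. The paper's proof is the one-line ``straightforward induction on typing'', which amounts to your inversion on \textsc{t-interface}, \textsc{t-literal} and \textsc{t-const}, and your disjointness argument for $\ustruct$, $\ubase$ and $\uinterface$ via $\distinct(\tdecls(\ov{D}))$ supplies the detail that one-liner leaves implicit. Two minor remarks: you do not need well-founded declarations, since you can induct on the finite derivations of these predicates and of $\underlying_\emptyset(T)$; and in the \textsc{t-const} case you can also read off $\ubase(T)$ from the $\unonconstraint_\emptyset(T)$ part of the premise $\emptyset \vdash T \nice$, because its struct and method-interface alternatives contradict $\underlying_\emptyset(T) = \ctype(c)$.
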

\begin{proof}
Straightforward induction on typing.
\end{proof}

\begin{theorem}[LWG Progress]
If $\vdash e :T@V$ then $e\becomes e'$, $e$ is a value or $e\,\mathsf{panic}$.
\end{theorem}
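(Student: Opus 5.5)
The plan is to prove LWG Progress by induction on the derivation of $\emptyset;\emptyset \vdash e : T@V$, following the shape of the WG progress proof and the FGG argument. For each typing rule whose conclusion is an evaluation context form (method calls, struct literals, select, make, change, static change, assertions, operators), we apply the induction hypothesis to the immediate subexpressions in left-to-right order. If some subexpression reduces, the whole term reduces by \textsc{context}. If it panics, the whole term panics, since evaluation contexts compose. So the substantive work is the case where all relevant subexpressions are values. There we use Lemma~\ref{lem:lwgcanon} (LWG Canonical Forms) to determine their shape and exhibit a redex. Rules \textsc{t-var} is impossible in the empty context. \textsc{t-const}, \textsc{t-interface} with a value body, and \textsc{t-literal} with value fields yield values directly.

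The value cases go as follows.

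\textbf{Select.} For \textsc{t-field}, the receiver has type $T@T$ with $\ustruct(T)$. Canonical forms gives $T\{\ov v\}$, and \textsc{select} applies; the index is in range because $\fields(T)$ fixes the arity via \textsc{t-literal}.

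\textbf{Make.} For \textsc{t-make}, the operand has type $T@T$ with $\neg\uinterface(T)$. It is therefore a $w$, and \textsc{make} fires unconditionally.

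\textbf{Static change.} For \textsc{t-static-change}, the operand is a $w$ of non-interface type. It is either a struct literal or a constant, and \textsc{static-change-s} or \textsc{static-change-c} applies. We must check that the source annotation matches the ghost type of $w$, which follows from its typing.

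\textbf{Change type.} For \textsc{t-change}, $\uinterface(V)$ gives an interface value $\Phi(S,\rho,w)$. \textsc{change-type} needs $\MKTABLE{S}{U}{\emptyset}$ to be defined. From \textsc{t-interface} we get $S \assg T$, and $T \assg U' \imp V'$ by \textsc{t-change}. By Lemma~\ref{lem:lwgquasitrans} and transitivity of $\imp$ (Lemma~\ref{lem:subt_refltrans}) we get $S \imp V'$, so the table exists.

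\textbf{Assertions.} For \textsc{t-assert$_S$}, \textsc{t-assert$_I$} and \textsc{t-stupid}, the operand is an interface value $\Phi(S,\rho,w)$. If the target is $S@S$, \textsc{assert-ok$_S$} applies. If the target is an interface $U@V$ with $S\imp U$, \textsc{assert-ok$_I$} applies. Otherwise the term matches the panic definition, $\vtype$ of the boxed value failing the assertion.

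\textbf{Calls.} For \textsc{t-static-call}, the receiver is a $w$ of type $t[\ov R]$. Since the signature is in $\methods(t[\ov R])$, there is a declaration $t.m$, so $\mjump(t.m)$ is defined and \textsc{call-static} fires. For \textsc{t-dyn-call}, the receiver is $\Phi(U,\rho,w)$. We need $U = t[\ov R]$ and $\rho(m)$ defined, and this is the key obstacle. The plan is to use $\rho = \MKTABLE{U}{V}{\emptyset}$ from \textsc{t-interface} together with $m \in \methods(V)$. If $U$ were anonymous, the table would be empty. I would rule this out by observing that $U \imp V$ with $m\in\methods(V)$ forces $m \in \methods(U)$, while anonymous non-interface types have empty method sets. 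Hence $U$ is named and $\rho(m) = t.m$ with a corresponding declaration.

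\textbf{Operators.} For \textsc{t-op-static}, the arguments are constants of base underlying type $B \in \opdom(\circ)$, and we assume $\delta$ is total on its domain. For \textsc{t-op-dyn}, the type $\alpha$ cannot occur under an empty $\Delta$, so the case is vacuous. For \textsc{t-op-runtime}, canonical forms gives interface values. We then argue, as in the \textsc{op-dyn} preservation case, that all share the same RTTI $S$ and table $\rho$ with $\circ\in\dom(\rho)$, using the definition of $\mathsf{mkTable}$'s operator component and $\underlying(S)=B$.

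Sharing of the same $S$ is the delicate point, and I expect it to need an invariant of the typing (all operands typed at the same $T@U$ with $\neg\uinterface(T)$, so $S=T$ by \textsc{t-interface}). I expect this step, together with the dynamic-call table lookup, to be where the real checking lies.
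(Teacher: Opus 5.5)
Your strategy is the paper's: induction on the typing derivation, with Lemma~\ref{lem:lwgcanon} fixing the shape of value subterms. Your \textsc{t-make}, \textsc{t-change}, \textsc{t-static-change} and \textsc{t-dyn-call} cases are exactly the ones the paper details. You also justify two steps the paper only asserts: that the new method table exists in the change case, and that the RTTI of a dynamic-call receiver must be a named type, because implementing the bound forces $m$ into its method set.

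The step that fails is \textsc{t-op-runtime}, and it is not where you expect. Sharing of $S$ and $\rho$ is immediate: $\underlying(T)=B$ gives $\neg\uinterface(T)$, so \textsc{t-interface} forces every RTTI to be $T$ and every table to be $\MKTABLE{T}{U}{\emptyset}$. Two other things break.
\begin{enumerate}
\item The rule does not require $\uinterface(U)$, so canonical forms may give you raw constants. For example, $+(\mathtt{int}(1),\mathtt{int}(2))$ is typed at $\typair{\mathtt{int}}{\mathtt{int}}$ by \textsc{t-op-runtime}, but \textsc{op-dyn} needs interface values and no other rule applies.
\item Even with $\uinterface(U)$, $\circ\in\dom(\rho)$ does not follow. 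The operator part of $\MKTABLE{T}{U}{\emptyset}$ requires $\{\underlying(V)\mid V\in\types(U)\}\subseteq\opdom(\circ)$. The rule only constrains $\types(T)=\{T\}$, i.e.\ it only gives $B\in\opdom(\circ)$.
\end{enumerate}
Concretely, take \texttt{type MyInt int} and $U=\mathtt{interface}\{\ttilde\mathtt{int}\mid\ttilde\mathtt{string}\}$. Then $v=\typair{\mathtt{MyInt}}{U}(\mathtt{MyInt},\rho,\mathtt{MyInt}(2))$ is well typed. Its $\rho$ has no entry for $*$, since $\mathtt{string}\notin\opdom(*)$. So $*(v,v)$ is well typed but stuck.

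The paper's proof never confronts this; it falls under ``other cases are straightforward''. With the rules exactly as written the theorem is false here, so the repair belongs in the rule, not in your strategy. Let \textsc{t-op-runtime} require $\uinterface(U)$ and use $\types(U)$ in place of $\types(T)$. That is precisely the shape obtained by substituting into \textsc{t-op-dyn}, which is the rule's stated purpose. With that change your argument closes: $T$ must be a named $t[\ov{R}]$, because mkTable is undefined on bare base types and type literals have non-base underlying types. Hence $\rho(\circ)=\circ_B$ and \textsc{op-dyn} fires.

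There is a smaller hole of the same kind. You only treat \textsc{t-interface} with a value body, but the rule admits any $e$. There is no evaluation context under $\Phi(S,\rho,\cdot)$, so a non-value body is stuck. \textsc{t-interface} therefore needs to be restricted to bodies $w$ as well.
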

\begin{proof}
  By induction on typing. We show a few cases:

  \begin{description}
  \item[Case:] {\sc t-make}
    \begin{tabbing}
      $\vdash \MAKETYPE{e}{\typair{U}{V}}{T}{\rho} : U@V$ \` this case\\
      $\vdash e : T@T$ and $\neg\uinterface(T)$ \` by inversion\\
       $e\becomes e'$, $e$ is a value or $e\,\mathsf{panic}$ \` by
       i.h.\\
       Reduction and panic cases are trivial.\\
       $e$ is a value:\\
       $e = w$ for some $w$ \` by Lemma~\ref{lem:lwgcanon}\\
       We conclude by reduction {\sc make} accordingly.
     \end{tabbing}

   \item[Case:] {\sc t-change}
     \begin{tabbing}
       $\vdash \CHANGETYPE{e}{\ghostytypair{U}{V'}}{T} : U@V'$ \` by
       inversion\\
       $\vdash e : T@V$ \` by inversion\\
       $e\becomes e'$, $e$ is a value or $e\,\mathsf{panic}$ \` by
       i.h.\\
       Reduction and panic cases are trivial.\\
       $e$ is a value:\\
       Since $\uinterface(V)$, $e = T@V(S,\rho,w)$ \` by
       Lemma~\ref{lem:lwgcanon}\\
       We conclude by reduction {\sc change-type} accordingly.
     \end{tabbing}

   \item[Case:] {\sc t-static-change}

     \begin{tabbing}
       $\vdash \STATICCHANGE{e}{U}{T} : U@U$ \` this case\\
       $\vdash e : T@T$ \` by inversion\\
       $\neg\uinterface(T,U)$ and $\underlying(T) = \underlying(U)$ \`
       by inversion\\
       $e\becomes e'$, $e$ is a value or $e\,\mathsf{panic}$ \` by
       i.h.\\
       Reduction and panic cases are trivial.\\
       $e$ is a value:\\
       Since $\neg\uinterface(T,U)$, $e = w$ \` by
       Lemma~\ref{lem:lwgcanon}\\
       We conclude by reductions {\sc static-change-s} or {\sc
         static-change-c} accordingly.
     \end{tabbing}

   \item[Case:] {\sc t-dyn-call}
     
      Standard, noting that $\uinterface(V)$ provides us with the
      appropriate information for Lemma~\ref{lem:lwgcanon} in the
      appropriate subcase, where $e$ equals some $R@V(t[\ov{U}],\rho,w)$, allowing
      us to conclude $\rho(m) = t.m$ and $\ov\alpha \stoup \ov\beta](x,  \ov{y}).e =
      \mjump(\rho(m))$.

      Other cases are straightforward.
  \end{description}
 \end{proof}

 \subsection{Compilation}
 \label{app:comp}

 We use the following auxiliary definitions below:
 \begin{mathpar}
   \inferrule{ }{
\lfloor \Delta ; \Gamma \rfloor \triangleq \lfloor \Delta \rfloor ;
\lfloor \Gamma \rfloor_\Delta}

\inferrule{ }{
\lfloor \Delta , \alpha : T \rfloor \triangleq \lfloor \Delta \rfloor
, \alpha : \lfloor T\rfloor}

\inferrule{ }{
\lfloor \Gamma , x:T \rfloor_\Delta \triangleq \lfloor \Gamma \rfloor_\Delta
, x : \lfloor T\rfloor @ \lfloor \bounds_\Delta(T)\rfloor}
\end{mathpar}

\begin{lemma}[Compilation Preserves Well-formedness]
  \label{lem:compwf}
  ~
  \begin{itemize}
\item $\Delta \vdash T \ok$ implies $\lfloor \Delta \rfloor \vdash \lfloor T
\rfloor\ok$
\item $\Delta \vdash T \nice$ implies $\lfloor \Delta \rfloor \vdash \lfloor T
\rfloor\nice$
\end{itemize}
\end{lemma}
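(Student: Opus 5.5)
The plan is a simultaneous induction on the well-formedness derivations $\Delta \vdash T \ok$, $\Delta \vdash F \okelmt$, $\Delta \vdash C \okelmt$, $\Delta \vdash mM \ok$ and $\Delta \vdash \ov{\beta~S} \ok$, together with $\Delta \vdash \Phi \supernice$ for annotated pairs. These judgements are mutually dependent: struct fields use $\supernice$, interfaces use $\okelmt$, and method specifications use both formals and $\supernice$. The second item then follows from the first. By \textsc{t-nc}, $\Delta \vdash T \nice$ consists of $\Delta \vdash T \ok$ and $\unonconstraint_\Delta(T)$, so it suffices to check that $\unonconstraint$ is preserved by $\tycomp{\cdot}{}$. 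This holds because compilation only renames methods $m \mapsto m_D$ and boxes their signatures. It never adds or removes union elements, and it does not change whether a type is a variable, struct, base type or interface.

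The first step is a collection of structural commutation facts, each proved by a routine induction on the type:
\begin{itemize}
\item $\tycomp{\cdot}{}$ commutes with substitution, so $\tycomp{T[\eta]}{} = \tycomp{T}{}[\tycomp{\eta}{}]$.
\item $\underlying_{\tycomp{\Delta}{}}(\tycomp{T}{}) = \tycomp{\underlying_\Delta(T)}{}$, using rule \textsc{d-type}, which compiles the declaration bodies.
\item The predicates $\uinterface$, $\ustruct$, $\ubase$ and $\utyvar$ are invariant under compilation.
\item $\bounds$ commutes with compilation.
\item $\types_{\tycomp{\Delta}{}}(\tycomp{T}{}) = \tycomp{\types_\Delta(T)}{}$, taken pointwise, with $\Universe$ sent to $\Universe$.
\item $\methods_{\tycomp{\Delta}{}}(\tycomp{T}{})$ is the image of $\methods_\Delta(T)$ under \textsc{d-sig}, and for named non-interface types it also contains the added adaptors $m_D$.
\end{itemize}

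With these facts the leaf and structural cases are direct. \textsc{t-b} and \textsc{t-param} are immediate from $\tycomp{\Delta}{}$. \textsc{t-struct} uses the induction hypothesis on the fields, and $\distinct(\ov f)$ is unchanged. In \textsc{t-interface}, distinctness and uniqueness of the method signatures are preserved because $m \mapsto m_D$ is injective. For \textsc{t-approx} we need $\underlying(\tycomp{T}{}) = \tycomp{T}{}$, which follows from the commutation of $\underlying$. For \textsc{t-element}, an interface with an empty method set still has an empty method set after compilation. For \textsc{t@v-bc} we use the commutation of $\bounds$.

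The main obstacle is \textsc{t-named}, together with the bounds premises in \textsc{t-formal} and \textsc{t-specification}. Here we must show $\ov{\alpha[\eta] \imp_\Delta R[\eta]}$ implies $\ov{\alpha[\tycomp{\eta}{}] \imp_{\tycomp{\Delta}{}} \tycomp{R}{}[\tycomp{\eta}{}]}$. I would isolate a lemma stating that $T \imp_\Delta U$ implies $\tycomp{T}{} \imp_{\tycomp{\Delta}{}} \tycomp{U}{}$, proved by cases on \textsc{<:$_\alpha$}, \textsc{<:$_T$} and \textsc{<:$_I$}.

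The \textsc{<:$_I$} case is the delicate one. The type-set condition transfers by the commutation of $\types$ and $\underlying$. The method-set inclusion $\maintype{\methods(T)} \supseteq \maintype{\methods(U)}$ must hold after renaming. Every method $mM$ of the interface $U$ becomes $m_D M'$, where $M'$ is its boxed signature by \textsc{d-sig}. On the implementing side, if $T$ is itself an interface, the same renaming applies, so the inclusion is preserved. If $T$ is a named non-interface type, rule \textsc{d-func} generates an adaptor $m_D$ whose signature is exactly $\tycomp{\tybox{\Phi}{}}{}$, $\tycomp{\tybox{\Psi}{}}{}$. I would check that this coincides, after $\maintype{\cdot}$ and the instantiating substitution, with the interface's compiled signature. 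This uses the fact that $\tybox{\cdot}{}$ commutes with substitution and that $\maintype{\cdot}$ erases only the bound annotations that both sides share.

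If that signature matching fails because the bound annotations differ, I would weaken the lemma. Instead of full signature equality I would prove equality modulo $\maintype{\cdot}$, which is exactly what \textsc{<:$_I$} compares. The remaining cases then close by the induction hypothesis and this lemma.
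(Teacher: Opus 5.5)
Your overall plan is the route the paper has in mind. You induct over the well-formedness judgements, reduce the $\nice$ case to preservation of $\unonconstraint$, and isolate ``compilation preserves $\imp$'' for \textsc{t-named}. The paper's proof is a one-line ``straightforward induction'', and it states implements-preservation separately as Lemma~\ref{lem:compsubs}.

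However, the step you single out as delicate genuinely fails, because your commutation fact for $\methods$ is false for interfaces containing unions. In the compiled program every method declaration yields both the original $m$ and the adaptor $m_D$ (rule \textsc{d-func}). The method set of a union is the \emph{intersection} of its members' method sets. So a compiled union of named types contributes the undecorated $m$ as well as $m_D$, not just the \textsc{d-sig} image. Your claim ``if $T$ is itself an interface, the same renaming applies'' therefore breaks whenever $U$ obtains a method from a union while $T$ obtains the matching method from a method element. Your fallback of comparing modulo $\maintype{\cdot}$ cannot help: that is already what rule $\imp_I$ does, and it only erases annotations, whereas the mismatch here is in method names.

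This is not merely a proof gap; the statement itself fails. Take \texttt{MyInt}, \texttt{MyFloat} and \texttt{MyNum} from Fig.~\ref{fig:WGexample}, together with $\type~\texttt{Box}[a~\texttt{MyNum}]~\struct~\br{}$. Let $\Delta = b : B$ with $B = \interface~\br{\texttt{MyInt} \typeor \texttt{int};~\texttt{MyInt} \typeor \texttt{bool};~\texttt{String}()~\texttt{string}}$, which is a legal bound, for instance in $\type~\texttt{Foo}[b~B]~\struct~\br{g~\texttt{Box}[b]}$.
\begin{itemize}
\item In WG, $\types_\Delta(b) = \{\texttt{MyInt}\}$ and $\methods_\Delta(b) = \{\texttt{String}()~\texttt{string}\}$. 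Hence $b \imp_\Delta \texttt{MyNum}$ and $\Delta \vdash \texttt{Box}[b] \ok$.
\item After compilation, the two union elements of $\lfloor B\rfloor$ contribute nothing, since \texttt{int} and \texttt{bool} have no methods. The method element becomes $\texttt{String}_D$, so $\methods_{\lfloor\Delta\rfloor}(b) = \{\texttt{String}_D()~\typair{\texttt{string}}{\texttt{Any}}\}$.
\item Meanwhile $\methods(\texttt{MyNum})$ now contains both \texttt{String} and $\texttt{String}_D$. So $b \notimp_{\lfloor\Delta\rfloor} \texttt{MyNum}$, and $\lfloor\Delta\rfloor \vdash \texttt{Box}[b] \ok$ fails.
\end{itemize}
An empty type set, such as $\interface~\br{\texttt{int};~\texttt{bool};~\texttt{String}()~\texttt{string}}$ used directly as the argument of \texttt{Box}, fails in the same way.

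Consequently the lemma, and Lemma~\ref{lem:compsubs} on which both your argument and the paper's rely, need either a restriction or a change to how compiled union method sets enter $\imp_I$. The paper's one-line proof glosses over this as well.

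A related point: your \textsc{t-element} and $\unique$ cases assume that intersections of method sets are preserved by compilation. But $\tybox{\cdot}{}$ is not injective: it sends $\typair{\texttt{int}}{\texttt{int}}$ to $\typair{\texttt{int}}{\texttt{Any}}$. If \texttt{Any} is the same type as a source bound \texttt{any}, adaptor signatures of different union members can coincide, so a union's method set can grow after compilation. That also needs an argument.
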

\begin{proof}
Straightforward induction on definition of compilation of types.
 \end{proof}

 \begin{lemma}[Compilation Preserves Implements and Assignability]
   \label{lem:compsubs}
   ~
     \begin{itemize}
       \item If $T \imp_\Delta U$ then $\lfloor T\rfloor \imp_{\lfloor
           \Delta\rfloor} \lfloor U \rfloor$
                  \item If $T \assg_\Delta U$ then $\lfloor T\rfloor \assg_{\lfloor
                      \Delta\rfloor} \lfloor U \rfloor$
                  \item if $\underlying(U)=\underlying(T)$ then
                    $\underlying(\lfloor T\rfloor) = \underlying(\lfloor U\rfloor$
        \end{itemize}
      \end{lemma}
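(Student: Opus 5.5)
The plan is to prove the three clauses together by induction on the structure of types, which is the same structure the compilation $T \mapsto T'$ of Figure~\ref{fig:compil-types} follows. The central auxiliary fact is that type compilation is a homomorphic renaming. Every rule (\textsc{d-type-b}, \textsc{d-tvar}, \textsc{d-type-named}, \textsc{d-type-struct}, \textsc{d-type-iface}, \textsc{d-tilde}, \textsc{d-or}, \textsc{d-pair}) keeps the shape of the type and changes only the method signatures inside interface literals, via \textsc{d-sig}: $m$ becomes $m_D$ and each parameter and result $\Phi$ becomes $\tybox{\Phi'}{}$. I would first establish four commutation properties, each by a routine induction.
\begin{enumerate*}[label=(\roman*)]
\item $\lfloor \cdot \rfloor$ commutes with type substitution, so $\lfloor T[\eta]\rfloor = \lfloor T\rfloor[\lfloor\eta\rfloor]$.
\item $\underlying_{\lfloor\Delta\rfloor}(\lfloor T\rfloor) = \lfloor \underlying_\Delta(T)\rfloor$. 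This uses \textsc{d-type}, which compiles the right-hand sides of declarations in the same way, together with (i) for the named case.
\item The predicates $\uinterface$, $\ustruct$, $\ubase$ and $\utyvar$ are invariant under compilation.
\item $\types_{\lfloor\Delta\rfloor}(\lfloor T\rfloor) = \{\lfloor C\rfloor \mid C \in \types_\Delta(T)\}$, with $\lfloor\cdot\rfloor$ on $\Universe$ taken as the identity. Likewise $\methods_{\lfloor\Delta\rfloor}(\lfloor T\rfloor)$ is the image of $\methods_\Delta(T)$ under the signature compilation $mM \mapsto m_DM'$.
\end{enumerate*}

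Clause three of the lemma then follows immediately from (ii), since compilation is a function: equal underlying types have equal images.

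For clause one I would do case analysis on the last rule of $T \imp_\Delta U$. Rules \textsc{<:$_\alpha$} and \textsc{<:$_T$} are reflexive, so they transfer directly using (iii). For \textsc{<:$_I$}, invariance (iii) gives $\uinterface(\lfloor U\rfloor)$ and $\neg\utyvar(\lfloor U\rfloor)$. The method-set inclusion transfers by (iv), because signature compilation is injective on the erased signatures $\maintype{\cdot}$. Injectivity holds because $m \mapsto m_D$ is injective, and $\tybox{\cdot}{}$ followed by $\maintype{\cdot}$ recovers the compiled first component.

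The type-set condition transfers elementwise by (iv). Where a $\ttilde V$ element is used as the witness, (ii) gives $\underlying(\lfloor C\rfloor) = \lfloor V\rfloor$. Clause two then follows by cases on $\assg$: case $\assg_{\imp}$ uses clause one, and cases $\assg_{tL}$ and $\assg_{Lt}$ use (ii) together with the fact that compiling a literal yields a literal.

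The main obstacle is the method-set part of (iv), specifically for non-interface named types $t[\ov{S'}]$. There $\methods$ is read off the method declarations, and \textsc{d-func} emits both $m$ and $m_D$, so the compiled method set contains the original compiled signatures together with the adaptor signatures. I would handle this by showing that for every $mM$ in the source set, the compiled set contains $m_D$ with signature $\tycomp{\tybox{\Phi}{}}{}$, which matches the compiled interface signature from \textsc{d-sig}. That suffices for the $\supseteq$ direction needed in \textsc{<:$_I$}, since an inclusion into a larger set is preserved.

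The delicate point is that $\tybox{\cdot}{}$ and $\tycomp{\cdot}{}$ must commute, so that \textsc{d-sig} and \textsc{d-func} produce syntactically identical erased signatures. I would check this directly from the two defining clauses of $\tybox{\cdot}{}$, using (iii) to see that the $\uinterface(\uppertype{\Phi})$ test is unaffected by compilation.
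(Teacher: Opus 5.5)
Your route, a structural induction along the type compilation using the commutation facts (i)--(iv), is what the paper has in mind; its entire proof is ``straightforward induction on definition of compilation of types''. However, step (iv) for method sets fails in a case your obstacle paragraph does not cover, and the $\imp_I$ case breaks there.

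You assume the only source of extra undecorated signatures in the compiled program is a named non-interface type, which sits on the left of $\imp$ where a larger set is harmless. But interfaces also obtain methods through unions, $\methods_\Delta(E)=\bigcap\{\methods_\Delta(C)\mid C\in E\}$. Since \textsc{d-func} gives every compiled named type both $m$ and $m_D$, the compiled method set of an interface such as $\interface~\br{A \typeor B}$ contains the undecorated $m$ as well as $m_D$. It is therefore strictly larger than the image of the source method set, and such an interface can be the $U$ on the right.

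Concretely, let $A$ and $B$ be named struct types, each declaring $m()~\typair{\texttt{int}}{\texttt{int}}$. Let $U=\interface~\br{A \typeor B}$ and $T=\interface~\br{A \typeor \texttt{int};~A \typeor \texttt{bool};~m()~\typair{\texttt{int}}{\texttt{int}}}$.
\begin{itemize}
\item In WG, $\types_\emptyset(T)=\{A\}\subseteq\types_\emptyset(U)$ and $\methods_\emptyset(T)=\methods_\emptyset(U)=\{m()~\typair{\texttt{int}}{\texttt{int}}\}$, so $T\imp_\emptyset U$.
\item After compilation, each union element of $T$ contributes $\methods(A)\cap\emptyset=\emptyset$, so $\methods(\lfloor T\rfloor)$ contains only $m_D$.
\item By contrast, $\methods(\lfloor U\rfloor)$ contains both $m$ and $m_D$, hence $\lfloor T\rfloor\notimp\lfloor U\rfloor$.
\end{itemize}
Taking $T$ as the bound of a type parameter $\beta$, and instantiating a method bounded by $U$ at $\beta$, gives a well-typed WG program whose compiled bound check fails. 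So this also affects Theorem~\ref{thm:comptyp}.

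This is a counterexample to clause one itself, and hence to clause two via $\assg_{\imp}$. No refinement of your argument can close it, and the paper's one-line proof passes over the same interaction with method-set intersections. The statement or the definitions must change. One option is to let the method-set inclusion in the compiled $\imp_I$ range only over adaptor names $m_D$, which is all that LWG's dynamic dispatch uses. Another is to have union elements contribute only their adaptor signatures on the compiled side. With such a repair, your (i)--(iv) outline goes through.

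A smaller point: to transfer $\supseteq$ you do not need injectivity. You need that the erasure of a compiled signature is determined by the erasure of the source signature, which your remark about $\tybox{\cdot}{}$ followed by $\maintype{\cdot}$ already establishes.
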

\begin{proof}
Straightforward induction on definition of compilation of types.
\end{proof}

 \begin{theorem}[Compilation Preserves Types]\label{thm:compilation-preserves-typing}
   If $\Delta ; \Gamma \vdash e : T$:
   \begin{itemize}

     \item $\Delta ; \Gamma 
       \vdash e \mapsto e'$ implies $\lfloor\Delta ; \Gamma \rfloor
       \vdash e' : \lfloor T\rfloor @ \lfloor \bounds_\Delta(T)\rfloor$.
     \item $\Delta ; \Gamma \vdash e : T \mapsto_{U@V} e'$ and $T\assg
       U$ and ($\uinterface(V)$ and $\neg\uinterface(U)$ implies $U=T$) or ($\neg\uinterface(T)$ and $\neg\uinterface(U)$ and
       $\underlying(T) = \underlying(U)$) then
       $\lfloor\Delta ; \Gamma\rfloor \vdash e' : U@V$ .

      \item $\Delta \stoup\Gamma \vdash d  \mathbin{:} U
        \Mapsto_{\typair{T}{V}}  d'$ and $T\assg
       U$ and ($\uinterface(V)$ and $\neg\uinterface(U)$ implies $U=T$) or ($\neg\uinterface(T)$ and $\neg\uinterface(U)$ and
       $\underlying(T) = \underlying(U)$) and $\neg\uinterface(V)$ implies
        $\lfloor\Delta ; \Gamma\rfloor \vdash e' :
        \tycomp{T@\mathsf{Any}}{}$
\item $\Delta \stoup\Gamma \vdash d  \mathbin{:} U
        \Mapsto_{\typair{T}{V}}  d'$ and $T\assg
       U$ and ($\uinterface(V)$ and $\neg\uinterface(U)$ implies $U=T$) or ($\neg\uinterface(T)$ and $\neg\uinterface(U)$ and
       $\underlying(T) = \underlying(U)$) and $\uinterface(V)$ implies
        $\lfloor\Delta ; \Gamma\rfloor \vdash e' :
        \tycomp{T@V}{}$
        
      \end{itemize}
    \end{theorem}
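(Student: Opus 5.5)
We prove all four clauses simultaneously, by mutual induction on the compilation derivation: the expression judgement $\mapsto$, the synthetic-cast judgement $\mapsto_{\Phi}$ and the boxing judgement $\Mapsto_{\Phi}$ of Figures~\ref{fig:compexp1}, \ref{fig:synthcast} and~\ref{fig:compmcalls}. Inside each case we invert the WG typing derivation $\Delta;\Gamma\vdash e:T$. The induction is well founded because each synthetic-cast rule and each boxing rule calls $\mapsto$ on the same WG subterm $e$, whose derivation is strictly smaller. Throughout we use three facts freely:
\begin{itemize}
\item Lemma~\ref{lem:compwf} transports the $\ok$ and $\nice$ premises to compiled types.
\item Lemma~\ref{lem:compsubs} transports $\imp$, $\assg$ and equality of underlying types.
\item $\tycomp{\cdot}{}$ commutes with type substitution, because the type-level compilation is purely structural.
\end{itemize}

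\textbf{Clause (1).} The cases \textsc{Var} and \textsc{Const} are immediate. For \textsc{Var}, the definition $\lfloor\Gamma, x:T\rfloor_\Delta = \lfloor\Gamma\rfloor_\Delta, x:\lfloor T\rfloor@\lfloor\bounds_\Delta(T)\rfloor$ gives exactly the required type. For \textsc{Const}, $\bounds_\Delta(T)=T$ for a base type.

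\textbf{Clause (1), structs and conversions.} For \textsc{Struct} and \textsc{Conversion} we invoke clause (2) at the field type $\Phi$ and at $T@\bounds_\Delta(T)$ respectively. Its side conditions come from the WG premises: $S\assg\maintype{\Phi}$ from \textsc{t-literal}, and \textsc{t-conv-a} or \textsc{t-conv-u} for conversions. The condition that $\neg\uinterface(U)$ forces $U=T$ whenever the bound is an interface holds because WG field annotations $T@V$ satisfy $V=T$ unless $T$ is a type variable or an instantiation of one.

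\textbf{Clause (1), selects.} For \textsc{Select-unboxed}, \textsc{t-field} gives the type $\Phi_i$ directly. For \textsc{Select-boxed} we type the inserted assertion. We use \textsc{t-assert}$_S$ when $S_i$ is not an interface, which needs $S_i\imp V_i$ and follows from the well-formedness of the field annotation. We use \textsc{t-assert}$_I$ when $S_i=V_i$ is an interface.

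\textbf{Clause (1), operators.} \textsc{op-static} follows from \textsc{t-op-static}. \textsc{op-dyn} follows from \textsc{t-op-dyn}, because the operands have type $\alpha@\Delta(\alpha)$ by the induction hypothesis.

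\textbf{Clauses (2)--(4).} These are a direct check of the premises of \textsc{t-make}, \textsc{t-change} and \textsc{t-static-change} against the premises of the cast rules.
\begin{itemize}
\item For \textsc{make-bs}, the inner static change needs equal underlying types for $S$ and $T$. This follows since $S\assg T$ with both non-interface; the only possibilities are reflexivity, $\assg_{tL}$ and $\assg_{Lt}$.
\item For \textsc{change}, the hypothesis $\uinterface(U)$ makes the source type $\lfloor U\rfloor@\lfloor U\rfloor$, which matches the \textsc{t-change} premise.
\item Clauses (3) and (4) just dispatch on \textsc{box} and \textsc{nbox}, with target $\texttt{Any}$ in the boxed case.
\end{itemize}

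\textbf{Main obstacle: method calls.} The hard cases are the method calls, above all \textsc{Call}$_I$ and \textsc{Call}$_{I\alpha}$. There the callee is the adaptor $m_D$, whose signature lives in $\methods(\lfloor R\rfloor)$ only after the signature rewrite \textsc{d-sig}. We must show three things:
\begin{itemize}
\item \emph{Method membership.} $m_D$ with signature $\tybox{\Phi'}{}$ and $\tybox{\Psi'}{}$ belongs to $\methods_{\lfloor\Delta\rfloor}(\lfloor R\rfloor)$. This holds for interface types by \textsc{d-sig}, and for type variables by their compiled bound. Here the premise $\uinterface(\lfloor V\rfloor)$ of \textsc{t-dyn-call} holds because $R$ is an interface or a type variable.
\item \emph{Arguments.} The boxed arguments have exactly type $\tybox{\Phi[\eta]}{}$. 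This is clause (3) or (4) combined with commutation of $\tybox{\cdot}{}$ with substitution.
\item \emph{Return value.} The final assertion, or the change in the type-variable case, maps $\tybox{\Psi[\eta]}{}$ to $\lfloor T[\eta]\rfloor@\lfloor\bounds(T[\eta])\rfloor$. This requires a small case split on whether $V[\eta]$ is an interface, using \textsc{t-assert}$_S$ or \textsc{t-assert}$_I$.
\end{itemize}
We first isolate the commutation of $\tybox{\cdot}{}$ and $\tycomp{\cdot}{}$ with $[\eta]$ as a lemma, and then do this case split.

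\textbf{Static calls.} For \textsc{Call}$_S$ the argument is analogous but simpler. The receiver is non-interface, so \textsc{t-static-call} applies with the original compiled signature. The return is unboxed by an assertion exactly when $V[\eta]$ is an interface and $T[\eta]$ is not a type variable. In every case the result type is $\lfloor T\rfloor@\lfloor\bounds_\Delta(T)\rfloor$, as required.
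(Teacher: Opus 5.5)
\textbf{Verdict: genuine gap, shared with the paper's own proof.} Your overall structure matches the paper's. You use simultaneous induction on the compilation derivation and invert the WG typing in each case. Lemmas~\ref{lem:compwf} and~\ref{lem:compsubs} handle the type-level bookkeeping. Clauses (3)--(4) follow from clause (2) via \textsc{box}/\textsc{nbox}. The dynamic-call cases rest on the adaptor signatures produced by \textsc{d-sig}.

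\textbf{The failing step.} You use it to close \textsc{Select-unboxed}, \textsc{Call}$_S$-unboxed and \textsc{Call}$_{I\alpha}$. In each, the compiled term's type is read off an annotation $\typair{S}{V}$ whose first component is a type variable: a field of $\fields_\Delta(T)$, or $\typair{T}{V}[\eta]$ from $\methods_\Delta(R)$. You take the second component to be $\bounds_\Delta(S)$. It is not: it is the bound declared for the generic parameter that $S$ instantiates, carried along by the substitution. LWG has no subsumption. So when a generic type or method is instantiated by a type variable $a$ of the current context whose $\Delta(a)$ differs from the declared bound, \textsc{t-field}, \textsc{t-static-call} and \textsc{t-change} give $\typair{a}{\lfloor V\rfloor}$. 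Clause (1) requires $\typair{a}{\lfloor\Delta(a)\rfloor}$.

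\textbf{Counterexample.} Take \texttt{type Box[T any] struct \{ val T \}}. Inside a method with type formal \texttt{[a Stringer]} and parameter \texttt{x a}, consider the WG expression \texttt{Box[a]\{x\}.val}, of type $a$.
\begin{itemize}
\item It compiles via \textsc{change} and \textsc{Select-unboxed} to $\texttt{Box}[a]\{x.(a \rightarrow \typair{a}{\texttt{any}})\}.0$.
\item The only LWG type of that term is $\typair{a}{\texttt{any}}$, not $\typair{a}{\lfloor\texttt{Stringer}\rfloor}$.
\item Appending \texttt{.String()} yields a call of $\texttt{String}_D$ that \textsc{t-dyn-call} rejects.
\item At run time the call is stuck, because the \textsc{change} replaced the method table with the empty table of \texttt{any}.
\end{itemize}

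\textbf{Relation to the paper's proof.} The paper makes the same identification:
\begin{itemize}
\item in \textsc{Select-unboxed} it stops at $\lfloor S_i\rfloor@\lfloor V_i\rfloor$;
\item in \textsc{Call}$_S$-unboxed it asserts $\bounds(T)=V$ ``by construction'';
\item in \textsc{Call}$_{I\alpha}$ it asserts $\lfloor T[\eta]@V[\eta]\rfloor = \lfloor T[\eta]@\bounds(T[\eta])\rfloor$.
\end{itemize}
So the defect lies in the statement and the compilation scheme, not in your decomposition. Still, neither argument is a proof as written. Because the compiled example is genuinely ill-typed, weakening clause (1) will not help. Instead, those three rules must insert a change to $\typair{S}{\lfloor\bounds_\Delta(S)\rfloor}$, or such instantiations must be excluded.

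\textbf{Smaller points.}
\begin{itemize}
\item Your case list omits \textsc{Assert}. It is routine via LWG \textsc{t-assert}$_I$/\textsc{t-assert}$_S$. However, a WG derivation ending in \textsc{t-stupid} compiles to an untypable term, so the statement must be read for source terms only.
\item In \textsc{change}, the induction hypothesis gives $\lfloor U\rfloor@\lfloor\bounds_\Delta(U)\rfloor$, not $\lfloor U\rfloor@\lfloor U\rfloor$. This is harmless, since \textsc{t-change} only needs an interface bound.
\item In \textsc{Call}$_I$, the choice between \textsc{t-assert}$_S$ and \textsc{t-assert}$_I$ depends on whether $T[\eta]$ is an interface, not $V[\eta]$.
\end{itemize}
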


    \begin{proof}
      By simultaneous induction on the definition of compilation. Statements (3)
      and (4) above follow immediately by i.h.(2). We often
      make implicit use of Lemmas~\ref{lem:compwf} and~\ref{lem:compsubs} above
      for conciseness.

      \begin{description}
      \item[Case:] {\sc var} and {\sc const}

        Trivial.

      \item[Case:] {\sc select-boxed}

        \begin{tabbing}
$\vdash e.f \mapsto e'.i.(\lfloor S_i\rfloor@\lfloor V_i\rfloor)$ and $\neg\utyvar(S_i)$ \`
this case\\
$\vdash e.f : S_i$ \` by inversion on typing\\
$\vdash e : T$ and $\ustruct(T)$ \` by inversion on typing\\
$\vdash e' : \lfloor T@T \rfloor$ \` by i.h., since $\ustruct(T)$\\
$\vdash e'.i : \lfloor S_i\rfloor@\lfloor V_i\rfloor$ \` by {\sc
  t-field}\\
if $\uinterface(S_i)$ then $\vdash e'.i. (\lfloor S_i\rfloor@\lfloor
V_i\rfloor) : \lfloor S_i\rfloor@\lfloor V_i\rfloor$ \` by {\sc
  t-assert-i}\\
if $\neg\uinterface(S_i)$ then $V_i = S_i$ and $\vdash e'.i. (\lfloor S_i\rfloor@\lfloor
V_i\rfloor) : \lfloor S_i\rfloor@\lfloor S_i\rfloor$ \` by {\sc t-assert-s}\\
      \end{tabbing}
  
\item[Case:] {\sc select-unboxed}

  \begin{tabbing}
    $\vdash e.f \mapsto e'.i$ and $\ov{f\,S@V} = \fields(T)$ and $\uinterface(V_i) \vee \utyvar(S_i)$ \\
    $\vdash e.f : S_i$ \` this case\\
    $\vdash e : T$ \` by inversion\\
    $\vdash e' : \lfloor T@T \rfloor$ \` by i.h since $\ustruct(T)$\\
    $\vdash e'.i : \lfloor S_i\rfloor@\lfloor V_i\rfloor$ \` by {\sc
  t-field}\\
    
   \end{tabbing}
  
 \item[Case:] {\sc assert}

   \begin{tabbing}
$\vdash e.(T) \mapsto e'.(\lfloor T\rfloor @ \lfloor
\bounds(T)\rfloor)$ \` this case\\
{\bf Subcase:} $\vdash e.(T) : T$ by {\sc t-assert-i}\\
$\vdash e: S$ and $\uinterface(S)$ and $\uinterface(T)$ \` by
inversion\\
$\vdash e' : \lfloor S\rfloor @ \lfloor
\bounds(S)\rfloor$ \` by i.h.\\
$\vdash e'.(\lfloor T\rfloor @ \lfloor
\bounds(T)\rfloor) : \lfloor T\rfloor @ \lfloor
\bounds(T)\rfloor$ \` by {\sc t-assert-i}\\

   {\bf Subcase:} $\vdash e.(T) : T$ by {\sc t-assert-s}\\

$\vdash e: S$ and $\uinterface(S)$ and $\neg\uinterface(T)$ and $T\imp
\bounds(S)$ \` by
inversion\\
$\vdash e' : \lfloor S\rfloor @ \lfloor
\bounds(S)\rfloor$ \` by i.h.\\
 $\neg\uinterface(T) \Rightarrow \bounds(T)=T$ \\
$\vdash e'.(\lfloor T\rfloor @ \lfloor
T\rfloor) : \lfloor T\rfloor @ \lfloor
\bounds(T)\rfloor$ \` by {\sc t-assert-s}\\
   \end{tabbing}
       
 \item[Case:] {\sc conversion}

   \begin{tabbing}
$\vdash T(e) \mapsto e'$ with $\vdash e:S$ and $e:S \mapsto
T@\bounds(T) e'$ \` this case\\
{\bf Subcase:} typing by {\sc t-conv-a}\\
$\vdash e : S$ and $S\assg T$ \` by inversion\\
$\vdash e' : \lfloor T@\bounds(T)\rfloor$ \` by i.h.(2)\\
{\bf Subcase:} typing by {\sc t-conv-u}\\
$\vdash e : S$ and $\neg\utyvar(S,T)$ $\underlying(S)=\underlying(T)$ \` by inversion\\
$\vdash e' : \lfloor T@\bounds(T)\rfloor$ \` by i.h.(2)\\
\end{tabbing}

\item[Case:] {\sc struct}

  \begin{tabbing}
    $\vdash T\{\ov{e}\} \mapsto \lfloor T \rfloor \{\ov{e'}\}$ with
    $\vdash \ov{e:U}$ $(\ov{f\,\Phi}) = \fields(T)$ and $\vdash e : U
    \mapsto_\Phi e'$ \` this case\\
    $\vdash \ov{e' : \lfloor \Phi\rfloor}$ \` by i.h.(2)\\
      $\vdash \lfloor T\rfloor\{\ov{e'}\} : \lfloor T\rfloor@\lfloor
      T\rfloor$ \` by {\sc t-struct}\\
    \end{tabbing}

  \item[Case:] {\sc op-dyn}

    \begin{tabbing}
$\vdash \circ(\ov{e}) \mapsto \circ(\ov{e'})$ with $\vdash \ov{e} :
\alpha$ and $e \mapsto e'$\` this case\\
$\vdash \circ(\ov{e}) : \alpha$ and $\vdash \ov{e} : \alpha$
        and
        $\{ \underlying(V) \mid V \in \types(\Delta(\alpha)) \}$ \` by inversion\\
$\vdash \ov{e'} : \lfloor \alpha\rfloor @ \lfloor \bounds(\alpha) \rfloor$
\` by i.h.\\
$\vdash \circ(\ov{e'}) : \lfloor \alpha\rfloor @ \lfloor
\bounds(\alpha) \rfloor$ \` by {\sc t-op-dyn}
    \end{tabbing}

  \item[Case:] {\sc op-static}

    \begin{tabbing}
$\vdash \circ(\ov{e}) \mapsto \circ\smsep \lfloor T \rfloor (\ov{e'})$
with $\vdash \ov{e} : T$ and $\neg\utyvar(T)$ and $e \mapsto e'$ \`
this case\\
$\vdash \circ(\ov{e}) : T$ and $\underlying(T) = B$ and
$B\in\dom(\circ)$ \` by inversion on typing\\
$\vdash \ov{e'} : \lfloor T \rfloor @ \lfloor T\rfloor$ \` by i.h.\\
$\vdash \circ\smsep \lfloor T \rfloor (\ov{e'}) : \lfloor T \rfloor @
\lfloor T\rfloor$ \` by {\sc t-op-static}\\
    \end{tabbing}

  \item[Case:] {\sc change}

    \begin{tabbing}
$\vdash e : U \mapsto_{T@V}
\CHANGETYPE{e'}{\tycomp{\ghostytypair{T}{V}}{\Delta}}{\tycomp{U}{\Delta}}$
and $\uinterface(U,V)$ with $U \assg T$ \` this case\\
$\vdash e' : \lfloor U\rfloor @ \lfloor \bounds(U)\rfloor$ \` by
i.h.(1)\\
$\lfloor U \rfloor \assg \lfloor T \rfloor \imp \lfloor V \rfloor$ \`
by Lemma~\ref{lem:compsubs}\\
$\vdash
\CHANGETYPE{e'}{\tycomp{\ghostytypair{T}{V}}{\Delta}}{\tycomp{U}{\Delta}}
: T@V$ \` by {\sc t-change}

     \end{tabbing}

   \item[Case:] {\sc static-change}

     \begin{tabbing}
$\vdash e  \mathbin{:} U \mapsto_{\typair{T}{T}}
\STATICCHANGE{e'}{\tycomp{T}{\Delta}}{\tycomp{U}{\Delta}}$ with $\neg\uinterface(U)$
  and
      $\neg\uinterface(T)$ and
      $\vdash e \mapsto e'$ \` this case\\
      $\vdash e' : U$ \` by inversion on typing\\
      $\vdash e' : \lfloor U \rfloor @ \lfloor U \rfloor$ \` by i.h.\\
      $\underlying(\lfloor U \rfloor) = \underlying(\lfloor T\rfloor)$
      \` by Lemma~\ref{lem:compsubs}\\
      $\vdash
      \STATICCHANGE{e'}{\tycomp{T}{\Delta}}{\tycomp{U}{\Delta}} :
      \lfloor T \rfloor @ \lfloor T \rfloor$ \` by {\sc t-static-change}
     \end{tabbing}

   \item[Case:] {\sc make-iface}

     \begin{tabbing}
$e \mathbin{:} S \mapsto_{\typair{T}{V}}
\MAKETYPE{e'}{{\tycomp{\typair{T}{V}}{\Delta}}}{\tycomp{S}{\Delta}}{\rho}$
with $\uinterface(V)$
      and
      $\neg\uinterface(S)$
      and
      $\uinterface(T)$\\
      and
      $\vdash e \mapsto e'$ and $\rho
      =\MKTABLE{\tycomp{S}{}}{\tycomp{V}{}}{}$ \` this case\\
      $\vdash e' : \lfloor S\rfloor @ \lfloor S \rfloor$ \` by i.h.\\
      $\lfloor S \rfloor \assg \lfloor T\rfloor$ \` assumption and
      Lemma~\ref{lem:compsubs}\\
      $\lfloor T\rfloor \imp V$ \` from $T@V$ and
      Lemma~\ref{lem:compsubs}\\
      $\vdash
      \MAKETYPE{e'}{{\tycomp{\typair{T}{V}}{\Delta}}}{\tycomp{S}{\Delta}}{\rho}
      : T@V$ \` by {\sc t-make}
    \end{tabbing}

  \item[Case:] {\sc make-bs}

    \begin{tabbing}
$\vdash e \mathbin{:}S \mapsto_{\typair{T}{V}}
      \MAKETYPE{
        \STATICCHANGE{e'}{\tycomp{T}{}}{\tycomp{S}{}}
      }{{\tycomp{\typair{T}{V}}{\Delta}}}{\tycomp{T}{\Delta}}{\rho}$
      with
      $\uinterface(V)$
      and
      $\neg\uinterface(S)$\\
      and
      $\neg\uinterface(T)$
      and
      $\vdash e \mapsto e' $ and$ \rho =\MKTABLE{\tycomp{T}{}}{\tycomp{V}{}}{\Delta}$
      \\
      \` this case\\

      $\vdash e' : \tycomp{S}{}@ \tycomp{S}{}$ \` by i.h.\\
      $\underlying(\tycomp{S}{}) = \underlying(\tycomp{T}{})$ \` since
      $S\assg T$ and Lemma~\ref{lem:compsubs}\\
      $\vdash \STATICCHANGE{e'}{\tycomp{T}{}}{\tycomp{S}{}} :
      \tycomp{T}{} @\tycomp{T}{}$ \` by {\sc t-static-change}\\
      $\tycomp{T}{} \assg \tycomp{T}{}$ \` by  reflexivity\\
      $\tycomp{T}{} \imp \tycomp{V}{}$ \` by inversion on $\rho$\\
      $\vdash \MAKETYPE{
        \STATICCHANGE{e'}{\tycomp{T}{}}{\tycomp{S}{}}
      }{{\tycomp{\typair{T}{V}}{\Delta}}}{\tycomp{T}{\Delta}}{\rho} :
      \tycomp{T@V}{}$ \` by {\sc t-make}
    \end{tabbing}

  \item[Case:] {\sc call-s-boxed}

    \begin{tabbing}
$\vdash e . m[\ov{S}](\ov{d})
      \mapsto
      e'\smsep \tycomp{R}{\Delta} .  m[\ov{\tycomp{S}{\Delta}}]( \ov{d'}) .
      {\color{black}
        (\typair{\tycomp{T[\eta]}{\Delta}}{\tycomp{V[\eta]}{\Delta}})
      }$\\
      with
          $\vdash e \mathbin{:}  R$
      and
      $\neg\uinterface(R)$
      and
      $m[\ov{\beta~S''}](\ov{y\:\Phi})\: {\color{black} \typair{T}{V}
      } \mathbin{\in} \methods(R)$\\
      
      and
      $\neg\utyvar_\Delta(T[\eta])$
      and
      $\uinterface(V[\eta])$
      and
      $\eta = (\ov{\beta \by S})$
      and
      $\vdash e \mapsto e'$\\
      and
      $ \vdash \ov{d : U}$
      and
      $ \vdash \ov{ d  \mathbin{:} U  \mapsto_{\Phi[\eta]}  d'}$ \`
      this case\\
      $\vdash  e . m[\ov{S}](\ov{d}) : \Psi\downharpoonright_1[\eta] =
      T[\eta]$ \` typing\\
      $\vdash e' : \tycomp{R}{}@\tycomp{R}{}$ \` by i.h.\\
      $\ov{U[\eta] \imp \Phi\downharpoonright_1[\eta]}$ \` by typing\\
        $\ov{\beta[\eta] \imp S'[\eta]}$ \` by typing\\
          $\vdash d' : \tycomp{\Phi[\eta]}{}$ \` by i.h.(2)\\
      $m[\ov{\beta~\tycomp{S''}{}}](\ov{y\:\tycomp{\Phi}{}})\: {\color{black} \tycomp{\typair{T}{V}}{}
      } \mathbin{\in} \methods(\tycomp{R}{})$ \` since methods are
      preserved by compilation\\
      $\vdash e'\smsep \tycomp{R}{\Delta} .
      m[\ov{\tycomp{S}{\Delta}}]( \ov{d'}) : \tycomp{T@V}{}$ \` by
      {\sc t-static-call}\\
      if $\uinterface(\tycomp{T}{})$ then  $\vdash e'\smsep \tycomp{R}{\Delta} .  m[\ov{\tycomp{S}{\Delta}}]( \ov{d'}) .
      {\color{black}
        (\typair{\tycomp{T[\eta]}{\Delta}}{\tycomp{V[\eta]}{\Delta}})}
        : \tycomp{T@T}{}$ \` by {\sc t-assert-i}\\
        if  $\neg\uinterface(\tycomp{T}{})$ then $V \imp V'$ and so
        we conclude by {\sc t-assert-s}\\
\end{tabbing}
     
      \item[Case:] {\sc call-s-unboxed}
        \begin{tabbing}
$\vdash e . m[\ov{S}](\ov{d})
        \mapsto
        e'\smsep \tycomp{R}{\Delta} .  m[\ov{\tycomp{S}{\Delta}}](
        \ov{d'})$ with \\
          $\vdash e \mathbin{:}  R$
        and
        $\neg\uinterface(R)$
        and
        $m[\ov{\beta~S''}](\ov{y\:\Phi})\: {\color{black} \typair{T}{V} } \mathbin{\in} \methods(R)$
        and
        $\eta = (\ov{\beta \by S})$
        \\
        $( \neg \uinterface(V[\eta]) \lor \utyvar(T[\eta]))$
        and
        $\vdash e \mapsto e'$
        and
        $\vdash \ov{d : U}$
        and
        $\vdash \ov{ d  \mathbin{:} U  \mapsto_{\Phi[\eta]}  d'}$ \`
        this case\\
        $\vdash e' : \tycomp{R}{}@\tycomp{R}{}$ \` by i.h.\\
        $\vdash \ov{d' : \tycomp{\Phi[\eta]}{}}$ \` by i.h.(2)\\
          $m[\ov{\beta~\tycomp{S''}{}}](\ov{y\:\tycomp{\Phi}{}})\: {\color{black} \tycomp{\typair{T}{V}}{}
      } \mathbin{\in} \methods(\tycomp{R}{})$ \` since methods are
      preserved by compilation\\
      $\vdash e'\smsep \tycomp{R}{\Delta} .
      m[\ov{\tycomp{S}{\Delta}}]( \ov{d'}) : \tycomp{T@V}{}[\eta]$ \` by
      {\sc t-static-call}\\
      $\vdash e . m[\ov{S}](\ov{d}) : V[\eta]$ \` by WG typing\\
      if $\utyvar(T[\eta])$ then $\bounds(T) = V$ \` by construction\\
      if $\neg\uinterface(V[\eta])$ then $\neg\uinterface(T[\eta])$
      and so $T=V$\\
    \end{tabbing}

  \item[Case:] {\sc call-i}

    \begin{tabbing}
$\vdash e . m[\ov{S}](\ov{d})
      \mapsto
      e' .  m_D[\ov{{\tycomp{S}{\Delta}}}](\ov{d'}) .
      {\color{black}
        (\typair{\tycomp{T[\eta]}{\Delta}}{\tycomp{T[\eta]}{\Delta}})
      }$ with\\
      $\vdash e \mathbin{:} R$
      and
      $\uinterface(R)$
      and
      $m[\ov{\beta~S''}](\ov{y\:\Phi})\: {\color{black} \typair{T}{V}
      } \mathbin{\in} \methods(R)$\\
      and
      $\neg\utyvar(T[\eta])$
      and
      $\eta = (\ov{\beta \by S})$
      and
      $\vdash e \mapsto e'$
      \\
      $\vdash \ov{d : U}$
      and
      $ \vdash \ov{ d  \mathbin{:} U  \Mapsto_{\Phi[\eta]}  d'}$ \`
      this case\\
      $\vdash e . m[\ov{S}](\ov{d}) : T$ \` by typing\\
      $\vdash e' : \tycomp{R@\bounds(R)}{}$ \` by i.h.\\
      $\vdash \ov{d' : \tycomp{\langle\Phi\rangle}{}}$ by i.h.(3) and (4) \\
        
        $m_D[\ov{\beta~\tycomp{S''}{}}](\ov{y\:\tycomp{\langle\Phi\rangle}{}})\: {\color{black} \tycomp{\langle\typair{T}{V}\rangle}{}
        } \mathbin{\in} \methods(\tycomp{\bounds(R)}{})$ \` by
        construction\\
        $\vdash e' .  m_D[\ov{{\tycomp{S}{\Delta}}}](\ov{d'}) :
        \tycomp{\langle\typair{T}{V}\rangle}{}$ \` by {\sc t-call-i}\\
        if $\uinterface(T)$ then $\vdash e' .  m_D[\ov{{\tycomp{S}{\Delta}}}](\ov{d'}) .
      {\color{black}
        (\typair{\tycomp{T[\eta]}{\Delta}}{\tycomp{T[\eta]}{\Delta}})
      } : T@T$ \` by {\sc t-assert-i}\\
        if $\neg\uinterface(T)$ then $\vdash e' .  m_D[\ov{{\tycomp{S}{\Delta}}}](\ov{d'}) .
      {\color{black}
        (\typair{\tycomp{T[\eta]}{\Delta}}{\tycomp{T[\eta]}{\Delta}})
      } : T@T$ \` by {\sc t-assert-s}\\
\end{tabbing}
    \item[Case:] {\sc call-i-}$\alpha$

      Follows a similar scheme as the previous case but where we
      conclude by {\sc t-change}, noting that
      $\tycomp{T[\eta]@V[\eta]}{} = \tycomp{T[\eta]@\bounds(T[\eta])}{} $.

     \end{description}
     \end{proof}

\subsection{Behavioural equivalence}\label{app:beh-eq}
\newcommand{\synthcast}[1]{#1^{\circ}}

\begin{definition}[Lowering Simulation]\label{def:lowering-sim}
  Given a WG program $\ov{D} \prog e$ and a LWG program $\ov{D^\dagger} \prog d$.
  A relation $\mathcal{R}$ is a \emph{lowering simulation} when for all $(e,d) \in \mathcal{R}$
  \begin{itemize}
  \item if $e$ is a value then $\eqval{e}{d}$,
  \item if $e \not\becomes  $ then $d \not\becomes$, and
  \item if $e \becomes e'$ then there is $d'$ such that $d \becomes^\ast d'$ and $(e',d') \in \mathcal{R}$.
  \end{itemize}
\end{definition}

\begin{Figure}
  \begin{ruled}
    Equivalence for values  \hfill \fbox{$\eqval{e}{d}$}
    \begin{mathpar}
      \inferrule
      {\eqval{v}{v'}
        \and
        S \assg_\emptyset T \imp_\emptyset V
      }
      {\eqval{v}{\ghost{\typair{T}{V}}(S,\rho, v')}}

      \inferrule
      {
        \ov{\eqval{v}{v'}}
        \and
        T \assg_\emptyset U 
      }
      {\eqval{T\br{\ov{v}}}{\ghost{\tycomp{U}{}}\br{\ov{v}}}}

      \inferrule            
      {
        T \assg_\emptyset U
      }
      {\eqval{T(c)}{U(c)}}
    \end{mathpar}
  \end{ruled}
  \caption{Correspondence between WG and LWG values.}   \label{fig:eqval-relation}
\end{Figure}

\begin{Figure}
  \begin{ruled}
    Equivalence modulo synthetic casts  \hfill \fbox{$\eqcast{d_1}{d_2}$ with $\Delta \stoup \Gamma \vdash  d_i : \Phi_i$}
    \begin{mathpar}
      \inferrule[refl]
      {\,}
      {\eqcast{d}{d}}

      \inferrule[sym]
      {\eqcast{d'}{d}}
      {\eqcast{d}{d'}}

      \inferrule[base]
      {
        T \assg_\Delta U
      }
      {\eqcast{T(c)}{U(c)}}

      \inferrule[stru]
      {
        T \assg_\Delta U
        \and
        \ov{\eqcast{d}{d'}}
      }
      {\eqcast{\ghost{T}\br{\ov{d}}}{\ghost{{U}}\br{\ov{d'}}}}

      \inferrule[ifaceval]
      { \, }
      {\eqcast{w}{\ghost{\Phi}(S,\rho, w)}}

      \inferrule[ifacevalchange]
      {\eqcast{d}{w}
      }
      {\eqcast{\CHANGETYPE{d}{\ghostytypair{T}{V}}{U}}{\ghost{\typair{T}{V}}(S,\rho, w)}}

      \inferrule[ifacevalassert]
      {\eqcast{d}{w}
      }
      {\eqcast{d. (\typair{T}{V})}{\ghost{\typair{T}{V}}(S,\rho, w)}}      
      
      \inferrule[ifacevalmake]
      {\eqcast{d}{w}}
      {\eqcast{\MAKETYPE{d}{\Phi}{S}{\rho}}{\ghost{\typair{T}{S}}(S,\rho, w)}}

        \inferrule[dassert]
        {%
          \eqcast{d}{d'}
        }
        {\eqcast{d}{d'.\synthcast{(\typair{V}{T})}}}
      
      \inferrule[dmakeS]
      {\ %
        \eqcast{d}{d'}
      }
      {\eqcast{d}{\MAKETYPE{d'}{\typair{T}{V}}{S}{\rho}}}

      \inferrule[dchange]
      {%
        \eqcast{d}{d'}
      }
      {\eqcast{d}{\CHANGETYPE{d'}{\ghostytypair{T}{S}}{T}}}
      
      \inferrule[dstaticchange]
      {%
        \eqcast{d}{d'}
      }
      {\eqcast{d}{\STATICCHANGE{d'}{T}{U}}}

      \inferrule[mcallbridge]
      {\eqcast{v}{v'}
        \and
        {\Delta \stoup \Gamma \vdash \ov{\ceqcast{f}{f'}}}
      }
      {\eqcast{v \smsep T . m[\ov{S}](\ov{f})}{v' . m_D[\ov{S}](\ov{f'})}}      
      
      \inferrule[dmcall]
      {\eqcast{d}{d'}
        \and
        {\Delta \stoup \Gamma \vdash \ov{\ceqcast{f}{f'}}}
      }
      {\eqcast{d.m[\ov{S}](\ov{f})}{d'.m[\ov{S}](\ov{f'})}}

      \inferrule[smcall]
      {\eqcast{d}{d'}
        \and
        {\Delta \stoup \Gamma \vdash \ov{\ceqcast{f}{f'}}}
      }
      {\eqcast{d \smsep T . m[\ov{S}](\ov{f})}{d' \smsep T . m[\ov{S}](\ov{f'})}}

      \inferrule[field]
      {\eqcast{d}{d'}}
      {\eqcast{d.i}{d'.i}}

      \inferrule[assert]
      {\eqcast{d}{d'}}
      {\eqcast{d.(\typair{V}{T})}{d'.(\typair{V}{T})}}

      \inferrule[op-d]
      {\Delta \stoup \Gamma \vdash \ov{\ceqcast{d}{d'}}}
      {\eqcast{\circ(\ov{d})}{\circ(\ov{d'})}}

      \inferrule[op-s]
      {\Delta \stoup \Gamma \vdash \ov{\ceqcast{d}{d'}}}
      {\eqcast{\circ\smsep T(\ov{d})}{\circ\smsep T(\ov{d'})}}

      \inferrule[changetype]
      {\eqcast{d}{d'}}
      {\eqcast{\CHANGETYPE{d}{\ghostytypair{T}{S}}{T}}{\CHANGETYPE{d'}{\ghostytypair{T}{S}}{T}}}

      \inferrule[make]
      {\eqcast{d}{d'}}
      {\eqcast{\MAKETYPE{d}{\typair{T}{V}}{T}{\rho}}{\MAKETYPE{d'}{\typair{T}{V}}{T}{\rho}}}

      \inferrule[staticchange]
      {\eqcast{d}{d'}}
      {\eqcast{\STATICCHANGE{d}{T}{U}}{\STATICCHANGE{d'}{T}{U}}}
    \end{mathpar}
  \end{ruled}
  \caption{Equivalence between LWG \emph{well-typed} expressions modulo synthetic casts.
    We write $\ceqcast{d}{d'}$ for $\eqcast[\emptyset \stoup \emptyset]{d}{d'}$ and
    we write ${e.\synthcast{(\typair{V}{T})}}$ to indicate a type-assert added by the compiler.
  }
  \label{fig:eqcast-relation}
\end{Figure}

\begin{lemma}\label{lem:tycomp-preserve-subtyping}
  Given a WG program $\ov{D} \prog e$ and its compilation $\ov{D^\dagger} \prog d$.
  For all types $S$ and $T$ defined in $\ov{D}$, the following holds:
  \begin{itemize}
  \item If $ S \imp_\Delta T \iff \tycomp{S}{} \imp_\Delta \tycomp{T}{}$.
  \item If $ S \assg_\Delta T \iff \tycomp{S}{} \assg_\Delta \tycomp{T}{}$.
  \end{itemize}
\end{lemma}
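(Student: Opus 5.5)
The plan is to show that type compilation $\tycomp{\cdot}{}$ is, in effect, a systematic injective renaming. It maps each method name $m$ appearing in an interface to $m_D$ and replaces each signature by its boxed variant $\tybox{\cdot}{}$. It leaves every other type constructor ($B$, $\alpha$, $t[\ov S]$, struct literals, $\ttilde$, unions) structurally unchanged. Both relations are defined from $\underlying$, $\types$, $\methods$, $\uinterface$, $\utyvar$ and the first projection $\maintype{\cdot}$. It therefore suffices to show that each of these auxiliary functions commutes with compilation, up to the renaming $m \mapsto m_D$ on method sets. Once that holds, each premise of \textsc{<:$_\alpha$}, \textsc{<:$_T$}, \textsc{<:$_I$}, $\assg_{\imp}$, $\assg_{tL}$ and $\assg_{Lt}$ holds for $S,T$ if and only if it holds for $\tycomp{S}{},\tycomp{T}{}$. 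Both directions of the biconditional then follow from a single case analysis on the last rule.

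Concretely, I would proceed in the following order.
\begin{enumerate}
\item Prove by induction on types, using \textsc{d-type}, that $\underlying_{\tycomp{\Delta}{}}(\tycomp{T}{}) = \tycomp{\underlying_\Delta(T)}{}$, and that the predicates $\uinterface$, $\ustruct$, $\ubase$ and $\utyvar$ are invariant under compilation.
\item Show that $\types_{\tycomp{\Delta}{}}(\tycomp{T}{}) = \{\tycomp{C}{} \mid C \in \types_\Delta(T)\}$. This holds because method elements contribute $\Universe$ both before and after compilation, and compilation is injective on types, since distinct names stay distinct and $m\mapsto m_D$ is injective; hence intersections and unions commute with the image.
\item Show that $\maintype{\methods(\tycomp{T}{})}$ is the image of $\maintype{\methods(T)}$ under $mM \mapsto \tycomp{mM}{}$. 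The compiled program $\ov{D^\dagger}$ contains both $m$ and $m_D$ for every source method, by \textsc{d-func}. For a named non-interface receiver, the compiled method set is therefore the original (compiled) signatures together with their $m_D$ adaptors. For an interface, compilation of types (\textsc{d-sig}) yields only the $m_D$ versions.
\end{enumerate}

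The main obstacle is the asymmetry in step 3. The interface side contains only $m_D$ signatures, whereas a compiled concrete type carries both $m$ and $m_D$. I would therefore prove the superset condition of \textsc{<:$_I$} by splitting on whether $S$ is an interface. If $S$ is an interface, both sides are images under the same bijective renaming, so inclusion is preserved and reflected. If $S$ is concrete, $\methods(\tycomp{S}{})$ restricted to the names $m_D$ is exactly the image of $\methods(S)$ under $mM\mapsto m_D\tybox{M}{}$ (the signatures produced by \textsc{d-func} for adaptors). This coincides with \textsc{d-sig} applied to the interface's method, provided the signatures of $S$ and $T$ agree in the source. Hence $\methods(\tycomp{S}{}) \supseteq \methods(\tycomp{T}{})$ iff $\methods(S) \supseteq \methods(T)$. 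The subtle point is that two differently annotated source signatures with equal $\maintype{\cdot}$ might box differently. I expect to resolve this by noting that the comparison is made after applying $\maintype{\cdot}$, which erases the $\typair{}{}$ bounds, so $\maintype{\tybox{\Phi}{}} = \maintype{\Phi}$. Injectivity of compilation then gives the reverse direction.

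Finally, for assignability, $\assg_{tL}$ and $\assg_{Lt}$ transfer directly by step 1, since compilation preserves whether a type is a literal $L$ or a named type $t[\ov S]$. The $\assg_{\imp}$ case reduces to the implements statement. The reverse directions use injectivity to recover the source premises from the compiled ones.
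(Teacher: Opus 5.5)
Your decomposition is the paper's own argument made explicit. The paper's proof is the one-line observation that $\tycomp{\cdot}{}$ consistently renames interface methods $m$ to $m_D$ while every concrete method gets an adaptor $m_D$. However, the load-bearing claim of your step~3 is false as soon as an interface has union elements. That claim is that the compiled method set of an interface consists only of $m_D$ signatures, so that for interface $S$ ``both sides are images under the same bijective renaming''.

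Only explicit method elements go through \textsc{d-sig}. The methods contributed by a union $E$ are computed as $\bigcap\{\methods(C)\mid C\in E\}$ (or as $\methods(C)$ for a single term) \emph{in the compiled program}. There, by \textsc{d-func}, every named member type carries both the compiled $m$ and its adaptor $m_D$. So $\methods(\tycomp{T}{})$ contains an un-renamed $m$ for each method obtained by method-set intersection, but only $m_D$ for each method listed explicitly.

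This is not just a gap in the write-up: the forward direction fails outright. Let $A$, $C$, $D$ be structs where only $A$ declares a method, $\func~(x~A)~m()~\mathtt{int}~\br{\return~\ldots}$. Put $S=\interface~\br{A\typeor C;\ A\typeor D;\ m()~\mathtt{int}}$ and $T=\interface~\br{A}$.
\begin{itemize}
\item In WG, $\types_\emptyset(S)=\{A,C\}\cap\{A,D\}\cap\Universe=\{A\}=\types_\emptyset(T)$.
\item Also $\methods_\emptyset(S)=\emptyset\cup\emptyset\cup\{m()~\mathtt{int}\}=\methods_\emptyset(T)$, so $S\imp_\emptyset T$.
\item After compilation, $\methods(\tycomp{T}{})=\methods(A)$ contains both $m$ and $m_D$.
\item But $\methods(\tycomp{S}{})$ is the single $m_D$ signature, since the two unions still contribute $\emptyset$. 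Hence $\tycomp{S}{}\notimp\tycomp{T}{}$.
\end{itemize}
The same failure arises for a type parameter bounded by $S$ that is passed as a type argument for a parameter bounded by $T$. So it is reachable in well-typed WG code, and the paper's one-line proof has the same blind spot.

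To obtain a provable statement you must restrict it. For example, restrict to non-interface $S$, which is the instance used in the lifting simulation, where $S$ is the RTTI of a boxed value. There $\tycomp{S}{}$ carries both $m$ and $m_D$, and your argument goes through provided you also match the union-derived $m$ entries of $\tycomp{T}{}$ against the compiled non-adaptor methods of $S$. Alternatively, change the translation so that explicitly listed and union-derived methods are represented uniformly in compiled method sets.
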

\begin{proof}
  Straightforward from the definitions of $\imp$, $\assg$, and
  $\tycomp{\_}{}$. Note that $\tycomp{\_}{}$ renames consistently each
  method $m$ in interfaces to $m_D$ (incl.\ anonymous types) and, for
  each method $m$ defined on non-interface types, an additional
  adaptor method $m_D$ is defined.
\end{proof}

\begin{lemma}[Type pairs]\label{lem:type-pairs-properties}
  For any $e$ such that
  $\Delta \stoup \Gamma \vdash e : \typair{S}{T}$,
  \begin{itemize}
  \item if $\uinterface(S) \implies \uinterface(T)$, and
  \item if $\neg\uinterface(T) \implies S = T$.
  \end{itemize}
\end{lemma}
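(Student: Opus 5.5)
The natural approach is induction on the LWG typing derivation of $\Delta \stoup \Gamma \vdash e : \typair{S}{T}$. Both clauses are invariants about the pair $\typair{S}{T}$ produced by each rule: \emph{(i)} whenever the logical type $S$ is an interface, so is the bound $T$; and \emph{(ii)} whenever the bound $T$ is not an interface, the pair is trivial ($S=T$). Clause (i) is the contrapositive of ``$\neg\uinterface(T)$ implies $\neg\uinterface(S)$'', so it follows from (ii) once we know that $S=T$ transfers the predicate. The plan is therefore to prove (ii) by cases on the last typing rule and derive (i) as a corollary.

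For the rules whose conclusion is literally $\typair{U}{U}$ the claim is immediate. These are \textsc{t-literal}, \textsc{t-const}, \textsc{t-static-change}, \textsc{t-op-static} and \textsc{t-assert$_S$}. For the rules whose conclusion is $\typair{U}{V}$ with $\uinterface(V)$ among the premises, the claim is vacuous. These are \textsc{t-interface}, \textsc{t-make}, \textsc{t-change} and \textsc{t-assert$_I$}. \textsc{t-op-dyn} also falls in this group, since bounds of type parameters are interfaces by \textsc{t-formal}. For \textsc{t-op-runtime}, the conclusion type coincides with the type of the operands, so the induction hypothesis applies. For \textsc{t-stupid}, the well-formedness premise $\Delta \vdash \typair{T}{U} \nice$ together with $T \assg_\Delta U$ and $\neg\uinterface(T)$ should force $U=T$ whenever $\neg\uinterface(U)$: assignability between non-interfaces is either reflexive via $\imp_T$ or goes through a named/literal pair, and in the latter case the intended runtime usage only has identical types.

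The remaining cases are \textsc{t-var}, \textsc{t-field}, \textsc{t-static-call} and \textsc{t-dyn-call}. Here the type is read off an environment, the $\fields$ of a struct, or a method signature. We need the invariant for every $\Phi$ occurring in $\Gamma$, in struct field declarations and in signatures, and we need it to survive the substitution $\eta$. This comes from well-formedness. By \textsc{t@v-bc} (in WG) and the compilation of types in Figure~\ref{fig:compil-types}, every annotated pair has the shape $\typair{T}{\bounds_\Delta(T)}$ or its boxed form $\tybox{\Phi}{}$. The first is trivial unless $T$ is a type variable, in which case the bound is an interface. The second always has an interface bound (\texttt{Any}). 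Substitution preserves this shape: replacing a type variable $\alpha$ inside $\typair{\alpha}{V}$ keeps $V$ an interface, and otherwise both components are substituted identically. So the lemma is stated under the standing assumption that $\Gamma$ and the declarations are well formed in this sense.

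I expect the main obstacle to be this last group, together with \textsc{t-stupid}. The LWG well-formedness rule \textsc{t@v-b} as written does not itself enforce the invariant, so it has to be imported from the compilation (Theorem~\ref{thm:comptyp} and the definition of $\lfloor \Gamma \rfloor_\Delta$) or added as a hypothesis on $\Gamma$. Clause (i) then follows by contraposition: if $\neg\uinterface(T)$, then $S=T$ by (ii), hence $\neg\uinterface(S)$.
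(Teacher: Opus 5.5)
Your plan (case analysis on the last typing rule, with clause (i) obtained from clause (ii) by contraposition) expands the paper's proof, which only asserts that the claim is immediate from the typing judgement. You are right that the cases reading a pair off $\Gamma$, $\Delta$ or the declarations need an invariant that \textsc{t@v-b} does not enforce, namely that every pair is $\typair{T}{T}$ or has a non-variable interface bound. For instance, a binding $z:\typair{\texttt{int}}{\texttt{bool}}$ is \textsc{t@v-b}-well-formed and refutes clause (ii) via \textsc{t-var}. Adding that invariant as a hypothesis, as you propose, is legitimate.

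The genuine gap is \textsc{t-stupid}. You claim that $\neg\uinterface(T)$, $\neg\uinterface(U)$, $T\assg_\Delta U$ and $\Delta\vdash\typair{T}{U}\nice$ force $U=T$, but this is false. Rule $\assg_{tL}$ gives $\texttt{Point}\assg_\Delta\struct~\br{x~\texttt{int}; y~\texttt{int}}$ (Example~\ref{ex:assign}). Rule \textsc{t@v-b} accepts $\typair{\texttt{Point}}{\struct~\br{x~\texttt{int}; y~\texttt{int}}}$, since it only requires the second component to be well formed and not a type variable. So for a receiver $z:\typair{\texttt{any}}{\texttt{any}}$ (or a closed interface value of that type), \textsc{t-stupid} derives $z.(\typair{\texttt{Point}}{\struct~\br{\ldots}}) : \typair{\texttt{Point}}{\struct~\br{\ldots}}$, whose components are distinct non-interfaces. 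Rule $\assg_{Lt}$ gives the mirror image.

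Appealing to the ``intended runtime usage'' is not a proof step. Your induction ranges over all derivations, and no hypothesis on $\Gamma$ or the declarations excludes this one. Moreover, in the intended usage the target is $\typair{R}{B}$ with $B$ an interface bound, not a pair of identical types. You must change either the rule or the statement:
\begin{itemize}
\item Add the premise $\uinterface(U)$ to \textsc{t-stupid}. This is harmless for its role of retyping a \textsc{t-assert$_I$} target $\typair{\alpha}{U}$ after $\alpha$ is instantiated by a non-interface: there $U$ is a non-variable interface by \textsc{t@v-b}, and substitution keeps it one.
\item Alternatively, restrict the lemma to compiled terms and their reducts, and prove it as an invariant of compilation and reduction rather than by induction over arbitrary typing derivations.
\end{itemize}
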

\begin{proof}
  Straightforward from the judgement $\Delta \stoup \Gamma \vdash e : \typair{S}{T}$.
\end{proof}

\begin{lemma}\label{lem:lwg-val-form}
  Suppose $\ov{D} \ok$.
  Let $v$ be a value such that $\emptyset \stoup \emptyset \vdash v : \typair{T}{V}$
  \begin{itemize}
  \item $\uinterface(V)$ then $v =    \ghost{\typair{T}{V}}(S, \rho, e)$, $\emptyset \stoup \Gamma \vdash e \mathbin{:}  \typair{S}{S}$,
    $S \assg_\emptyset T  \imp_\emptyset V$
  \item $\neg \uinterface(V)$ then $v = T\{ w \}$ or $v = T(c)$ and $V=T$
  \end{itemize}
\end{lemma}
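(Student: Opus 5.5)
The argument is a case analysis on the shape of the closed value $v$, followed by inversion on its typing derivation. It essentially refines the LWG Canonical Forms lemma (Lemma~\ref{lem:lwgcanon}) so that it also exposes the premises of the typing rule used. By the grammar of Figure~\ref{fig:lwg-reduction}, a value $v$ is either a struct literal $\ghost{T'}\{\ov{v}\}$, a constant $\ghost{T'}(c)$, or an interface value $\ghost{\Phi}(S,\rho,w)$. Since $v$ is syntactically a value, the only typing rules whose conclusions can match are \textsc{t-literal}, \textsc{t-const} and \textsc{t-interface}. \textsc{t-var} is excluded because the context is empty. Every other rule (calls, make, change, static change, asserts including \textsc{t-stupid}, operators, select) types a non-value syntactic form, so it cannot apply.

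Case $v = \ghost{T'}\{\ov{v}\}$: inversion on \textsc{t-literal} gives $\typair{T}{V} = \typair{T'}{T'}$ with $\ustruct(T')$. Hence $V = T$, and $\neg\uinterface(V)$ holds because struct and interface kinds are disjoint: $\underlying$ is a function, so a named type cannot have both a struct and an interface underlying type. This is the second bullet.

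Case $v = \ghost{T'}(c)$: inversion on \textsc{t-const} gives type $\typair{T'}{T'}$ with $\underlying_\emptyset(T') = \ctype(c)$, a base type. So $V = T$ and $\neg\uinterface(V)$, again the second bullet.

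Case $v = \ghost{\typair{T}{V}}(S,\rho,w)$: inversion on \textsc{t-interface} gives directly $\uinterface(V)$, $\emptyset\stoup\Gamma \vdash w : \typair{S}{S}$, $S \assg_\emptyset T \imp_\emptyset V$, and $\rho = \MKTABLE{S}{V}{\emptyset}$. This is the first bullet, with $e = w$.

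To finish, note that the three cases are exhaustive and that their $V$ components are mutually exclusive with respect to $\uinterface$. Therefore, assuming $\uinterface(V)$ forces the interface-value case, and assuming $\neg\uinterface(V)$ forces one of the first two cases. Lemma~\ref{lem:type-pairs-properties} can be cited as a cross-check that $\neg\uinterface(V)$ implies $T = V$.

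The only mildly delicate point is the disjointness of $\ustruct$, $\ubase$ and $\uinterface$ on well-formed types; I would justify it from the inductive definitions in Figure~\ref{fig:wg-aux-defs-app} together with $\ov{D}\ok$. It is needed to rule out the case where a literal is typed at a type that is also an interface. Beyond that, the proof is routine inversion.
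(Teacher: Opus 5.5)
Your proof is correct and follows the paper's route. The paper simply says the result is ``immediate by inversion'' on $\emptyset \stoup \emptyset \vdash v : \typair{T}{V}$, and your case analysis spells that inversion out: only \textsc{t-literal}, \textsc{t-const} and \textsc{t-interface} can type a value, and you add the kind-disjointness fact that places the literal and constant cases under $\neg\uinterface(V)$ (that fact uses distinct type declarations, which come from \textsc{t-prog}).
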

\begin{proof}
  Immediate by inversion from $\emptyset \stoup \emptyset \vdash v : \typair{T}{V}$ (Figure~\ref{fig:lwg:type-expr1}).
\end{proof}

\begin{lemma}\label{lem:compil-judgements}
  Suppose  $\ov{D} \ok$ and $\Delta \stoup \Gamma \vdash e : U$.
  If  $\Delta \stoup \Gamma \vdash e    \mapsto d$
  and
  $\Delta \stoup \Gamma \vdash e  : U \mapsto_\Psi d'$
  then $\eqcast{d}{d'}$.
\end{lemma}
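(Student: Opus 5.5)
The lemma relates the two compilation judgements: the plain compilation $\Delta \stoup \Gamma \vdash e \mapsto d$ and the synthetic-cast compilation $\Delta \stoup \Gamma \vdash e : U \mapsto_\Psi d'$. The plan is to invert the second judgement and rely on the fact that every rule of Figure~\ref{fig:synthcast} has a premise of the form $\Delta \stoup \Gamma \vdash e \mapsto e''$. We then compare $e''$ with $d$ and show that $d'$ is just $e''$ with one or two synthetic casts wrapped around it. Concretely, we case-split on the last rule deriving $\Delta \stoup \Gamma \vdash e : U \mapsto_\Psi d'$.

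\textbf{Step 1: determinism of compilation.} First we show that $\Delta \stoup \Gamma \vdash e \mapsto d$ is a function of $e$, i.e.\ if $e \mapsto d$ and $e \mapsto e''$ then $d = e''$. This is a routine induction on $e$. The rules of Figures~\ref{fig:compexp1} and~\ref{fig:compmcalls} are syntax-directed. Where two rules share a syntactic form (\textsc{Select-boxed}/\textsc{Select-unboxed}, the four call rules, \textsc{op-static}/\textsc{op-dyn}), their side conditions are mutually exclusive and depend only on the WG typing of subterms. That typing is unique, because WG typing is syntax-directed up to the fixed choice of rule for each construct, and the method signature is unique by $\unique$ and the distinctness conditions of \textsc{t-prog}. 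The synthetic-cast subderivations used inside \textsc{Struct}, \textsc{Conversion} and calls are likewise determined by $U$ and $\Phi$. With this in hand, $e'' = d$ in every case below.

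\textbf{Step 2: case analysis.} We build $\eqcast{d}{d'}$ from \textsc{refl} together with the ``d'' rules of Figure~\ref{fig:eqcast-relation}, which add a single synthetic cast on the right.
\begin{itemize}
\item \textsc{make-iface}: $d' = \MAKETYPE{d}{\tycomp{\typair{T}{V}}{}}{\tycomp{S}{}}{\rho}$, so \textsc{refl} followed by \textsc{dmakeS} gives the result.
\item \textsc{make-bs}: first apply \textsc{dstaticchange} to \textsc{refl} to get $\eqcast{d}{\STATICCHANGE{d}{\tycomp{T}{}}{\tycomp{S}{}}}$, then apply \textsc{dmakeS}.
\item \textsc{change}: use \textsc{dchange}. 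Its conclusion expects a change whose source annotation matches the ghost target. Since ghost annotations carry no operational meaning, we read the rule up to the ghost indices. If necessary we state explicitly that the ghost type indices in \textsc{dchange} range over arbitrary types, as they do in \textsc{dmakeS} and \textsc{dstaticchange}.
\item \textsc{static-change}: use \textsc{dstaticchange}.
\end{itemize}
The $\Mapsto$ variants (\textsc{box}, \textsc{nbox}) reduce immediately to these cases.

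\textbf{Main obstacle.} The hard part is the side condition that $\eqcast{}{}$ relates only well-typed LWG expressions at the stated types. We discharge it using Theorem~\ref{thm:compilation-preserves-typing}: clause (1) types $d$ at $\tycomp{U}{} @ \tycomp{\bounds_\Delta(U)}{}$, and clause (2) types $d'$ at $\tycomp{\Psi}{}$. Clause (2) carries extra assignability and underlying-type hypotheses, and these must hold in every instance where the lemma is used. We therefore add them as an implicit precondition, justified because every occurrence of $\mapsto_\Psi$ in the compilation arises from a WG typing premise (e.g.\ $\ov{S \assg \maintype{\Phi}}$ in \textsc{t-literal} and \textsc{t-call}, or the premises of \textsc{t-conv-a}/\textsc{t-conv-u}). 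Once those hypotheses are available, each intermediate expression in the cast chains above is well-typed, by \textsc{t-static-change} and \textsc{t-make} as in the \textsc{make-bs} case of that theorem, and this completes the proof.
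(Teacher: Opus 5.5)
Your proposal is correct and follows the paper's route: a case analysis on the last rule deriving $\Delta \stoup \Gamma \vdash e : U \mapsto_\Psi d'$, relating the (unique) plain compilation $d$ to $d'$ by \textsc{refl} followed by \textsc{dmakeS}, \textsc{dchange} or \textsc{dstaticchange}. The paper's one-line proof leaves three points implicit that you make explicit: determinism of $\mapsto$; a liberal reading of the ghost source index of \textsc{dchange}, which as written only covers changes whose source type equals the target's logical type and so does not match rule \textsc{change} in general; and the well-typedness side condition of $\ceqcast{}{}$, which you discharge via Theorem~\ref{thm:compilation-preserves-typing} under the assignability hypotheses available at every use site.
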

\begin{proof}
  Straightforward case analysis on the judgement of the form
  $\Delta \stoup \Gamma \vdash e : U  \mapsto_\Psi d'$.
\end{proof}

\begin{lemma}\label{lem:compositionality}
  Suppose
  $\ov{D} \ok$ and
  $\Delta \stoup \ov{x: U} \vdash e  \mathbin{:} T $ and
  $\Delta \stoup \emptyset \vdash \ov{v \mathbin{:} U} $.
  Assume $\Delta \stoup \ov{x: U} \vdash e  \mapsto e'$ and
  $\Delta \stoup \emptyset \vdash \ov{v \mapsto d} $.
  Then
  $\Delta \stoup \emptyset \vdash e[\ov{x := v}]  \mapsto e'[\ov{x := d}]$.
\end{lemma}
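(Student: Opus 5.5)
We argue by induction on the derivation of $\Delta \stoup \ov{x: U} \vdash e \mapsto e'$, proving the statement simultaneously for the auxiliary judgements $\Delta\stoup\Gamma \vdash e : U' \mapsto_\Phi e'$ (Figure~\ref{fig:synthcast}) and $\Delta\stoup\Gamma \vdash e : U' \Mapsto_\Phi e'$ (Figure~\ref{fig:compmcalls}). For those two judgements the claim reads: if $e : U' \mapsto_\Phi e'$ under $\ov{x:U}$, then $e[\ov{x:=v}] : U'' \mapsto_\Phi e'[\ov{x:=d}]$ under the empty variable context, for the type $U''$ of $e[\ov{x:=v}]$. The side conditions of every compilation rule refer only to $\Delta$ and to WG typing judgements and types; they never refer to $\Gamma$ except through these typing judgements. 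So the main work is transporting those typing premises along the substitution.

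For this we use Lemma~\ref{lem:wgsubst} (substitution preserves typing). Since the values $\ov{v}$ are given exactly the types $\ov{U}$, every premise $\Delta\stoup\ov{x:U} \vdash e_0 : T_0$ yields $\Delta\stoup\emptyset \vdash e_0[\ov{x:=v}] : T_0$ with the \emph{same} type $T_0$. This is important: premises such as $\neg\uinterface(R)$, $\utyvar_\Delta(T[\eta])$, the shape of $\fields_\Delta(T)$, and the chosen subscript $\Phi$ depend only on these types. Hence the same rule instance applies after substitution. This determinism is what makes the argument go through, and there is no need to appeal to subtyping.

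The base cases are as follows.
\begin{itemize}
\item \textsc{Var} with $x = x_i$: the left-hand side becomes $v_i$ and the right-hand side $d_i$, and the hypothesis gives $v_i \mapsto d_i$ directly.
\item \textsc{Var} with $x \notin \ov{x}$: this cannot occur, because the variable context is exactly $\ov{x:U}$.
\item \textsc{Const}: the expression is closed, so substitution leaves both sides unchanged.
\end{itemize}

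All remaining rules are structural. Each output is built from the compiled subterms by wrapping them in syntax that contains no term variables: $\cdot.i$, assertions, $\smsep T.m$, make, change and static-change, and struct literals. Substitution therefore commutes with the output constructor, and we apply the induction hypothesis to each premise. This covers \textsc{Select-boxed}/\textsc{unboxed}, \textsc{op-static}/\textsc{op-dyn}, \textsc{Struct}, \textsc{Conversion}, \textsc{Assert}, and the four call rules. The method tables $\rho$ in \textsc{make-iface}/\textsc{make-bs} are computed from types only, so they are unchanged.

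The main obstacle is the rules whose side condition concerns the WG type of a subterm that is itself a variable. Examples are \textsc{make-iface} versus \textsc{change}, chosen by $\uinterface(S)$, and \textsc{Call}$_S$ versus \textsc{Call}$_I$, chosen by $\uinterface(R)$. We must check that replacing $x_i$ by $v_i$ does not change which rule fires. WG values have non-interface types by Lemma~\ref{lem:wgcanon}, so a variable of interface type cannot receive a value of exactly that type. This is resolved by the exact-type reading of the hypothesis $\Delta\stoup\emptyset \vdash v_i : U_i$ together with Lemma~\ref{lem:wgsubst}: the typing premise keeps type $U_i$, so the same rule is selected. If the intended reading instead only had $\vtype(v_i) \imp U_i$, we would need an explicit conversion wrapper and the \textsc{dmakeS}/\textsc{dchange} clauses of $\bumpeq$. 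In that case we would weaken the conclusion to equality modulo $\bumpeq$ and fall back on Lemma~\ref{lem:compil-judgements}.
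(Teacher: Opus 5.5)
Your proof is correct, but it is organised differently from the paper's.

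\textbf{The two routes.} The paper's one-line proof inducts on the WG typing derivation of $\Delta \stoup \ov{x:U} \vdash e : T$ and appeals to Theorem~\ref{thm:compilation-preserves-typing}. You instead induct on the compilation derivation, mutually over $\mapsto$, $\mapsto_\Phi$ and $\Mapsto_\Phi$. Your only semantic ingredient is the WG substitution Lemma~\ref{lem:wgsubst}, which preserves types exactly.

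\textbf{What your route buys.} You observe that every side condition and rule choice of the compilation depends only on $\Delta$ and on those types: \textsc{make-iface} vs.\ \textsc{change}, \textsc{Call}$_S$ vs.\ \textsc{Call}$_I$, \textsc{Select-boxed} vs.\ \textsc{Select-unboxed}, the subscript $\Phi$, and the tables $\rho$. Hence each rule instance replays verbatim after substitution. The lemma thus becomes a purely syntactic commutation of substitution with compilation, and no LWG typing is needed. The paper's appeal to type preservation of compilation matters only if one also wants $e'[\ov{x := d}]$ to be well typed, for instance to use it inside $\bumpeq$, which is defined on well-typed LWG terms.

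\textbf{A slightly stronger result.} Your argument never uses that the $v_i$ are values. So the discussion of interface-typed variables in your last paragraph is unnecessary, and you in fact prove the lemma for arbitrary closed expressions of exactly the types $\ov{U}$. That stronger form is what the method-call case of Lemma~\ref{lem:simulation-ed} actually invokes. There the parameters are replaced by conversions $T(w)$ rather than by values. These conversions have exactly the parameter types by \textsc{t-conv-a}, so your route covers that use directly, whereas the statement restricted to values does not literally apply there.
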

\begin{proof}
  Mechanical by investigating the type derivation of
  $\Delta \stoup \ov{x: U} \vdash e \mathbin{:} T $ with
  Theorem~\ref{thm:compilation-preserves-typing}.
\end{proof}

\begin{lemma}\label{lem:bodies}
  Suppose $\ov{D} \ok$ and we have
  $\mbody(t[\ov{R'}].m[\ov{S'}]) = (x,\ov{y}).e$ in WG
  such that  
  \[
    \func~(x~t[\ov{\alpha~R}])~m[\ov{\beta~S}](\ov{y~\typair{T}{T'}})~\typair{V}{V'} ~\br{\return~e'} \in \ov{D}
    \;    \text{and} \;
    \emptyset  \stoup  x: t[\ov{\alpha}], \ov{y: T} \vdash e  \mathbin{:} V \mapsto d
  \]

  and
  $\mjump(t.m) = [\ov{\alpha}, \ov{\beta}](x,\ov{y}).d'$
  in LWG.

  Then we have $\eqcast[{\emptyset  \stoup  x: t[\ov{\alpha}], \ov{y: T} }]{d}{d'[\ov{\alpha := \tycomp{R}{}}][\ov{\beta := \tycomp{S'}{}}]}$.
\end{lemma}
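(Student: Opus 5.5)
The lemma relates two LWG bodies. The first, $d$, is the compilation of the \emph{generic} WG body $e$. The second, $d'$, is the body stored under $\mjump(t.m)$ in the compiled program, instantiated with the compiled type arguments. The plan is to unfold rule \textsc{d-func} and then argue that instantiating type parameters only changes the synthetic casts inserted by compilation, which $\bumpeq$ is designed to absorb.

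First, by inversion on \textsc{d-func}, the declaration of $m$ in $\ov{D}$ compiles to the pair consisting of the adaptor $m_D$ and the non-adaptor method $m$. The body of the latter is $e'$ with $\Delta\stoup\Gamma \vdash e' : V' \mapsto_{\typair{V}{V'}} d''$, where $\Delta=\ov{\alpha:R},\ov{\beta:S}$ and $\Gamma = x:t[\ov\alpha],\ov{y:T}$. Since $\mjump(t.m)$ selects the non-adaptor method, we get $d' = d''$. The WG $\mbody$ gives $e = e'$, because $\mbody$ only instantiates the signature types and leaves the body expression unchanged. Lemma~\ref{lem:compil-judgements} then relates the plain compilation $\Delta\stoup\Gamma\vdash e\mapsto d$ to the cast-producing one: $\eqcast{d}{d'}$ under the generic contexts. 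This leaves one job: transport this equivalence through the type substitution $\ov{\alpha := \tycomp{R}{}}$, $\ov{\beta := \tycomp{S'}{}}$.

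Second, I would prove an auxiliary fact: $\bumpeq$ is stable under type substitution, and the substituted compiled term is itself $\bumpeq$-related to the original generic compilation. This goes by induction on the compilation derivation of $e$, with each rule handled by the matching congruence rule of Figure~\ref{fig:eqcast-relation}. The rules \textsc{field}, \textsc{smcall}, \textsc{dmcall}, \textsc{op-d}, \textsc{op-s}, \textsc{assert}, \textsc{make} and \textsc{staticchange} cover them. Well-typedness of both sides comes from Theorem~\ref{thm:compilation-preserves-typing} together with Lemma~\ref{lem:lwgsubsttype}, and the side conditions $T\assg U$ in \textsc{base} and \textsc{stru} survive substitution by Lemma~\ref{lem:lwgsubstassign}. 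Because the compilation rules branch on predicates such as $\utyvar$ and $\uinterface$ of field, return and argument types, I would show that under substitution the selected cast either stays the same or becomes one that $\bumpeq$ relates to it. In the latter case I would insert or remove it using \textsc{dassert}, \textsc{dmakeS}, \textsc{dchange} and \textsc{dstaticchange}, and use \textsc{sym} where the extra cast sits on the other side.

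The main obstacle is this case analysis on casts whose kind changes under instantiation. One example is \textsc{Select-unboxed} on a field of type $\alpha$: once $\alpha$ is instantiated to a base type, that field would be compiled by \textsc{Select-boxed} with an extra assertion. Similarly, \textsc{Call}$_{I\alpha}$ emits a change-type, whereas \textsc{Call}$_I$ would emit an assertion. For each such pair I would check that the differing outer operation is exactly one of the synthetic casts that the \textsc{d*} rules strip, and that its subterm is related by the induction hypothesis. Relying on \textsc{refl} and transitivity-style composition of these rules (or proving closure under composition if needed), the lemma then follows from the first step combined with this substitution argument.
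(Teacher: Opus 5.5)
Your first step is the whole of the paper's proof. The paper only says the lemma is a straightforward application of \textsc{d-func} and Lemma~\ref{lem:compil-judgements}: $d'$ is the $\mapsto_\Psi$-compilation of the same body, hence $\bumpeq$-related to the plain compilation $d$. The paper never addresses the instantiation; write $\theta$ for $[\ov{\alpha := \tycomp{R}{}}][\ov{\beta := \tycomp{S'}{}}]$. You are right that $\theta$ needs an argument, but the one you sketch does not go through with the rules of Figure~\ref{fig:eqcast-relation}. For the statement as written, $d'\theta$ is a syntactic instance of an already compiled term, so no compilation rule is re-selected. Your case analysis on \textsc{Select-boxed}/\textsc{Select-unboxed} and \textsc{Call}$_I$/\textsc{Call}$_{I\alpha}$ therefore does not apply. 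What you actually need is that a term with free $\ov\alpha,\ov\beta$ is $\bumpeq$-related to its own instance, and that is false. A body calling another method on its receiver compiles to $x\smsep t[\ov\alpha].n()$, while the instance contains $x\smsep t[\ov{\tycomp{R}{}}].n()$. \textsc{smcall} (and likewise \textsc{dmcall} for type arguments $\ov\beta$) requires syntactically identical types on both sides, and \textsc{dassert}, \textsc{dmakeS}, \textsc{dchange}, \textsc{dstaticchange} only add or remove casts, so nothing absorbs the difference.

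Suppose instead you read $d$ as the compilation of the \emph{instantiated} body, as the call case of Lemma~\ref{lem:simulation-ed} uses it (compiled in the empty context). Then comparing recompilation with instantiation is the right question, but instantiating $\alpha$ with a named non-interface type such as $\texttt{MyInt}$ changes more than casts.
\begin{itemize}
\item \textsc{op-dyn} produces $\circ(\ov{e'})$, whereas \textsc{op-static} produces $\circ\smsep\texttt{MyInt}(\ov{e''})$, and no rule relates a dynamic to a static operation.
\item A receiver $y:\alpha$ yields a dynamic call $y.m_D[\ldots](\ldots)$ via \textsc{Call}$_I$, but a static call $y\smsep\texttt{MyInt}.m[\ldots](\ldots)$ via \textsc{Call}$_S$. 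The only bridge, \textsc{mcallbridge}, needs both receivers to be values.
\item Such a $y$ has the boxed type $\texttt{MyInt}@\tycomp{\bounds_\Delta(\alpha)}{}$ on one side and the unboxed $\texttt{MyInt}@\texttt{MyInt}$ on the other. So the two terms are not even typable in the single context that indexes $\bumpeq$.
\item Chaining your two facts needs transitivity, which is not a rule and is not admissible. \textsc{base} and \textsc{ifaceval} give $T(c)\bumpeq U(c)$ and $U(c)\bumpeq\Phi(S,\rho,U(c))$, yet $T(c)\bumpeq\Phi(S,\rho,U(c))$ has no derivation.
\end{itemize}
These mismatches disappear only once the parameters are replaced by related closed values. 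So either the instantiation is handled jointly with the value substitution at the call site, or $\bumpeq$ is extended with bridging rules for dynamic versus static operators and open calls and closed under composition. An induction over the compilation derivation with the existing rules cannot close your second step.
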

\begin{proof}
  Straightforward application of Rule \textsc{d-func} and Lemma~\ref{lem:compil-judgements}.
\end{proof}

\begin{lemma}\label{lem:compilation-context-eqcast}
  Suppose $\ov{D} \ok$.
  If
  $\Delta \stoup \Gamma \vdash e  \mathbin{:} T $,
  and
  $\Delta \stoup \Gamma \vdash e \mapsto d$,
  for any context $E$, such that  $\Delta \stoup \Gamma \vdash E[e] \mathbin{:} V $,
  there exists an LWG
  context $E'$ and term $d'$ such that:
  $\Delta \stoup \Gamma \vdash E[e] \mapsto E'[d']$
  and 
  $\eqcast{d'}{d}$.
\end{lemma}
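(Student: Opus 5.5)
The plan is to argue by induction on the structure of the WG evaluation context $E$, in the form given in Figure~\ref{fig:wg-reduction}. For each shape of $E$ we use the typing of $E[e]$ (from $\Delta\stoup\Gamma\vdash E[e]:V$) to fix which compilation rule must be applied to the outermost constructor. That rule determines the LWG context $E'$, which consists of the compiled surrounding siblings together with any synthetic casts placed around the hole.

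The base case $E=\Hole$ takes $E'=\Hole$ and $d'=d$, and then $\eqcast{d}{d}$ holds by \textsc{refl}.

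For the inductive step, write $E=E_0[E_1]$ with $E_0$ a single frame. The induction hypothesis gives $E_1'$ and $d_1'$ with $E_1[e]\mapsto E_1'[d_1']$ and $\eqcast{d_1'}{d}$. We then case on the frame $E_0$.
\begin{itemize}
\item \emph{Select, assertion and operator frames.} The rules \textsc{Select-boxed}/\textsc{Select-unboxed}, \textsc{Assert} and \textsc{op-static}/\textsc{op-dyn} compile the subterm with the plain judgement $\mapsto$ and wrap it in fixed LWG syntax: $\_.i$, $\_.i.(\ldots)$, $\_.(\ldots)$ or $\circ(\ldots)$. Here we take $E'=E_0'[E_1']$ with the same $d'=d_1'$. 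Values to the left of the hole compile to themselves up to compilation, and expressions to the right are compiled independently.
\item \emph{Struct fields, method-call arguments and conversions.} The rules \textsc{Struct}, the \textsc{Call} rules and \textsc{Conversion} compile the hole position through the cast judgement $e:U\mapsto_\Phi d''$ or $\Mapsto_\Phi$. By inspecting Figure~\ref{fig:synthcast} and the rules \textsc{box}/\textsc{nbox}, $d''$ has the shape $c[d_0]$, where $d_0$ is the plain compilation of the subterm. The wrapper $c$ is one of make, make of static-change, change, or static-change. Since $c$ is itself an LWG evaluation context (Figure~\ref{fig:lwg-reduction} includes make, change and static-change frames), we take $E'=E_0'[c[E_1']]$ and $d'=d_1'$. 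Lemma~\ref{lem:compil-judgements} justifies treating the cast-compiled form and the plain form uniformly.
\item \emph{Method-call receiver frame.} The compiled receiver sits in the receiver position of a static call $\_\smsep T.m$ or a dynamic call $\_.m_D$. A trailing assertion or change introduced by \textsc{Call$_S$-boxed}, \textsc{Call$_I$} or \textsc{Call$_{I\alpha}$} also forms an evaluation-context frame, so the argument goes through in the same way.
\end{itemize}
In every case, $\eqcast{d'}{d}$ is inherited directly from the induction hypothesis.

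The main obstacle is that the compilation of a subterm depends on its WG type. The hypothesis only records the typing $\Delta\stoup\Gamma\vdash e:T$, but the cast rule selected at the enclosing frame depends on the type of $E_1[e]$, and that type is fixed by the derivation of $E[e]$. We resolve this by reading all typings off the single derivation of $E[e]$ through inversion, and by noting that compilation of $E_1[e]$ is syntax- and type-directed, so the rule choices at the frames are uniquely determined. Where a choice branches on $\uinterface$ or $\utyvar$ of a field or return type, the branch depends only on the frame's declared signature and not on $e$. Hence the same rule applies uniformly, and the hole is always placed beneath exactly the casts that rule produces.
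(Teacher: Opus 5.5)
Most of your induction is sound. Compilation is syntax- and type-directed, and every rule places the hole position under a (possibly empty) stack of casts around its plain compilation. So you may take $d'=d$ throughout and conclude by \textsc{refl}, which is presumably the structural induction the paper's one-line proof has in mind.

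The step that fails is your claim that the resulting $E'$ is an LWG \emph{evaluation} context. You justify it by saying that values to the left of the hole compile to themselves, but they do not compile to LWG values. In \textsc{Struct} and in the \textsc{Call} rules, every field and argument passes through the cast judgement, and each of \textsc{make-iface}, \textsc{make-bs}, \textsc{change} and \textsc{static-change} wraps its input in a cast. For instance, $\mathtt{Point}\{\mathtt{int}(1),\Hole\}$ becomes $\mathtt{Point}\{\STATICCHANGE{\mathtt{int}(1)}{\mathtt{int}}{\mathtt{int}},\ldots\}$, whose first component still has to take a \textsc{static-change-c} step. A compiled struct-value receiver with fields has the same defect. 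Hence the frames $T\{\ov v,\Hole,\ov e\}$ and $v'.m[\ov S](\ov v,\Hole,\ov e)$ are not mapped to LWG evaluation frames.

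This cannot be repaired by choosing $E'$ and $d'$ differently. Take $e=x$ with $x:\mathtt{int}$, so $d=x$.
\begin{itemize}
\item The compilation of $\mathtt{Point}\{\mathtt{int}(1),x\}$ is uniquely determined.
\item Any LWG evaluation-context decomposition of it puts the hole either at the root or inside the first field.
\item No rule of $\bumpeq$ relates either the whole literal or a subterm of $\STATICCHANGE{\mathtt{int}(1)}{\mathtt{int}}{\mathtt{int}}$ to $x$.
\end{itemize}
So under the evaluation-context reading the statement itself is false. Once you drop the evaluation-context claims, your argument proves the version where $E'$ is an arbitrary one-hole LWG context.

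That weaker version is not what the paper needs in Lemma~\ref{lem:simulation-ed}, where rule \textsc{context} is applied to $E'$. For that use you need an additional step. First run the compiled left siblings to LWG values using Lemma~\ref{lem:value-comp}; they do not depend on the hole. This yields a genuine evaluation context, which you relate to $E'$ componentwise through the congruence rules of $\bumpeq$ and the $\bumpeq$ facts that Lemma~\ref{lem:value-comp} supplies. The paper's one-line proof does not address this point either.
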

\begin{proof}
  By structural induction on the derivation of
  $\Delta \stoup \Gamma \vdash e \mapsto d$.
\end{proof}

\begin{lemma}\label{lem:operator-eqval}
  If $\delta(\circ, \ov{v}) = v'$ with
  $\emptyset \stoup \emptyset \vdash \ov{v } : T$ in WG.
  Then
  for any $\ov{w}$ such that $\ov{\eqval{v}{w}}$ and $\delta(\circ_T, \ov{w}) = w'$, we have $\eqval{v'}{w'}$.
\end{lemma}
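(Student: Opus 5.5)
The plan is to unfold the definitions of the WG primitive semantics $\delta$ and of the LWG operator implementation $\circ_T$, and to argue that both compute on the same underlying constant; they differ only in the type tag attached to the result. From $\emptyset \stoup \emptyset \vdash \ov{v} : T$ and the fact that $\delta(\circ,\ov{v})$ is defined, I first apply WG Canonical Forms (Lemma~\ref{lem:wgcanon}) to conclude that each $v_i$ has the form $T(c_i)$, where $\underlying_\emptyset(T) = B$ is a base type in $\opdom(\circ)$. Struct values are excluded here because operators are only typed (via \textsc{t-op}) at base underlying types, and in the empty context no type variable can occur. By the convention for typed operators, $\delta(\circ, \ov{T(c)}) = T(c')$, where $c'$ is the result of the primitive operation on $\ov{c}$ at base type $B$.

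Next I would invert $\ov{\eqval{v}{w}}$ using the rules of Figure~\ref{fig:eqval-relation}. Since each $v_i = T(c_i)$ is a constant, only two rules can apply. Either $w_i = U_i(c_i)$ with $T \assg_\emptyset U_i$ (third rule), or $w_i$ is an interface value $\ghost{\typair{T'}{V}}(S,\rho,w_i')$ with $\eqval{v_i}{w_i'}$ (first rule). The hypothesis that $\delta(\circ_T,\ov{w})$ is defined rules out the interface-value case: $\circ_T$ is the static implementation at the underlying base type, and it acts on raw constants only. This matches \textsc{op-static}, where the arguments are unboxed values; \textsc{op-dyn} does its own unboxing before calling $\delta$. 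So every $w_i = U_i(c_i)$ carries the same constant $c_i$. Because $T \assg_\emptyset U_i$ and $T$ is not an interface, each $U_i$ is $T$ itself or related to it via $\assg_{tL}/\assg_{Lt}$. In all cases $\underlying_\emptyset(U_i) = B$, so $\circ_T$, i.e.\ $\circ_B$, computes $c'$ from $\ov{c}$ exactly as $\delta(\circ,-)$ does.

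It then follows that $w' = U(c')$, where $U$ is the type tag $\delta$ assigns at the LWG level: the common argument type, which in well-typed compiled code is $\tycomp{T}{}$. Using $T \assg_\emptyset U$ (reflexivity, Lemma~\ref{lem:subt_refltrans}, when $U=\tycomp{T}{}$, together with Lemma~\ref{lem:tycomp-preserve-subtyping}), the third rule of Figure~\ref{fig:eqval-relation} gives $\eqval{T(c')}{U(c')}$, which is $\eqval{v'}{w'}$.

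The main obstacle is the underspecification of $\delta$. I will need to state explicitly, as an assumption on the primitive semantics, that $\delta$ is type-uniform: the output constant depends only on the operator, the underlying base type, and the input constants, and the output tag is taken from the argument types. The step that then needs care is excluding the boxed case for $w_i$, which relies on $\circ_T$ being defined only on raw constants.
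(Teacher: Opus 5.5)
Your proposal is correct and follows the same idea as the paper. The paper's proof is only the one-line observation that operators act solely on (named) base-type constants; you unfold that into canonical forms, inversion of $\eqval{v_i}{w_i}$, and type-uniformity of $\delta$. There is one small, harmless slip: $\assg_{tL}$ and $\assg_{Lt}$ cannot fire for a constant type $T$, because neither side is a literal $L$, so the only alternative to $U_i = T$ is $T \imp_I U_i$ with $U_i$ an interface, where $\underlying_\emptyset(U_i) \neq B$. Nothing depends on this, since $\circ_T$ alone fixes the computation, and any output tag $U$ with $T \assg_\emptyset U$ gives $\eqval{v'}{w'}$ by the third rule, so you need no appeal to well-typed compiled code or to Lemma~\ref{lem:tycomp-preserve-subtyping}.
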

\begin{proof}
  This follows from the fact that operators only take base types (or
  named based types) as operand.
\end{proof}

\begin{lemma}\label{lem:value-comp}
  Suppose $\ov{D} \ok$.
  Let $v$ be a WG value s.t. $\vdash v  \mathbin{:} T$ . Then following holds:
  \begin{enumerate}
  \item If $ \vdash v \mapsto e$ then $e \becomesast v'$, where $v'$
    is an LWG value s.t. $\eqval{v}{v'}$  and $\ceqcast{e}{v'}$.
  \item If $\vdash v \mathbin{:} T \mapsto_{\typair{U}{V}} e$ and
    $T \assg_\emptyset U \imp_\emptyset V$ then $e \becomesast v'$,
    where $v'$ is an LWG value s.t. $\eqval{v}{v'}$  and
    $\ceqcast{e}{v'}$.
  \item If
    $ \vdash v \mathbin{:} T
    \Mapsto_{\typair{U}{V}} e$ and
    $T \assg_\emptyset U \imp_\emptyset V$ then $e \becomesast v'$,
    where $v'$ is an LWG value s.t. $\eqval{v}{v'}$  and
    $\ceqcast{e}{v'}$.
  \end{enumerate}
\end{lemma}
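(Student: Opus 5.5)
We argue by induction on the structure of the WG value $v$, proving the three items simultaneously. By Lemma~\ref{lem:wgcanon}, $v$ is either a typed constant $T(c)$ with $\ubase(T)$ or a struct literal $T\br{\ov{v}}$ with $\ustruct(T)$. Item (3) reduces at once to item (2): by \textsc{box} and \textsc{nbox}, $\vdash v : T \Mapsto_{\typair{U}{V}} e$ holds only because $\vdash v : T \mapsto_{\typair{U}{V'}} e$ holds with $V' = V$ or $V' = \texttt{Any}$. The side condition still holds for $V' = \texttt{Any}$, since every type implements the empty interface. So the real work is in items (1) and (2).

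\textbf{Item (1).} For a constant, rule \textsc{Const} gives $e = T(c)$. This is already an LWG value, and we have $\eqval{T(c)}{T(c)}$ by reflexivity of $\assg$ (Lemma~\ref{lem:subt_refltrans}) and $\ceqcast{e}{e}$ by \textsc{refl}. For a struct, rule \textsc{Struct} gives $e = \ghost{\tycomp{T}{}}\br{\ov{e'}}$, where each $e'_j$ comes from $\vdash v_j : U_j \mapsto_{\Phi_j} e'_j$ and $\ov{\_~\Phi} = \fields_\emptyset(T)$. Inversion of \textsc{t-literal} gives $U_j \assg_\emptyset \maintype{\Phi_j}$, and well-formedness of the field annotation gives $\maintype{\Phi_j} \imp_\emptyset \uppertype{\Phi_j}$. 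The induction hypothesis for item (2) on each $v_j$ then yields $e'_j \becomesast v'_j$ with $\eqval{v_j}{v'_j}$ and $\ceqcast{e'_j}{v'_j}$. We chain these reductions left to right through the Structure evaluation context, reaching $\ghost{\tycomp{T}{}}\br{\ov{v'}}$. The value relation holds by the struct rule of Figure~\ref{fig:eqval-relation}, and the cast equivalence holds by \textsc{stru}.

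\textbf{Item (2).} We case on the last rule of Figure~\ref{fig:synthcast}.
\begin{itemize}
\item \textsc{change} is impossible, because its source type $T$ is not an interface.
\item \textsc{static-change}: by item (1), $e'$ reduces to $w$ with $\eqval{v}{w}$. Inversion of the typing of $w$ gives a struct or a constant of type $\tycomp{T}{}$. Rule \textsc{static-change-s} or \textsc{static-change-c} then rewrites it to the same contents at type $\tycomp{U}{}$. The value relation holds because $T \assg U$, and the cast equivalence follows from \textsc{dstaticchange} combined with the item (1) equivalence, using \textsc{sym}.
\item \textsc{make-iface}: $e'$ reduces to $w$, and rule \textsc{make} yields $\ghost{\tycomp{\typair{U}{V}}{}}(\tycomp{T}{},\rho,w)$. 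The value relation uses the interface-value rule, with $\tycomp{T}{} \assg \tycomp{U}{} \imp \tycomp{V}{}$ obtained from Lemma~\ref{lem:tycomp-preserve-subtyping}. The cast equivalence uses \textsc{ifacevalmake}.
\item \textsc{make-bs}: we first perform the static change as in the previous case, then the make. The equivalence is assembled with \textsc{dmakeS} (or \textsc{ifacevalmake}) around \textsc{dstaticchange}.
\end{itemize}

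We expect the main obstacle to be making the cast equivalences compose. The rules in Figure~\ref{fig:eqcast-relation} are not obviously transitive, yet we need to relate $e$ to $v'$ across several intermediate reduction steps. Our plan is to avoid proving general transitivity. Instead, we use the induction hypothesis only for the immediate subterm, and then build the final equivalence directly from the value-headed rules (\textsc{ifacevalmake}, \textsc{stru}, \textsc{dstaticchange}). These rules only require the inner relation $\ceqcast{e'}{w}$, and that is exactly what the induction hypothesis provides. A secondary concern is checking that the ghost annotations of the resulting value match those required by the value relation.
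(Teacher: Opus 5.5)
Your proposal is correct and follows the paper's proof essentially step for step. You use structural induction on $v$ and collapse item (3) onto item (2) via \textsc{box}/\textsc{nbox}. You then split on \textsc{static-change}, \textsc{make-iface} and \textsc{make-bs}. In two places you are more careful than the written proof: you check the field side condition needed to invoke item (2) on the struct components, and you keep the inner static change in \textsc{make-bs}.

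One bookkeeping fix is needed in the \textsc{static-change} case. After \textsc{static-change-s}/\textsc{-c}, the value is retagged at $\tycomp{U}{}$. So the equivalence you wrap with \textsc{dstaticchange} (and then \textsc{ifacevalmake} for \textsc{make-bs}) cannot be the item (1) equivalence $\ceqcast{e'}{w}$, whose right-hand side is still tagged $\tycomp{T}{}$. Instead, obtain it directly from \textsc{stru} or \textsc{base}, using $\tycomp{T}{}\assg\tycomp{U}{}$ and the componentwise induction hypothesis. This is what the paper does in its \textsc{static-change} case, and it removes any need for transitivity.
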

\begin{proof}
  We prove this by induction on the structure of $v$.
  
  \noindent
  \textbf{Base cases.}

  \begin{itemize}
  \item $v = T(c)$

    For (1) by Rule \textsc{Const}, we have
    $ \vdash T(c) \mapsto T(c)$, hence we have the result with
    $ e = v' = T(c)$, and $\ceqcast{e}{v'}$, trivially.

    For (2) since
    $T$ is not an interface, it can only be derived
    from
    \begin{itemize}
    \item Rule \textsc{static-change}.
      We have $\vdash T(c) \mathbin{:} T \mapsto_{\typair{U}{U}} e$.
      This gives us $e =
      \STATICCHANGE{T(c)}{\tycomp{U}{}}{\tycomp{T}{}}$.
      where we have $\tycomp{U}{} = U$ and ${\tycomp{T}{}} = T$ since
      they cannot be literal/anonymous types. 
      
      Thus we have $ e \becomes U(c) = v'$, an LWG value, and
      $\eqval{v}{v'}$ since $T \assg U$.

      Finally, we have
      $e = \ceqcast{\STATICCHANGE{T(c)}{{U}}{{T}}}{U(c)} = v'$ easily
      by Figure~\ref{fig:eqcast-relation}.
    \item Rule \textsc{make-iface}.
      We have
      $\vdash T(c) \mathbin{:} T \mapsto_{\typair{U}{V}} e$ with
      $\uinterface(V)$ and $\uinterface(U)$.
      Pose $\rho = \MKTABLE{{T}}{V}{}$, we have
      $e = \MAKETYPE{T(c)}{{{\typair{U}{V}}{}}}{{T}{}}{\rho}$.

      Thus we have
      $ e \becomes \ghost{\typair{U}{V}}(T,\rho, T(c)) = v'$, an LWG
      value and $\eqval{v}{v'}$.

      Finally, we have
      $e =
      \ceqcast{\MAKETYPE{T(c)}{{{\typair{U}{V}}{}}}{{T}{}}{\rho}}{\ghost{\typair{U}{V}}(T,\rho,
        T(c))} = v'$
      easily
      by Figure~\ref{fig:eqcast-relation}.
    \item Rule \textsc{make-bs}.
      We have
      $\vdash T(c) \mathbin{:} T \mapsto_{\typair{U}{V}} e$ with
      $\uinterface(V)$ and $\neg\uinterface(U)$.
      Pose $\rho = \MKTABLE{{T}}{V}{}$, we have
      $e = \MAKETYPE{T(c)}{{{\typair{U}{V}}{}}}{{U}{}}{\rho}$.

      Thus we have
      $ e \becomes \ghost{\typair{U}{V}}(U,\rho, T(c)) = v'$, an LWG
      value and $\eqval{v}{v'}$.

      Finally, we have
      $e =
      \ceqcast{\MAKETYPE{T(c)}{{{\typair{U}{V}}{}}}{{U}{}}{\rho}}{\ghost{\typair{U}{V}}(U,\rho, T(c))} = v'$
      easily
      by Figure~\ref{fig:eqcast-relation}.
    \end{itemize}

    For (3), if $\uinterface(V)$, then it reduces to case (2). If
    $\neg\uinterface(V)$ then we have
    $\vdash v \mathbin{:} T \mapsto_{\typair{U}{\textit{Any}}} e$ and
    $T \assg U \imp \textit{Any}$, thus case (2) applies again.

  \item $ v= T\br{} $

    For (1) we have $\vdash T\br{} \mapsto \ghost{\tycomp{T}{}}\br{}$.
    Thus $e = \ghost{\tycomp{T}{}}\br{} = v'$ and it is already a
    value and we have $\eqval{v}{v'}$ and $\ceqcast{e}{v'}$ trivially.

    For (2) since $T$ is not an interface, it can only be derived from
    \begin{itemize}
    \item Rule \textsc{static-change}.
      We have $\vdash T\br{} \mathbin{:} T \mapsto_{\typair{U}{U}} e$.
      This gives us $e =
      \STATICCHANGE{\ghost{\tycomp{T}{}}\br{}}{\tycomp{U}{}}{\tycomp{T}{}}$.

      Thus we have $ e \becomes \ghost{\tycomp{U}{}}\br{} = v'$, an LWG value
      and $\eqval{v}{v'}$ since $T \assg U$.

      Finally, we have
      $e =
      \ceqcast{\STATICCHANGE{\ghost{\tycomp{T}{}}\br{}}{\tycomp{U}{}}{\tycomp{T}{}}}{\ghost{\tycomp{U}{}}\br{}} = v'$
      easily
      by Figure~\ref{fig:eqcast-relation}.

    \item Rule \textsc{make-iface}.
      We have
      $\vdash T\br{} \mathbin{:} T \mapsto_{\typair{U}{V}} e$ with
      $\uinterface(V)$ and $\uinterface(U)$.
      Pose $\rho = \MKTABLE{{T}}{V}{}$, we have
      $e = \MAKETYPE
      {\ghost{\tycomp{T}{}}\br{}}
      {\tycomp{\typair{U}{V}}{}}
      {\ghost{\tycomp{T}{}}}{\rho}$.

      Thus we have
      $ e \becomes \ghost{\tycomp{\typair{U}{V}}{}}(T,\rho, {\ghost{\tycomp{T}{}}\br{}}) = v'$, an
      LWG value and $\eqval{v}{v'}$.

      Finally, we have
      $e =
      \ceqcast{\MAKETYPE
        {\ghost{\tycomp{T}{}}\br{}}
        {\tycomp{\typair{U}{V}}{}}
        {\ghost{\tycomp{T}{}}}{\rho}}{
        \ghost{\tycomp{\typair{U}{V}}{}}(\ghost{\tycomp{T}{}},\rho,
        {\ghost{\tycomp{T}{}}\br{}})} = v'$ easily by
      Figure~\ref{fig:eqcast-relation}.
    \item Rule \textsc{make-bs}.
      We have
      $\vdash T\br{} \mathbin{:} T \mapsto_{\typair{U}{V}} e$ with
      $\uinterface(V)$ and $\neg\uinterface(U)$.
      Pose $\rho = \MKTABLE{{T}}{V}{}$, we have
      $e = \MAKETYPE{\ghost{\tycomp{T}{}}\br{}}{{{\tycomp{\typair{U}{V}}{}}{}}}{{\tycomp{U}{}}{}}{\rho}$.

      Thus we have
      $ e \becomes \ghost{\tycomp{\typair{U}{V}}{}}({\tycomp{U}{}},\rho, {\ghost{\tycomp{T}{}}\br{}}) = v'$, an LWG
      value and $\eqval{v}{v'}$.

      Finally, we have
      $e =
      \ceqcast{\MAKETYPE{\ghost{\tycomp{T}{}}\br{}}{{{\tycomp{\typair{U}{V}}{}}{}}}{{{\tycomp{U}{}}}{}}{\rho}}{ \ghost{\tycomp{\typair{U}{V}}{}}({\tycomp{U}{}},\rho, {\ghost{\tycomp{T}{}}\br{}})} = v'$
      easily
      by Figure~\ref{fig:eqcast-relation}.
    \end{itemize}
    
  \end{itemize}

  \noindent
  \textbf{Inductive cases.}
  Assume the lemma holds for all substructures of a struct value. We
  need to prove it holds for $v = T\br{v_1, \ldots, v_n}$.

  For (1), by Rule \textsc{struct}, we have
  $\vdash T\br{v_1, \ldots, v_n} \mapsto \ghost{\tycomp{T}{}}\br{e_1,
    \ldots , e_n}$, where $ \vdash v_i \mapsto e_i$ for each $i$.
  By the induction hypothesis, each $ e_i \becomesast v''_i$ where
  $v'_i$ is an LWG value and $\eqval{v_i}{v''_i}$.
  Therefore,
  $\ghost{\tycomp{T}{}}\br{e_1, \ldots , e_n} \becomesast
  \ghost{\tycomp{T}{}}\br{v''_1, \ldots , v''_n}$, an LWG value, through several
  applications of the \textsc{context} rule.
  By definition of $\eqval{}{}$, we also have
  $\eqval{T\br{v_1, \ldots, v_n}}{ \ghost{\tycomp{T}{}}\br{v''_1, \ldots
      , v''_n}}$.
  By induction hypothesis, we also have $\ceqcast{e_i}{v''_i}$ for each
  $i$, hence, we have
  $\ceqcast{\ghost{\tycomp{T}{}}\br{e_1, \ldots , e_n}}{ \ghost{\tycomp{T}{}}\br{v''_1, \ldots , v''_n}}$.

  For (2) since $T$ is not an interface, it can only be derived from
  \begin{itemize}
  \item Rule \textsc{static-change}.  We have
    $\vdash T\br{\ov{v}} \mathbin{:} T \mapsto_{\typair{U}{U}} e$.
    This gives us
    $e =
    \STATICCHANGE
    {\ghost{\tycomp{T}{}}\br{\ov{e}}}
    {\tycomp{U}{}}
    {\tycomp{T}{}}$.
    where $v_i : T_i \mapsto_{\Phi_i} e_i$ for each $i$.

    By the induction hypothesis, each $e_i \becomesast v''_i$, where
    $v''_i$ is an LWG value and $\eqval{v_i}{v''_i}$ and $\ceqcast{e_i}{v''_i}$.

    Thus we have $ e \becomesast \STATICCHANGE
    {\ghost{\tycomp{T}{}}\br{\ov{v''}}}
    {\tycomp{U}{}}
    {\tycomp{T}{}}
    \becomes
    {\ghost{\tycomp{U}{}}\br{\ov{v''}}} = v'
    $
    an LWG value
    and $\eqval{v}{v'}$ by IH and the fact that $T \assg U$.

    Finally, we have
    $\ceqcast{\STATICCHANGE
      {\ghost{\tycomp{T}{}}\br{\ov{e}}}
      {\tycomp{U}{}}
      {\tycomp{T}{}}{}}{{\ghost{\tycomp{U}{}}\br{\ov{v''}}}}$ by IH, Figure~\ref{fig:eqcast-relation}, and the fact that $T \assg U$.

  \item Rule \textsc{make-iface}.
    We have
    $\vdash T\br{\ov{v}} \mathbin{:} T \mapsto_{\typair{U}{V}} e$ with
    $\uinterface(V)$ and $\uinterface(U)$.
    Pose $\rho = \MKTABLE{{T}}{V}{}$, we have
    $e = \MAKETYPE{\ghost{\tycomp{T}{}}\br{\ov{e}}}{{{\tycomp{\typair{U}{V}}{}}{}}}{{T}{}}{\rho}$
    where $v_i : T_i \mapsto_{\Phi_i} e_i$ for each $i$.

    By the induction hypothesis, each $e_i \becomesast v''_i$, where
    $v''_i$ is an LWG value and $\eqval{v_i}{v''_i}$ and $\ceqcast{e_i}{v''_i}$.
    
    Thus we have $ e \becomesast
    \MAKETYPE{\ghost{\tycomp{T}{}}\br{\ov{v''}}}{{{\tycomp{\typair{U}{V}}{}}{}}}{{T}{}}{\rho}
    \becomes
    \ghost{\tycomp{\typair{U}{V}}{}}(T,\rho, \ghost{\tycomp{T}{}}\br{\ov{v''}}) = v'$, an LWG value and $\eqval{v}{v'}$ by IH.

    Finally, we have
    $\ceqcast{\MAKETYPE{\ghost{\tycomp{T}{}}\br{\ov{e}}}{{{\tycomp{\typair{U}{V}}{}}{}}}{{T}{}}{\rho}}{\ghost{\tycomp{\typair{U}{V}}{}}(T,\rho,  \ghost{\tycomp{T}{}}\br{\ov{v''}})}$ by IH and Figure~\ref{fig:eqcast-relation}.

  \item Rule \textsc{make-bs}.
    We have
    $\vdash T\br{\ov{v}} \mathbin{:} T \mapsto_{\typair{U}{V}} e$ with
    $\uinterface(V)$ and $\neg\uinterface(U)$.
    Pose $\rho = \MKTABLE{{T}}{V}{}$, we have
    $e = \MAKETYPE{\ghost{\tycomp{T}{}}\br{\ov{e}}}{{{\tycomp{\typair{U}{V}}{}}{}}}{{{\tycomp{U}{}}}{}}{\rho}$
    where $v_i : T_i \mapsto_{\Phi_i} e_i$ for each $i$.

    By the induction hypothesis, each $e_i \becomesast v''_i$, where
    $v'_i$ is an LWG value and $\eqval{v_i}{v''_i}$ and $\ceqcast{e_i}{v''_i}$  and $\ceqcast{e_i}{v''_i}$.

    Thus we have
    $ e
    \becomesast \MAKETYPE{\ghost{\tycomp{T}{}}\br{\ov{v'}}}{{{\tycomp{\typair{U}{V}}{}}{}}}{{{\tycomp{U}{}}}{}}{\rho}
    \becomes \ghost{\tycomp{\typair{U}{V}}{}}({\tycomp{U}{}},\rho, \ghost{\tycomp{T}{}}\br{\ov{v'}}) = v'$, an LWG
    value and $\eqval{v}{v'}$.

    Finally, we have
    $\ceqcast
    {\MAKETYPE{\ghost{\tycomp{T}{}}\br{\ov{e}}}{{{\tycomp{\typair{U}{V}}{}}{}}}{{{\tycomp{U}{}}}{}}{\rho}}
    {\ghost{\tycomp{\typair{U}{V}}{}}({\tycomp{U}{}},\rho, \ghost{\tycomp{T}{}}\br{\ov{v'}})}$ by IH and
    Figure~\ref{fig:eqcast-relation}.
    \qedhere
  \end{itemize}

  For (3), if $\uinterface(V)$, then it reduces to case (2). If
  $\neg\uinterface(V)$ then we have
  $\vdash v \mathbin{:} T \mapsto_{\typair{U}{\textit{Any}}} e$ and
  $T \assg U \imp \textit{Any}$, thus we follow the same argument as in case (2).
\end{proof}

\begin{lemma}\label{lem:simulation-ed}
  Suppose $\ov{D} \ok$ and $\emptyset \stoup \emptyset \vdash e : T$ for some
  $T$.
  Then, the relation
  \[
    \mathcal{R} \eqdef
    \{ \,
    (e,d') \st
    \exists d . \
    \emptyset \stoup \emptyset \vdash
    e
    \mapsto
    d
    \text{ and }
    \ceqcast{d}{d'}
    \,
    \}
  \]
  is a lowering simulation.
\end{lemma}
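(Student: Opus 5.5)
We verify the three clauses of Definition~\ref{def:lowering-sim} for an arbitrary pair $(e,d')\in\mathcal{R}$, i.e.\ $\emptyset\stoup\emptyset\vdash e\mapsto d$ with $\ceqcast{d}{d'}$. The argument is a case analysis on the WG reduction $e\becomes e'$, with an outer induction on the evaluation context. Throughout, well-typedness of $d$ and $d'$ is available from Theorem~\ref{thm:compilation-preserves-typing}. It also follows that type preservation holds for WG (Theorem~\ref{thm:wgtpres}) and LWG, so every intermediate term remains in the domain where $\bumpeq$ is defined.

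\textbf{Values and stuck terms.} If $e$ is a value, Lemma~\ref{lem:value-comp}(1) gives $d\becomesast v'$ with $\eqval{e}{v'}$ and $\ceqcast{d}{v'}$. We then show that $\bumpeq$ relates a value only to terms that consist of synthetic casts around an equivalent value. Inspecting the rules of Figure~\ref{fig:eqcast-relation} (\textsc{ifaceval}, \textsc{ifacevalmake}, \textsc{stru}, \textsc{base}) shows that these casts preserve $\multimap$. For the stuck clause, WG progress says a stuck well-typed $e$ panics, so $e=E[v.(T)]$ with $\vtype(v)\notimp T$. By Lemma~\ref{lem:compilation-context-eqcast} the compiled term has the form $E'[d_0.(\typair{\tycomp{T}{}}{\ldots})]$. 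Here $d_0$ evaluates to an interface value whose RTTI is $\vtype(v)$, using Lemma~\ref{lem:value-comp}. Neither \textsc{assert-ok}$_S$ nor \textsc{assert-ok}$_I$ applies, by Lemma~\ref{lem:tycomp-preserve-subtyping}, so $d'$ is stuck once its synthetic casts have been reduced. As stated, the definition requires $d'\not\becomes$. I would read this clause up to administrative steps and state that reading explicitly.

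\textbf{Simulation step.} By Lemma~\ref{lem:compilation-context-eqcast} we split $e=E[r]$ with $r$ a redex and $d=E'[d_r]$, and handle each redex rule in turn.
\begin{itemize}
\item[(i)] For \textsc{r-field}, the compiled term is $d_r=\ghost{T}\{\ov{d}\}.i$, possibly followed by an assertion from \textsc{Select-boxed}. Lemma~\ref{lem:value-comp} reduces the fields to values, \textsc{select} fires, and the residual assertion or the unboxing is absorbed by \textsc{dassert} and \textsc{ifacevalassert}. The WG conversion $\maintype{\Phi_i}(v_i)$ compiles, via \textsc{Conversion}, to a synthetic cast, which is again related by $\bumpeq$ through Lemma~\ref{lem:compil-judgements}.
\item[(ii)] For \textsc{r-convert-*}, the target is a synthetic cast; it is reduced by \textsc{static-change}, \textsc{make}, or \textsc{change-type}, or simply related directly.
\item[(iii)] For \textsc{r-assert}, reduce the argument to an interface value and fire the corresponding assert rule.
\item[(iv)] For \textsc{r-op}, use \textsc{op-static} or \textsc{op-dyn} together with Lemma~\ref{lem:operator-eqval}.
\item[(v)] For \textsc{r-call}, see the next paragraph.
\end{itemize}
In every case the reduct $d''$ must satisfy $\ceqcast{d''}{d'''}$ for some compilation $d'''$ of $e'$. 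That relation is built from Lemma~\ref{lem:compositionality} (substitution commutes with compilation), Lemma~\ref{lem:bodies} (compiled bodies), and the congruence rules of $\bumpeq$. Since $d'$ is only $\bumpeq$-related to $d$, we also need a lemma that $\bumpeq$ is a weak bisimulation on LWG: if $\ceqcast{d}{d'}$ and $d\becomes d_1$, then $d'\becomesast d_1'$ with $\ceqcast{d_1}{d_1'}$. This is proved by induction on the derivation of $\bumpeq$. In that induction the cast-adding rules (\textsc{dmakeS}, \textsc{dchange}, \textsc{dstaticchange}, \textsc{dassert}) either stay idle or take their own cast step once the inner term becomes a value.

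\textbf{Main obstacle: method calls.} The hardest case is \textsc{r-call} when the WG receiver has interface type, compiled by \textsc{Call}$_I$ or \textsc{Call}$_{I\alpha}$. The LWG term makes a dynamic call to $m_D$ on boxed arguments. It then jumps, through $\mathit{Adapt}$, to a static call of $m$ in which the arguments are unboxed by assertions or changed, and finally reboxes or asserts the result. I would first reduce the receiver and the arguments to values (Lemma~\ref{lem:value-comp}(3)). Next I would fire \textsc{call-dyn}, which reaches $\mjump(t.m_D)$ because the method table was built by $\MKTABLE{}{}{}$ with renamed methods (Lemma~\ref{lem:tycomp-preserve-subtyping}). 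The assertions and changes inside $\mathit{Adapt}$ are then run to obtain the static call. Rule \textsc{mcallbridge} exists precisely to relate this static call to the bridged dynamic call before the body is entered. After \textsc{call-static} fires, Lemma~\ref{lem:bodies} and Lemma~\ref{lem:compositionality} relate the instantiated body to the compilation of $U(e[x\by v',\ov{y\by T(v)}])$. The remaining mismatch is between the return-cast chain (make with empty table, then an assertion or change) and the WG conversion $U(\cdot)$; I expect \textsc{ifacevalmake}, \textsc{ifacevalchange}, \textsc{dassert}, and \textsc{dchange} to absorb it. This bookkeeping, which must check that each boxing status predicted by the $\typair{T}{V}$ annotations matches the runtime form, is where I expect most of the work. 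I would tackle it first for non-generic signatures and then lift the argument using type substitution (Lemma~\ref{lem:lwgsubsttype}).
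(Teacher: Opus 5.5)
Your proposal follows the paper's proof essentially step for step. It uses the same case split on the WG redex (field selection, assertion, the three conversions, operators, and method calls with the adaptor $m_D$ unfolded through $\mathit{Adapt}$ down to a static call of $m$), the same commuting diagram, and the same supporting Lemmas~\ref{lem:value-comp}, \ref{lem:compilation-context-eqcast}, \ref{lem:compositionality}, \ref{lem:bodies} and~\ref{lem:operator-eqval}.

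Where you deviate, you are more careful than the paper. The paper in effect only treats $d'=d$, choosing $d'$ outright in the context case, and its value and stuck clauses likewise only go through after the administrative cast reductions, exactly as you observe. One obligation you should make explicit: gluing your diagram to the weak-bisimulation step (the paper's Theorem~\ref{thm:eqcast-bisimulation}) needs $\bumpeq$ to be transitive. Transitivity is not among the rules of Figure~\ref{fig:eqcast-relation}, so it would have to be shown admissible.
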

\begin{proof}
  We show that
  \[
    \mathcal{R} \eqdef
    \{ \,
    (e,d') \st
    \exists d . \
    \emptyset \stoup \emptyset \vdash
    e
    \mapsto
    d
    \text{ and }
    \ceqcast{d}{d'}
    \,
    \}
  \]
  is a lowering simulation. We proceed by induction on the structure of
  the reduction.

  Figure~\ref{fig:lem-simulation-ed} gives an overview of our proof
  strategy.
  We start from $e$, which evaluates (one-step) to $e'$ and compiles to $d$, and assume there is $d'$
  such that $\ceqcast{d}{d'}$,
  then we show that there is $d'''$ such that $d' \becomesast d'''$,
  with $d'''$ a compilation target from $e'$, possibly via a
  $\ceqcast{}{}$-equivalent term ($d''$); hence we have that
  $(e',d''') \in \mathcal{R}$.

  \begin{figure}
    \begin{tikzpicture}[
      auto,
      process/.style={minimum size=0.5cm},
      >=latex
      ]

      \node[process] (P1) at (0,2) {$e$};
      \node[process, gray] (P2) at (2,2) {$e'$};
      \node[process] (Q1) at (0,0) {$d$};
      \node[process, gray] (Q2) at (2,0) {$d''$};
      \node[process] (R1) at (0,-2) {$d'$};
      \node[process,gray] (R2) at (2,-2) {$d'''$};

      \draw[->] (P1) -- (P2) node[midway, above] {$\becomes$};
      \draw[->] (R1) -- (R2) node[midway, below] {$\becomesast$};
      \draw[->] (P1) -- (Q1) node[midway, left] {$\mapsto$};
      \draw[->] (P2) -- (Q2) node[midway, right] {$\mapsto$};
      \draw[->] (Q1) -- (R1) node[midway, left] {$\ceqcast{d}{d'}$};
      \draw[->] (Q2) -- (R2) node[midway, right] {$\ceqcast{d''}{d'''}$};
    \end{tikzpicture}
    \caption{Key diagram for the proof of Lemma~\ref{lem:simulation-ed}.}
    \label{fig:lem-simulation-ed}
  \end{figure}

  \begin{itemize}
  \item if $e \not\becomes $ then, by type safety of WG, we must have
    $e = v.(T)$ with $\neg( \vtype(v) \imp_{\black{\emptyset}} T)$.
    By Rule \textsc{Assert}, we have $d = n . (\tycomp{\typair{T}{T}}{})$, with
    $ \emptyset \stoup \emptyset \vdash v \mapsto n$.
    By Lemma~\ref{lem:value-comp}, we have $n \becomesast v'$ and $\ceqcast{n}{v'}$ and $\eqval{v}{v'}$.
    Assume by contradiction that $d \becomes d'$.
    Then by the reduction rules, we have 
    
    $v' =   {\ghost{\typair{U}{V}}(S,\rho, v'')}$.
    Since we have $\eqval{v}{v'}$, we also have 
    $  S \assg_\emptyset U \imp_\emptyset V$
    and $\eqval{v}{v''}$ hence $\tycomp{\vtype(v)}{} = S$.
    Thus, since we have $ \neg(\vtype(v) \imp_{\black{\emptyset}})T$, we
    cannot have $T = S$ (Rule \textsc{assert-ok$_S$}) nor
    $\rho = \MKTABLE{S}{V}{\emptyset}$.

  \item If $e$ is a value then the result follows from
    Lemma~\ref{lem:value-comp}.
  \item If $ e = T\br{\ov{v}}.f_i \becomes \maintype{\Phi_i}(v_i) = e'$
    with $(\ov{f~\Phi}) = \fields_\black{\emptyset}(T)$.
    Pose  $\Phi_i = \typair{S_i}{V_i}$
    and assume 
    $ \Delta \stoup \Gamma \vdash
    v_i : U_i$ for each $v_i$ in $\ov{v}$ ($U_i \assg_\Delta S_i$ by Rule \textsc{t-literal} and type preservation).

    We first do a case analysis to identify $d$ (the compilation
    target). There are two cases, depending on the boxy-ness of the
    field by rules \textsc{select-unboxed} and \textsc{select-boxed}.

    \begin{itemize}
    \item If $\neg \uinterface(V_i) \lor \utyvar_\emptyset(S_i)$ then we have
      \begin{equation}
        d = \ghost{ {\tycomp{T}{\Delta}}} \{ \ov{n} \} . i
        \qquad \text{with }
        \Delta \stoup \Gamma \vdash
        \ov{v
          \mapsto_\Phi
          n}
      \end{equation}

      Next, we determine the compilation target $d''$ of $e' =  S_i(v_i)$,
      \[
        \Delta \stoup \Gamma \vdash S_i(v_i) \mapsto d''
      \]

      From Rule \textsc{Conversion}, we have
      \begin{equation}\label{eq:sim-conv}
        \Delta \stoup \Gamma \vdash
        v_i : U_i
        \mapsto_{\typair{S_i}{\bounds(S_i)}}
        d''
      \end{equation}

      We $\neg \uinterface(V_i) \lor \utyvar_\emptyset(S_i)$ by
      assumption, and $\utyvar_\emptyset(S_i)$ does not hold, so we
      have $\neg \uinterface(V_i)$ and thus $\neg \uinterface(S_i)$.
      Additionally, 
      $\neg\uinterface(U_i)$ (since $v_i$ is a value).

      Thus Rule \textsc{static-change} must apply
      in which case
      $\neg\uinterface(S_i)$, then
      \[
        d'' =
        \STATICCHANGE{n_i}{\tycomp{S_i}{\Delta}}{\tycomp{U_i}{\Delta}}
      \]

      Now, we have that, following the letters from
      Figure~\ref{fig:lem-simulation-ed}:
      \[
        d = d'=  \ghost{ {\tycomp{T}{\Delta}}} \{ \ov{n} \} . i
        \becomesast
        \ghost{ {\tycomp{T}{\Delta}}} \{ \ov{v'} \} . i
        \becomes
        d''' = 
        \ceqcast{v'_i}{ \STATICCHANGE{n'_i}{\tycomp{S_i}{\Delta}}{\tycomp{U_i}{\Delta}}} = d''
      \]
      since the first step ($\becomesast$) follows from
      Lemma~\ref{lem:value-comp} (via rule \textsc{context}), and the
      second step ($\becomes$) follows from rule \textsc{select}.
      The $\ceqcast{}{}$-equality follows from
      Lemma~\ref{lem:value-comp} as each $n_i$ is also the
      result of compiling a value ($v_i$), we also have that
      $\ceqcast{n_i}{v'_i}$.

    \item If $\neg\utyvar_\emptyset(S_i) \land \uinterface(V_i)$ then we have
      \begin{equation}
        d = \ghost{ {\tycomp{T}{\Delta}}} \{ \ov{n} \} . i . (\typair{\tycomp{S_i}{}}{\tycomp{S_i}{}})
        \qquad \text{with }
        \Delta \stoup \Gamma \vdash
        \ov{v
          \mapsto_\Phi
          n}
      \end{equation}
      Next, we determine the compilation target $d''$ of
      $e' = S_i(v_i)$,
      \[
        \Delta \stoup \Gamma \vdash S_i(v_i) \mapsto d''
      \]
      From Rule \textsc{Conversion}, we have
      \begin{equation}\label{eq:sim-conv}
        \Delta \stoup \Gamma \vdash
        v_i : U_i
        \mapsto_{\typair{S_i}{\bounds(S_i)}}
        d''
      \end{equation}
      Note that we must have  $ v'_i = {\ghost{\typair{S_i}{V_i}}(W_i,\rho_i, v''_i)}$, with $\eqval{v_i}{v''_i}$, where the make-type is $W_i = \tycomp{S_i}{}$ if $\neg\uinterface(S_i)$ (Rule \textsc{make-bs}) and $W_i = \tycomp{U_i}{}$ if $\uinterface(S_i)$ (Rule \textsc{make-iface}).
      The following rules are applicable
      since
      $\neg\utyvar_\Delta(S_i) \land \uinterface(V_i)$ by assumption and 
      $\neg\uinterface(U_i)$ (since $v_i$ is a value):
      \begin{itemize}
      \item Rule \textsc{static-change}, in which case
        $\neg\uinterface(S_i)$, then
        \[
          d'' =
          \STATICCHANGE{n_i}{\tycomp{S_i}{\Delta}}{\tycomp{U_i}{\Delta}}
        \]

        Now, we have that, following the letters from
        Figure~\ref{fig:lem-simulation-ed}:
        \[
          \begin{array}{l}
            d = d'=  \ghost{ {\tycomp{T}{\Delta}}} \{ \ov{n} \} . i . (\typair{\tycomp{S_i}{}}{\tycomp{S_i}{}})
            \becomesast
            \ghost{ {\tycomp{T}{\Delta}}} \{ \ov{v'} \} . i . (\typair{\tycomp{S_i}{}}{\tycomp{S_i}{}})
            \\
            \becomes {v'_i . (\typair{\tycomp{S_i}{}}{\tycomp{S_i}{}}) } \becomes v''_i =  
            \ceqcast{d'''}{ \STATICCHANGE{n'_i}{\tycomp{S_i}{\Delta}}{\tycomp{U_i}{\Delta}}} = d''
          \end{array}
        \]
        since the first step ($\becomesast$) follows from
        Lemma~\ref{lem:value-comp} (via rule \textsc{context}), and
        the second step ($\becomes$) follows from rule
        \textsc{select}, the third from rule \textsc{assert-ok$_S$}.
        The $\ceqcast{}{}$-equality follows from
        Lemma~\ref{lem:value-comp} as each $n_i$ is also the
        result of compiling a value ($v_i$), we also have that
        $\ceqcast{n_i}{v'_i}$.

      \item Rule \textsc{make-iface}, in which case  $\uinterface(S_i)$, then
        \[
          d'' =
          \MAKETYPE{n_i}{{\tycomp{\typair{S_i}{S_i}}{\Delta}}}{\tycomp{U_i}{\Delta}}{\rho}
        \]

        Now, posing $\rho' = \MKTABLE{\tycomp{U_i}{}}{\tycomp{S_i}{}}{\emptyset}$,
        we have that, following the letters from
        Figure~\ref{fig:lem-simulation-ed}:
        \[
          \begin{array}{l}
            d = d'=  \ghost{ {\tycomp{T}{\Delta}}} \{ \ov{n} \} . i . (\typair{\tycomp{S_i}{}}{\tycomp{S_i}{}})
            \becomesast
            \ghost{ {\tycomp{T}{\Delta}}} \{ \ov{v'} \} . i . (\typair{\tycomp{S_i}{}}{\tycomp{S_i}{}})
            \\
            \becomes {v'_i . (\typair{\tycomp{S_i}{}}{\tycomp{S_i}{}}) }
            \becomes \ghost{\typair{\tycomp{S_i}{}}{\tycomp{S_i}{}}}(\tycomp{U_i}{}, \rho', v''_i) =
            \ceqcast{d'''}{\MAKETYPE{n_i}{{\tycomp{\typair{S_i}{S_i}}{\Delta}}}{\tycomp{U_i}{\Delta}}{\rho}} = d''
          \end{array}
        \]
        since the first step ($\becomesast$) follows from
        Lemma~\ref{lem:value-comp} (via rule \textsc{context}), the
        second step ($\becomes$) follows from rule \textsc{select}, and
        the third step ($\becomes$) follows from rule
        \textsc{assert-ok$_I$}.  Indeed, $\uinterface(\tycomp{S_i}{})$
        and, by Rule \textsc{make-iface}, the make-type of $v'_i$ is
        $\tycomp{U_i}{}$ with $\tycomp{U_i}{} \imp_\emptyset
        \tycomp{S_i}{}$, so the assertion \emph{re-boxes} the value
        (rather than unboxing it) into
        $d''' = \ghost{\typair{\tycomp{S_i}{}}{\tycomp{S_i}{}}}(\tycomp{U_i}{}, \rho', v''_i)$.
        The $\ceqcast{}{}$-equality $\ceqcast{d'''}{d''}$ follows from
        Lemma~\ref{lem:value-comp} and Figure~\ref{fig:eqcast-relation}:
        since each $n_i$ is also the result of compiling a value
        ($v_i$), we have $\ceqcast{n_i}{v'_i}$, and $d''$ reduces to a
        value $\ceqcast{}{}$-equal to $d'''$.
      \end{itemize}
    \end{itemize}

  \item If $ e = v . (T) \becomes v  = e'$
    with ${ \vtype(v) \imp_{\black{\emptyset}} T }$.
    The compilation target $d$ of $e$ is unique:
    \[ d = n . ({\tycomp{T}{\Delta}@\tycomp{\bounds_\Delta(T)}{\Delta}}) \qquad\qquad
      \text{where } \Delta \stoup
      \Gamma \vdash v \mapsto n
    \]
    Thus, the compilation target of $e' = v$ is $n$.

    Hence following
    the letters from Figure~\ref{fig:lem-simulation-ed}, we have either

    \begin{itemize}
    \item  If $T$ is $\neg\uinterface({\bounds_\Delta(T)})$ then
      \[
        \ceqcast{d}{d'} =
        n . ({\tycomp{T}{\Delta}@\tycomp{\bounds_\Delta(T)}{\Delta}})
        \becomesast
        \ghost{\Phi}(T,\rho, v')
        . ({\tycomp{T}{\Delta}@\tycomp{\bounds_\Delta(T)}{\Delta}})
        \becomes
        \ceqcast{ \ghost{\Phi}(T,\rho, v')}{n} = d''' = d''     
      \]
      With the first step ($\becomesast$) following from
      Lemma~\ref{lem:value-comp} (via rule \textsc{context}), and the
      second step ($\becomes$) follows from rule \textsc{assert-ok$_S$}.
    \item If $T$ is $\uinterface({\bounds_\Delta(T)})$ then
      \[
        \begin{array}{c}
        \ceqcast{d}{d'} =
        n . ({\tycomp{T}{\Delta}@\tycomp{\bounds_\Delta(T)}{\Delta}})
        \becomesast
        \ghost{\Phi}(T,\rho, v') . ({\tycomp{T}{\Delta}@\tycomp{\bounds_\Delta(T)}{\Delta}})
        \\ \becomes       
          \ceqcast{ \ghost{{\tycomp{T}{\Delta}@\tycomp{\bounds_\Delta(T)}{\Delta}}}(T,\rho, v')}{n} = d''' = d''
        \end{array}
      \]
      With the first step ($\becomesast$) following from
      Lemma~\ref{lem:value-comp} (via rule \textsc{context}), and the
      second step ($\becomes$) follows from rule  \textsc{assert-ok$_I$}.
    \end{itemize}
    Recall that $\ceqcast{v'}{n}$ from Lemma~\ref{lem:value-comp} since
    $n$ is the compilation target of a value and reduces to $v'$.

  \item If $e = E[f] \becomes E[g] = e'$ with $f \becomes g$
    
    By induction hypothesis on $f \becomes g$, we have
    $\Delta; \Gamma \vdash f \mapsto n$ with $\ceqcast{n}{n'}$, and
    $\Delta; \Gamma \vdash g \mapsto n''$ with $\ceqcast{n''}{n'''}$,
    and $n' \becomesast n'''$.

    Since compilation preserves the structure of terms,
    Lemma~\ref{lem:compilation-context-eqcast} yields an LWG context
    $E'$ such that
    $\Delta; \Gamma \vdash E[f] \mapsto E'[n]$
    and
    $\Delta; \Gamma \vdash E[g] \mapsto E'[n'']$;
    so $d = E'[n]$ and $e' = E[g]$ compiles to $d'' = E'[n'']$.
    As $\ceqcast{}{}$ is a congruence and $\ceqcast{d}{d'}$ with
    $d = E'[n]$, it suffices to take $d' = E'[n']$, recalling
    $\ceqcast{n}{n'}$ from the induction hypothesis.

    Thus we have, following the letters from
    Figure~\ref{fig:lem-simulation-ed}:
    \[
      d' = E'[n'] \becomesast E'[n'''] = d'''
      \qquad\text{and}\qquad
      d'' = E'[n''] \;\text{ with }\; \ceqcast{d''}{d'''} ,
    \]
    where $d' \becomesast d'''$ follows from $n' \becomesast n'''$ by
    rule \textsc{context}, and
    $\ceqcast{d''}{d'''}$, i.e.\ $\ceqcast{E'[n'']}{E'[n''']}$, follows
    from $\ceqcast{n''}{n'''}$ by congruence (both from the induction
    hypothesis).
    Hence $(E[g], d''') = (E[g], E'[n''']) \in \mathcal{R}$ as required.
  \item If $e = v.m[S](\ov w) \rightarrow U(e_0[x := v, \ov {y := T(w) }]) = e'$ 
    where $(x, \ov{ y : T} ).e_0 : U = body(\vtype(v).m[S])$.

    Next we find $d$. There are four possibilities depending on the rule being used.
    \begin{itemize}
    \item Rule \textsc{Call$_S$-boxed}
      
      Assuming we have
      \begin{itemize}
      \item $\vtype(v) = t[\ov{R}]$
      \item
        $m[\ov{\beta~S''}](\ov{y\:\Phi})\: {\color{black}
          \typair{U}{V} } \mathbin{\in} \methods_\Delta( t[\ov{R}])$
        with $\uinterface(V)$
      \item  $\mjump(t.m) = [\ov{\alpha}, \ov{\beta}](x,\ov{y}).e'_0$
      \end{itemize}
      
      By Lemma~\ref{lem:value-comp}, we have
      \begin{itemize}
      \item  $\vdash v \mapsto n$ with $n \becomesast v'$ and $\ceqcast{n}{v'}$
      \item  $\vdash \ov {w \mapsto_{\Phi[\ov{\beta := S}]} n'}$ with $\ov{n' \becomesast w'}$ and $\ov{\ceqcast{n'}{w'}}$
      \end{itemize}

      We have
      
      \[
        \begin{array}{lcl}
          d = d' & = &
                       n \smsep t[\ov{ \tycomp{R}{\Delta}}] .  m[\ov{\tycomp{S}{\Delta}}]( \ov{n'}) .  (\typair{\tycomp{U[\eta]}{\Delta}}{\tycomp{U[\eta]}{\Delta}}) 
          \\
                 &    \becomesast &
                                    v' \smsep t [\ov{ \tycomp{R}{\Delta}}] .  m[\ov{\tycomp{S}{\Delta}}]( \ov{w'}) .  (\typair{\tycomp{U[\eta]}{\Delta}}{\tycomp{U[\eta]}{\Delta}}) 
          \\
                 &    \becomes&
                                e'_0 [\ov{\alpha := \tycomp{R}{}}][\ov{\beta := \tycomp{S'}{}}][x := v'][\ov{y:= w'}] .  ({\tycomp{U[\eta]}{\Delta}}@{\tycomp{U[\eta]}{\Delta}})  = d'''
        \end{array}
      \]

      Assume
      $e' \mapsto d''$, 
      we have $\ceqcast{d''}{d'''}$ from Lemmas~\ref{lem:compositionality} and~\ref{lem:bodies}.

    \item The case for Rule \textsc{Call$_S$-unboxed} is the same as above without the cast $({\tycomp{U[\eta]}{\Delta}}@{\tycomp{U[\eta]}{\Delta}}) $
    \item Rule \textsc{Call$_I$}.
      Assuming we have
      \begin{itemize}
      \item $\vtype(v) = t[\ov{R}]$
      \item $v' = \Phi(S, \rho, v'')$
      \item $M = [\ov{\beta~S''}](\ov{y\:\typair{T}{T'}})\: {\color{black} \typair{U}{V} }$
      \item $mM \mathbin{\in} \methods_\Delta( t[\ov{R}])$
      \item  $\mjump(\rho(m_D)) = [\ov{\alpha}, \ov{\beta}](x,\ov{y}).e'_0$
      \item  $\mjump(t.m) = [\ov{\alpha}, \ov{\beta}](x,\ov{y}).e'_1$
      \item $\func~(x~t[\ov{\alpha~R'}])~m_D[\ov{\beta~S'}](\ov{y~\tycomp{\tybox{\typair{T}{T'}}{\Delta}}{\Delta}})~\tycomp{\tybox{ \typair{U}{V} }{\Delta}}{\Delta}~\br{\, \return~\textit{Adapt}((x~t[\ov{\alpha~R}])~mM)} $
      \end{itemize}

      By Lemma~\ref{lem:value-comp}, we have
      \begin{itemize}
      \item  $ \vdash v \mapsto n$ with $n \becomesast v'$ and $\ceqcast{n}{v'}$
      \item  $\vdash \ov {w \Mapsto_{\Phi[\ov{\beta := S}]} n'}$ with $\ov{n' \becomesast w'}$ and $\ov{\ceqcast{n'}{w'}}$.
        Note that as by definition of $\Mapsto$ we have that each $w'_i$ is an interface value.
      \end{itemize}

      \begin{itemize}
      \item If $\uinterface(V)$,
        we have
        \[
          \begin{array}{lcl}
            d = d' & = &
                         n \smsep t[\ov{ \tycomp{R}{\Delta}}] .  m_D[\ov{\tycomp{S}{\Delta}}]( \ov{n'}) .  (\typair{\tycomp{U[\eta]}{\Delta}}{\tycomp{U[\eta]}{\Delta}}) 
            \\
                   &    \becomesast & v' \smsep t[\ov{ \tycomp{R}{\Delta}}] .  m_D[\ov{\tycomp{S}{\Delta}}]( \ov{w'}) .  (\typair{\tycomp{U[\eta]}{\Delta}}{\tycomp{U[\eta]}{\Delta}}) 
            \\
                   &    \becomes &   e'_0 [\ov{\alpha := \tycomp{R}{}}][\ov{\beta := \tycomp{S'}{}}][x := v'][\ov{y:= w'}] .  (\typair{\tycomp{U[\eta]}{\Delta}}{\tycomp{U[\eta]}{\Delta}})
            \\
                   &    = &   \textit{Adapt}((z~t[\ov{\alpha~R}])~mM) [\ov{\alpha := \tycomp{R}{}}][\ov{\beta := \tycomp{S'}{}}][x := v'][\ov{y:= w'}] .  (\typair{\tycomp{U[\eta]}{\Delta}}{\tycomp{U[\eta]}{\Delta}})
            \\
                   &  = & 
                          v' . t[\ov{\tycomp{R}{}}] \smsep m[\ov{\tycomp{S'}{}}](\ov{d})  .  (\typair{\tycomp{U[\eta]}{\Delta}}{\tycomp{U[\eta]}{\Delta}})
            \\
                   && \qquad \qquad \text{with }
                      \neg\uinterface(T'_i) \implies d_i = w'_i . (\typair{\tycomp{T_i}{\Delta}}{\tycomp{T_i}{\Delta}})
                      \text{ and }
                      \uinterface(T'_i) \implies d_i = w'_i
            \\
                   &    \becomesast &  v' . t[\ov{\tycomp{R}{}}] \smsep m[\ov{\tycomp{S'}{}}](\ov{w''})  .  (\typair{\tycomp{U[\eta]}{\Delta}}{\tycomp{U[\eta]}{\Delta}})
            \\ && \qquad \qquad  \text{since}:
                  w'_i . (\typair{\tycomp{T_i}{\Delta}}{\tycomp{T_i}{\Delta}}) = \ghost{\tycomp{\typair{T_i}{\textit{Any}}}{}}(T_i, \emptyset, w''_i) (\typair{\tycomp{T_i}{\Delta}}{\tycomp{T_i}{\Delta}}) \becomes w''_i
                 \\ && \qquad \qquad  \text{ and } \ceqcast{w'_i}{w''_i}
            \\
                   &    \becomes &  e'_1 [\ov{\alpha := \tycomp{R}{}}][\ov{\beta := \tycomp{S'}{}}][x := v'][\ov{y:= w'}] .  ({\tycomp{U[\eta]}{\Delta}}@{\tycomp{U[\eta]}{\Delta}})  = d'''
          \end{array}
        \]

        Assume
        $e' \mapsto d''$, 
        we have $\ceqcast{d''}{d'''}$ from Lemmas~\ref{lem:compositionality} and~\ref{lem:bodies}.

      \item If $\neg \uinterface(V)$, we have
        \[
          \begin{array}{lcl}
            d = d' & = &
                         n \smsep t[\ov{ \tycomp{R}{\Delta}}] .  m_D[\ov{\tycomp{S}{\Delta}}]( \ov{n'}) .  (\typair{\tycomp{U[\eta]}{\Delta}}{\tycomp{U[\eta]}{\Delta}}) 
            \\
                   &    \becomesast & v' \smsep t[\ov{ \tycomp{R}{\Delta}}] .  m_D[\ov{\tycomp{S}{\Delta}}]( \ov{w'}) .  (\typair{\tycomp{U[\eta]}{\Delta}}{\tycomp{U[\eta]}{\Delta}}) 
            \\
                   &    \becomes &   e'_0 [\ov{\alpha := \tycomp{R}{}}][\ov{\beta := \tycomp{S'}{}}][x := v'][\ov{y:= w'}] .  (\typair{\tycomp{U[\eta]}{\Delta}}{\tycomp{U[\eta]}{\Delta}})
            \\
                   &    = &   \textit{Adapt}((z~t[\ov{\alpha~R}])~mM) [\ov{\alpha := \tycomp{R}{}}][\ov{\beta := \tycomp{S'}{}}][x := v'][\ov{y:= w'}]
                            .  (\typair{\tycomp{U[\eta]}{\Delta}}{\tycomp{U[\eta]}{\Delta}})
            \\
                   &    = & 
                            \MAKETYPE{v' . t[\ov{\tycomp{R}{}}] \smsep m[\ov{\tycomp{S'}{}}](\ov{d}) }{{\tycomp{\typair{V}{\texttt{Any}}}{\Delta}}}{\tycomp{V}{\Delta}}{\emptyset} .  (\typair{\tycomp{U[\eta]}{\Delta}}{\tycomp{U[\eta]}{\Delta}})
            \\
                   && \qquad \qquad \text{with }
                      \neg\uinterface(T'_i) \implies d_i = w'_i . (\typair{\tycomp{T_i}{\Delta}}{\tycomp{T_i}{\Delta}})
                      \text{ and }
                      \uinterface(T'_i) \implies d_i = w'_i
            \\
                   &    \becomesast &  \MAKETYPE{v' . t[\ov{\tycomp{R}{}}] \smsep m[\ov{\tycomp{S'}{}}](\ov{w''}) }{{\tycomp{\typair{V}{\texttt{Any}}}{\Delta}}}{\tycomp{V}{\Delta}}{\emptyset} .  (\typair{\tycomp{U[\eta]}{\Delta}}{\tycomp{U[\eta]}{\Delta}}) 
            \\ && \qquad \text{since}:
                  w'_i . (\typair{\tycomp{T_i}{\Delta}}{\tycomp{T_i}{\Delta}}) = \ghost{\tycomp{\typair{T_i}{\textit{Any}}}{}}(T_i, \emptyset, w''_i) (\typair{\tycomp{T_i}{\Delta}}{\tycomp{T_i}{\Delta}}) \becomes w''_i
                  \text{ and } \ceqcast{w'_i}{w''_i}
            \\
                   &    \becomes &
                                   \MAKETYPE{ e'_1 [\ov{\alpha := \tycomp{R}{}}][\ov{\beta := \tycomp{S'}{}}][x := v'][\ov{y:= w'}] }{{\tycomp{\typair{U}{\texttt{Any}}}{\Delta}}}{\tycomp{U}{\Delta}}{\emptyset} .  ({\tycomp{U[\eta]}{\Delta}}@{\tycomp{U[\eta]}{\Delta}})   = d'''
          \end{array}
        \]
        Assume
        $e' \mapsto d''$, 
        we have $\ceqcast{d''}{d'''}$ from Lemmas~\ref{lem:compositionality} and~\ref{lem:bodies}.
      \end{itemize}
    \item The case for Rule \textsc{Call$_{I\alpha}$} is similar to above
      with the synthetic cast
      $\CHANGETYPE{\_}{\ghost{{\tycomp{\Psi[\eta]}{\Delta}}}}{{\tycomp{\maintype{\Psi}[\eta]}{\Delta}}}$
      instead of $\_ .  (\typair{\tycomp{U[\eta]}{\Delta}}{\tycomp{U[\eta]}{\Delta}})$.
    \end{itemize}

  \item If $e = \circ(\ov{v}) \becomes v' = e'$, with $\delta(\circ, \ov{v}) = v'$.
    Assume $\emptyset  \stoup \emptyset \vdash \ov{v } : T$.

    \begin{itemize}
    \item Rule \textsc{t-op-param} cannot have applied here (we cannot have
      $d = \circ(\ov{e'})$ with $e: \alpha$ as terms are closed).w

    \item With Rule \textsc{t-op}, we have $d = \circ\smsep {\tycomp{T}{\Delta}}(\ov{f})$ with
      $\emptyset  \stoup \emptyset \vdash \ov{v \mapsto f}$.

      By Lemma~\ref{lem:value-comp}, we have
      $\emptyset  \stoup \emptyset \vdash \ov{v \mapsto f}$ with $\ov{f \becomesast w}$ and $\ov{\ceqcast{f}{w}}$ and $\ov{\eqval{v}{w}}$.
      Assume  $\delta(\circ_T, \ov{w}) = w'$.
      By Lemma~\ref{lem:operator-eqval}, we have $\eqval{v'}{w'}$.

      Then, we have
      \[
        d =  \circ\smsep {\tycomp{T}{\Delta}}(\ov{f})
        \becomesast
        \circ\smsep {\tycomp{T}{\Delta}}(\ov{w})
        \becomesast
        w'
      \]
      as required.
    \end{itemize}
    
  \item If  $e= T(U(c)) \becomes  T(c) = e'$, with $\ubase(T)$.
    Following Rule \textsc{conversion}, we have
    \begin{itemize}
    \item $\emptyset \stoup\emptyset \vdash T(U(c)) \mathbin{:} T$
    \item
      $\emptyset \stoup\emptyset \vdash U(c) \mathbin{:} U
      \mapsto_{\typair{T}{T}} d$
    \item $d = \STATICCHANGE{U(c)}{\tycomp{T}{\Delta}}{\tycomp{U}{\Delta}}$ by Rule \textsc{static-change}
    \item ${\tycomp{T}{\Delta}} = T$ and ${\tycomp{U}{\Delta}}$ since $T$ and $U$ are base types.
    \end{itemize}
    Hence we have
    \[
      d = d' = \STATICCHANGE{U(c)}{T}{U} \becomes T(c) = d'''
    \]
    and we have $d'' = \eqval{T(c)}{T(c)} = e'$, as required.
  \item If $ e= T(U\br{\ov{v}}) \becomes T\br{ \ov v } = e'$, with $\ustruct(T)$.
    Following Rule \textsc{conversion}, we have
    \begin{itemize}
    \item $\emptyset \stoup\emptyset \vdash T(U\br{\ov{v}}) \mathbin{:} T$
    \item
      $\emptyset \stoup\emptyset \vdash U\br{\ov{v}} \mathbin{:} U
      \mapsto_{\typair{T}{T}} d$
    \item $d = \STATICCHANGE{U\br{\ov{v}}}{\tycomp{T}{\Delta}}{\tycomp{U}{\Delta}}$ by Rule \textsc{static-change}
    \end{itemize}
    Hence we have
    \[
      d = d' = \STATICCHANGE{U\br{\ov{v}}}{\tycomp{T}{\Delta}}{\tycomp{U}{\Delta}}\becomes {\tycomp{T}{\Delta}\br{\ov{v}}}  = d''' 
    \]    
    and we have $d'' = \eqval{ T\br{ \ov v }}{\tycomp{T}{\Delta}\br{\ov{v}}} = e'$, as required.
  \item If $e=T(U\br{\ov{v}}) \becomes U\br{ \ov v } = e'$, with $ \uinterface(T)$.
    Following Rule \textsc{conversion}, we have
    \begin{itemize}
    \item $\emptyset \stoup\emptyset \vdash T(U\br{\ov{v}}) \mathbin{:} U$
    \item
      $\emptyset \stoup\emptyset \vdash U\br{\ov{v}} \mathbin{:} U
      \mapsto_{\typair{T}{T}} d$
    \item $d = \MAKETYPE{U\br{\ov{v}}}{{\tycomp{\typair{T}{T}}{\Delta}}}{\tycomp{U}{\emptyset}}{\rho}$ by Rule \textsc{make-iface}
    \end{itemize}
    Hence we have
    \[
      d = d' =  \MAKETYPE{U\br{\ov{v}}}{{\tycomp{\typair{T}{T}}{\Delta}}}{\tycomp{U}{\emptyset}}{\rho}
      \becomes
      \ghost{ \tycomp{\typair{T}{T}}{}}(U, \rho, U\br{\ov{v}})
      = d'''      
    \]
    and we have $d'' = \eqval{U\br{\ov{v}}}{\ghost{ \tycomp{\typair{T}{T}}{}}(U, \rho, U\br{\ov{v}})} = e'$, as required.
    The case  when $e=T(U(c))$ works similarly.
    \qedhere
  \end{itemize}
\end{proof}

\begin{definition}[Lifting Simulation]\label{def:lifting-sim}
  Given a WG program $\ov{D} \prog e$ and a LWG program $\ov{D^\dagger} \prog d$.
  A relation $\mathcal{R}$ is a \emph{lifting simulation} when for all $(d,e) \in \mathcal{R}$
  \begin{itemize}
  \item if $d$ is a value then $\eqval{e}{d}$,
  \item if $d \not\becomes  $ then $e \not\becomes$, and
  \item if $d \becomes d'$ then there is $e'$ such that $e \becomes^\ast e'$ and $(d',e') \in \mathcal{R}$.
  \end{itemize}
\end{definition}

\begin{figure}
  \begin{tikzpicture}[
    auto,
    process/.style={minimum size=0.5cm},
    >=latex
    ]

    \node[process] (P1) at (0,-2) {$e$};
    \node[process, gray] (P2) at (2,-2) {$e'$};
    \node[process] (Q1) at (0,0) {$d'$};
    \node[process, gray] (Q2) at (2,0) {$d''$};
    \node[process] (R1) at (0,2) {$d$};
    \node[process,gray] (R2) at (2,2) {$d'''$};

    \draw[->] (P1) -- (P2) node[midway, below] {$\becomesast$};
    \draw[->] (R1) -- (R2) node[midway, above] {$\becomes$};
    \draw[->] (P1) -- (Q1) node[midway, left] {$\mapsto$};
    \draw[->] (P2) -- (Q2) node[midway, right] {$\mapsto$};
    \draw[->] (Q1) -- (R1) node[midway, left] {$\ceqcast{}{}$};
    \draw[->] (Q2) -- (R2) node[midway, right] {$\ceqcast{}{}$};
  \end{tikzpicture}
  \caption{Key diagram for the proof of Lemma~\ref{lem:simulation-de-lifting}.}
  \label{fig:lem-simulation-ed}
\end{figure}

\begin{lemma}\label{lem:simulation-de-lifting}
  Suppose $\ov{D} \ok$ and $\emptyset \stoup \emptyset \vdash e : T$ for some
  $T$.
  Then, the relation
  \[
    \mathcal{R} \eqdef
    \{ \,
    (d,e) \st
    \exists d' . \
    \emptyset \stoup \emptyset \vdash
    e
    \mapsto
    d'
    \text{ and }
    \ceqcast{d}{d'}
    \,
    \}
  \]
  is a lifting simulation.
\end{lemma}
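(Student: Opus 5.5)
We prove the lifting direction by mirroring the proof of Lemma~\ref{lem:simulation-ed}, reading the diagram of Figure~\ref{fig:lem-simulation-ed} upwards. Fix $(d,e)\in\mathcal{R}$, i.e.\ $\emptyset\stoup\emptyset\vdash e\mapsto d'$ and $\ceqcast{d}{d'}$. We check the three clauses of Definition~\ref{def:lifting-sim}. The central observation is that every LWG step of $d$ falls into one of two kinds. It is either an \emph{administrative} step, which fires a synthetic cast (\textsc{make}, \textsc{change-type}, \textsc{static-change-s/c}, a compiler-inserted assertion \textsc{assert-ok}$_{S/I}$, or the extra adaptor call and unboxing steps introduced by $\textit{Adapt}$), or a \emph{principal} step (\textsc{select}, \textsc{call-static}/\textsc{call-dyn} into a user method body, \textsc{op-static}/\textsc{op-dyn}, a user assertion) which corresponds to exactly one WG step of $e$.

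For administrative steps we answer with $e\becomes^0 e$, i.e.\ we take $e'=e$. We then show $\ceqcast{d'''}{d'}$ directly from the rules of Figure~\ref{fig:eqcast-relation}. For example, \textsc{ifacevalmake} covers a fired make, \textsc{ifacevalchange} a fired change, \textsc{ifacevalassert} a fired synthetic assertion, \textsc{dstaticchange}/\textsc{stru}/\textsc{base} a fired static change, and \textsc{mcallbridge} the step of an adaptor call $m_D$ into the static call of $m$. Closing under \textsc{sym} and the congruence rules gives $\ceqcast{d'''}{d'}$, since $\ceqcast{}{}$ is intended as an equivalence modulo casts. This also needs a small lemma: if $\ceqcast{d}{d'}$ and $d\becomes d'''$ by an administrative step, then $d'''$ is still well typed (by Theorem~\ref{thm:lwg-preservation}) and still $\ceqcast{}{}$-related to $d'$. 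We prove it by induction on the derivation of $\ceqcast{d}{d'}$, using Lemma~\ref{lem:compilation-context-eqcast} to locate the redex inside a compiled context.

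For principal steps we invert the derivation of $\ceqcast{d}{d'}$ and then the compilation judgement for $e$ to find the WG redex. First, by Lemma~\ref{lem:value-comp}, the compiled subterms that $d$ has already evaluated correspond to WG values via $\eqval{}{}$. A \textsc{select} in $d$ then corresponds to \textsc{r-field} in $e$. A \textsc{call-static} or \textsc{call-dyn} corresponds to \textsc{r-call}, and Lemmas~\ref{lem:compositionality} and~\ref{lem:bodies} give $\ceqcast{d'''}{d''}$, where $e'\mapsto d''$. An operator step corresponds to \textsc{r-op} via Lemma~\ref{lem:operator-eqval}. Conversions in $e$ (\textsc{r-convert-b/s/i}) have no LWG counterpart beyond casts, so we may need $e\becomes^+ e'$ there; this is allowed by the $\becomesast$ in the definition. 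Contextual steps follow by the induction hypothesis and congruence, exactly as in the context case of Lemma~\ref{lem:simulation-ed}.

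For the value clause: if $d$ is a value, then $d'$ can differ only by pending casts, and those casts produce no WG redex. Hence $e$ is a WG value up to pending conversions $T(v)$. Reducing those, or noting that the compiler produces values for values, we get $\eqval{e}{d}$ by Lemma~\ref{lem:value-comp} and Lemma~\ref{lem:lwg-val-form}. For the stuck clause: if $d\not\becomes$, then by Theorem~\ref{thm:lwg-progress} $d$ panics at an assertion. A synthetic assertion cannot panic, because it was typed by Theorem~\ref{thm:compilation-preserves-typing} against RTTI that matches its target. So the panic is at a user assertion, and via $\eqval{}{}$ and Lemma~\ref{lem:tycomp-preserve-subtyping} the RTTI check $\vtype(v)\notimp T$ fails in WG too. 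Hence $e$ panics, and by Theorem~\ref{thm:wg-progress} $e\not\becomes$.

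The main obstacle will be the administrative-step lemma. The relation $\ceqcast{}{}$ relates whole terms with casts possibly buried deep inside, and an administrative step of $d$ might fire a cast that exists only on the $d$ side of a \textsc{dmakeS}/\textsc{dchange}/\textsc{dassert} wrapper. That case needs \textsc{sym} together with the \textsc{ifaceval*} rules, and it needs care to avoid an infinite chain of administrative steps. Before starting the case analysis, I would define a measure (the number of synthetic casts remaining in $d$) that strictly decreases on administrative steps.
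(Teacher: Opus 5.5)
Your proposal takes essentially the same route as the paper's proof. Both do a case analysis on the LWG step: the cast-firing steps (\textsc{make}, \textsc{change-type}, \textsc{static-change-s/c}, compiler-inserted \textsc{assert-ok}, and \textsc{call-dyn} into the adaptor $m_D$) are matched by $e' = e$, while \textsc{select}, \textsc{call-static}, user assertions and operator steps are matched by the corresponding WG reduction via Lemmas~\ref{lem:value-comp}, \ref{lem:compositionality}, \ref{lem:bodies} and~\ref{lem:operator-eqval}, with the context case by induction. Your explicit administrative-step lemma and your treatment of the value and stuck clauses go beyond what the paper writes down; note only that $\ceqcast{}{}$ has no transitivity rule, so that lemma really must be proved by induction on the derivation, as you plan, rather than obtained by closing under \textsc{sym} and congruence.
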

\begin{proof}
  We show that
  \[
    \mathcal{R} \eqdef
    \{ \,
    (d,e) \st
    \exists d' . \
    \emptyset \stoup \emptyset \vdash
    e
    \mapsto
    d'
    \text{ and }
    \ceqcast{d}{d'}
    \,
    \}
  \]
  is a lifting simulation. We proceed by induction on the structure of
  the LWG reduction.

  Figure~\ref{lem:simulation-de-lifting} gives an overview of our proof
  strategy.
  We start from $d$, which evaluates (one-step) to $d'''$. We show
  that there must be a WG term $e$ which must compile to an
  $\ceqcast{}{}$-equivalent term, i.e., $d'$.
  We then show that $e$ reduces (in zero or more steps) to $e'$ which
  compiles to $d'''$, possibly via an $\ceqcast{}{}$-equivalent term,
  i.e., $d'''$.
  \begin{itemize}
  \item If
    $ d= \ghost{\tycomp{S}{}}\{\ov{v} \}.i \rightarrow v_i =
    d'''$. Then we must show that $e$ reduces to an equivalent value.
    Then we must have $e = S\br{\ov{w}} . f_i $ with $\ov{\eqval{w}{v}}$.
    By inversion, we must have 
    $ d ' = S\br{\ov{n}} . f_i . (\maintype{\Phi_i})$ (if \textsc{select-boxed} was
    applied) or $ d ' = S\br{\ov{n}} . f_i$ (if
    \textsc{select-unboxed} was applied);
    assuming
    $    \ov{f~\Phi} = \fields_\Delta(S)$
    and 
    $\ov{\Delta\stoup\Gamma \vdash w  \mathbin{:} U \mapsto_\Phi n}$, with $\ov{U \imp_\Delta \maintype{\Phi}}$.
    Via Lemma~\ref{lem:value-comp}, we have
    \[
      S\br{\ov{n}} . f_i  \becomesast \eqval{w_i}{v_i}
    \]
    and since $U_i \imp_\Delta \maintype{\Phi_i}$, we also have
    \[
      S\br{\ov{n}} . f_i . (\maintype{\Phi_i}) \becomesast w_i . (\maintype{\Phi_i}) \becomes \eqval{w_i}{v_i}
    \]
    as required.
  \item $d = \MAKETYPE{w}{\Phi}{\tycomp{S}{}}{\rho}
    \rightarrow
    \ghost{\Phi}(\tycomp{S}{},\rho, w ) = d'''$.
    Here we have that $w = \tycomp{S}{}(c)$ or
    $w = \tycomp{S}{}\br{\ov{v}}$, and $d'$ can be any term
    $\ceqcast{}{}$-equivalent to $w$ (i.e., $w$ with additional
    synthetic casts).
    Hence, from the compilation rules, it must be the case that
    $e = S(c)$ or $e = S\br{\ov{v}}$, $e = e'$, and $d = d''$.
    Thus the case follows trivially with
    $\eqval{e}{\ghost{\Phi}(\tycomp{S}{},\rho, w )}$.
  \item $d = \CHANGETYPE{\ghost\Phi(\tycomp{S}{},\rho , w)}{\ghostytypair{T'}{U}}{T}
    \rightarrow
    \ghost{{\typair{T'}{U}}}( \tycomp{S}{}, \rho' , w) = d'''$
    with $  \rho' = \MKTABLE{\tycomp{S}{}}{U}{\emptyset}$.
    Hence, from the compilation rules, it must be the case that
    $e = S(c)$ or $e = S\br{\ov{v}}$, $e = e'$, and $d = d''$.
    Thus the case follows trivially with
    $\eqval{e}{\ghost{\Phi}(\tycomp{S}{},\rho, w )}$.
  \item Cases for Rules \textsc{static-change-s} and 
    \textsc{static-change-c}  are similar
    to the previous two cases.
  \item $d = \ghost\Phi(\tycomp{S}{}, \rho, w) .(\tycomp{S}{}) \rightarrow w = d'''$, via Rule \textsc{assert-ok${}_S$}.
    We must have $w = \tycomp{S}{}(c)$ or $w = \tycomp{S}{}\br{\ov{v}}$,
    and $d'$ can be any term $\ceqcast{}{}$-equivalent to $w$ (i.e., $w$
    with additional synthetic casts).
    There are two cases:
    \begin{itemize}
    \item   $e = S(c)$ or $e = S\br{\ov{v}}$, $e = e'$, and $d = d''$.
      Thus the case follows trivially with
      $\eqval{e}{\ghost{\Phi}(\tycomp{S}{},\rho, w ) . (\typair{\tycomp{S}{}}{\tycomp{S}{}})}$.
    \item If we have $e = v' . (S)$ with $v' = S(c)$ or
      $v' = S\br{\ov{v}}$., then
      we have $e' = v'$, and thus the case follows
      trivially with $\eqval{v'}{w}$.
    \end{itemize}

  \item $d = \ghost{\typair{U}{V'}}(\tycomp{S}{}, \rho', w)
    .(U')
    \rightarrow
    \ghost{{\typair{U'}{U'}}}(\tycomp{S}{}, \rho, w) = d'''$, with 
    $\tycomp{S}{} \imp_\emptyset U'$ and $\rho = \MKTABLE{\tycomp{S}{}}{U'}{\emptyset} $
    via Rule \textsc{assert-ok$_I$}.
    There are two cases:
    \begin{itemize}
    \item   $e = S(c)$ or $e = S\br{\ov{v}}$, $e = e'$, and $d = d''$.
      Thus the case follows easily with
      $\eqval{e}{\ghost{\Phi}(\tycomp{S}{},\rho, w ) . (U')}$.
    \item If we have $e = v' . (\revtycomp{U'})$ with $v' = S(c)$ or
      $v' = S\br{\ov{v}}$, then the WG transition fires since we have
      $\tycomp{S}{} \imp_\emptyset U' \implies S \imp_\emptyset
      \revtycomp{U'}$ (Lemma~\ref{lem:tycomp-preserve-subtyping}).
      Thus we have $e' = v'$, and thus the case follows
      trivially with $\eqval{v'}{w}$.
    \end{itemize}
  \item $d =  w \smsep t[\ov{\tycomp{R}{}}] . m[\ov{\tycomp{S}{}}](\ov{v})
    \rightarrow
    e[\ov{\alpha := \tycomp{R}{}}][\ov{\beta := \tycomp{S}{}}][x := v][\ov{y := v}]$
    with
    $[\ov\alpha \stoup \ov\beta](x,  \ov{y}).e = \mjump(t.m)$.
    Then we must have
    $ e =  w . m[S](\ov{v'}) $ with $\ov{\eqval{v'}{v}}$.
    Posing
    $(x : \_ \, ,\ov{y: T}).e' : U =
    \mbody(\vtype(w).m\black{[\ov{S}]})$, the result follows via
    Lemmas~\ref{lem:compositionality} and~\ref{lem:bodies}.
  \item $d = \ghost{\Phi}(U, \rho, v). m[\ov{S}](\ov{v})
    \rightarrow
    e[\ov{\alpha := R}][\ov{\beta := S}][x := v][\ov{y := v}] = d'$ with
    $U = t[\ov{R}]$
    and
    $[\ov\alpha \stoup \ov\beta](x,  \ov{y}).e = \mjump(\rho(m))$
    Following Rule \textsc{d-func}, we can simply take $e = e'$ here,
    then the case reduces to a matching static method call.
  \item $d =  \circ(\ov{v}) \becomes v'' = d'$ with
    $\forall i . \,
    v_i = \ghost{\Phi}(S, \rho, v_i')$
    and
    $\delta(\rho(\circ), \ov{v'}) = v''$.
    Then $e = \circ (\ov{v})$ and $\vtype(v_i) = \maintype{\Phi}$ and the result
    follows with Lemma~\ref{lem:operator-eqval}.
    
  \item $d = \circ\smsep T(\ov{v}) \becomes v' = d'$ with $ \delta(\circ_T, \ov{v}) = v'$.
    Then $e = \circ (\ov{v})$ and $\vtype(v_i) = T$ and the result
    follows with Lemma~\ref{lem:operator-eqval}.
  \item $d = E[n] \rightarrow E[m] = d'$ with $n \becomes m$.
    By induction hypothesis on $n \becomes m$, we have        
    $\Delta; \Gamma \vdash n'' \mapsto \ceqcast{n'}{n}$; and
    $\Delta; \Gamma \vdash m'' \mapsto \ceqcast{m'}{m}$;
    and $n'' \becomesast m''$.
    Since compilation preserve the structure of terms, we must have $E'$ such that
    $e= E'[n''] \becomesast E'[m''] = e'$, as required.
  \end{itemize}
\end{proof}

\begin{lemma}\label{lem:synth-type-assert-reduce}
  Let $e$ be a WG expression, with
  $\emptyset ; \emptyset \vdash e : T$, such that $e \mapsto d$, then
  for all $d'$ such that $d \longrightarrow^{\ast} d'$
  either $d'$ is a value, $d' \longrightarrow d''$ for some $d''$, or
  $d'$ panics on a \emph{non-synthetic} type-assert.
\end{lemma}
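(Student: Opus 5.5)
The plan is to combine LWG type safety with the lowering and lifting simulations, so that every type-assert panic reachable from $d$ is traced back either to a WG panic (hence a source-level, non-synthetic assert) or to a synthetic assert that we show can never fail.

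First, well-typedness of all reducts. By Theorem~\ref{thm:comptyp}, $\emptyset;\emptyset \vdash d : \tycomp{T}{}@\tycomp{\bounds_\emptyset(T)}{}$. Theorem~\ref{thm:lwg-preservation} then gives that every $d'$ with $d \becomesast d'$ is well typed at the same type. Theorem~\ref{thm:lwg-progress} yields the trichotomy: $d'$ is a value, $d'$ reduces, or $d' = E[v.(\typair{U}{V})]$ with no assert rule applicable. It remains to show that in the third case the stuck assert is not synthetic. Synthetic asserts are those produced by \textsc{Select-boxed}, \textsc{Call$_S$-boxed}, \textsc{Call$_I$}, and the $\mathit{Adapt}$ bodies; these are exactly the ones marked ${}^{\circ}$ in $\ceqcast{}{}$.

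Next, relate $d'$ to a WG term. By Lemma~\ref{lem:simulation-de-lifting}, together with induction on the length of $d \becomesast d'$, we obtain $e'$ with $e \becomesast e'$, a compilation $e' \mapsto d_0$, and $\ceqcast{d'}{d_0}$. Suppose $d' = E[v.{}^{\circ}(\typair{U}{V})]$ is stuck on a synthetic assert. Inspect how this redex arises under $\ceqcast{}{}$. Either it comes from rule \textsc{dassert}, or it is the image of a synthetic assert emitted by one of the compilation rules listed above.

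In each such case, show the assert succeeds. For \textsc{Select-boxed}, the field value was boxed by \textsc{make-bs} or \textsc{make-iface} with RTTI $\tycomp{S_i}{}$, respectively some $\tycomp{U_i}{}$ with $U_i \imp S_i$, so \textsc{assert-ok$_S$} or \textsc{assert-ok$_I$} fires. The reasoning is exactly as in the \textsc{r-field} case of Lemma~\ref{lem:simulation-ed}. For \textsc{Call$_S$-boxed}, \textsc{Call$_I$}, and $\mathit{Adapt}$, the argument is similar. The boxed value is produced either by a make in the adaptor with RTTI $\tycomp{V}{}$, or by the callee's return, which is well-typed at $\tycomp{T[\eta]}{}@\tycomp{V[\eta]}{}$. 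Using Lemma~\ref{lem:lwg-val-form}, the interface value has RTTI $S$ with $S \assg T[\eta]$. Here $T[\eta]$ is not a type variable, so the target is either $S$ itself (unbox) or an interface implemented by $S$ (rebox). In both cases a reduction rule applies, contradicting stuckness.

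The main obstacle is the inversion step. We need that $v.{}^{\circ}(\typair{U}{V})$ being a redex in $d'$ forces the boxed RTTI $S$ to satisfy $S = U$ whenever $\neg\uinterface(U)$, and $S \imp U$ otherwise. This requires tracking the ghost logical type through $\ceqcast{}{}$, because LWG typing only guarantees $S \assg T$ for the logical type $T$. I would try to establish this invariant first, as a strengthening of Lemma~\ref{lem:lwg-val-form}: for values produced by compiled code at a non-type-variable logical type $T$ with $\neg\uinterface(T)$, the RTTI is exactly $\tycomp{T}{}$. I would prove the invariant by induction on the compilation rules and on reduction. Once it holds, the remaining asserts, i.e. non-synthetic ones, are images of WG asserts via \textsc{Assert}. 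A panic on such an assert corresponds, by the lowering simulation (Lemma~\ref{lem:simulation-ed}), to a WG panic, which is permitted by the statement.
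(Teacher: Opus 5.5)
Your core argument is the paper's. Each synthetic assert is emitted with target $\tycomp{T}{}@\tycomp{T}{}$ on an operand typed at $\tycomp{T}{}@V'$ with $\uinterface(V')$: by \textsc{t-field} for \textsc{Select-boxed}, by the method signature for \textsc{Call$_S$-boxed} and \textsc{Call$_I$}, and by $\tybox{\cdot}{}$ for the $\mathit{Adapt}$ parameters. So once the operand is a value it is an interface value whose RTTI fits the target, and \textsc{assert-ok$_S$} or \textsc{assert-ok$_I$} fires. The paper argues this directly, by induction on compilation over the three insertion points, and leaves the combination with LWG progress implicit. Your explicit use of Theorems~\ref{thm:lwg-preservation} and~\ref{thm:lwg-progress} for the trichotomy states that step more cleanly.

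The detour through Lemma~\ref{lem:simulation-de-lifting} and $\ceqcast{}{}$ adds nothing and points the wrong way. $\ceqcast{}{}$ puts no typing constraint on synthetic asserts: \textsc{dassert} adds one with an arbitrary target. The adaptor asserts that appear after \textsc{call-dyn} have no counterpart in any compiled WG term and are only accounted for by such rules, so you cannot track the logical type through $\ceqcast{}{}$. The lifting simulation's stuck clause also implicitly relies on the present lemma, and the paper uses this lemma to prove that $\ceqcast{}{}$ is a bisimulation, so the dependency runs backwards. Your provenance argument for \textsc{Select-boxed} (``boxed by \textsc{make-bs} or \textsc{make-iface}'') only covers struct literals compiled from WG values. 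In a general reduct the field may come from \textsc{change} on an interface-typed argument, or from a substituted parameter. The final appeal to the lowering simulation for non-synthetic panics is not needed for this statement.

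The invariant to carry is just this: for every synthetic assert $n.(\tycomp{T}{}@\tycomp{T}{})$ in a reduct, $n$ has type $\tycomp{T}{}@V'$ with $\uinterface(V')$. It holds by construction at compile time. It is preserved because LWG preservation keeps the exact type, and type substitution at \textsc{call-static}/\textsc{call-dyn} acts uniformly on the operand type and the target. Typing $d'$ alone does not give it, since \textsc{t-stupid} and \textsc{t-assert$_S$} do not tie the operand's logical type to the target.

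The ``main obstacle'' you name then disappears. Rule \textsc{t-interface} already has the premise $\neg\uinterface(T)\Rightarrow S=T$. When $\uinterface(T)$ and $\neg\uinterface(S)$, $S\assg_\emptyset T$ can only come from $\assg_{\imp}$, because $\assg_{tL}$ and $\assg_{Lt}$ only relate types of the same kind; so $S\imp_\emptyset T$. No separate induction for a strengthened value-form lemma is needed.
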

\begin{proof}
  By induction on the compilation, we show that for any synthetic cast
  of the form $d = E[n.\synthcast{(\tycomp{\typair{T}{V}}{\Delta})}]$
  with $\Delta ; \Gamma \vdash n : \typair{T}{U}$, the assertion
  $n.\synthcast{(\tycomp{\typair{T}{V}}{\Delta})}$ reduces (i.e.\ never
  panics).
  There are three places where such casts are added in the compilation.
  \begin{itemize}
  \item
    $ \Delta\stoup\Gamma \vdash e . f \mapsto e' . i . {\color{black}
      (\typair{\tycomp{S_i}{\Delta}}{\tycomp{S_i}{\Delta}}) }$ with
    $\ov{f\; \typair{S}{V}} = \fields_\Delta(T)$, in the compilation
    of field select (Rule \textsc{Select-boxed}).
  \item
    $ e' .  m_D[\ov{{\tycomp{S}{\Delta}}}](\ov{d'}) .  {\color{black}
      (\typair{\tycomp{T[\eta]}{\Delta}}{\tycomp{T[\eta]}{\Delta}}) }$
    with
    $      m[\ov{\beta~S''}](\ov{y\:\Phi})\: {\color{black} \typair{T}{V} } \mathbin{\in} \methods_\Delta(R)$
    and
    $\eta = (\ov{\beta \by S})$ in the compilation of static and dynamic
    method calls (Rules \textsc{Call$_S$-boxed} and \textsc{Call$_I$}.
  \item
    $\neg\uinterface(T'_i) \implies d_i = y_i
    . (\tycomp{\typair{T_i}{T'_i}}{\Delta})$ in adapter methods (to
    unbox arguments that should not be boxed). Here we have that each
    $y_i$ has type${\tycomp{\tybox{\Phi_i}{\Delta}}{\Delta}}$
    with $\ov{\Phi} = \ov{\typair{T}{T'}}$ (see Figure~\ref{fig:methdecl}).
  \end{itemize}
  In each case the cast's main-type is exactly the type $T$ of $n$,
  so by Lemmas~\ref{lem:type-pairs-properties}
  and~\ref{lem:lwg-val-form} the value $n$ evaluates to
  $\ghost{\typair{T}{U}}(S, \rho, w)$ with $S \imp_\emptyset T$; hence
  Rule \textsc{assert-ok$_S$} (if $\neg\uinterface(T)$) or Rule
  \textsc{assert-ok$_I$} (if $\uinterface(T)$) applies.
  Thus, by LWG reduction, all such casts always reduce successfully.
\end{proof}

\begin{theorem}[$\ceqcast{}{}$ is a Weak Bisimulation]\label{thm:eqcast-bisimulation}
  The relation $$\ceqcast{}{}$$ is a weak bisimulation with respect to
  LWG reduction well-typed terms.
\end{theorem}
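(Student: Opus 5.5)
We have to show that whenever $\ceqcast{d_1}{d_2}$ with both terms well-typed, each move of one side can be matched by zero or more moves of the other, with the results again related. Because \textsc{sym} is built in, one direction suffices. The plan is to show that $d_1 \becomes d_1'$ implies $d_2 \becomesast d_2'$ with $\ceqcast{d_1'}{d_2'}$. We also need a side clause: if $d_1$ is a value, then $d_2$ reduces to a value related to it. The argument goes by induction on the derivation of $\ceqcast{d_1}{d_2}$, with an inner case analysis on the reduction rule that fires in $d_1$. Theorem~\ref{thm:lwg-preservation} keeps every intermediate term well-typed, so the relation (which is defined only on well-typed terms) is preserved. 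Theorem~\ref{thm:lwg-progress}, together with Lemma~\ref{lem:synth-type-assert-reduce}, guarantees that synthetic casts can always fire.

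The cases split into three groups. \emph{Congruence rules} (\textsc{dmcall}, \textsc{smcall}, \textsc{field}, \textsc{assert}, \textsc{op-d}, \textsc{op-s}, \textsc{changetype}, \textsc{make}, \textsc{staticchange}, \textsc{stru}). Here a \textsc{context} step is handled directly by the induction hypothesis. For a head redex, the related subterms are first driven to values using the value clause of the induction hypothesis. Then the same head rule fires on both sides, and Lemma~\ref{lem:lwgsubst} (for substitution) shows that the resulting bodies are related, since $\ceqcast{}{}$ is a congruence closed under substituting related values. \emph{Cast-introducing rules} (\textsc{dassert}, \textsc{dmakeS}, \textsc{dchange}, \textsc{dstaticchange}). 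If $d_1$ steps, the inner $d'$ is matched by the induction hypothesis under the context formed by the extra cast. If instead the step happens on the casted side, the other side takes zero steps. Once the inner term is a value, the synthetic cast itself reduces by \textsc{make}, \textsc{change-type}, \textsc{static-change-s/c}, or by \textsc{assert-ok$_S$}/\textsc{assert-ok$_I$} (using Lemma~\ref{lem:synth-type-assert-reduce} and Lemma~\ref{lem:lwg-val-form}). The result is then related to the original by \textsc{ifaceval}, \textsc{ifacevalmake}, \textsc{ifacevalchange} or \textsc{ifacevalassert}. The \emph{base rules} \textsc{base}, \textsc{ifaceval} and the three \textsc{ifaceval$\ast$} rules relate terms that are values or reduce to values; this case is where the value clause is discharged. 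For \textsc{base} and \textsc{stru}, value equivalence follows from assignability of the ghost types only, which has no operational content.

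The main obstacle will be \textsc{mcallbridge}, which relates a static call $v \smsep T.m[\ov{S}](\ov{f})$ to a dynamic adaptor call $v'.m_D[\ov{S}](\ov{f'})$. Here I would first evaluate the arguments on both sides to values (induction hypothesis). Next, \textsc{call-dyn} on the right looks up $\rho(m_D)$; by \textsc{t-interface} this table equals $\MKTABLE{S}{V}{\emptyset}$, so $\rho(m_D) = t.m_D$. The body is then $\mathit{Adapt}$, which performs a static call to $m$ after unboxing each argument with a synthetic assertion or change-type. Those unboxings reduce (Lemma~\ref{lem:synth-type-assert-reduce}) to values related to the left arguments, which makes the right-hand side match the left one's \textsc{call-static} step up to an outer \textsc{dmakeS} wrapper coming from $\mathit{Adapt}$. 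This is a weak match: the right performs several steps while the left performs one. The result is related by \textsc{dmakeS} over \textsc{smcall}, with the bodies related by Lemma~\ref{lem:bodies}. Conversely, if the left steps first, the right catches up by the same sequence.

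Finally, \textsc{refl} and \textsc{sym} are immediate, but transitivity is not a rule. I would therefore phrase the invariant as ``related by a derivation of $\ceqcast{}{}$'' and, where compositions arise (as in the bridge case), close under congruence explicitly rather than appeal to transitivity.
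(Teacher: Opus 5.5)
Your overall route is the paper's: induction on the derivation of $\ceqcast{}{}$, treating the congruence rules, the cast-introducing rules and the leaf rules separately, with Lemma~\ref{lem:synth-type-assert-reduce} guaranteeing that synthetic asserts fire. Your treatment of \textsc{mcallbridge} goes beyond it, since the paper's proof never discusses that rule.

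\textbf{Main gap.} The step ``once the inner term is a value the cast fires, and the result is related to the original by \textsc{ifaceval}, \textsc{ifacevalmake}, \textsc{ifacevalchange} or \textsc{ifacevalassert}'' fails in general.
\begin{itemize}
\item Those rules need the other side to be literally the boxed payload, or a cast of the matching shape. The induction hypothesis only gives you some value $w'$ with $\ceqcast{d}{w'}$.
\item $\ceqcast{}{}$ has no transitivity, and its leaf rules are rigid: two distinct interface values are related only by \textsc{refl}, and \textsc{stru} requires $T \assg U$. So the pair you need is often not derivable.
\end{itemize}
Two instances:
\begin{itemize}
\item You have $\ceqcast{\ghost{\Phi_1}(S,\rho_1,w)}{w}$ by \textsc{sym} over \textsc{ifaceval}, hence $\ceqcast{\ghost{\Phi_1}(S,\rho_1,w)}{\MAKETYPE{w}{\Phi_2}{S}{\rho_2}}$ by \textsc{dmakeS}. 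The right side steps to $\ghost{\Phi_2}(S,\rho_2,w)$ and the left is a value. For $\Phi_1 \neq \Phi_2$ no rule relates the two.
\item \textsc{dstaticchange} over \textsc{stru} relates $\ghost{\texttt{Point}}\br{\ov{v}}$ to a static change of the underlying literal into $\texttt{Coord}$. That term steps to $\ghost{\texttt{Coord}}\br{\ov{v}}$, but $\texttt{Point} \assg \texttt{Coord}$ does not hold.
\end{itemize}
So both your step clause and your value clause break for the relation as defined. Closing under congruence cannot produce these pairs.

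You must enlarge the relation, either by adding transitivity or by letting \textsc{ifaceval} accept a payload merely related to the other side and \textsc{stru} compare ghost types by underlying type. You then have to handle the new case explicitly, because ``up to transitivity'' is not a sound technique for weak bisimulation in general. The paper's sketch has the same blind spot: its make case assumes the inner term reduces to the very $w$ in the box, and it dismisses the other casts as only touching ghost information.

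\textbf{Symmetry.} ``Because \textsc{sym} is built in, one direction suffices'' holds for the final statement, but not inside an induction on derivations. In the \textsc{sym} case, the hypothesis for the subderivation $\ceqcast{d_2}{d_1}$ only tells you how $d_1$ matches moves of $d_2$. You have to carry both directions, and both value clauses, in the induction hypothesis, as the paper does. You already rely on the backward direction when the casted side moves inside its subterm. In that case the other side takes whatever steps that hypothesis provides, not zero.

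\textbf{The \textsc{mcallbridge} case.} Your argument does go through, but via \textsc{refl} rather than Lemma~\ref{lem:bodies}, which is about compiled WG bodies. Only \textsc{ifaceval} relates an unboxed value to a boxed one, and only \textsc{refl} relates two boxed ones. So after the adaptor unboxes, the receiver and arguments are literally those on the left, and both sides instantiate the same body of $t.m$. This relies on exactly the rigidity that breaks the cast cases, so whatever repair you choose must be re-checked here. In particular, $\ceqcast{}{}$ is not closed under substituting merely related values: the \textsc{make} and \textsc{changetype} congruences only cover casts whose source type is the target's main type.
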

\begin{proof}
  We show that the relation
  \[
    \mathcal{B} \eqdef
    \{ \,
    (d_1, d_2) \st
    \ceqcast{d_1}{d_2}
    \,
    \}
  \]
  is a weak bisimulation (with $d_1$ and $d_2$ well-typed). We proceed
  by structural induction on the $\ceqcast{}{}$ derivation.

  \noindent
  \textbf{Case 1: Reflexivity.} 
  If $\ceqcast{d}{d}$, then trivially if $d \becomes d'$, we have $d \becomes d'$ and $\ceqcast{d'}{d'}$.

  \noindent
  \textbf{Case 2: Symmetry.} 
  If $\ceqcast{d_1}{d_2}$ is derived from $\ceqcast{d_2}{d_1}$, then the bisimulation property follows by swapping the roles of $d_1$ and $d_2$.

  \noindent
  \textbf{Case 3: Make operation equivalence.}
  If $\ceqcast{\MAKETYPE{e}{\Phi}{S}{\rho}}{\ghost{\Phi}(S,\rho, w)}$ with $\ceqcast{e}{w}$ where $w$ is a value:
  \begin{itemize}
  \item \textbf{Forward:} If $\MAKETYPE{e}{\Phi}{S}{\rho} \becomes d_1'$, then by LWG reduction rules:
    \begin{itemize}
    \item If $e$ is already the value $w$, then $\MAKETYPE{w}{\Phi}{S}{\rho} \becomes \ghost{\Phi}(S,\rho, w)$.
      The right-hand side is already this value, so we have $\ceqcast{\ghost{\Phi}(S,\rho, w)}{\ghost{\Phi}(S,\rho, w)}$.
    \item If $e \becomes e_1$, then $\MAKETYPE{e}{\Phi}{S}{\rho} \becomes \MAKETYPE{e_1}{\Phi}{S}{\rho}$.
      Since $\ceqcast{e}{w}$ and $w$ is a value, by induction hypothesis $e_1$ gets closer to $w$.
      Eventually $e \becomesast w$, so $\MAKETYPE{e}{\Phi}{S}{\rho} \becomesast \MAKETYPE{w}{\Phi}{S}{\rho} \becomes \ghost{\Phi}(S,\rho, w)$.
    \end{itemize}
  \item \textbf{Backward:} Since $\ghost{\Phi}(S,\rho, w)$ is a value, it cannot reduce further.
    This case is vacuous.
  \end{itemize}

  \noindent
  \textbf{Case 4: Synthetic type assertion equivalence.}
  If $\ceqcast{e}{e'.\synthcast{(\typair{T}{V})}}$ with appropriate typing conditions:
  \begin{itemize}
  \item \textbf{Forward:} If $e \becomes e_1$, we need to show $e'.(\synthcast{\typair{T}{V}}) \becomesast d_2'$ with $\ceqcast{e_1}{d_2'}$.
    By induction hypothesis,
    if $e$ diverges, so does $e'$, so we focus on the case when
    $e$ reduces to a value $w$.
    Then we have to show that the type assertion succeeds which follows from Lemma~\ref{lem:synth-type-assert-reduce}.
  \item \textbf{Backward:} By induction hypothesis, $e$ and $e'$
    either both correspondingly diverge, get stuck, or reduce to a
    value. We focus on the case where they reduce to a value.
    Pose $e \becomesast w$ and $e' \becomesast w'$, with
    $\ceqcast{w}{w'}$ by induction hypothesis.
    By Lemma~\ref{lem:synth-type-assert-reduce}, we have
    $w.(\synthcast{\typair{T}{V}}) \becomes w'$, and the result
    follows since $\ceqcast{w}{w'}$.
  \end{itemize}

  \noindent
  \textbf{Case 5: Synthetic cast.}
  The cases for
  ${\eqcast{e}{\MAKETYPE{e'}{\typair{T}{V}}{S}{\rho}}}$,   
  ${\eqcast{e}{\CHANGETYPE{e'}{\ghostytypair{T}{S}}{T}}}$, and
  ${\eqcast{e}{\STATICCHANGE{e'}{T}{U}}}$ (static-change) are all
  similar and straightforward since they only update the ghost type
  information.
  
  \noindent
  \textbf{Case 7: Congruence.}
  We consider each specific syntactic construct:

  \noindent
  \textbf{Case 7a: Variables.} If $\ceqcast{x}{x}$, this is trivial since variables cannot reduce.

  \noindent
  \textbf{Case 7b: Dynamic method calls.} If $\ceqcast{d.m[\ov{S}](\ov{f})}{d'.m[\ov{S}](\ov{f'})}$ with $\ceqcast{d}{d'}$ and $\ov{\ceqcast{f}{f'}}$:
  \begin{itemize}
  \item \textbf{Forward:} If $d.m[\ov{S}](\ov{f}) \becomes d_1'$:
    \begin{itemize}
    \item If reduction is in receiver context: $d \becomes d_1$, then by IH $d' \becomesast d_1''$ with $\ceqcast{d_1}{d_1''}$, so $d'.m[\ov{S}](\ov{f'}) \becomesast d_1''.m[\ov{S}](\ov{f'})$.
    \item If reduction is in argument context: $f_i \becomes f_i^1$, then by IH $f'_i \becomesast f_i'^1$ with $\ceqcast{f_i^1}{f_i'^1}$, so $d'.m[\ov{S}](\ov{f'}) \becomesast d'.m[\ov{S}](\ov{f'_1, \ldots, f_i'^1, \ldots})$.
    \item If method call reduces to body substitution when receiver and arguments are values, the same reduction applies by equivalence.
    \end{itemize}
  \item \textbf{Backward:} Symmetric argument.
  \end{itemize}

  \noindent
  \textbf{Case 7c: Static method calls.} If $\ceqcast{d \smsep T . m[\ov{S}](\ov{f})}{d' \smsep T . m[\ov{S}](\ov{f'})}$ with $\ceqcast{d}{d'}$ and $\ov{\ceqcast{f}{f'}}$:
  Similar to dynamic method calls but with static dispatch.

  \noindent
  \textbf{Case 7d: Field selection.} If $\ceqcast{d.i}{d'.i}$ with $\ceqcast{d}{d'}$:
  \begin{itemize}
  \item \textbf{Forward:} If $d.i \becomes d_1'$:
    \begin{itemize}
    \item If $d \becomes d_1$, then by IH $d' \becomesast d_1''$ with $\ceqcast{d_1}{d_1''}$, so $d'.i \becomesast d_1''.i$.
    \item If $d$ is a struct value $\ghost{T}\{\ov{v}\}$ and $d.i \becomes \maintype{\Phi_i}(v_i)$, then since $\ceqcast{d}{d'}$, we have $d'$ reduces to an equivalent struct, so the same field selection applies.
    \end{itemize}
  \item \textbf{Backward:} Symmetric argument.
  \end{itemize}

  \noindent
  \textbf{Case 7e: Type assertions.} If $\ceqcast{d.(T)}{d'.(T)}$ with $\ceqcast{d}{d'}$:
  Similar to field selection, but with type assertion reduction rules.

  \noindent
  \textbf{Case 7f: Dynamic operations.} If $\ceqcast{\circ(\ov{d})}{\circ(\ov{d'})}$ with $\ov{\ceqcast{d}{d'}}$:
  \begin{itemize}
  \item \textbf{Forward:} If $\circ(\ov{d}) \becomes d_1'$, either by reducing arguments in context or by applying the operation when all arguments are values. By IH, the same reduction sequence applies to $\circ(\ov{d'})$.
  \item \textbf{Backward:} Symmetric argument.
  \end{itemize}

  \noindent
  \textbf{Case 7g: Static operations.} If $\ceqcast{\circ\smsep T(\ov{d})}{\circ\smsep T(\ov{d'})}$ with $\ov{\ceqcast{d}{d'}}$:
  Similar to dynamic operations but with static type information.
  
  \noindent
  \textbf{Case 8: Structural equivalences.}  For assignability-based
  equivalences like $\ceqcast{T(c)}{U(c)}$ when $T \assg_\Delta U$,
  all are values so the result is immediate.
  \qedhere
\end{proof}

 \end{document}